\definecolor{DarkGreen}{rgb}{0.1,0.5,0.1}
\definecolor{DarkRed}{rgb}{0.5,0.1,0.1}
\definecolor{DarkBlue}{rgb}{0.1,0.1,0.5}
\newenvironment{breakablealgorithm}
  {
   \begin{center}
     \refstepcounter{algorithm}
     \hrule height.8pt depth0pt \kern2pt
     \renewcommand{\caption}[2][\relax]{
       {\raggedright\textbf{\ALG@name~\thealgorithm} ##2\par}%
       \ifx\relax##1\relax 
         \addcontentsline{loa}{algorithm}{\protect\numberline{\thealgorithm}##2}%
       \else 
         \addcontentsline{loa}{algorithm}{\protect\numberline{\thealgorithm}##1}%
       \fi
       \kern2pt\hrule\kern2pt
     }
  }{
     \kern2pt\hrule\relax
   \end{center}
  }
\newcommand{\I}{\ensuremath{\mathcal{I}}}
\newcommand{\cS}{\ensuremath{\mathcal{S}}}
\newcommand{\F}{{\mathbb F}}
\newcommand{\PR}[1]{{\mathbb{P}}\left\{ #1\right\}}
\newcommand{\EE}{\mathbb{E}}
\newcommand{\inparen}[1]{\left(#1\right)}
\newcommand{\inbrak}[1]{\left[#1\right]}
\newcommand{\poly}{\mathrm{poly}}
\newcommand{\eps}{\varepsilon}
\renewcommand{\epsilon}{\varepsilon}
\newtheorem{theorem}{Theorem} 
\newtheorem{lemma}[theorem]{Lemma} 
\newtheorem*{lemma*}{Lemma}
\newtheorem{definition}{Definition}
\newtheorem{observation}[theorem]{Observation}
\newtheorem{corollary}[theorem]{Corollary} 
\newtheorem{remark}{Remark}
\newtheorem{asm}{Assumption}
\newtheorem{claim}[theorem]{Claim}
\newtheorem{subclaim}[theorem]{Sub-Claim}
\definecolor{orange(colorwheel)}{rgb}{1.0, 0.5, 0.0}
\definecolor{darkpastelblue}{rgb}{0.17, 0.12, 0.8}
\definecolor{yellowgreen}{HTML}{98CC70}
\definecolor{emerald}{rgb}{0.31, 0.78, 0.47}
\definecolor{ferngreen}{rgb}{0.11, 0.44, 0.26}
\definecolor{iris}{rgb}{0.35, 0.31, 0.81}
\definecolor{BurntOrange}{HTML}{F7921D}
\newcommand{\MPJ}[2]{\lfloor #1 \rfloor_{2^{#2}}  - 2^{#2}}    
\newcommand{\seed}{\mathrm{sd}}
\newcommand{\defn}[1]{\textbf{#1}}
\newcommand{\MP}{\mathrm{MP}}
\newcommand{\MPone}{\mathrm{MP1}}
\newcommand{\MPtwo}{\mathrm{MP2}}
\newcommand{\MPthree}{\mathrm{MP3}}
\newcommand{\MPi}{\mathrm{MPi}}
\newcommand{\BVC}{\mathrm{BVC}}
\newcommand{\constbvc}{\gamma} 
\newcommand{\vone}{v^{(1)}}
\newcommand{\vtwo}{v^{(2)}}
\newcommand{\vthree}{v^{(3)}}
\newcommand{\vi}[1]{v^{(#1)}}
\newcommand{\MPall}{\mathrm{MP}^{\{1,2,3\}}}
\newcommand{\MPalljl}{\mathrm{MP}^{\{1,2,3\}}}
\newcommand{\MPonetwojl}{\mathrm{MP}^{\{1,2\}}}
\newcommand{\MPonetwo}{\mathrm{MP}^{\{1,2\}}}
\newcommand{\M}{\mathtt{AvMPs}} 
\newcommand{\Mem}{\mathtt{Memory}} 
\newcommand{\megastate}{\mathtt{p}} 
\newcommand{\ms}[1]{\mathtt{ms}(#1)} 
\newcommand{\var}{\mathtt{var}}
\newcommand{\pv}{\mathtt{p.v}} 
\newcommand{\pdepth}{\mathtt{p.depth}} 
\newcommand{\tjmp}{t_{\mathrm{jump}}} 
\newcommand{\phash}{\operatorname{\mathtt{p.prev-hash}}}
\newcommand{\pseed}{\operatorname{\mathtt{p.prev-seed}}}
\newcommand{\piter}{\mathtt{p.iter}}
\newcommand{\qihash}{\ensuremath{\operatorname{\mathtt{q}_i\texttt{.prev-hash}}}} 
\newcommand{\qiseed}{\ensuremath{\mathtt{q}_i\operatorname{\mathtt{.prev-seed}}}}
\newcommand{\qiiter}{\ensuremath{\mathtt{q}_i\mathtt{.iter}} }
\newcommand{\qit}{\ensuremath{\mathtt{q}_i\mathtt{.T}} }
\newcommand{\paiter}{\operatorname{\mathtt{\megastate}}^*_A\operatorname{\mathtt{.iter}}}
\newcommand{\pbiter}{\operatorname{\mathtt{\megastate}}^*_B\operatorname{\mathtt{.iter}}}
\newcommand{\pahash}{\operatorname{\mathtt{\megastate}}^*_A\operatorname{\mathtt{.prev-hash}}}
\newcommand{\pbhash}{\operatorname{\mathtt{\megastate}}^*_B\operatorname{\mathtt{.prev-hash}}}
\newcommand{\paseed}{\operatorname{\mathtt{\megastate}}^*_A\operatorname{\mathtt{.prev-seed}}}
\newcommand{\pbseed}{\operatorname{\mathtt{\megastate}}^*_B\operatorname{\mathtt{.prev-seed}}}
\newcommand{\pat}{\operatorname{\mathtt{\megastate}}^*_A\operatorname{\mathtt{.T}}}
\newcommand{\pbt}{\operatorname{\mathtt{\megastate}}^*_B\operatorname{\mathtt{.T}}}
\newcommand{\tcmp}{t_{\mathrm{cmp}}}
\newcommand{\IHA}{I_{H_A}}
\newcommand{\IHB}{I_{H_B}}
\newcommand{\Btotal}{B_{\mathrm{total}}} 
\newcommand{\Itotal}{I_{\mathrm{total}}} 
\newcommand{\Iblock}{I_{\mathrm{perBlock}}} 
\newcommand{\Aiter}{\ensuremath{\ell_A}} 
\newcommand{\Biter}{\ensuremath{\ell_B}} 
\newcommand{\simPath}{\ensuremath{\mathcal{T}}} 
\newcommand{\Icount}{\ensuremath{I_{\mathrm{current}}}}
\newcommand{\Icnt}{\ensuremath{I_{\mathrm{cnt}}}}
\newcommand{\HC}{\ensuremath{\mathsf{HC}}} 
\newcommand{\protosneaky}{sneaky attack\ }
\newcommand{\protosneakyf}{sneaky attack}
\newcommand{\Protosneaky}{Sneaky attack}
\newcommand{\specialpt}{jumpable point\ }
\newcommand{\specialptf}{jumpable point}
\newcommand{\specialscale}{jumpable scale\ }
\newcommand{\Specialpts}{Jumpable points}
\newcommand{\sneakyjmp}{sneaky jump\ }
\newcommand{\sneakyjmpf}{sneaky jump}
\newcommand{\tsp}{t^\dagger}
\newcommand{\Isp}{I^\dagger}
\newcommand{\jsp}{j^\dagger}
\newcommand{\csneak}{c^*} 
\newcommand{\clower}{c_l}
\newcommand{\cupper}{c_u} 
\newcommand{\Rblock}{R_{\mathrm{block}}}
\newcommand{\Rtblock}{\tilde{R}_{\mathrm{block}}}
\newcommand{\Riter}{R_{\mathrm{iter}}}
\newcommand{\Enc}{\mathrm{Enc}}
\newcommand{\Dec}{\mathrm{Dec}}
\newcommand{\divewindow}{\mathcal{I}_{\mathrm{dive}}}
\newcommand{\divewindows}[1]{\mathcal{I}_{\mathrm{dive},#1}}
\newcommand{\Chash}{C_{\mathrm{hash}}}
\newcommand{\Chashtwo}{{K}_{\mathrm{hash}}}
\newcommand{\tcq}{t_{c_q}} 
\newcommand{\tcqone}{t_{c_{q_1}}}
\newcommand{\tphatone}{t_{\hat{p}_1}}
\newcommand{\tphattwo}{t_{\hat{p}_2}}
\newcommand{\tqhattwo}{t_{\hat{q}_2}}
\newcommand{\Icorr}{I_{\mathrm{corr}}}
\newcommand{\sd}{\mathrm{sd}}
\newcommand{\ttq}{\mathtt{q}}
\newcommand{\obighash}{C_b \log d }
\newcommand{\phat}{\hat{p}}
\newcommand{\qhat}{\hat{q}}
\newcommand{\tphat}{t_{\hat{p}}}
\newcommand{\Bcorr}{B_{\mathrm{corrupted}} }
\newcommand{\mseq}{\cong}
\newcommand{\Bsim}{T}
\newcommand{\Bsimp}{T^{\leq \mathtt{p}}}
\newcommand{\pt}{\mathtt{p.T}}
\title{Interactive Coding with Small Memory and Improved Rate}
\author{Dorsa Fathollahi\thanks{Stanford University. Research supported in part by NSF grants CCF-1844628, CCF-2231157 and CCF-2133154. dorsafth@stanford.edu } \and Bernhard Haeupler\thanks{INSAIT, Sofia University "St Klimemt Ohridski", Bulgaria \& ETH Zurich, Switzerland. Partially funded by the European Union's Horizon 2020 ERC grant 949272 and partially funded by the Ministry of Education and Science of Bulgaria (support for INSAIT, part of the Bulgarian National Roadmap for Research Infrastructure). bernhard.haeupler@inf.ethz.ch } \\ \and Nicolas Resch\thanks{  Informatics' Institute, University of Amsterdam. Research supported in part by an NWO Veni grant (VI.Veni.222.347).n.a.resch@uva.nl } \and Mary Wootters\thanks{ Stanford University. Research supported in part by NSF grants CCF-1844628, CCF-2231157 and CCF-2133154. marykw@stanford.edu  }}
\date{\today}
\begin{document}

\maketitle

\begin{abstract}
    In this work, we study two-party interactive coding for adversarial noise, when both parties have limited memory.  We show how to convert any adaptive protocol $\Pi$ into a protocol $\Pi'$ that is robust to an $\eps$-fraction of adversarial corruptions, not too much longer than $\Pi$, and which uses small space. More precisely, if $\Pi$ requires space $\log(s)$ and has $|\Pi|$ rounds of communication, then $\Pi'$ requires $O_\epsilon(\log s \log |\Pi|)$ memory, and has $$|\Pi'| = |\Pi|\cdot\left( 1 + O\left( \sqrt{ \eps \log \log 1/\eps } \right)\right)$$ rounds of communication.  The above matches the best known communication rate, even for protocols with no space restrictions.
\end{abstract}

\section{Introduction}\label{sec:intro}
We study the problem of \emph{interactive communication over a noisy channel}.  Suppose that two parties, Alice and Bob, would like to carry out an interactive protocol $\Pi$. Formally, $\Pi$ is represented by a DAG and a transition function (see Section~\ref{sec:prelim}); informally, $\Pi$ contains instructions for how Alice and Bob should pass messages back and forth, for example to compute some function of interest.

However, Alice and Bob cannot carry out $\Pi$ directly, because  the channel between Alice and Bob is noisy, implying that the messages they pass back and forth might not be received correctly.  We study an adversarial model of corruption:  An adversary (with full knowledge of $\Pi$ and any inputs that Alice and Bob may have) is allowed to corrupt up to an $\eps$-fraction of the bits that are sent in either direction, over the course of the entire protocol.  The adversary is \emph{adaptive}, meaning that whether they choose to introduce a corruption can depend on what has happened so far. The goal is then to transform $\Pi$ into a \emph{robust} protocol, $\Pi'$, which allows Alice and Bob to {simulate} $\Pi$, even in the presence of such an adversary.

Our approach works in two different models.  In the first model, both the original protocol $\Pi$ and the robust protocol $\Pi'$ have \emph{alternating speaking order,} meaning that Alice and Bob take turns speaking.  In the second model, Alice and Bob's speaking order need not be fixed in advance (for either $\Pi$ or $\Pi'$).  That is, whether or not Alice or Bob transmit in iteration $i$ can depend not just on $i$, but also on their inputs, the transcript so far, as well as on any private randomness that Alice and Bob use.\footnote{We note that in this second model, it is possible that in the robust protocol $\Pi'$, both or neither of Alice and Bob may try to transmit in the same iteration.  We work in the ``speak-or-listen'' model of \cite{ghaffari2014optimal}, which was also used by other works in the same parameter regime we work in, \emph{e.g.} \cite{H14}. In this model, if both Alice and Bob transmit, then neither hears anything; and if neither transmit, then the adversary may make them hear anything, and this does not count toward the corruption budget.} See Section~\ref{sec:prelim} for more details on this second model.  

The problem of interactive communication over a noisy channel was first studied by Schulman~\cite{schulman1992communication,sch96}, and since then there has been a huge body of work on it; we refer the reader to~\cite{ran_survey} for an excellent survey.  Traditionally, work on the two-party problem against an adaptive adversary has focused on the trade-off between the following three quantities:
\begin{itemize}
    \item The \defn{rate} of the scheme, which captures how much communication overhead is required to make $\Pi$ robust.  Formally, this is defined as $|\Pi|/|\Pi'|$, where $|\Pi|$ denotes the number of rounds of communication in $\Pi$.\footnote{We assume that $\Pi$ has a fixed length that is known ahead of time; given this assumption, our robust protocol $\Pi'$ will also have a fixed length.} Thus, the rate is always at most $1$, and the goal is to make it as close to $1$ as possible.
    \item The \defn{corruption budget} of the adversary, which is the ``$\eps$'' above; the adversary is allowed to corrupt an $\eps$-fraction of communications, and the goal is to make $\eps$ as large as possible.
  
    \item The \defn{computational efficiency}, which is the total running time for Alice and Bob to do the simulation.  The goal is for this to be polynomial (or even linear) in the running time required for the original protocol $\Pi$.
\end{itemize}
After decades of work, there are now protocols that check all three boxes in a variety of parameter regimes.  In our work, we focus on the extremely high-rate regime, so we take $\eps$ to be a small constant, and want rate very close to $1$.  In this parameter regime, the protocol of \cite{H14} checks all three boxes: it is efficient, and obtains rate 
\begin{equation}\label{eq:desiredrate} 
R = 1 - O\bigl(\sqrt{ \eps \log\log(1/\eps) }\bigr) \tag{$\star$}
\end{equation}
when the corruption budget is $\eps$.  The rate \eqref{eq:desiredrate} is conjectured to be optimal in this setting~\cite{H14}. 

\paragraph{Small space interactive coding.} In our work, we add one more criterion to the list of desiderata:
\begin{itemize}
    \item The \defn{memory requirements} for Alice and Bob.  That is, if $\Pi$ can be run in small space, the robust version $\Pi'$ should also use small space.
\end{itemize}
In this work, as with previous work, we quantify the space that $\Pi$ requires by the number of states $s$ in the DAG that represents $\Pi$ (see Section~\ref{sec:prelim}).  If $\Pi$ has $s$ states, then the memory requirement is $O(\log s)$ bits.  Thus, our goal is for $\Pi'$ to use only a small multiple of $\log(s)$ bits.

Space-efficiency is an important step towards bringing interactive coding schemes closer to practical applicability.  
Existing interactive coding schemes that do \emph{not} explicitly take memory requirements into account (including the result of \cite{H14} mentioned above) require the parties to (at least) remember the entire transcript at every stage of the protocol $\Pi$.  This can be wasteful if $\Pi$ is very long, especially if $\Pi$ can itself be run in small space.

Small-space interactive communication over noisy channels  was first studied in the unpublished manuscript~\cite{HR18},\footnote{We note that this manuscript has been retracted, but we use several of its ideas in our work. Indeed, that manuscript is by a subset of the authors of the current work, and we view \cite{HR18} as a preliminary version of this work.} and subsequently by Chan, Liang, Polychroniadou and Shi in \cite{chan2020small} and by Efremenko, Haeupler, Kol, Resch, Saxena and Kalai in \cite{EHKKRS23}. We summarize their results in Table~\ref{table:prevwork}, and describe them in more detail below.

\begin{table}[h!]
\centering
\begin{tabular}{|c|c|c|c|}
\hline
\textbf{Reference} & \textbf{Rate} & \textbf{Space} & \textbf{Adversary}  \\
\hline\hline
\cite{H14} &  \cellcolor{green!20} $1 - O\left(\sqrt{\eps \log\log(1/\eps)}\right)$ & $\Omega(\log(s)\cdot |\Pi|)$ &  \cellcolor{green!20}Adaptive  \\
\hline
\cite{chan2020small} & \cellcolor{green!20} $1 - O(\sqrt{\eps})$ & \cellcolor{green!20} $O(\log(s) \cdot \log|\Pi|)$ & Oblivious \\
\hline
\cite{EHKKRS23} & $1 - O\left( \sqrt[3]{\eps \log(1/\eps) } \right)$ & \cellcolor{green!20} $O(\log(s) \cdot \log|\Pi|)$ & \cellcolor{green!20} Adaptive \\
\hline
Our work & \cellcolor{green!20} $1 - O\left(\sqrt{\eps \log\log(1/\eps)}\right)$ & \cellcolor{green!20} $O(\log(s)\cdot  \log|\Pi|)$  & \cellcolor{green!20} Adaptive \\
\hline
\end{tabular}
    \caption{Work on space-bounded noisy interactive communication in the high-rate regime against an adversary with a corruption budget of $\eps$.  We note that \cite{H14} does not try to minimize space, but it attains the best known rate against and adaptive adversary.  
    Above, the original protocol $\Pi$ has \emph{size} $s$ (see Section~\ref{sec:prelim} for formal definitions), meaning that it requires $O(\log s)$ space. 
 We have highlighted ``good'' (meaning, desired for this paper) values in green.}\label{table:prevwork}
\end{table}

The work of Chan et al.~\cite{chan2020small} designs a protocol for the case of an \emph{oblivious adversary},\footnote{That is, the adversary's decisions about which transmissions to corrupt can only depend on the round number, and not on the prior communication.} rather than the \emph{adaptive adversary} that we consider. 
They achieve small space---only $O(\log s \log|\Pi|)$, where as above $s$ is number of states in the DAG that represents $\Pi$.
While their protocol only applies to oblivious adversaries, they do obtain a very good rate of $1 - O(\sqrt{\eps})$, which is conjectured to be optimal for oblivious adversaries~\cite{H14}. 

Like our work, the work of Efremenko et al.~\cite{EHKKRS23}  considers an adaptive adversary, and also obtains space $O(\log s \log|\Pi|)$.  However, while the rate does approach $1$ as $\eps \to 0$, it is of the form
\[ R = 1 - O\left( \sqrt[3]{\eps \log(1/\eps) } \right),\]
which is smaller than the rate \eqref{eq:desiredrate} that \cite{H14} achieves, both because of the cube root instead of the square root, and also because of the $\log(1/\eps)$ rather than $\log\log(1/\eps)$.

\paragraph{Our result.} In this work, we show that one does not need to sacrifice in rate in order to obtain a small space protocol, even for adaptive adversaries.  More precisely, we give a protocol $\Pi'$ that:
\begin{itemize}
    \item has rate matching \eqref{eq:desiredrate}, the best-known rate for protocols against an adaptive adversary \emph{even with no space requirements};
    \item correctly simulates $\Pi$ even in the presence of an adaptive adversary with a corruption budget of $\eps$; 
    \item is computationally efficient; and
    \item uses space at most $O(\log(s) \cdot \log|\Pi|)$, matching the space bound of \cite{chan2020small,EHKKRS23}. 
\end{itemize}

Formally, our main theorem is as follows.

\begin{restatable}{theorem}{mainThm}\label{thm:main}
Fix $\eps > 0$.
Let $\Pi$ be a two-party interactive protocol that requires space $\log s$.  Then there is a randomized protocol $\Pi'$ (with private randomness) that, with probability at least $1 - 1/\poly(|\Pi|)$, correctly simulates $\Pi$, in the presence of an adaptive adversary who may corrupt an $\eps$ fraction of the bits sent in either direction.  Moreover, the rate of the protocol is at least
\[ R \geq 1 - O\left( \sqrt{ \eps \log\log(1/\eps) }\right) ,\]
and the amount of space required for each of Alice and Bob is at most $O_\eps(\log s \cdot \log |\Pi|)$.\footnote{The proof of Theorem~\ref{thm:main} establishes that the space required is $O\inparen{\log(s) \log|\Pi| + \sqrt{ \frac{\log\log(1/\eps)}{\eps}}  \log|\Pi| } = O_\eps(\log s \cdot \log |\Pi|)$.}
Finally, if the running time of $\Pi$ is $T$, then the running time of $\Pi'$ is at most $T \cdot \poly(|\Pi|/\eps)$. 
\end{restatable}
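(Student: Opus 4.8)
The plan is to start from the meeting-points (MP) simulation of Haeupler~\cite{H14}, which already attains the rate in~\eqref{eq:desiredrate} against an adaptive adversary, and to re-engineer its bookkeeping so that each party stores only $O_\eps(\log s\cdot\log|\Pi|)$ bits rather than the full simulated transcript. Recall the shape of that framework: $\Pi'$ proceeds in \emph{blocks}, each of which transmits a fixed-length chunk of simulated $\Pi$-communication together with a short ``control'' header --- a hash (fingerprint) of the sender's current simulated transcript, the sender's block index, and a constant number of meeting-point messages. When a party's received hash does not match what it expects, it does not rewind immediately; it invokes the MP subroutine, which maintains meeting-point candidates $\vone,\vtwo,\vthree$ at a geometrically growing scale $2^j$ and, after enough agreement, rewinds both parties to a common past position. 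A potential-function argument bounds the number of ``wasted'' blocks (those spent in disagreement or on false alarms) by $O(1)$ times the number of corruptions, which is what yields $|\Pi'| = |\Pi|(1 + O(\sqrt{\eps\log\log 1/\eps}))$ once the block length is set to $B = \Theta(\sqrt{\log\log(1/\eps)/\eps})$, balancing the $\Theta(\log\log(1/\eps))$ control bits per block against the $\Theta(\eps B)$ blocks lost per unit of corruption. Both the alternating-order and the adaptive ``speak-or-listen'' models are handled as in~\cite{H14,EHKKRS23}; the space machinery below is orthogonal to which model is used.

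The space savings come from replacing ``the simulated transcript'' by a \emph{checkpoint stack}. Each party keeps only (i) the current state of $\Pi$, which is $O(\log s)$ bits, and (ii) for each scale $2^j$ with $0 \le j \le \lceil\log|\Pi|\rceil$, a checkpoint recording the $\Pi$-state at the most recent simulated position that is a multiple of $2^j$, together with that position's index and a fingerprint of the transcript up to that point. Whenever the MP machinery needs a quantity that depends on a transcript prefix --- the $\Pi$-state at some past position, or its fingerprint --- the party recomputes it by restarting $\Pi$ from the nearest checkpoint at or before that position and running forward for at most $|\Pi|$ steps, replaying the messages it has already committed to; this is exactly the source of the $\poly(|\Pi|)$ multiplicative overhead in running time. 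A rewind to a position $T'$ is implemented by discarding all checkpoints after $T'$ and recomputing the state at $T'$, after which the frontier is again a set of positions that are multiples of the scales $2^j$, so the invariant is maintained. The one structural change forced on the MP subroutine is that all meeting-point candidates and all rewind targets must be \emph{rounded} to multiples of the relevant scale $2^j$, so that they coincide with reconstructible checkpoint positions; I will show this rounding changes the potential bound by at most a constant factor, so the rate is unaffected.

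Two further points need care. First, \textbf{fingerprinting against an adaptive adversary:} I will use polynomial (Reed--Solomon-type) fingerprints over a field of size $\poly(|\Pi|)$, with a fresh seed drawn from each party's private randomness at the start of each block and used only \emph{after} the adversary has committed to that block's corruptions; then a single fingerprint is $O(\log|\Pi|)$ bits, a union bound over the $\Theta(|\Pi|)$ blocks makes the probability of \emph{any} fingerprint collision $1/\poly(|\Pi|)$, and adaptivity buys the adversary nothing since each seed is independent of its past choices. This yields the ``with probability $1 - 1/\poly(|\Pi|)$'' guarantee. Second, \textbf{consistency of recomputed fingerprints:} because each party stores only $O(\log|\Pi|)$ fingerprints and recomputes the rest, I must verify that a party's recomputed fingerprint of a prefix agrees with the value the other party compares it against whenever their simulated transcripts actually agree --- this follows because a prefix fingerprint depends only on the prefix and the seed, both reconstructible, but it is the place where the interaction between the checkpoint stack, the block indices, and the MP state must be pinned down exactly.

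I expect the \textbf{main obstacle} to be reconciling two forces that pull in opposite directions: Haeupler's potential argument is tight precisely because the parties can rewind to a common position whose distance from the present is comparable to the corruption suffered --- so $k$ corruptions cost only $O(k)$ lost rounds --- whereas small space restricts the set of rewindable positions to the $O(\log|\Pi|)$ geometrically-spaced checkpoints. The heart of the proof is to show that coarsening every rewind target to a checkpoint position (i) still lands both parties at the \emph{same} position, (ii) still makes progress of the same order as in~\cite{H14}, and (iii) does not create new false-alarm scenarios beyond those the original analysis already charges to corruption --- all while each party works with only a logarithmic-size summary of its own history. Once the modified MP subroutine is shown to satisfy the same per-block invariants that~\cite{H14} assumes, the rate, correctness, and space bounds follow from (adapted versions of) the existing accounting: the space is the $O(\log|\Pi|)$ checkpoints times $O(\log s)$ bits each, plus the $O(\log|\Pi|)$-bit indices and seeds and the $O(B)$-bit current-block buffer, matching the footnoted $O(\log s\log|\Pi| + \sqrt{\log\log(1/\eps)/\eps}\,\log|\Pi|)$, and the running time is $T\cdot\poly(|\Pi|/\eps)$ since each of the $O(|\Pi|)$ blocks triggers at most one recomputation from a checkpoint.
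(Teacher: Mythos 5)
Your ``recompute from nearest checkpoint'' mechanism cannot work, because $\Pi$ is interactive. To advance the pebble from a checkpoint at position $p$ to some later position $p'$ you must pass through states owned by the other party, and at those states the next move is determined by the bit that party \emph{sent} --- which is part of the transcript, which is precisely what you discarded to save space. Your phrase ``replaying the messages it has already committed to'' covers only a party's own sent bits; the received bits are not a function of the $\Pi$-state plus $\Pi$'s description, so neither a later $\Pi$-state nor a fingerprint of a transcript prefix is reconstructible from an earlier checkpoint. The paper avoids ever needing this: one only ever rewinds to stored meeting points, each of which carries a full mega-state (the $\Pi$-state at that depth plus a chained hash $\phash$), and fingerprints of the past are never recomputed but are maintained \emph{incrementally} --- $\phash$ is updated as $h^b$ of the previous hash, the new transcript chunk, and the new seed, so a constant-size value per remembered point summarizes all of history.

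Your rate accounting also breaks. You transmit an $O(\log|\Pi|)$-bit fingerprint as a control header every block of $B = \Theta\bigl(\sqrt{\log\log(1/\eps)/\eps}\,\bigr)$ rounds, which gives per-round overhead $\Theta(\log|\Pi|/B)$; this diverges as $|\Pi|\to\infty$ for fixed $\eps$ and cannot match $O\bigl(\sqrt{\eps\log\log(1/\eps)}\bigr)$. The paper deals with this with a two-level hash: a ``big'' hash of length $\Theta(\log d)$ that is \emph{stored} inside mega-states but never transmitted, and a ``small'' hash with constant output length that \emph{is} exchanged each iteration, comparing among other things the parties' stored big-hash values. The small hash's short per-iteration seed (length $O(\log\log(1/\eps))$) is what the adversary cannot adapt to; the longer inner seed is shared once per $\Theta(\log d)$ iterations and pseudorandomly extended. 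You need something like this split, not a single $\Theta(\log|\Pi|)$-bit fingerprint per iteration.

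Finally, you underestimate the analytical difficulty of ``coarsened rewinds.'' With only $O(\log|\Pi|)$ remembered positions, the adversary can drive a party deep so that it forgets all fine-grained meeting points, and then force repeated cheap rewinds over a large gap --- the potential can drop by much more than the corruption paid for, and this attack is repeatable, so it is a flaw in the protocol and not just the analysis. Fixing it requires a structural change (the paper's $\MPthree$ candidate, a divisibility barrier that is expensive to skip), and even after that fix there is a non-repeatable ``sneaky attack'' under which the usual per-iteration potential bound is simply false. The paper's correctness proof therefore abandons the ``$\Phi$ is always well-behaved'' template and bounds the aggregate damage of sneaky attacks globally, via disjointness of their diving and voting windows. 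Your claim that the modified MP subroutine ``satisfies the same per-block invariants'' as Haeupler's is exactly the statement that fails; finding the right replacement for it is the real content of the theorem.
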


\begin{remark}[Speaking order in Theorem~\ref{thm:main}]\label{rem:speaking_order}
As noted at the beginning of the paper, our result holds in two different models: in the first, the speaking order of both $\Pi$ and $\Pi'$ are alternating; in the second, the speaking orders of both $\Pi$ and $\Pi'$ are unrestricted.

We prove our result in this second model (see Section~\ref{sec:prelim} for more on the model).  However, an inspection of our algorithm shows that if the original protocol $\Pi$ is alternating, then the robust protocol $\Pi'$ has a fixed and periodic speaking order, which can furthermore easily be made to be alternating (see Remark~\ref{rem:alternating}); this proves the result in the first model as well.
\end{remark}

One of our technical contributions is the introduction of a new (to us) style of analysis for noisy interactive communication.
 Typically, the approach is to define a potential function $\Phi$, and show that it stays ``well-behaved'' throughout the execution of $\Pi'$.  We do define such a $\Phi$, but our $\Phi$ is \emph{not} always ``well-behaved.''  Thus, we augment the potential function analysis with another, more global analysis, which keeps track of how often $\Phi$ can behave poorly over the course of the entire run of $\Pi'$. 
 We discuss our protocol---which is an adaptation of the protocol of \cite{H14}---and our proof techniques further in Section~\ref{sec:tech}.

\paragraph{Related work.}
Coding for interactive communication dates back to the work of Schulman~\cite{schulman1992communication,sch96}. Since then, a long line of works have given constructions of robust protocols, considering many desiderata such as communication rate, tolerable error rate, error model, relaxed decoding notions (e.g. list-decoding), time complexity, and so on, for example~\cite{gelles2011efficient,braverman2011towards,braverman2012towards,brakerski2012efficient,kol2013interactive,brakerski2014fast,ghaffari2014optimal,ghaffari2014optimal2,
HaeuplerTransInf16p4588,
braverman2017coding,braverman2017constant,braverman2017list,haeupler2017synchronization,haeupler2017bridging,efremenko2020binary,chan2020small,EHKKRS23,efremenko2023rate}.
For further details, we recommend the excellent survey of Gelles~\cite{ran_survey}. 

In our work, we focus on interactive coding schemes whose rate approach $1$ as the error rate $\eps$ tends to $0$. The first progress on this question was achieved by Kol and Raz~\cite{kol2013interactive}, 
who provided a scheme of rate $1-O\big(\sqrt{\eps\log\tfrac1\eps}\big)$, which tolerated an $\eps$ fraction of \emph{random} errors. 

This protocol, and any (rate-efficient) protocol thereafter, applied the ``rewind-if-error'' paradigm, initiated by Schulman~\cite{schulman1992communication}: namely, the parties simulate the protocol for a while, but if they become convinced that errors have derailed the computation they can rewind to a previous point and restart the computation from there. To enable both parties to simulate different parts of the original $\Pi$ at the same time during $\Pi'$ (which is what happens when the parties are out of synch regarding what the current part of $\Pi$ to continue from is), the speaking order of the original $\Pi$ needs to be alternating (or have small period), in order for the robust protocol $\Pi'$ to have a fixed speaking order; 

Kol and Raz~\cite{kol2013interactive} also developed powerful min-entropy techniques to prove impossibility results for high-rate protocols with irregular enough (and even periodic protocols with a large enough period). Later, the impossibility results were refined and made more precise by several conjectures in Haeupler~\cite{H14} and the recent work by Efremenko, Kol, Paramonov, Saxena~\cite{efremenko2023rate}, which used min-entropy to formally prove a (large-alphabet version) of one of these conjectures, namely that without any restrictions on the input protocol being alternating (or periodic with low period) a rate approaching $1$ is impossible even against a single random erasure.

Inspired by the work by Kol and Raz~\cite{kol2013interactive}, Haeupler~\cite{H14} designed rate-efficient coding schemes that worked against adaptive adversaries and featured improved rates. With the same assumption on alternating protocols, Haeupler~\cite{H14} obtained rate $1-O(\sqrt{\eps})$ if the errors are selected by an oblivious adversary; and rate $1-O\big(\sqrt{\eps\log\log\tfrac1\eps}\big)$ if the errors are selected by an adaptive adversary. These remain the best protocols in this regime in terms of rate and are conjectured to be optimal~\cite{H14}. Haeupler~\cite{H14} also pointed out that as an alternative to considering alternating protocols one can allow for arbitrary original protocols $\Pi$ if one allows the robust simulation to have an adaptive speaking order (in the model of \cite{ghaffari2014optimal}, which is the second setting we consider).

Our work studying \emph{space-bounded} interactive coding schemes is most closely connected to the works \cite{HR18,chan2020small,EHKKRS23}, which we discussed above (see Table~\ref{table:prevwork}).
In all these works the protocols are designed such that the parties only require space $O(\log(s)\cdot \log |\Pi|)$.  
(We are not aware of other works studying such memory-bounded schemes, although we mention that the problem of space-bounded communication complexity was studied by Brody et al~\cite{brody2013space}.)

The protocol of~\cite{chan2020small} applies to the case of oblivious adversaries and achieves rate $1-O(\sqrt{\eps})$ (i.e., the conjectured to be optimal rate from~\cite{H14} for this setting). This work in fact applies to a more general scenario where a single ``Alice'' (i.e., a server) wishes to communicate with $m$ ``Bobs'' (i.e., $m$ clients); in this case the constants hidden in the big-$O$ notation for the space and rate depend on $m$. A main conceptual contribution of this work is to have the memory-bounded parties chain hashes of previous points they could rewind to, an approach we use as well. 

The protocol of~\cite{EHKKRS23} does work for arbitrary adversaries (as we consider), but achieves lower rate $1-O\big(\sqrt[3]{\eps\log1/\eps}\big)$.  We note that \cite{EHKKRS23} focuses on a communication model where both $\Pi$ and $\Pi'$ are alternating.  As noted in Remark~\ref{rem:speaking_order}, if $\Pi$ is alternating then our protocol $\Pi'$ can be made alternating, so our result improves the rate in that model.

The main result of \cite{EHKKRS23} is a ``compiler,'' which works in a black-box way, taking any (not-necessarily-space-efficient\footnote{In fact, it need not even have non-trivial time complexity.}) interactive coding scheme as input, and outputting a new interactive coding scheme that is space-efficient. The high-level strategy employed in this work is to apply a form of ``concatenation'' scheme. That is, the protocol is simulated in short blocks that are made noise-resilient by using an ``inner'' interactive coding scheme. These ``blocks'' can then be viewed as larger alphabet symbols.  Thus, the problem is essentially reduced to designing a noise-resilient protocol \emph{over a larger alphabet}. The compiler then essentially applies an ``outer'' interactive coding scheme that works over $\log|\Pi|$-bit alphabets; such a scheme is given in, e.g.,~\cite{H14}. The authors can choose the (necessarily deterministic) inner coding scheme to have rate $1-O(\sqrt{\eps\log1/\eps})$ (using a scheme of Cohen and Samocha~\cite{cohen2020palette}) and the outer coding scheme to have rate $1-O(\sqrt{\eps})$; this leads to a concatenated scheme with rate $1-O(\sqrt[3]{\eps\log1/\eps})$. 

Another closely related work is due to Efremenko et al~\cite{efremenko2022circuits}, which studies the task of constructing error resilient circuits. The approach taken therein is to translate this task into the problem of constructing space efficient interactive coding schemes for a certain, non-standard communication model.\footnote{This is a generalization of a result of Kalai, Rao and Lewko~\cite{kalai2012formulas}, which translated the problem of constructing robust \emph{formulas} into a certain interactive coding task. } In particular, the communication model gives a certain amount of ``feedback'' to the parties (i.e., they learn what the other party received, even if it was corrupted); however the adversary is given the additional power to tamper with the parties' memories. The space blow-up in this protocol is also $O(\log|\Pi|)$, as is the case for our protocol and those of \cite{chan2020small,EHKKRS23}. 

Lastly, given that all the above protocols blow up the space complexity by an $O(\log |\Pi|)$ factor, it is natural to wonder if that is indeed the best possible, or if it can be improved. As we discuss more in Section~\ref{sec:conclusion}, it seems like this blow-up factor is inherent, at least in the absence of substantially new ideas.

\section{Technical Overview}\label{sec:tech}
In this section, we give a technical overview of the proof of Theorem~\ref{thm:main}.  The starting point for our work is the protocol of \cite{H14}, so we describe it in Section~\ref{sec:tech1}.  Then in Section~\ref{sec:tech2}, we describe some challenges in adapting this protocol to be space-efficient, and how we overcome them. 
Finally, in Section~\ref{sec:tech3} we explain in more detail the structure of the proof, with some pointers to key lemmas.  We begin however in Section~\ref{sec:tech0} with an extremely high-level overview of what we view as the most interesting part of our analysis.

\subsection{Sneaky Attacks and a New Flavor of Analysis}\label{sec:tech0}
As mentioned in Section~\ref{sec:intro}, we view one of our main technical contributions to be a new flavor of analysis.  As we will see, adapting \cite{H14} to the small-space-setting opens us up to an attack, which we call a \textbf{sneaky attack}, discussed more below.  However, this attack ends up being an attack only on the ``standard'' potential function analysis, rather than on the scheme itself.  

In more detail, the analysis of \cite{H14} and other works proceed by analyzing a potential function $\Phi$.  Intuitively, $\Phi$ keeps track of how much progress Alice and Bob make, relative to how much error the adversary introduces.  A good rule of thumb for the success of the protocol, in the standard analysis, is the following:
\begin{quote}\textbf{Rule of Thumb:} If Alice and Bob have to backtrack a lot, then the adversary should have to introduce a commensurate number of errors.
\end{quote}
Since the adversary has a limited budget, this rule of thumb means that Alice and Bob don't have to backtrack a lot over the course of the protocol, and the rate is good.  If the potential function $\Phi$ stays ``well-behaved'' throughout the protocol, then this is rule of thumb is maintained; thus the traditional approach is to show that $\Phi$ is always well-behaved.

We do define a potential function $\Phi$, but it turns out that $\Phi$ may not stay well-behaved: If the adversary executes a \emph{sneaky attack}, then Alice and Bob may rewind a lot even if the adversary hasn't introduced many corruptions at the time.  However, we show that if such a sneaky attack occurs, the adversary must have ``set up'' for it by introducing corruptions at some \emph{other} time in the algorithm; and moreover that these corruptions can only be used for a single sneaky attack.  Thus, even though the potential function $\Phi$ may be poorly behaved sometimes, we can bound how much that happens.

To that end, our analysis involves two components.  First, we do the traditional potential function analysis.  Second, we augment that with an analysis that bounds the number of sneaky attacks that can occur throughout the execution of $\Pi'$, and argue that the only way that $\Phi$ is \emph{not} well-behaved is if a sneaky attack occurs.

To the best of our knowledge, this sort of proof structure is novel to our work.  We hope that this style of analysis may inspire simpler protocols going forward.

\subsection{The Protocol of \cite{H14}}\label{sec:tech1}
In order to describe our protocol in more detail, we first describe the protocol of \cite{H14}, which allows for high-rate robust interactive communication, but without small space.
At a high-level, the protocol of \cite{H14} falls within the ``rewind-if'' paradigm of~\cite{schulman1992communication}.  The basic idea is that Alice and Bob will simulate the protocol $\Pi$ in iterations, where each iteration consists of a \emph{Verification Phase}, a \emph{Computation Phase}, and a \emph{Transition Phase}.  During the Verification Phase, the two parties communicate with each other to make sure that their transcripts so far match.  If everything seems on track, then during the \emph{Computation Phase} they simulate $r$ rounds of $\Pi$.  Finally, in the \emph{Transition Phase}, if things do not seem to be on track, the parties may each ``rewind'' to an earlier place in $\Pi$ to attempt to get back in sync.

The instantiation of the ``rewind-if'' paradigm in \cite{H14} relied on a set of cleverly chosen \emph{meeting points}.  More precisely, for a given depth $\ell$ in the protocol $\Pi$, \cite{H14} defines two ``scale-$j$'' meeting points for $\ell$, $\MPone$ and $\MPtwo$, that are roughly $2^j$ iterations earlier than $\ell$ in $\Pi$.  If a party (say, Alice) decides that she needs to rewind, she will iterate over all the scales $j = 0, 1, 2, \ldots$, and for each scale she will attempt to see if she and Bob have a shared meeting point at scale $j$.  She will do this by using roughly $2^j$ rounds of ``voting'' for each of the scale-$j$ points.  Once Alice identifies a shared meeting point (and hopefully, so does Bob), they will rewind back to that point and begin simulating again. The reason that \cite{H14} uses \emph{two} meeting points is because, if these points are chosen carefully, it can be shown that most of the time Alice and Bob's sets of two meeting points will overlap, so they have a common meeting point to rewind to. The reason that they vote for $2^j$ rounds before rewinding to a point approximately $2^j$ away is to make sure that the adversary has to invest corruptions at roughly the same scale that Alice and Bob rewind.
This is the ``rule of thumb'' discussed in Section~\ref{sec:tech0}, and enables \cite{H14} to use the potential function analysis described above.

\subsection{Challenges in Making \cite{H14} Small-Space, and How We Overcome Them}\label{sec:tech2}
The approach of \cite{H14} may require much more space than $\Pi$.  The reason is that Alice and Bob need to store the whole history of what they have simulated so far, rather than just the state of $\Pi$ that they are in.  This leads to our zero'th attempt to adapt the protocol of \cite{H14} to small space: Instead of storing the whole history, Alice and Bob each just store the state $v$ of $\Pi$ that they are meant to be at; call these states $v_A$ and $v_B$, respectively.  This zero'th attempt immediately runs into several challenges.  We discuss five of them in turn, and explain how we overcome them in our protocol.  Briefly, the challenges are the following:
\begin{itemize}
    \item \textbf{Challenges 1 and 2} have to do with the basic mechanics of deciding when to rewind and where to rewind to.  These also come up and were resolved in \cite{EHKKRS23}, and we resolve them in a similar (though not identical) way.  We discuss these only briefly below.
    \item After resolving Challenges 1 and 2, we end up with a protocol that is susceptible to \textbf{Challenge~3}, which is that Alice and Bob may rewind too much, resulting in a bad rate.  To address this, we do have to modify the protocol further, which we describe below.
    \item After resolving Challenge 3, we are left with \textbf{Challenge 4}, which is the \emph{sneaky attack} alluded to above in Section~\ref{sec:tech0}.  This is not actually a problem with the protocol but rather a problem with the analysis, so we describe how to fix the analysis.
    \item There are no more challenges.  Proving this is a challenge in and of itself, which we discuss below as \textbf{Challenge 5}.
\end{itemize}

\paragraph{Challenge 1: Deciding if the parties are ``in sync''.}  In \cite{H14}, the parties check if they are ``in sync'' during the Verification phase by exchanging hashes of their simulated transcripts.  However, since Alice and Bob no longer remember their transcripts, they can no longer do this.  A first attempt would be for Alice and Bob to simply send hashes of $v_A$ and $v_B$, their current states in $\Pi$, instead of hashes of the entire transcripts. However, this does not work: it is possible that the adversary could maneuver Alice and Bob into a situation where they agree on the state $v$ in $\Pi$, but that $v$ is not the correct state.  If that happens, then Alice and Bob would happily continue on simulating---in sync, but down the wrong path---while the adversary sits back and laughs.\footnote{We note that laughing is not a necessary part of this attack; the important thing is that the adversary does not need to invest in any more corruptions.}  Instead, we use a similar solution as in~\cite{chan2020small,EHKKRS23}, which is \emph{hash chaining}. That is, along with maintaining their current states $v_A$ and $v_B$ in $\Pi$, Alice and Bob maintain a hash value $H$.  
Every so often, they update $H$ by hashing together their most recently simulated rounds of $\Pi$, along with the \emph{previous} value of $H$ (as well as some additional information). This way, Alice and Bob have some record of the entire past without having to store the whole transcript.

However there is one more wrinkle.  Because the purpose of this chained hash $H$ is to maintain a history of the whole protocol, it can \emph{never} suffer a hash collision.  If it did even once, Alice and Bob could be off-track for the rest of the protocol.  In order to guarantee that the probability of a hash collision is sufficiently small, the output of this chained hash must be relatively large; for that reason we call it the \emph{big} hash.  In more detail, the parameters are such that Alice and Bob have space to \emph{store} one copy of $H$ at a time, but don't have bandwidth to \emph{exchange} it and compare their copies.  If they did that, then the rate of the protocol would not match the desired rate \eqref{eq:desiredrate}.  
Instead, Alice and Bob also employ an additional \emph{small} hash, which has a much higher chance of collision and which they do exchange.   This small hash is used to exchange information both about their current state, as well as their copies of $H$, the randomness used to generate it, and when that randomness was itself generated.  By carefully choosing what information is hashed and exchanged, and controlling the number of small hash collisions, we are able to guarantee that Alice and Bob stay on track with high probability.
 We remark that \cite{EHKKRS23} don't face this challenge, as the simulation is done in chunks of length $\Theta(\log |\Pi|)$, so they can afford to send hashes of length $\Theta(\log|\Pi|)$ while maintaining rate close to $1$. However, we simulate in chunks of length $O_\eps(1)$ in order to obtain the rate \eqref{eq:desiredrate}, so we can only afford to exchange hashes of constant length, forcing us to use these small hashes. (We remark such small hashes are also used in \cite{H14} and other protocols achieving \eqref{eq:desiredrate}, such as \cite{chan2020small}.)

\paragraph{Challenge 2: Remembering the ``meeting points'' to jump back to.}
In \cite{H14}, Alice and Bob can remember the whole histories, so once they decide that they want to rewind to an earlier meeting point, they remember all the state information associated with it.  However, in our setting Alice and Bob cannot remember every possible meeting point that they might like to jump to.  Instead, we take the same approach as \cite{chan2020small,EHKKRS23} and allow Alice and Bob to remember only about $\log|\Pi|$ meeting points (and the associated state information) at a time. These meeting points are chosen to be roughly geometrically spaced apart. If a party (say Alice) simulates very deep in the protocol $\Pi$, she will have a fairly spotty memory of the earlier part of her simulation, and a more robust memory of the more recent part.

\paragraph{Challenge 3: Spotty memories can lead to over-jumping.}  As mentioned above, because Alice and Bob keep adding meeting points as they simulate $\Pi$, they need to forget meeting points from earlier iterations of $\Pi$.  This causes spotty memory of earlier in the simulation.  In particular, if we use the two-meeting-point voting system of \cite{H14}, something like the process pictured in Figure~\ref{fig:attack1} could occur.

\begin{figure}[h!]
\begin{center}
\begin{tikzpicture}[yscale=.5]
    \begin{scope}
        \draw[red, ->, thick] (0,0) to (0,-5.2);
        \draw[blue,->, thick, dashed] (.4, 0) to (0.4, -5.2);
        \foreach \x in {0,-3,-4, -4.5,-4.75}{
        \draw[red,fill=red] (0,\x) circle (0.1);
        \draw[blue,fill=blue] (.4,\x) circle (0.1);
        }
        \node[anchor=south] at (0.2, .3) {\begin{minipage}{3cm} \begin{center}\footnotesize 1. Alice and Bob are simulating $\Pi$ in sync with each other.  They remember very few meeting points toward the beginning of the protocol.\end{center}\end{minipage}};
    \end{scope}
        \begin{scope}[xshift=3.5cm]
        \draw[red, thick] (0,0) to (0,-5);
        \draw[blue, thick, dashed] (.4, 0) to (0.4, -5);

        \draw[red, thick](0,-5) to (-.5, -5.5);
        \draw[red, thick, ->] (-.5, -5.5) to (-.5, -10);
        \draw[blue, thick, dashed ](.4,-5) to (.9, -5.5);
        \draw[blue, dashed, thick, ->] (.9, -5.5) to (.9, -10);
              \foreach \x in {0,-4}{
        \draw[red,fill=red] (0,\x) circle (0.1);
        \draw[blue,fill=blue] (.4,\x) circle (0.1);
        }
        \foreach \x in {-6, -8, -9, -9.5}{
        \draw[red,fill=red] (-.5,\x) circle (0.1);
        \draw[blue,fill=blue] (.9,\x) circle (0.1);
        }
        \node[anchor=south] at (0.2, .3) {\begin{minipage}{3cm} \begin{center}\footnotesize 2. The adversary causes Alice and Bob to go on different paths for a while.  Now they remember even fewer points toward the beginning of the protocol.\end{center}\end{minipage}};
    \end{scope}
           \begin{scope}[xshift=7cm]
        \draw[red, thick] (0,0) to (0,-4);
        \draw[blue, thick, dashed] (.4, 0) to (0.4, -4);

              \foreach \x in {0,-4}{
        \draw[red,fill=red] (0,\x) circle (0.1);
        \draw[blue,fill=blue] (.4,\x) circle (0.1);
        }

        \node[anchor=south] at (0.2, .3) {\begin{minipage}{3cm} \begin{center}\footnotesize 3. Alice and Bob catch the error and rewind to resolve it.\end{center}\end{minipage}};
    \end{scope}

            \begin{scope}[xshift=10.5cm]
        \draw[red, thick] (0,0) to (0,-4);
        \draw[blue, thick, dashed] (.4, 0) to (0.4, -4);

        \draw[red, thick](0,-4) to (-.5, -4.5);
        \draw[red, thick, ->] (-.5, -4.5) to (-.5, -5);
        \draw[blue, thick, dashed ](.4,-4) to (.9, -4.5);
        \draw[blue, dashed, thick, ->] (.9, -4.5) to (.9, -5);
              \foreach \x in {0,-4}{
        \draw[red,fill=red] (0,\x) circle (0.1);
        \draw[blue,fill=blue] (.4,\x) circle (0.1);
        }

        \node[anchor=south] at (0.2, .3) {\begin{minipage}{3cm} \begin{center}\footnotesize 4. The adversary again makes Alice and Bob diverge, but just a little bit.\end{center}\end{minipage}};
    \end{scope}

                \begin{scope}[xshift=14cm]

              \foreach \x in {0}{
        \draw[red,fill=red] (0,\x) circle (0.1);
        \draw[blue,fill=blue] (.4,\x) circle (0.1);
        }

        \node[anchor=south] at (0.2, .3) {\begin{minipage}{3cm} \begin{center}\footnotesize 5. Alice and Bob catch the error, and rewind.  The adversary fools them into skipping the only close meeting point, and then they rewind a lot!\end{center}\end{minipage}};
    \end{scope}
\end{tikzpicture}
\end{center}
\caption{Steps in an attack that the adversary might make when Alice and Bob use only $\MPone$ and $\MPtwo$ as in \cite{H14}, exploiting Alice and Bob's limited memory.  The red solid line represents Alice's simulated path; the blue dashed line represents Bob's simulated path.  The ovals represent the meeting points that Alice and Bob remember.  In steps 2 and 3, the adversary invests an amount of corruptions that is commensurate with the amount that Alice and Bob rewind, which is good.  But in steps 4 and 5, the adversary invests few corruptions, and can make Alice and Bob rewind very far back, which is not good.}\label{fig:attack1}
\end{figure}
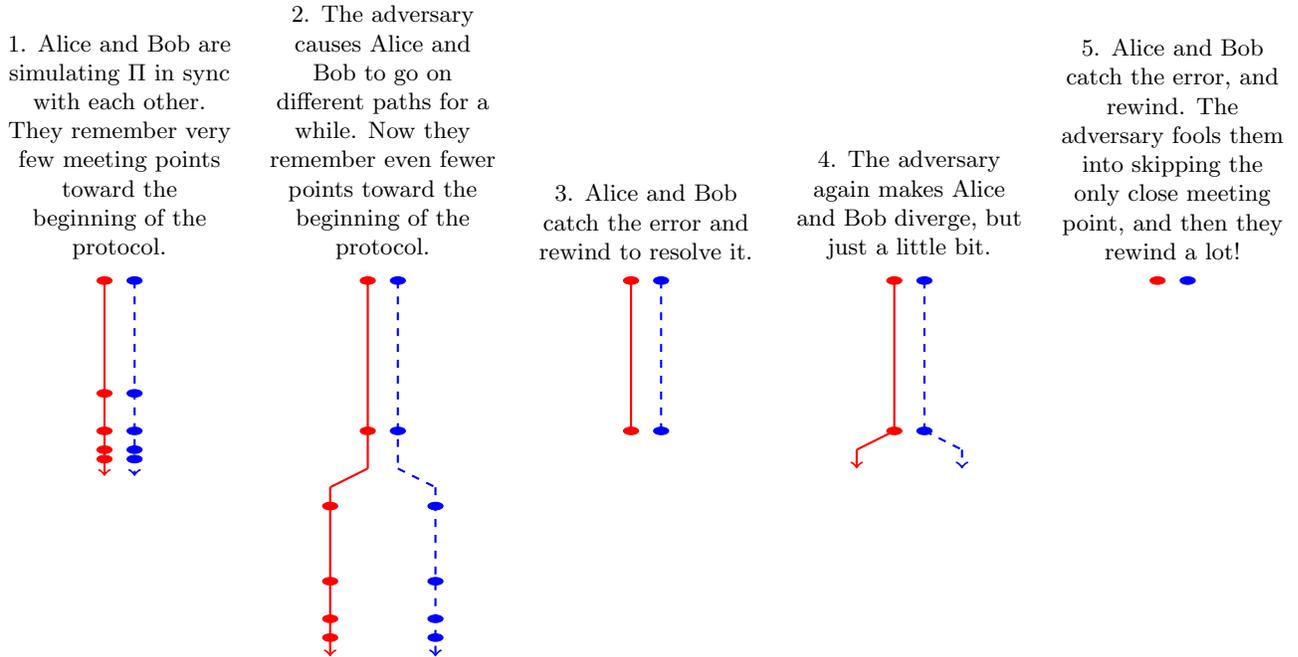

What goes wrong in Figure~\ref{fig:attack1} is that the ``rule of thumb'' discussed in Section~\ref{sec:tech0} is violated: It is possible for the adversary to make Alice and Bob rewind a lot by introducing relatively few corruptions.  In particular, after Alice and Bob have rewound once (at the start of Step 3 in Figure~\ref{fig:attack1}), because they have forgotten most of their meeting points from early in the protocol, even a small amount of corruption (Step 4) will make them rewind a very long way back (Step 5).   
Further, the adversary can repeat this attack without investing many additional corruptions.  Indeed, imagine that the timelines in Figure~\ref{fig:attack1} go further up the page; the adversary could use very few corruptions after Step 5 (essentially repeating Step 4), to cause them to rewind even higher.  Thus, the adversary can cause Alice and Bob to rewind back to the beginning of the protocol, while using very few corruptions.  

Because this attack can be repeated as described above, it's a problem with the protocol, not just the analysis.
Our solution is to modify the protocol to add a \emph{third} meeting point, which we (creatively) call  $\MPthree$.  This third meeting point may be close to Alice and Bob, but as far as the voting scheme is concerned, it ``counts'' like a meeting point that is much further away, in the sense that Alice and Bob will take many more votes before deciding to rewind to it.  Formally, if Alice and Bob are currently voting on candidates $\MPone$ and $\MPtwo$ that are approximately $2^j$ iterations back, they will also throw in the point $\MPthree$, which is the deepest meeting point that they remember that is divisible by $2^j$; but they will still vote $2^j$ times on $\MPthree$, just as they do with $\MPone$ and $\MPtwo$.  Thus, $\MPthree$ acts as a barrier to prevent an attack like the one in Figure~\ref{fig:attack1} from occurring: In Step 4 of Figure~\ref{fig:attack1}, the point that it was previously cheap for the adversary to make Alice and Bob skip will now become very expensive.

\paragraph{Challenge 4: Sneaky attacks: Spotty memories can \emph{still} lead to over-jumping.}  Adding $\MPthree$ fixes the problem in Figure~\ref{fig:attack1}.  However, there is still a way that the adversary can make Alice and Bob rewind very far back without introducing too many corruptions.  We refer to this as a \emph{\protosneakyf}, as it essentially allows the adversary to ``sneak'' past the barrier $\MPthree$.  This attack is shown in Figure~\ref{fig:sneaky}.  The basic idea is that (in Step 2 of Figure~\ref{fig:sneaky}), the adversary can drive one party (say, Alice) very far down in the protocol, so that she forgets most of her early meeting points.  But instead of driving Bob down as well, the adversary just stalls Bob; Bob knows that something is wrong, but Alice does not realize that yet, and Bob waits for Alice to figure it out.  When Alice finally realizes that something is amiss, she and Bob finally do rewind to a shared meeting point (Step 3).  At this point, Bob's memory is much better than Alice's.  Next (Step 4), the adversary tricks Bob into thinking that something is wrong, and he makes a short rewind to a point that Alice doesn't remember.  The adversary did not have to pay very much to make this happen, because Bob's rewind was short, and (except for skipping the special $\MPthree$ points), the adversary does not have to pay very much to make Alice or Bob rewind a short way.  But at this point, we are in trouble: Bob may have rewound past the barrier point $\MPthree$!  Then even though this barrier point is available for Alice to rewind to, it is no longer available for Bob.  In the next step (Step 5), Alice and Bob may then rewind quite a long way, even though the adversary did not invest many corruptions.
In this way, a \protosneaky ``{sneaks}'' around the barrier point $\MPthree$.

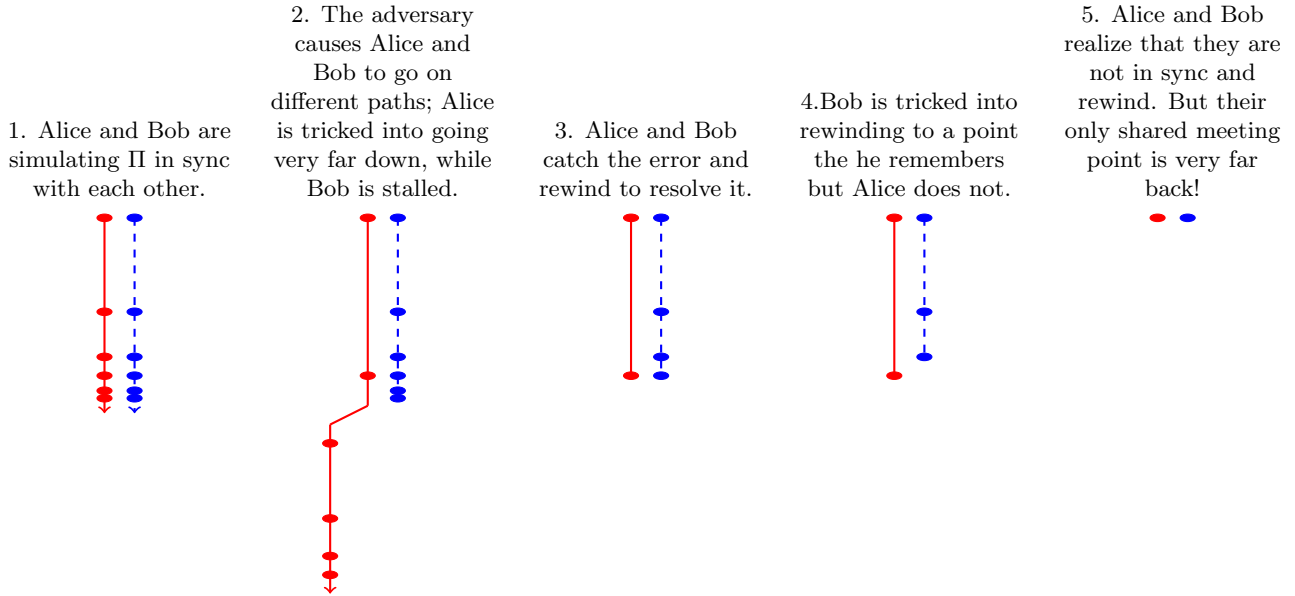
\begin{figure}[h!]
\begin{center}
\begin{tikzpicture}[yscale=.5]
    \begin{scope}
        \draw[red, ->, thick] (0,0) to (0,-5.2);
        \draw[blue,->, thick, dashed] (.4, 0) to (0.4, -5.2);
        \foreach \x in {0,-2.5,-3.7, -4.2, -4.6,-4.8}{
        \draw[red,fill=red] (0,\x) circle (0.1);
        \draw[blue,fill=blue] (.4,\x) circle (0.1);
        }
        \node[anchor=south] at (0.2, .3) {\begin{minipage}{3cm} \begin{center}\footnotesize 1. Alice and Bob are simulating $\Pi$ in sync with each other. \end{center}\end{minipage}};
    \end{scope}
    
        \begin{scope}[xshift=3.5cm]
        \draw[red, thick] (0,0) to (0,-5);
        \draw[blue, thick, dashed] (.4, 0) to (0.4, -5);

        \draw[red, thick](0,-5) to (-.5, -5.5);
        \draw[red, thick, ->] (-.5, -5.5) to (-.5, -10);

              \foreach \x in {0,-4.2}{
        \draw[red,fill=red] (0,\x) circle (0.1);
        }
        \foreach \x in {-6, -8, -9, -9.5}{
        \draw[red,fill=red] (-.5,\x) circle (0.1);
        }
        \foreach \x in {0,-2.5,-3.7, -4.2, -4.6,-4.8}{
        \draw[blue,fill=blue] (.4,\x) circle (0.1);
        }

        \node[anchor=south] at (0.2, .3) {\begin{minipage}{3cm} \begin{center}\footnotesize 2. The adversary causes Alice and Bob to go on different paths; Alice is tricked into going very far down, while Bob is stalled.\end{center}\end{minipage}};
    \end{scope}
           \begin{scope}[xshift=7cm]
        \draw[red, thick] (0,0) to (0,-4.2);
        \draw[blue, thick, dashed] (.4, 0) to (0.4, -4.2);

              \foreach \x in {0,-4.2}{
        \draw[red,fill=red] (0,\x) circle (0.1);
        \draw[blue,fill=blue] (.4,\x) circle (0.1);
        }
        \draw[blue,fill=blue] (.4,-2.5) circle (0.1);
        \draw[blue,fill=blue] (.4,-3.7) circle (0.1);

        \node[anchor=south] at (0.2, .3) {\begin{minipage}{3cm} \begin{center}\footnotesize 3. Alice and Bob catch the error and rewind to resolve it.\end{center}\end{minipage}};
    \end{scope}

            \begin{scope}[xshift=10.5cm]
        \draw[red, thick] (0,0) to (0,-4.2);
        \draw[blue, thick, dashed] (.4, 0) to (0.4, -3.7);

              \foreach \x in {0}{
        \draw[red,fill=red] (0,\x) circle (0.1);
        \draw[blue,fill=blue] (.4,\x) circle (0.1);
        }
        \draw[blue,fill=blue] (.4,-2.5) circle (0.1);
        \draw[blue,fill=blue] (.4,-3.7) circle (0.1);
        \draw[red,fill=red] (0,-4.2) circle (0.1);

        \node[anchor=south] at (0.2, .3) {\begin{minipage}{3cm} \begin{center}\footnotesize 4.Bob is tricked into rewinding to a point the he remembers but Alice does not.\end{center}\end{minipage}};
    \end{scope}

                \begin{scope}[xshift=14cm]

              \foreach \x in {0}{
        \draw[red,fill=red] (0,\x) circle (0.1);
        \draw[blue,fill=blue] (.4,\x) circle (0.1);
        }

        \node[anchor=south] at (0.2, .3) {\begin{minipage}{3cm} \begin{center}\footnotesize 5. Alice and Bob realize that they are not in sync and rewind.  But their only shared meeting point is very far back!\end{center}\end{minipage}};
    \end{scope}
\end{tikzpicture}
\end{center}
\caption{Steps in a \protosneakyf. The red solid line represents Alice's simulated path; the blue dashed line represents Bob's simulated path.  The ovals represent the meeting points that Alice and Bob remember. In Steps 2 and 3, the adversary invests an amount of corruptions that is commensurate with the amount that Alice has to rewind, which is good.  But in Step 4, the adversary only has to invest a few corruptions (enough to make Bob think that he and Alice are out of sync and make a short rewind); and then in Step 5, Alice and Bob may rewind a lot. }\label{fig:sneaky}
\end{figure}

However, as mentioned above, a sneaky attack is actually an attack on the potential-function-based analysis, \emph{not} on the protocol itself.  In more detail, we show that, unlike the attack demonstrated in Figure~\ref{fig:attack1}, a \protosneaky is not repeatable.  Every time the adversary perpetrates a \protosneakyf, they have to invest the corruption in Step 2 (needed to make Alice go so far down in her simulation).  We show that, unlike in Figure~\ref{fig:attack1}, the adversary cannot make the Step 2 corruptions once, and then use them repeatedly to perpetrate many \protosneakyf{s}; this means that the ``damage'' done by sneaky attacks can be bounded.  

Thus, our approach is the following.  We define a potential function $\Phi$ similar to the potential function studied in \cite{H14}.  However, instead arguing that $\Phi$ is \emph{always} well-behaved (which is not true, because when the final rewind---Step 5---of a \protosneaky happens, $\Phi$ may behave very poorly), we argue that $\Phi$ is well-behaved in all of the iterations where a \protosneaky does not occur.  Then in a separate argument we show that, because the adversary must pay for each \protosneaky individually, not too many \protosneakyf{s} can occur, and over the course of the whole simulation, $\Phi$ will still make progress.

\paragraph{Challenge 5: There is no Challenge 5.}

At this point, the reader may reasonably ask where all this is going.  Are we going to keep introducing ad hoc fixes until we can't think of any more attacks?  Fear not!  \textbf{In our analysis, we show that a \protosneaky is the \emph{only} way that the adversary can make Alice and Bob rewind a lot by investing in relatively few corruptions.}  

In Lemma~\ref{lem:maintech} (our main technical lemma), we essentially show that if Alice and Bob's counters $j$ get very large (meaning that they are about to rewind a long way), then \emph{either} the adversary had to invest a lot of corruptions; \emph{or} a \protosneaky is in progress.  This essentially says that the only exception to our intuition that ``the adversary should pay for long rewinds'' comes from \protosneakyf{s}.

\subsection{Overview of the Proof Structure}\label{sec:tech3}
Formalizing the above intuition requires delicate analysis.  Here, we give a high-level idea of the structure of the proof, with pointers to some key sections/lemmas.

\paragraph{The potential function.}
 
We begin by defining our potential function, called $\Phi$, in \eqref{eq:potential} in Section~\ref{subsec:def-lem}.  This potential function is quite similar to the one from~\cite{H14}.  Intuitively, $\Phi$ increases when things are going well, and if it becomes large enough then we can conclude that Alice and Bob have succeeded in simulating $\Pi$.   Thus, our goal is to show that it increases.

\paragraph{Sneaky attacks, and how they are ``paid'' for.}
After proving a few auxiliary lemmas in Section~\ref{sec:usefullemmas},  we formally define a \protosneaky in Section~\ref{sec:maintech}, and prove a few useful statements about it.  These include our formalization of the intuition that ``the adversary has to pay for each sneaky attack individually.'' To that end, we define different parts of a sneaky attack: a \emph{diving window}, the time window depicted in Step 2 of Figure~\ref{fig:sneaky}, where Alice is diving down; and a \emph{voting window}, the time window depicted in Step 3 of Figure~\ref{fig:sneaky}, where Alice and Bob are voting where to go next.  We show that all of these windows are disjoint, even \emph{between} sneaky attacks.  Since these windows are where the adversary introduces corruptions, this means that the adversary has to pay separately for each sneaky attack.

\paragraph{Main technical lemma: sneaky attacks are the only thing that go wrong.}
In Section~\ref{sec:maintech}, we also give the statement of our main technical lemma (Lemma~\ref{lem:maintech}), which essentially says that the only thing that can go wrong is a \protosneakyf.  The proof of the main technical lemma is rather long and technical (as the name suggests), and we defer the proof to Section~\ref{sec:grossproof}.

 \paragraph{The potential function is well-behaved, except during sneaky attacks.} In Section~\ref{sec:progressProof}, we prove that the potential function $\Phi$ is well-behaved, except during a \protosneakyf.  In more detail, Lemma~\ref{lem:progress} shows that, \emph{unless} a \protosneaky occurs, either $\Phi$ increases by $1$, or else there was a corruption or hash collision, in which case it decreases by a controllable amount.  Since the number of corruptions is bounded, and hash collisions are unlikely, this implies that $\Phi$ makes steady progress except for sneaky attacks.

\paragraph{Upper and lower bounds on $\Phi$.}
Next, in Sections~\ref{sec:upperBd} and \ref{sec:lowerBound}, we prove upper and lower bounds on $\Phi$, respectively.  The lower bound (Lemma~\ref{lem:philb}) captures the fact that progress is being made: it uses the fact that $\Phi$ makes steady progress except for sneaky attacks; and then uses the earlier analysis (that the adversary has to ``pay'' for sneaky attacks separately) in order to bound the aggregate damage across all the \protosneakyf{s}.  
The upper bound (Lemma~\ref{lem:phiub}) captures the fact that ``progress'' (in terms of $\Phi$ getting large) actually means ``progress'' (in terms of Alice and Bob succeeding).  

\paragraph{Putting it all together.}
We take a brief detour in Section~\ref{sec:hash} to prove some lemmas about how (in)frequently hash collisions occur, and then finally put it all together to prove Theorem~\ref{thm:main} in Section~\ref{sec:done}.

\paragraph{Other organization.}
After a quick discussion of open problems and future directions, in Section~\ref{sec:prelim}, we give some definitions and set up notation.  In that section we also define the meeting points that we will use, and prove some basic facts about them.
In Section~\ref{sec:protocol}, we present our protocol $\Pi'$.  The actual protocol (Algorithm~\ref{alg:adaptive}) is quite long; as mentioned above, the details are delicate, and as such a fair amount of of bookkeeping is required in the final protocol.  To that end, we give a simplified high-level version of the protocol in Algorithm~\ref{alg:adaptive_short}, which the reader may want to go over first before diving into the details.

\subsection{Open Problems and Future Directions}\label{sec:conclusion}
Before we get into the details, we record a few open problems.  

An obvious question is whether the space bound of $O(\log s \cdot \log|\Pi|)$ is optimal.  We conjecture that it is, at least in the context of rewind-if schemes.
In more detail, the $O(\log|\Pi|)$ blow-up in the space complexity comes from needing to remember $O(\log|\Pi|)$ meeting points. 
In any rewind-if approach, it seems infeasible to design a robust protocol only remembering $o(\log |\Pi|)$ meeting points. Indeed, in such a case, if the parties Alice and Bob are at depth $p$, then for any large constant $C$ there would exist a length $\ell$ for which the parties have no saved meeting points between depth $p-\ell$ and depth $p-C\ell$. Then, if the adversary corrupts all rounds between $p-\ell$ and $p$, the parties would need to rewind to the closest correct meeting point, which is depth less than $p-C\ell$. That is, with $\ell$ corruptions the adversary causes $C\ell$ wasted steps, implying that the protocol cannot handle more than a $1/C$ fraction of corruptions. As $C$ was arbitrary, this shows we could not hope for a positive error rate. Such considerations also motivate the $O(\log |\Pi|)$-blow-ups in \cite{chan2020small,EHKKRS23}.

Finally, we want to highlight the question of applications.  We believe that space-bounded interactive protocols are well-motivated from a practical perspective alone: for example, as elucidated in~\cite{chan2020small}, particularly in client-server models, weak clients may not be able to afford massive blow-ups in space complexity in robustifying a protocol. However, there is also potential for other applications. For example, as discussed above,~\cite{efremenko2022circuits} demonstrated that interactive coding schemes can be used to obtain circuits resilient to short-circuit errors, and that the space-complexity of the interactive coding scheme is directly tied to the size of the resulting resilient circuit. This is done via an adaptation of the Karchmer-Wigderson transformation. While we do caveat that the model of noise which arises via this transformation is incomparable to ours (the parties get feedback, making it easier, but the adversary may also interfere with the parties' memories, making it harder), we believe that any tools and techniques developed in the context of space-efficient interactive coding could prove effective in this area.

\section{Definitions and Preliminaries}\label{sec:prelim}

In this section, we formally define our model and introduce definitions and notation that we will need to define and analyze our protocol.

We begin by giving the formal model for the original protocol $\Pi$ (Section~\ref{sec:orig}), and for our adaptive adversary (Section~\ref{sec:adv}).
In Sections~\ref{sec:robust}, \ref{sec:subscripts} and \ref{sec:MPs}, we outline the structure of our robust protocol $\Pi'$ and introduce several definitions we will use.  In Section~\ref{sec:hashprelims}, we introduce some hash functions and relevant theorems that we will need.

Throughout, we consider two parties, Alice and Bob, simulating a protocol $\Pi$. We assume that Alice and Bob have private randomness.

\subsection{The original protocol $\Pi$} \label{sec:orig}
We begin with a formal definition of the original protocol $\Pi$.
Formally, the protocol $\Pi$ is represented by a rooted directed acyclic graph (DAG), with a designated root node \texttt{start} and a designated set of terminal nodes that have no outgoing edges, along with two \defn{transition functions} $\tau_A$ and $\tau_B$ discussed more below. 
Each node $v$ in $\Pi$ is a \defn{state}; throughout the paper, we let $s$ denote the number of states in $\Pi$.  Each state (except for the terminal nodes) has two outgoing edges, labeled ``$0$'' and ``$1$.''
Each state \defn{belongs} to exactly one of Alice or Bob; let $\mathcal{A}$ be the set of states owned by Alice and $\mathcal{B}$ be the set of states owned by Bob. 
Then the transition functions $\tau_A$ and $\tau_B$ map each party's states to a bit: $\tau_A: \mathcal{A} \to \{0,1\}$ and $\tau_B: \mathcal{B} \to \{0,1\}$.  
We interpret the transition functions as instructions for what Alice and Bob should do in each state.

With this notation, we can view $\Pi$ as a \defn{pebble game} for Alice and Bob on the underlying DAG.  That is, we imagine that there is a pebble that begins on the node \texttt{start}.  At each timestep, the party who owns the pebble's state applies their transition function to decide what to transmit, and moves the pebble accordingly.  For example, if the pebble is currently on state $v \in \mathcal{A}$, then Alice will compute $b = \tau_A(v)$.  She will then send the bit $b$ to Bob, and she will move the pebble to the $b$'th child of $v$.  The game ends when the pebble reaches a terminal node.  At this point, the transition functions $\tau_A$ and $\tau_B$ encode what Alice and/or Bob should output at the end of the protocol.

We assume that Alice and Bob have oracle access to these transition functions and the structure of the DAG, and that these don't count towards their space complexity.\footnote{If the reader would like the transition functions and underlying DAG to count towards the space complexity in the model, then our results still hold: If the transition functions and DAG description require an additional $s'$ bits of space, then our robust protocol $\Pi'$ will take an additional $s'$ bits of space as well, for a total of $s' + O(\log s \cdot \log|\Pi|)$.  In particular, the space overhead is still at most a multiplicative factor of $O(\log |\Pi|)$.}
Thus, the total space that is needed for the protocol is $\log(s)$, the amount of space needed to remember which state the pebble is on.

 Alice's \defn{transcript} of $\Pi$ is a list of all of the bits that she sends and receives. 
The \defn{depth} of a state $v$ in $\Pi$ is its depth in the DAG from \texttt{start}.  The \defn{depth} of $\Pi$ is the depth of the deepest state in $\Pi$.
Throughout this paper, we use $d$ to denote the depth of $\Pi$.  
 Notice that the communication complexity $|\Pi|$ of $\Pi$ is equal to the depth $d$.

\begin{remark}[Where are the inputs?]
In the definition above, there are no explicit inputs for Alice and Bob, even though we often think of the protocol $\Pi$ as computing functions of such inputs. 
However, we can think of the inputs as being hard-coded into the transition function.  (That is, Alice knows $\tau_A$ and her input, but not $\tau_B$ or Bob's input, and vice versa; so we may as well collapse $\tau_A$ and Alice's input into one function, and vice versa).

This simplifies the notation and does not affect our arguments or results.
    This input-free definition is perhaps reminiscent of the model of (read-once) branching programs in the context of randomized algorithms using low-space: there, for fixed inputs the algorithm execution can also be viewed as a DAG, and the transitions are now only determined by the random coins (and not the inputs).
\end{remark} 

\begin{remark}[What about speaking order?]\label{rem:speaking_order_redux}
    In the Introduction (Remark~\ref{rem:speaking_order}), we remarked that the speaking order of $\Pi$ is not necessarily fixed in the model above.  That is, whether or not Alice or Bob speak at a round is determined by what has happened so far in the protocol, not just the transcript of that round.  This is captured in the DAG model, as there is no stipulation about which state is owned by which party.
    In the original (noiseless) protocol $\Pi$, exactly one of Alice and Bob will speak at any given round.  However, when we move to the noisy setting, it might be that Alice and Bob are no longer on the same page about which state of $\Pi$ they are simulating, and so it could be that both or neither of them speak at once.   We address what happens in this case in the next section when we describe the power of the adversary.

    However, as noted in Remark~\ref{rem:speaking_order}, if the original protocol $\Pi$ has \emph{alternating} speaking order, then our robust protocol $\Pi'$ will also have fixed speaking order.  Moreover, it is simple to modify $\Pi'$ so that it is alternating, without changing the rate asymptotically.  We address this more in Remark~\ref{rem:alternating} after we present our robust protocol.
\end{remark}

\subsection{The adversary}\label{sec:adv}
We consider a \defn{adaptive adversary} with a \defn{corruption-fraction budget} $\eps$.  This means that the adversary may flip any bit transmitted in either direction, adaptively (based on the transcript of $\Pi'$ so far), provided that when $\Pi'$ has finished, at most an $\epsilon$-fraction of the transmitted bits were flipped.  When the adversary chooses to flip a bit, we call this a \defn{corruption} in that round.  We assume that the adversary knows (a bound on) $|\Pi'|$.

As mentioned above, we do not assume a fixed speaking order for $\Pi$, and we will in general not have a fixed speaking order in $\Pi'$.  Thus, it is possible that in $\Pi'$, neither or both of Alice and Bob speak during a round.
In our model (which is the same as the model introduced in \cite{ghaffari2014optimal} and used in \cite{H14}), 
if neither Alice nor Bob speak during a round, then the adversary may make Alice or Bob hear anything, without counting towards its corruption-fraction budget.
Furthermore, if both parties speak in a round, then neither party receives a transmission (which is natural, as neither party is listening), and this also does not count towards the adversary's corruption budget. 

\subsection{The robust protocol $\Pi'$}\label{sec:robust}  In order to protect $\Pi$ from errors, our main algorithm (Algorithm~\ref{alg:adaptive}) transforms $\Pi$ into a robust protocol $\Pi'$.

The new protocol $\Pi'$ will proceed in a series of $\Btotal$ \defn{blocks}.  Each block consists of $\Iblock$ \defn{iterations}.  The total number of iterations is $\Itotal = \Btotal \cdot \Iblock$.  In each iteration, each party will simulate $r$ \defn{rounds} of the original protocol (or perform $r$ ``dummy rounds'' in some cases);  will do some bookkeeping; and then will either rewind to a previous point in their simulated path (in multiples of $r$ rounds) or will stay where they are.

We will use the following terminology for the elements of this process.
\begin{itemize}
\item The \defn{simulated path for Alice}, denoted $\simPath_A$, is the full transcript that Alice has simulated so far in $\Pi$, along with the iteration numbers in $\Pi'$ that she simulated them.  For example, suppose Alice simulates $5r$ rounds $(\sigma_1, \sigma_2, \sigma_3, \sigma_4, \sigma_5)$ over five iterations $I = 1,2,3,4,5$, where each $\sigma_i$ is the transcript of $r$ rounds of $\Phi$.  If she then jumps back to the point in her simulation after $3r$ rounds, or $3$ iterations, her simulated path would be 
\[ \simPath_A = (\sigma_1 \circ 1, \sigma_2 \circ 2, \sigma_3 \circ 3),\] 
where ``$\circ$'' denotes concatenation.  If she then went on to re-simulate $\sigma_4'$ in iteration $I=6$, her simulated path would be 
\[ \simPath_A = (\sigma_1 \circ 1, \sigma_2 \circ 2, \sigma_3 \circ 3,  \sigma'_4 \circ 6)\]
We define the \defn{simulated path for Bob}, denoted $\simPath_B$, analogously.  

 Because everything in the algorithm proceeds in increments of $r$ rounds, we will treat the simulated path as a sequence of \emph{iterations}, rather than of rounds, as in the example above.

 \item When the protocol $\Pi'$ is in a particular block, we define $\Bsim_A$ and $\Bsim_B$ as the substrings of $\simPath_A$ and $\simPath_B$, respectively, corresponding to the iterations in that current block.  We use $\Bsim$ (with no subscript) to refer to this substring for an arbitrary party.

\item We define $\Aiter$ (resp. $\Biter$) to denote the length of Alice's (resp. Bob's) simulated path, measured in iterations. 

\item A \defn{point} is an integer $p \in \{0, 1, \ldots, \max\{\Aiter, \Biter\}\}$, interpreted as a depth on a simulated path.  
For example, if Alice's simulated path is $\simPath_A = ( \sigma_1 \circ 1, \sigma_2 \circ 6 )$, then $p=1$ would correspond to the entry $\simPath_A[1] = \sigma_2 \circ 6$ at depth $1$ on her simulated path.  Below in Definition~\ref{def:megastate}, we will associate each point $p$ (which is an integer) with a \emph{mega-state} $\ms{p}$ for each party, whose depth is the integer $p$.  This mega-state will include some information about that party's simulated computation at that depth.

\item  If $\simPath_A \neq \simPath_B$ or $\ms{\ell_A} \neq \ms{\ell_B}$, then there is a latest point where they still agree: that point is called the \defn{divergent point}.
Formally, the divergent point (which will usually be denoted ``$b$'') is 
\[ \max\{ p : \simPath_A^{(\leq p)} = \simPath_B^{(\leq p)} \text{ and } \forall p' \leq p,~\ms{p'}_A = \ms{p'}_B\} \]
where the notation $\simPath^{(\leq p)}$ denotes the restriction of $\simPath$ to the points $1, 2, \ldots, p$.
If $\simPath_A = \simPath_B$ and $\ms{\ell_A} = \ms{\ell_B}$,  we say that the divergent point doesn't exist.
Notice that if $\Aiter > \Biter$ and Alice and Bob agree up to iteration $\Biter$, then the divergent point is $b = \Biter$. 

\item When Alice or Bob rewind to a previous point in $\Pi$ (to the end of a previous iteration), we call it a \defn{jump}.\footnote{Technically, the ``jump'' occurs on Line~\ref{line:back-jump} in Algorithm~\ref{alg:adaptive} below.}  When Alice and Bob simultaneously jump to the same point, we call it a \defn{successful jump}.  When Alice and Bob have a successful jump above the divergent point $b$, we call it a \defn{successful jump that resolves $b$.}    

\item When the divergent point is $b$, we define the \defn{correct simulated path} to be $\simPath_A^{(\leq b)} = \simPath_B^{(\leq b)}$.  If there is no divergent point, we define the correct simulated path to be $\simPath_A = \simPath_B$.

Notice that if $\simPath_A = \simPath_B$, then Alice and Bob's computations agree (that is, they have correctly simulated part of the original protocol $\Pi$).

\item We define $\ell^+$ to be the length (in iterations) of the correct simulated path.
\item We define $\ell^-_A =\Aiter - \ell^+ $    and analogously $\ell^-_B =\Biter - \ell^+  $. Further, $\ell^- = \ell^-_A + \ell^-_B$.  That is, $\ell^-$ is the depth that Alice and Bob have gone while being out of sync, measured in iterations.  
We call a window of time starting when $\ell^-$ becomes positive to when $\ell^-$ next becomes zero a \defn{bad spell.}  

\item If $\ell^- > 0$ (that is, if we are in a bad spell), we define $L^- $ to be the maximum that $\ell^-$ 
has been during this bad spell. 
\end{itemize}

\begin{remark}
    The main reason for including the iteration number in the definition of $\simPath_A$ and $\simPath_B$ is technical: We would like it to be the case that the only way that a bad spell can end is if Alice and Bob end it ``on purpose,'' by successfully jumping (see Lemma~\ref{lem:noweirdcoincidences}).  If $\simPath_{A,B}$ did not include iteration numbers, the adversary could ``trick'' Alice and Bob into accidentally getting back on track.  Of course it is fine for the correctness of the protocol if the adversary helps Alice and Bob, but it is a hassle in the analysis.

    We note that $\simPath_A$ and $\simPath_B$ are not explicitly stored by Alice and Bob during $\Pi'$; they are used only in the analysis.  On the other hand, $\Bsim_A$ and $\Bsim_B$ (which are short substrings of $\simPath_A$ and $\simPath_B$) are stored.
\end{remark}

\begin{remark}
    Observe that in a given bad spell, the divergent point could change. In fact, it could move from a point $b$ to a point $b'$ with $b'<b$; we spell this out formally in Lemma~\ref{lem:noweirdcoincidences}. 
\end{remark}

\subsection{Alice/Bob Subscripts and Time/Iteration Indexing}\label{sec:notationABtime}\label{sec:subscripts}

\paragraph{$A$ and $B$ subscripts:} Our main protocol (Algorithm~\ref{alg:adaptive}) is from the point of view of either party, and includes variables like $k, E, \MPone, \Mem$, and so on.  For a variable $x$ that appears in Algorithm~\ref{alg:adaptive}, we use $x_A$ to denote Alice's copy, $x_B$ to denote Bob's copy, and $x_{AB}$ to denote $x_A + x_B$.

\paragraph{Times and Iteration Numbers:}
By a \defn{time} $t$, we mean a specific moment in the execution of the protocol $\Pi'$. For example, we might refer to the ``time $t$ right before Line~\ref{line:back-jump} is executed in iteration $I$ of Algorithm~\ref{alg:adaptive}.''

For $I \in \{0, 1, \ldots, \Itotal\}$, 
when we refer to an \defn{iteration $I$} in Algorithm~\ref{alg:adaptive}, we mean the iteration when the variable $\Icount$ in Algorithm~\ref{alg:adaptive} is incremented to $I$ (in particular, we don't reset the iteration count at each block).  By convention, ``Iteration $0$'' refers to the time before the Algorithm begins.
 
 At a particular time $t$ in the execution of Algorithm~\ref{alg:adaptive}, we use $\I(t) \in \{1, 2, \ldots, \Itotal\}$ to denote the iteration that $\Pi'$ was on at time $t$.

\paragraph{Specifying time/iteration with $x(t)$ and $x(I)$:} 
As all of the variables change with time, when the time is not clear from context, we will sometimes write $x(t)$ to denote the value of the variable $x$ at time $t$.  For example, $k_A(t)$ will denote Alice's value of $k$ in Algorithm~\ref{alg:adaptive} at time $t$.  For an iteration number $I \in \{1, \ldots, \Itotal\}$, we sometimes use $x(I)$ to denote the value of $x$ at the \emph{end} of iteration $I$ (Line~\ref{line:iterend} in Algorithm~\ref{alg:adaptive}).

\subsection{Meeting points}\label{sec:MPs}

At each iteration, Alice and Bob will maintain a set of ``meeting points,'' which are points that they are able to jump back to.  
In order to define meeting points, we first introduce some notation:
For non-negative integers $x$ and $y$, define \[\lfloor x \rfloor_y = \max_{j \in \mathbb{Z}} \{ j\cdot y \,:\, j \cdot y \leq x \}.\]  
That is, $\lfloor x \rfloor_y$ is $x$ rounded down to a multiple of $y$.
With this, we have the following definitions. 

Throughout the protocol, Alice and Bob will maintain a \emph{$j$-value}, $j_A$ and $j_B$, respectively, which represent how far they are willing to jump.
Sometimes we need to refer to the hypothetical transition candidates if a party was at the point $\ell$ and had $j$-value $j$.  To that end, we have the following definition.

\begin{definition}[MP set, scale-$j$ MPs]\label{def:MP}
Let $a \in \{1, \ldots, \lceil d/r \rceil \}$, where we think of $a$ as representing the depth of a partial simulation of $\Pi$.  The \defn{meeting point (MP) set for $a$}, denoted $M_a$, is defined as
\[ M_a = \{ b \geq 0 \,:\, \exists j \geq 0 \text{ so that }  \lfloor a \rfloor_{2^j} - 2^j = b\}. \]
We call $\lfloor a \rfloor_{2^j} - 2^j$ the \defn{scale-$j$ meeting point (MP) for $a$.}  
That is, the scale-$j$ MP for $a$ is the \emph{second}-closest multiple of $2^j$ below $a$.
\end{definition} 
We remark that this is the same meeting point set that was considered in prior work, including \cite{H14} and \cite{EHKKRS23}.

In our protocol (Algorithm~\ref{alg:adaptive}), Alice and Bob will only consider three special meeting points at a time, which we call the \defn{transition candidates}.
\begin{definition}[Scale-$j$ transition candidates]\label{def:transition-candidates}
    Fix $j$ and $\ell$.  Define the \defn{scale-$j$ transition candidates for point $\ell$}
to be the set 
$\MPalljl(j, \ell) $ that contains the following three points:
\begin{itemize}
    \item $\MPone(j,\ell) := \lfloor \ell \rfloor_{2^{j+1}} - 2^{j+1}$.  (That is, $\MPone(j,\ell)$ is the scale-$(j+1)$ MP for $\ell$).
\item $\MPtwo(j,\ell) := \lfloor \ell \rfloor_{2^j} - 2^j$.  (That is, $\MPtwo(j,\ell)$ is the scale-$j$ MP for $\ell$).
\item $\MPthree(j,\ell)$, which is the largest $p \in M_{\ell}$ so that $2^{j}$ divides $p$. 
\end{itemize}

We will use the notation $\MPonetwojl(j, \ell) = \{\MPone(j,\ell), \MPtwo(j,\ell)\}.$
\end{definition}

At a particular time $t$ in the protocol, when Alice is at position $\ell_A$ and has $j$-value equal to $j_A$, we will have 
\[ \MPone_A = \MPone(j_A, \ell_A), \MPtwo_A = \MPtwo(j_A, \ell_A), \MPthree_A = \MPthree(j_A, \ell_A),\]
and we refer to these as the \defn{transition points for Alice} at time $t$ (and similarly for Bob).  Similarly, we will use $\MPall_A$ to denote $\MPalljl(j_A, \ell_A)$ and $\MPonetwo_A$ to denote $\MPonetwo(j_A, \ell_A)$ (and similarly for Bob).

\paragraph{How meeting points and their corresponding states are represented in the protocol.}

Throughout the protocol (Algorithm~\ref{alg:adaptive}), Alice and Bob will store information about a \emph{subset} of the meeting points in $M_{\Aiter}$ and $M_{\Biter}$, respectively.

In more detail, each party stores:

\begin{itemize}
\item A set $\M \subseteq \{0,1,\ldots, \lceil d/r \rceil \}$ of points.  The set $\M$  contains a subset of $M_{\Aiter}$ (for Alice) and $M_{\Biter}$ (for Bob).

\item The string $\Bsim$, which we recall is the substring of the simulated path $\simPath$ restricted to the iterations that occurred during the current block.  

\item A memory $\mathtt{Memory}$ that stores \defn{mega-states} $\mathtt{p}$ corresponding to the points in $\M$.  
We formally define a mega-state as follows.

\begin{definition}\label{def:megastate}
A \defn{mega-state} is the information stored in $\Mem$ that corresponds to a state in $\Pi$ at the end of an iteration.  Each mega-state $\mathtt{p}$ includes:
\begin{itemize}
	\item A state $\mathtt{p.v}$ of $\Pi$, which is the last state simulated in that party's simulated path. 
	\item An integer $\mathtt{p.depth}$, which is the {iteration}-depth of the state $\mathtt{p.v}$ in the simulated path.  (Recall that this is the depth of $\mathtt{p.v}$ in $\Pi$, divided by $r$).
	\item A hash value and its corresponding random seed. In the pseudocode, these are denoted as $\phash$ and $\pseed$, respectively. (Outside of the pseudocode, we often refer to them as $H$ and $R$, respectively, so that the equations don't get bogged down with variable names.) 
  \item An integer $\piter$, which is the iteration of $\Pi'$ that was the last time $(\phash, \pseed)$ were generated.  Similarly, $\piter$ is commonly denoted as $I_{H}$ outside of the pseudocode. 
    \item In the pseudocode, we also use a variable called $\pt$. This variable is not actually stored separately, rather it refers to the substring of $\Bsim$ that was simulated up to the iteration $\piter$.  Thus, $\pt$ is just shorthand for information that is stored elsewhere.  We use $\Bsimp$ to refer to the this string, both in the pseudocode and an in the analysis.

\end{itemize} 
For an integer $q$, we use $\ms{q}$ to refer to the mega-state that has depth $q$, if it exists.  (We use $\ms{q}_A$ and $\ms{q}_B$ to specify Alice or Bob's copy of memory when needed.)
\end{definition}
\end{itemize}
 As one would expect, a point $q$ will be in $\M$ if and only if there is some $\texttt{p} \in \Mem$ so that $\texttt{p.depth}=q$, that is, so that $\texttt{p} = \ms{q}$.  Thus, Alice's memory contains mega-states that line up with the points in $\M$.  

We say that a meeting point $q \in M_{\Aiter}$  is \defn{available} for Alice if $q \in \M_A$ (where $\M_A$ is Alice's copy  of $\M$).  We define available meeting points for Bob similarly. 

\paragraph{A few useful facts about meeting points.}  We record a few useful lemmas about meeting points.  First, we have a useful definition. 

\begin{definition}\label{def:stable}
We say that a point $p$ is \defn{$j$-stable} if $2^j | p$ but $2^{j+1} \nmid p$.
\end{definition}
Intuitively, points that are very stable will take a long time for Alice or Bob to forget, in the sense that a very stable point $p$ will remain in $M_a$ for very large values of $a$.  To quantify this, we have the following lemma.  (We note that this has essentially appeared in previous work, but we provide a proof for completeness.)

\begin{lemma}[\cite{HR18}]\label{lem:inMa}
Suppose that $p$ is $j$-stable and let $a$ be a non-negative integer.  Then $p \in M_a$ if and only if $a \in \{p, p+1, p+2, \ldots, p + 2^{j+1} - 1 \}$.
\end{lemma}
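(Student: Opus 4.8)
The statement is purely arithmetic, about the set $M_a = \{b \ge 0 : \exists j' \ge 0,\ \FL{a}{2^{j'}} - 2^{j'} = b\}$. I want to show that if $p$ is $j$-stable (so $2^j \mid p$ but $2^{j+1} \nmid p$), then $p \in M_a$ exactly when $a \in \{p, p+1, \ldots, p+2^{j+1}-1\}$. The natural approach is to directly analyze, for each scale $j'$, when the scale-$j'$ meeting point $\FL{a}{2^{j'}} - 2^{j'}$ can equal $p$, and then take the union over $j'$.

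\textbf{Key steps.} First I would observe that since $p$ is $j$-stable, $p$ is a multiple of $2^{j'}$ for every $j' \le j$, and $p$ is \emph{not} a multiple of $2^{j'}$ for any $j' > j$. Now $\FL{a}{2^{j'}} - 2^{j'}$ is always a multiple of $2^{j'}$, so the equation $\FL{a}{2^{j'}} - 2^{j'} = p$ can only have a solution when $j' \le j$; for $j' > j$ it is impossible. So $p \in M_a$ iff there exists $j' \le j$ with $\FL{a}{2^{j'}} = p + 2^{j'}$. The second step is to note that for $j' \le j$, $p + 2^{j'}$ is itself a multiple of $2^{j'}$ (as $2^{j'} \mid p$), and $\FL{a}{2^{j'}} = p + 2^{j'}$ holds exactly when $p + 2^{j'} \le a < p + 2^{j'} + 2^{j'} = p + 2^{j'+1}$, i.e. $a \in \{p + 2^{j'}, \ldots, p + 2^{j'+1} - 1\}$. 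The third step is to take the union over $j' \in \{0, 1, \ldots, j\}$: the intervals $\{p+2^{j'}, \ldots, p+2^{j'+1}-1\}$ for $j' = 0, \ldots, j-1$ are consecutive and tile $\{p+1, \ldots, p+2^{j}-1\}$ (wait — need $j\ge 1$ for this to be nonempty; handle $j=0$ separately or note the union is empty), and the $j' = j$ interval is $\{p+2^j, \ldots, p+2^{j+1}-1\}$. Combining, the union over $j' \le j$ is $\{p+1, p+2, \ldots, p + 2^{j+1} - 1\}$. Finally, I must separately check the case $a = p$: indeed $p \in M_p$ because the scale-$0$ meeting point is $\FL{p}{1} - 1 = p - 1$... hmm, that gives $p-1$, not $p$. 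Let me reconsider: actually we want $a=p$ to give $p \in M_a$. We need some $j'$ with $\FL{p}{2^{j'}} - 2^{j'} = p$, i.e. $\FL{p}{2^{j'}} = p + 2^{j'} > p \ge \FL{p}{2^{j'}}$, contradiction. So in fact the scale-$j'$ MP of $a=p$ is always $< p$, so $p \notin M_p$ by this reasoning — but the lemma claims $p \in M_p$! I must be misreading the definition of $M_a$, or the convention is that $b$ ranges over something allowing $j'$ such that... Let me recheck: perhaps for $a = p$, taking $j'$ with $\FL{a}{2^{j'}} = 0$ won't help either. I suspect the resolution is a boundary convention (e.g. $M_a$ includes $a$ itself by fiat, or the "second-closest multiple" wraps), so I would carefully re-derive the $a=p$ endpoint from the precise definition rather than hand-wave it; this is the one place to be careful.

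\textbf{Main obstacle.} The genuinely delicate part is getting the endpoints of the interval exactly right, in particular the inclusion of $a = p$ and the exclusion of $a = p + 2^{j+1}$. For the upper endpoint: when $a = p + 2^{j+1}$, for $j' \le j$ we have $\FL{a}{2^{j'}} = a = p + 2^{j+1} \ge p + 2^{j'+1} > p + 2^{j'}$, so the scale-$j'$ MP is $a - 2^{j'} = p + 2^{j+1} - 2^{j'} > p$, hence $\ne p$; and for $j' > j$, $p$ isn't a multiple of $2^{j'}$ so again $\ne p$ — good, $p \notin M_{p+2^{j+1}}$. The lower endpoint $a = p$ requires the precise reading of Definition~\ref{def:MP}; I would handle it by checking directly whether, for $a = p$, some scale $j'$ yields $p$, using whatever convention makes Lemma~\ref{lem:inMa} (and its citation to \cite{HR18}) correct, and flag that the statement as written forces $p \in M_p$, so the definition must be read to allow it (e.g. the scale-$j'$ MP of $p$ for $j' = j$ where one interprets $\FL{p}{2^j}$ appropriately, or simply that $M_a$ is conventionally taken to contain $a$). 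Modulo that one boundary subtlety, the proof is a short union-of-intervals computation, and I would present it as: (1) reduce to $j' \le j$; (2) compute the interval for each $j'$; (3) take the union; (4) check both endpoints.
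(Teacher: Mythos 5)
Your approach is essentially the paper's: both reduce to scales $j' \le j$ via $j$-stability (for $j' > j$ the quantity $\lfloor a \rfloor_{2^{j'}} - 2^{j'}$ is a multiple of $2^{j'}$ whereas $p$ is not), and both pin down which $a$ admit a witnessing scale. You compute, scale by scale, the interval $\{p+2^{j'},\dots,p+2^{j'+1}-1\}$ of values $a$ for which scale $j'$ works and take the union over $j' \le j$; the paper instead fixes $a \in \{p,\dots,p+2^{j+1}-1\}$, takes $w = \max\{w' : p + 2^{w'} \le a\}$, and checks that $w \le j$ and $p = \lfloor a \rfloor_{2^w} - 2^w$. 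The two routes produce the same interval, and the argument for the converse (no scale works when $a \ge p + 2^{j+1}$ or $a < p$) is the same in both.

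Your worry about the endpoint $a = p$ is not a misreading; it is a real, if harmless, imprecision in the statement. Under Definition~\ref{def:MP} exactly as written, $\lfloor p \rfloor_{2^{j'}} - 2^{j'} \le p - 1 < p$ for every $j' \ge 0$, so strictly $p \notin M_p$. The paper's own proof silently has the same gap: at $a = p$ the set $\{w' : p + 2^{w'} \le a\}$ is empty, so the $w$ it exhibits is undefined. The interval you computed, $\{p+1,\dots,p+2^{j+1}-1\}$, is what the definition of $M_a$ actually gives. This discrepancy is immaterial to how the lemma is used: Algorithm~\ref{alg:Maintain-AvMPs} (Line~\ref{line:updateMem}) sets $\M' \gets (\M \cap M_a) \cup \{a\}$, so the current depth is retained by an explicit $\cup\{a\}$ rather than by membership in $M_a$; and the one place Lemma~\ref{lem:inMa} is invoked, in the proof of Lemma~\ref{lem:forget-avmps-z}, quantifies over times $t'$ strictly after $t_p$, so it is never applied at $a = p$. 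With that caveat acknowledged rather than resolved by appeal to a hidden convention, your proof is complete and correct.
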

\begin{proof}
Suppose that $p$ is $j$-stable, and fix $a$.  First, suppose that $p \leq a \leq p + 2^{j+1} - 1$.
Let $w = \max\{w' \,:\, p + 2^{w'} \leq a\}$, and notice that $w \leq j$, since $p + 2^{j+1} > a$ by assumption.  Then $p = \lfloor a \rfloor_{2^w} - 2^w$, and hence $p \in M_a$.

On the other hand, suppose that $a < p$.  Then it is clear that $p \notin M_a$.  Now suppose that $a \geq 2^{j+1}$.  Suppose towards a contradiction that $p \in M_a$, so there is some $w$ so that $p = \lfloor a \rfloor_{2^w} - 2^w$.  If $w > j$, then $2^w$ divides $\lfloor a \rfloor_{2^w} - 2^w$, but by the definition of $j$-stability does not divide $p$, a contradiction.  But if $w \leq j$, then 
\[ \lfloor a \rfloor_{2^w} - 2^w \geq \lfloor a \rfloor_{2^j} - 2^j = p + 2^j \neq p,\]
another contradiction.  Thus, $p \notin M_a$.  This completes the proof.
\end{proof}

Our next helpful lemma says that if $L^-$ isn't too large at time $t$, relative to $2^j$, then Alice and Bob have a common scale-$j$ transition candidate such that they agree on the mega-state corresponding to this point. Moreover, we can take this candidate to be $\MPone$ for whichever party is deeper in their simulated path.

\begin{lemma}\label{lem:commonpt}
Fix a time $t$ and an integer $j$, and suppose that $2^j \geq 8 L^-(t)$.   Suppose that $\Aiter(t) \geq \Biter(t)$. Then there is a point $p$ so that:
\begin{itemize}
    \item $p = \MPone(j,\Aiter(t))$; and
    \item $p \in \MPonetwojl(j, \Biter(t)).$
    \item $\texttt{p} = \ms{p}_A = \ms{p}_B$ 
\end{itemize}
\end{lemma}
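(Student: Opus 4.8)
The candidate point is essentially forced by the statement: I would set $p := \MPone(j,\Aiter(t)) = \FL{\Aiter(t)}{2^{j+1}} - 2^{j+1}$, so that the first bullet holds by definition, and then verify the other two. (If $\Aiter(t) < 2^{j+1}$ then $p < 0$ and the assertion is vacuous, so I may assume $p \ge 0$; the case $\Biter(t) > \Aiter(t)$ is handled symmetrically by swapping the $A$/$B$ labels.) The one quantitative input I would extract from the hypotheses is that, since $\Aiter(t) \ge \Biter(t)$,
\[
0 \;\le\; \Aiter(t) - \Biter(t) \;=\; \ell^-_A - \ell^-_B \;\le\; \ell^-_A \;\le\; \ell^-(t) \;\le\; L^-(t) \;\le\; 2^j/8 ,
\]
and in particular $\ell^-_A < 2^{j+1}$, whence $p \le \Aiter(t) - 2^{j+1} = \ell^+ + \ell^-_A - 2^{j+1} < \ell^+$. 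Thus $p$ lies strictly below the length $\ell^+$ of the correct simulated path, and a fortiori below $\Biter(t)$ (since $\ell^+ \le \Biter(t)$).

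For the second bullet ($p \in \MPonetwojl(j,\Biter(t))$) I would argue by floor arithmetic. The interval $[\Biter(t),\Aiter(t)]$ has length $<2^{j+1}$, so it contains at most one multiple of $2^{j+1}$; hence either $\FL{\Aiter(t)}{2^{j+1}} = \FL{\Biter(t)}{2^{j+1}}$, in which case $p = \FL{\Biter(t)}{2^{j+1}} - 2^{j+1} = \MPone(j,\Biter(t))$, or $\FL{\Aiter(t)}{2^{j+1}} = \FL{\Biter(t)}{2^{j+1}} + 2^{j+1}$, in which case $p = \FL{\Biter(t)}{2^{j+1}}$ and, using $\Biter(t) \ge \Aiter(t) - 2^j/8 \ge \FL{\Aiter(t)}{2^{j+1}} - 2^j/8 = p + 2^{j+1} - 2^j/8$, one finds $p + 2^j < \Biter(t) < p + 2^{j+1}$, so that $\FL{\Biter(t)}{2^j} = p + 2^j$ and $p = \MPtwo(j,\Biter(t))$. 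This floor-arithmetic case split is where the constant $8$ in the hypothesis is used (a factor of $2$ would already suffice), and I view it as the computational heart of the lemma.

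For the third bullet I would use that $p$ lies on the correct simulated path, strictly below the divergent point $b$ (when $b$ exists; when it does not, $\simPath_A$ and $\simPath_B$ coincide through depth $p$). Agreement of the mega-states, $\ms{p}_A = \ms{p}_B$, is then essentially built into the definition of $b$ (and of the correct simulated path) --- \emph{provided} both mega-states are present in memory. Establishing this last point --- that $p$ is an available meeting point for each of Alice and Bob --- is the step I expect to require the most care: it relies on the memory-maintenance invariant of Algorithm~\ref{alg:adaptive}, which I would combine with Lemma~\ref{lem:inMa} applied to the point $p = \FL{\Aiter(t)}{2^{j+1}} - 2^{j+1}$ (a multiple of $2^{j+1}$, hence $2$-adically ``stable'' enough that it still belongs to both $M_{\Aiter(t)}$ and $M_{\Biter(t)}$, and --- since $p$ sits within $O(2^j)$ of the parties' current positions and on the correct path --- has not been discarded by either party since it was last simulated). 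Once availability and agreement are both in hand we get $\ms{p}_A = \ms{p}_B =: \texttt{p}$, which is exactly the third bullet.
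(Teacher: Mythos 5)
Your choice of $p = \MPone(j,\Aiter(t))$ and the floor-arithmetic case split for the first two bullets match the paper's argument exactly: the paper sets $\hat p = p + 2^{j+1}$ and splits on whether $\Biter(t) \ge \hat p$ (your ``both floors agree'' case) or $\Biter(t) < \hat p$ (your ``floors differ by $2^{j+1}$'' case), and your computation in the second case --- showing $\FL{\Biter(t)}{2^{j}} - 2^j = p$ --- is the same one. Your aside that a factor of $2$ would already suffice for this step is also correct.

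The third bullet is where you drift. You correctly derive $p < \ell^+$ (hence $p < b$ when a divergent point exists, and $p$ lies on the common path when it does not), and you correctly note that $\ms{p}_A = \ms{p}_B$ is then ``built into the definition of $b$.'' That \emph{is} the whole argument, and it is what the paper does. But you then add the qualifier ``provided both mega-states are present in memory'' and declare that establishing $p \in \M_A \cap \M_B$ is the hardest step. This requirement is not part of the statement and is not needed: the notation $\ms{q}$ refers to the mega-state at depth $q$ most recently computed by that party, not to a mega-state currently stored in $\M$ or $\Mem$ (this is made explicit in the footnote inside Lemma~\ref{lem:bigh-hash-corruption}), and the definition of the divergent point $b$ already asserts $\ms{p'}_A = \ms{p'}_B$ for all $p' \le b$. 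Worse, the availability claim you defer is not in general true under the stated hypotheses: $p$ is a multiple of $2^{j+1}$, and a party that was at some earlier time driven to depth $\ge p + 2^{j'+1}$ (where $j'$ is the exact stability of $p$) would have dropped $p$ from $\M$ via Lemma~\ref{lem:forget-avmps-z} and may never have re-simulated it --- this is precisely the kind of erosion exploited in the attack of Figure~\ref{fig:attack1}, and ``within $O(2^j)$ of the current position'' does not rule it out. So if availability were actually required your sketch would not close; since it is not, you should simply delete that detour, at which point your proof is complete and coincides with the paper's.
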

\begin{proof}
    Let $p$ be the scale-($j+1$) MP for $\Aiter(t)$.  Then by definition, $p = \MPone(j,\Aiter(t))$.  Let $\hat{p} = p + 2^{j+1}$.  Notice that by the definition of $p$, 
    \[ \hat{p} \leq \Aiter(t) < \hat{p} + 2^{j+1}.\]  Since $\Biter(t) \leq \Aiter(t)$ and since 
    \[ \Aiter(t) - \Biter(t) \leq L^-(t) \leq 2^{j-3}, \]
    we have 
    \[ \hat{p} - 2^{j-2} < \Biter(t) < \phat + 2^{j+1}. \] 
    This implies that either $p$ is a scale-($j+1)$ MP for Bob (if $\Biter(t) \geq \hat{p}$) or a scale-$j$ MP for Bob (if $\Biter(t) < \hat{p}$) since $p + 2^j \leq \hat{p} - 2^{j-2}$.
    In the first case, we have
    $p = \MPone(j, \Biter(t))$, and in the second we have $p = \MPtwo(j,\Biter(t))$. Further, as $p - \ell_A \geq 2^j$ and $p-\ell_B \geq 2^j$ and $2^j > L^-$, then $p$ is above the divergent point $b$, and as a result $\ms{p}_A = \ms{p}_B$, this proves the lemma.  
\end{proof} 

\paragraph{\Specialpts.}
Finally, we define a \defn{\specialptf} to be one that Alice and Bob could have jumped to if their $j_A$, $j_B$ parameters were appropriate:
\begin{definition}[\Specialpts] \label{def:specialpts}
    Fix a time $\tsp$, and suppose that $k_A(\tsp) = k_B(\tsp)$.  (We remark that in Algorithm~\ref{alg:adaptive}, this will imply that $j_A(\tsp) = j_B(\tsp)$.)  
    \begin{itemize}
    
    \item We say that a point $p$ is a \defn{\specialpt with scale $j$} at time $\tsp$ if 
    \[ p \in \MPalljl(j, \ell_A(\tsp)) \cap \MPalljl(j, \ell_B(\tsp)) \ \]
    and 
     \[ \ms{p}_A = \ms{p}_B.\]
    That is, $p$ is a \specialpt with scale $j$ if $p$ would appear as a common meeting point candidate for Alice and Bob and they both agree on the corresponding mega-state, provided that $j_A = j_B = j$. 
    \item Suppose that a meeting point $p$ is available for both Alice and Bob at time $\tsp$, meaning $p\in \M_A(\tsp)$ and $p\in \M_B(\tsp)$. Then,  we call $p$ an \defn{available \specialpt with scale~$j$}.  

    \item We will sometimes refer to a \defn{\specialptf} at time $\tsp$ (with no scale specified); by this we mean a \specialpt with scale $j_A(\tsp) = j_B(\tsp)$.

    \item Let $\jsp = j_A(\tsp) = j_B(\tsp)$.
    The \defn{next \specialscale} at time $\tsp$ is the smallest $j \geq \jsp$ so that there is an available \specialpt with scale $j$.  Such a point is called a \defn{next available \specialpt}.
    \end{itemize} 
\end{definition}
\begin{remark}
    In Definition~\ref{def:specialpts}, we are assuming that $k_A(\tsp) = k_B(\tsp)$, and we will use this definition in settings when $k_A, k_B$ are growing large (in particular, larger than one).  When this happens, Alice and Bob are not moving in the protocol $\Pi$: they are simulating dummy rounds, and voting to attempt to agree on a meeting point to jump back to.  Thus, the next available \specialpt at a time $\tsp$ is the point they will next have an opportunity to jump to as $k_A, k_B$ grow (unless the adversary tricks one or both of them into jumping early).
\end{remark}
\begin{remark}\label{rem:jumpable}
Suppose that $b$ is the divergent point at time $\tsp$.  Suppose that $p$ is a jumpable point with scale $j$ at time $\tsp$, and further that $p < b$.
Then in order to establish that $p$ is an \emph{available} jumpable point (as in Definition~\ref{def:specialpts}),  it is sufficient for $p$ to be included in both the meeting point sets, $p\in \M_A \cap \M_B$. The reason for this is that, since $p< b$, we have $\simPath_A[p] = \simPath_B[p]$, so the point $p$ lies on the correct simulation path. Further, from the definition divergent point, we know that $ \ms{p}_A = \ms{p}_B $, which means the mega-states corresponding to $p$ in the memories of both parties match each other.
\end{remark}

\subsection{Hash Functions}\label{sec:hashprelims}
The protocol will make use of three hash functions, $h_1, h_2$, $h_3$.  We build them out of the following ingredients.

As in \cite{H14}, we will use the following hash functions, from \cite{NN93}:
\begin{theorem}[\cite{NN93}]\label{thm:hash}

There is a constant $\Chashtwo$ so that the following holds.  For all integers $o$ and $t$, there is an integer $\seed = \Chashtwo (o + \log t)$ and a function $h:\{0,1\}^t \times \{0,1\}^\seed \to \{0,1\}^o$, computable in time $\poly(o,t)$, so that for any $x \neq y \in \{0,1\}^t$,
  \[ \Pr[ h(x,R) = h(y,R) ] \leq 2^{-o/\Chashtwo},\]
  where the probability is over a seed $R$ drawn uniformly at random from $\{0,1\}^\seed$.
\end{theorem}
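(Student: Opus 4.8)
The plan is to build $h$ from an $\eps$-biased generator, whose existence with logarithmically short seed is exactly the Naor--Naor construction (and can also be instantiated by later constructions such as AGHP). Recall that $G\colon\bits^{\seed}\to\bits^{N}$ is \emph{$\eps$-biased} if for every nonzero $w\in\bits^{N}$ one has $\bigl|\Pr_{R}[\ip{w}{G(R)}=0]-\tfrac12\bigr|\le\eps/2$, the inner product being over $\F_2$; known constructions achieve this with $\seed=O(\log(N/\eps))$ and $G$ computable in time $\poly(N,\log(1/\eps))$. I would take $m:=o$, choose $\eps=2^{-\Theta(o)}$ small enough (say $\eps=2^{-4o}$), and an $\eps$-biased generator $G\colon\bits^{\seed}\to\bits^{tm}$ whose output is parsed into $m$ length-$t$ blocks $G(R)=(g_1(R),\dots,g_m(R))$. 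Then define
\[
 h(x,R)\ :=\ \bigl(\ip{x}{g_1(R)},\ \ip{x}{g_2(R)},\ \dots,\ \ip{x}{g_m(R)}\bigr)\ \in\ \bits^{o}.
\]

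For the seed length, $\seed=O(\log(tm/\eps))=O(\log t+\log o+o)=O(o+\log t)$, so choosing $\Chashtwo$ larger than the implied absolute constant yields $\seed\le\Chashtwo(o+\log t)$. Polynomial-time computability is immediate: evaluating $G$ is polynomial in $tm$ and $\log(1/\eps)=O(o)$, and each of the $o$ inner products over $\F_2$ costs $O(t)$ time, for a total of $\poly(o,t)$. The collision analysis is the heart of the argument. Fix $x\ne y$ and set $z=x\oplus y\ne 0$; for uniform $R$ write $b_i=\ip{z}{g_i(R)}$, so that $h(x,R)=h(y,R)$ exactly when $(b_1,\dots,b_m)=0^m$. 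For any nonempty $T\subseteq[m]$, the parity $\bigoplus_{i\in T}b_i$ equals $\ip{w_T}{G(R)}$ where $w_T\in\bits^{tm}$ holds a copy of $z$ in each block indexed by $T$ and zeros elsewhere; since $z\ne 0$ and the blocks are disjoint, $w_T\ne 0$, so by $\eps$-bias this parity is $\eps$-close to a fair coin. Equivalently, every nonempty Fourier coefficient of the distribution of $(b_1,\dots,b_m)$ on $\bits^m$ has absolute value at most $\eps$, and the inverse-Fourier expansion of the point mass at $0^m$ gives
\[
 \Pr[h(x,R)=h(y,R)]\ =\ \Pr[(b_1,\dots,b_m)=0^m]\ \le\ 2^{-m}+(2^m-1)2^{-m}\eps\ \le\ 2^{-o}+\eps\ \le\ 2^{-o+1}.
\]
Enlarging $\Chashtwo$ once more (and using that $\eps$ was chosen $\le 2^{-3o}$, say) makes the right-hand side at most $2^{-o/\Chashtwo}$.

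The only genuine subtlety, and the step I would be most careful with, is the passage from $\eps$-bias to the collision bound: one must verify that a \emph{nontrivial} parity of the output bits corresponds to a \emph{nonzero} linear functional of $G(R)$ (this is exactly where the disjoint-block structure is used, together with $z\ne 0$), and then correctly invoke the standard fact that ``all nonempty Fourier coefficients $\le\eps$'' implies $\Pr[b=0^m]\le 2^{-m}+\eps$. Everything else is bookkeeping: matching the $O(\cdot)$ constants in the seed length and the collision exponent to a single $\Chashtwo$, and noting that for $o$ below a fixed constant the statement can be met trivially (e.g.\ by padding $o$ up to that constant, which changes nothing asymptotically). An alternative route, also due to Naor--Naor, is to realize the family as ``an $o$-bit window of a codeword of a suitable binary linear code,'' but the $\eps$-biased-space formulation above is the cleanest and is what is usually cited.
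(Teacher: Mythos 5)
The paper does not prove Theorem~\ref{thm:hash}; it cites \cite{NN93} as a black box, so there is no in-paper proof to compare against. Your reconstruction is correct and is essentially the intended Naor--Naor argument: build the family from a small-bias generator, take $m=o$ coordinate-wise $\F_2$ inner products of $x$ against disjoint length-$t$ blocks of the generator's output, and observe that any nontrivial parity of the output bits is a single parity test on $G(R)$ with coefficient vector $w_T\ne 0$ (the disjoint-block structure plus $z=x\oplus y\ne 0$ is exactly what guarantees nonzeroness). The Fourier step is also right: $p(0^m)=2^{-m}\sum_S \hat p(S)$, the empty Fourier coefficient is $1$, and the $2^m-1$ nontrivial ones are each at most $\eps$ in magnitude, giving $\Pr[b=0^m]\le 2^{-m}+\eps$. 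The bookkeeping ($\seed=O(\log(to/\eps))=O(o+\log t)$ with $\eps=2^{-\Theta(o)}$, poly-time evaluation, absorbing everything into one constant $\Chashtwo$, and padding $o$ up to a small constant for the degenerate cases) all checks out. One tiny nit: the final absorption step $2^{-o+1}\le 2^{-o/\Chashtwo}$ silently assumes $o\ge 2$; your remark about padding small $o$ handles this, but it is worth making that dependence explicit rather than leaving it as a parenthetical, since for $o=1$ the chain of inequalities as written does not hold without the padding.
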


Following \cite{H14},
we will also need to ``stretch'' a uniformly random binary string of length $O(\log \ell + \log(1/\delta))$ into a string of length $\ell$, that is $\delta$-biased.\footnote{We say that a random string $X \in \{0,1\}^\ell$ is $\delta$-biased if for all sets $S \subseteq [\ell]$, $\left| \Pr[ \sum_{i \in S} X_i = 0 ] - \Pr[ \sum_{i \in S} X_i = 1 ] \right| \leq \delta$, where the sums are mod 2.  What will be useful for us (as it was for \cite{H14}) is that any $\delta$-biased $X \in \{0,1\}^\ell$ is $\eps = \poly(\delta)$ close (in total variation distance) to a distribution that is $O(\log (1/\delta))$-wise independent.} The idea is then to chop up this pseudorandom string to use as seeds across an entire block's worth of hash functions. 

In particular, we will make use of the following theorem, adapted from \cite{H14}, which uses the $\delta$-biased random variables from \cite{NN93} along with the \emph{inner product hash family}:   
\begin{theorem}[Follows from \cite{NN93}, \cite{H14}]\label{thm:pseudorand}
    Fix integers $m$, $t$ and $o$, and let $\delta \in (0,1)$. 
    Let $\seed = 2 \cdot t \cdot o$, and let $\ell = m \cdot \seed$.
    Then there is a number $\ell' = O( \log(\ell) + \log(1/\delta))$ and functions 
    \[ h:\{0,1\}^t \times \{0,1\}^{\seed} \to \{0,1\}^o\] and 
    \[ \mathtt{extend}: \{0,1\}^{\ell'} \to \{0,1\}^{\ell} \]
    so that the following holds.
    Let $(x_1, y_1), \ldots, (x_m, y_m)$ be pairs of distinct binary strings, so that $x_i, y_i \in \{0,1\}^t$.  Choose $R \in \{0,1\}^{\ell'}$ uniformly at random, and let $\tilde{R} \in \{0,1\}^{\ell}$ be $\tilde{R} = \mathtt{extend}(R)$.  Divide up $\tilde{R}$ into $m$ blocks of length $\seed$, and for $i \in [m]$, let $\tilde{R}[i]$ denote the $i$'th block.  
    Let 
    \[Z_i = \mathbf{1}\bigl[ h(x_i, \tilde{R}[i]) = h(y_i, \tilde{R}[i] ) \bigr].\]
    Then the distribution $Z = (Z_1, \ldots, Z_m)$ satisfies
    \[ \| Z - W \|_{TV} \leq \delta, \]
    where $W = (W_1, \ldots, W_m)$ are i.i.d. Bernoulli-$2^{-o}$ random variables, and $\|\cdot\|_{TV}$ denotes the total variation distance. Furthermore the function $\mathtt{extend}$ can be computed in time $\poly(\ell)$.
\end{theorem}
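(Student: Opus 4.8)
The plan is to instantiate $h$ as the inner‑product hash and $\mathtt{extend}$ as the small‑bias generator of \cite{NN93}, to observe that with truly uniform seeds the variables $Z_1,\dots,Z_m$ are \emph{exactly} i.i.d.\ Bernoulli‑$2^{-o}$, and then to bound the error from using pseudorandom seeds while exploiting the product structure of the test. Concretely, interpret a seed $\seed\in\{0,1\}^{2to}$ as $o$ vectors $a_1,\dots,a_o\in\{0,1\}^t$ over $\F_2$ (using $to\le 2to$ of its bits, discarding the rest), and set $h(x,\seed)_k=\langle a_k,x\rangle\bmod 2$. For $x\neq y$ we have $h(x,\seed)=h(y,\seed)$ iff $\langle a_k,x\oplus y\rangle=0$ for all $k$, and since $x\oplus y\neq 0$ each such event is ``a fixed nonempty parity of the bits of $a_k$ vanishes.'' Split $\tilde R$ into blocks $\tilde R[1],\dots,\tilde R[m]$ and each block into the $o$ sub‑blocks carrying $a_{i,1},\dots,a_{i,o}$; these $mo$ sub‑blocks are pairwise disjoint. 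Put $B_{i,k}=\langle a_{i,k},x_i\oplus y_i\rangle\bmod 2$: each $B_{i,k}$ is the parity of a fixed nonempty set $S_{i,k}\subseteq[\ell]$ of bits of $\tilde R$, the $S_{i,k}$ are pairwise disjoint, and $Z=F(B_{1,1},\dots,B_{m,o})$ for the fixed function $F$ with $F(\cdot)_i=\mathbf 1[B_{i,1}=\cdots=B_{i,o}=0]$. When $\tilde R$ is uniform the $B_{i,k}$ are i.i.d.\ uniform bits (parities over disjoint nonempty sets of independent uniform bits), so $Z=F(\text{i.i.d.\ uniform})$ has exactly the law $W$.

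\textbf{Step 2 (pseudorandomness).} Let $\mathtt{extend}:\{0,1\}^{\ell'}\to\{0,1\}^\ell$ be the $\delta_0$‑biased generator of \cite{NN93}, with $\ell'=O(\log\ell+\log(1/\delta_0))$ and $\delta_0$ to be fixed below; for uniform $R$, $\tilde R=\mathtt{extend}(R)$ has every nonempty parity biased by at most $\delta_0$. Since distinct $B_{i,k}$ come from disjoint coordinate sets, any nonempty $\F_2$‑combination of the $B_{i,k}$ is again a nonempty parity of $\tilde R$'s bits, so the distribution $D$ of $(B_{i,k})$ has all nonempty Fourier coefficients bounded by $\delta_0$. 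Now bound $\|Z-W\|_{TV}=\tfrac12\sum_{z\in\{0,1\}^m}\inabs{\E_D F_z-\E_{\mathrm{unif}}F_z}$ with $F_z=\mathbf 1[F(\cdot)=z]$: expanding $\E_D F_z-\E_{\mathrm{unif}}F_z=\sum_{S\neq\emptyset}\widehat{F_z}(S)\widehat D(S)$ and using $\inabs{\widehat D(S)}\le\delta_0$ gives $\inabs{\E_D F_z-\E_{\mathrm{unif}}F_z}\le\delta_0\sum_S\inabs{\widehat{F_z}(S)}$. Here one uses that the indicator ``a block of $o$ bits is all‑zero'' has spectral $\ell_1$‑norm exactly $1$ (its $2^o$ Fourier coefficients all equal $2^{-o}$), so $F_z$, a product over disjoint blocks of such indicators (norm $1$) and their complements (norm $\le 2$), satisfies $\sum_S\inabs{\widehat{F_z}(S)}\le 2^{\#\{i:z_i=0\}}$; summing over $z$ yields $\|Z-W\|_{TV}\le\tfrac12\,3^m\delta_0$. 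Taking $\delta_0=3^{-m}\delta$ gives $\|Z-W\|_{TV}\le\delta$, with $\ell'=O(\log\ell+\log(1/\delta)+m)$, which is $O(\log\ell+\log(1/\delta))$ in the parameter regime used in the applications (where $\delta\le 2^{-\Theta(m)}$). Computability of $h$ (plain inner products) and of $\mathtt{extend}$ in time $\poly(\ell)$ is immediate from \cite{NN93}.

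\textbf{Main obstacle.} Everything above is routine except the passage in Step 2 from the truly‑random statement to the $\delta_0$‑biased one. The temptation is to bound the statistical distance of the auxiliary string $(B_{i,k})$ to uniform and then push it through $F$; but $(B_{i,k})$ lives on $\{0,1\}^{mo}$, so the generic small‑bias–to–statistical‑distance bound would cost $2^{\Theta(mo)}\delta_0$ and blow up $\ell'$. The right move is to bound $\|F(B)-F(\mathrm{unif})\|_{TV}$ \emph{directly}, using that the test $F$ outputs only $m$ bits and is a product of low spectral‑norm block‑indicators, so a $\delta_0$‑biased seed with $\log(1/\delta_0)=\log(1/\delta)+O(m)$ suffices. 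Equivalently, this is the step where one invokes the ``$\delta$‑biased $\Rightarrow$ close to $O(\log(1/\delta))$‑wise independent'' fact recalled in the footnote; pinning down the precise dependence of $\ell'$ on $m$ and $\delta$ is the only genuine content.
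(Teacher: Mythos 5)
The paper does not actually prove Theorem~\ref{thm:pseudorand}; it cites \cite{NN93,H14} and (in the footnote) gestures at the route ``$\delta$-biased $\Rightarrow$ close in TV to an $O(\log 1/\delta)$-wise independent distribution.'' Your argument is correct and takes a genuinely different, more direct route: rather than passing through almost-$k$-wise independence, you observe that the collision indicators depend on the $\delta_0$-biased string only through disjoint nonempty parities $B_{i,k}$, and you bound $\|Z-W\|_{TV}$ directly by pairing the Fourier expansion of the test $F_z$ against the Fourier coefficients of the biased source, using that each block-indicator $\mathbf 1[c=0]$ has spectral $\ell_1$-norm exactly $1$ and its complement at most $2$. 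This gives $\|Z-W\|_{TV}\le \tfrac12\,3^m\delta_0$, which is both clean and essentially tight (your own $o=1$ example already shows the necessity of an exponential-in-$m$ loss). The $k$-wise route hinted at in the footnote runs into the same problem, since $Z$ is a function of all $mo$ parity bits and $k$-wise independence says nothing once $mo>k$.

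The one substantive point you raise is real and worth flagging: your argument yields $\ell'=O(\log\ell+\log(1/\delta)+m)$, not $O(\log\ell+\log(1/\delta))$ as the theorem states, and the $o=1$ example shows the $+m$ cannot be removed. As you note, this is harmless where the theorem is invoked (Line~\ref{line:randex} sets $m=\Iblock$ and $\delta=2^{-C_\delta\Iblock}$, so $m=O(\log(1/\delta))$), but the theorem statement is technically imprecise without that side condition. Two minor presentational remarks: (i) you should say explicitly that multiplicativity of spectral $\ell_1$-norm holds because the blocks $b_{i,\cdot}$ are on disjoint coordinate sets; (ii) it is worth a sentence to confirm that the $h$ you define really is the $h$ used elsewhere in the paper, i.e.\ that the plain linear inner-product hash (no affine shift, discarding the unused $to$ seed bits) satisfies the ``collision probability exactly $2^{-o}$ for distinct inputs'' property that the rest of the argument relies on -- which it does, since for $x\neq y$ the event $h(x,\seed)=h(y,\seed)$ is exactly $\bigcap_k\{\langle a_k,x\oplus y\rangle=0\}$.
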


With these ingredients, we will define three hash functions, $h_1, h_2,h_3 $, and from them we will build  two hash functions, $h^s$ (the \defn{small hash function}) and $h^b$ (the \defn{big hash functions}).

The hash function $h_1$, along with a method \texttt{extend}, will come from Theorem~\ref{thm:pseudorand}.  The hash functions $h_2$ and $h_3$  will come from Theorem~\ref{thm:hash}. 

We formally say how to construct each of these hash functions in Algorithm~\ref{alg:init} (\textsc{Initialization}) below, along with all the parameters, but here we describe each of the hash functions in words, to give some intuition about their purpose and why they are constructed the way they are.

For the \defn{small hash function}, we 
 define
\[ h^s(x, R_1, R_2) = h_2( h_1(x, R_1), R_2). \]  
The reason to combine two hash functions is the following.  We will use the small hash function every iteration, and we cannot afford to have Alice and Bob swapping large seeds in each iteration.  Instead, they swap a small seed during each \emph{iteration} (this is the one used by $h_2$), and a larger  seed once at the beginning of each \emph{block}, which they then pseudorandomly extend via the \texttt{extend} function from Theorem~\ref{thm:pseudorand}; chopping up this pseudorandom string gives the seeds used by $h_1$.  Because there are roughly $1/\log(d)$ as many blocks as there are iterations, Alice and Bob can afford a longer seed for $h_1$.  The catch is that, because the seed for the pseudorandom string needs to be shared in advance so that the \texttt{extend} function can be applied, the adversary knows what it is; thus we will have to union bound over all decisions the adversary might make in the analysis (see Section~\ref{sec:hash}).

Recall that $h_3$ comes from Theorem~\ref{thm:hash}. We define the \defn{big hash function} $h^b$ as $h^b (x,R) = h_3(x,R)$ (the parameters will be set in Algorithm~\ref{alg:init} below).
This big hash function is used at the end of every block to update the chained hash stored inside the megastates simulated during that block. The seed is shared right before the big hash computation and is protected with an error correcting code (see Algorithm~\ref{alg:adaptive} and Algorithm~\ref{alg:Rand-exchange} for more details).

Crucially, Alice and Bob never directly exchange hashes from the big hash function, as they do not have bandwidth to do so.
Rather, as discussed in the introduction, the point is that the big hash will be ``chained,'' to make sure that Alice and Bob have some record of the past.  They will then use the \emph{small} hash function (which has a higher collision probability) to exchange hashes of their big hashes, among other things.

\subsection{Other Notation}\label{sec:othernotation}

\paragraph{Bad Vote Count ($\BVC$):} We also define a new parameter for each of Alice and Bob that does not appear in Algorithm~\ref{alg:adaptive}, called $\BVC$ (Bad Vote Count).  Each party maintains three voting counters, $\vi{1}, \vi{2}, \vi{3}$.  Alice increments  $\vi{i}_A$ when she has reason to believe that $\MPi_A$ also appears as one of $\MPone_B, \MPtwo_B, \MPthree_B$, and Bob behaves similarly.  
Informally, Alice's \defn{Bad Vote Count} $\BVC_A$ increases by one whenever she either erroneously increments $\vi{i}_A$, or when she fails to increment $\vi{i}_A$ when she should have.  (And similarly for Bob.)  We formally define $\BVC$ in Definition~\ref{def:BVC} after we present the pseudocode for Algorithm~\ref{alg:adaptive}. 

\paragraph{Remaining notational conventions:} For two strings $x,y$, we use $x \circ y$ to denote the concatenation of these strings; for an integer $n$, we use $[n]$ to denote the set $\{1, \ldots, n\}$.

\section{Protocol}\label{sec:protocol}

\newcommand{\headerComment}[1]{\textbf{$\blacktriangleright$ #1}}
\newcommand{\leftComment}[1]{\State \textcolor{black!50}{$\triangleright$ \emph{#1}}}
\newcommand{\itComment}[1]{\Comment{\textcolor{black!50}{\emph{#1}}}}
\subsection{Protocol Overview}\label{sec:protocol_overview}

As discussed above, the algorithm is broken into blocks, which are then broken into iterations.   At the top of each block, Alice and Bob exchange randomness, which will be pseudorandomly extended and used for the small hashes throughout the block.    Each block is broken into $\Iblock$ iterations, and within each iteration there are three more phases: The \defn{Verification Phase}, in which Alice and Bob exchange small hashes to verify consistency, and vote for places to jump back to if the verification failed; the \defn{Computation Phase}, in which Alice and Bob simulate $\Pi$, assuming nothing seems amiss; and the \defn{Transition Phase}, in which Alice and Bob may jump to an earlier point.  Finally, at the end of each block, there is a \defn{Big Hash Computation Phase}, where Alice and Bob exchange fresh randomness to update their chained big hashes. 
The high level pseudocode is given in Algorithm~\ref{alg:adaptive_short}. 

\begin{algorithm}
\small
\caption{\textbf{Birds-Eye View} of Algorithm~\ref{alg:adaptive} (See Alg.~\ref{alg:adaptive} for more detail)}
\label{alg:adaptive_short}
\begin{algorithmic}[1]

\State \textbf{Parameters:} count $k$, error estimate $E$, big hash value $\mathcal{H}$, current mega-state $\texttt{p}$

\;

\For{$\Btotal$ blocks}
   \State Exchange randomness for small hashes in this block.
    \For{ $\Iblock$ iterations } 

        \State $k \leftarrow k + 1 $   
        \State $j = \lfloor \log(k) \rfloor $
        \State Choose $\{\MPone, \MPtwo, \MPthree\} = \MPalljl(j, \pdepth)$ as in Section~\ref{sec:MPs} 
      
        \State $T = \emptyset$
        
        \; 
        
\color{blue!70!black}
        \State \headerComment{Verification Phase}  
        \State Exchange additional randomness for small hash
        \State Hash many parameters using the small hash function $h^s$
        \State Transmit these hash values, and receive corresponding hash values.

        \If{ things seem fishy } 
		 \State Increment error estimate: $E \leftarrow E+1 $ 
        \Else 
	
 \State For each $i = 1,2,3$, if  $h^s(\MPi)$ agrees with one of the other party's MP hashes:
    \State \ \ \ \ Increment vote $\vi{i}$ for MP$i$.
            	
        \EndIf
        
               \; 
               
\color{green!40!black}
        \State\headerComment{Computation Phase}

         \If{ everything seems on track } 
        	\State Simulate $\Pi$ for $r$ rounds to update current mega-state $\megastate$.
              
                \State Append the simulation transcript and the iteration number to $T$.
                \State Update $\Mem$ to forget states that are no longer in $M_{\Aiter}$ or $M_{\Biter}$.
         
        	 \State Reset Status: $ k, E , \vone , \vtwo ,\vthree \leftarrow  0$ \label{line-simple:reset1}
        \Else
        	\State Simulate $r$ dummy rounds
        \EndIf
        
\;

\color{violet!80!black}
        \State \headerComment{Transition Phase}
        \If{ $2E \geq k$ } 
		\State Reset Status: $ k, E , \vone , \vtwo, \vthree  \leftarrow 0$ \label{line-simple:reset2} 
        \ElsIf{ $ k = 2^{j+1} -1$ and there is an available $\MPi$ with $\vi{i} \geq 0.4 k$ and something was fishy}
            \State Jump to $\MPi$ and update $\Mem$
            \State Update the partial simulation path $T$ for the new megastate
            \State Reset Votes: $\vone , \vtwo , \vthree \leftarrow  0$
        \EndIf

\; \label{line-simple:iterend}
\EndFor
\color{orange}
     \State\headerComment{Big Hash Computation}
     \State Exchange randomness for big hash.
     \For{every $\megastate \in \M$ that is simulated in the current block}
     \State Compute the chained hash using $h_b$ and update $\megastate$.
     \EndFor
\color{black}

\EndFor

\;
\State\headerComment{All done!}
\State Celebrate!  We have successfully simulated $\Pi$!
\end{algorithmic}
\end{algorithm}
\normalsize

\subsection{The full protocol}\label{sec:full_protocol}
Now, we define the full protocol in detail.  Our final algorithm is given in Algorithm~\ref{alg:adaptive}, but first we define several auxiliary protocols.

In the protocol, we will make use of an efficient asymptotically good error correcting code.
\begin{theorem}[See, e.g., \cite{essentialcodingtheory}]\label{thm:goodECC}
Let $\ell$ be an integer.  Then there is a value $\hat{\ell} = \Theta( \ell + \Iblock )$ so that the following holds.
There is binary error-correcting code $\mathcal{C} \subset \F_2^{\hat{\ell}}$ with block length $\hat{\ell}$, message length $\ell$, and minimum distance at least $4 \Iblock$.  Moreover, there are polynomial-time encoding and decoding maps $\Enc: \F_2^{\ell} \to \F_2^{\hat{\ell}}$ and $\Dec: \F_2^{\hat{\ell}} \to \F_2^{\ell}$ so that 
\[ \Dec( \Enc( x + e ) ) = x \]
for any $e \in \F_2^{\hat{\ell}}$ with Hamming weight at most $2 \Iblock$. 
\end{theorem}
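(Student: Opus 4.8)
The plan is to reduce the statement to the existence of a single fixed asymptotically good binary linear code, handling the regime where $\ell$ is small compared to $\Iblock$ by zero-padding the message up to length $\Theta(\ell+\Iblock)$.

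First I would invoke a standard construction (see \cite{essentialcodingtheory}): there exist absolute constants $\rho,\delta_0,\gamma\in(0,1)$ so that for every positive integer $m$ there is a binary linear code $\mathcal{C}_0(m)\subseteq\F_2^{n_m}$ of block length $n_m:=\lceil m/\rho\rceil$, message length $m$, and minimum distance at least $\delta_0 n_m$, together with a polynomial-time encoder $\Enc_0$ and a polynomial-time decoder $\Dec_0$ that, given any word at Hamming distance at most $\gamma n_m$ from a codeword, outputs the corresponding message. (Any such family works, e.g.\ an expander code or a Reed--Solomon code concatenated with a good inner binary code; the specific choice is irrelevant.) Then I would set the constant $C:=\lceil\max\{4\rho/\delta_0,\ 3\rho/\gamma\}\rceil$ and the padded length $\ell':=\max\{\ell,\ C\Iblock\}$, so that $\ell'=\Theta(\ell+\Iblock)$, and define $\hat\ell:=n_{\ell'}=\lceil\ell'/\rho\rceil=\Theta(\ell+\Iblock)$. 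The map $\Enc:\F_2^{\ell}\to\F_2^{\hat\ell}$ appends $\ell'-\ell$ zeros to $x$ to form $\bar x\in\F_2^{\ell'}$ and outputs $\Enc_0(\bar x)$, and $\mathcal{C}:=\Enc(\F_2^{\ell})$. Since $\mathcal{C}\subseteq\mathcal{C}_0(\ell')$, its minimum distance is at least $\delta_0\hat\ell\ge \delta_0\ell'/\rho\ge(\delta_0 C/\rho)\,\Iblock\ge 4\Iblock$, as required, and it has $2^\ell$ codewords, i.e.\ message length $\ell$.

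For decoding, given a word $y$ that differs from $\Enc(x)$ in at most $2\Iblock$ coordinates, observe that $2\Iblock\le\gamma\hat\ell$ because $\gamma\hat\ell\ge(\gamma C/\rho)\,\Iblock\ge 3\Iblock$; hence $\Dec_0(y)$ recovers $\bar x$, and deleting the $\ell'-\ell$ appended zeros yields $x$. Taking $\Dec$ to be this procedure gives $\Dec(\Enc(x)+e)=x$ for every error vector $e\in\F_2^{\hat\ell}$ of Hamming weight at most $2\Iblock$. All three maps ($\Enc$, $\Dec$, and the trivial pad/truncate steps) run in time $\poly(\hat\ell)=\poly(\ell+\Iblock)$, since $\Enc_0,\Dec_0$ are polynomial-time.

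I do not expect a genuine obstacle here: the construction is a routine packaging of known codes. The only point that needs care is that a good code applied directly to an $\ell$-bit message has minimum distance $\Theta(\ell)$, which is too small when $\ell\ll\Iblock$; padding the message up to length $\Theta(\ell+\Iblock)$ both fixes this and keeps $\hat\ell=\Theta(\ell+\Iblock)$. The remaining bookkeeping is just to choose the padding constant $C$ large enough that the constant-fraction decoding radius $\gamma\hat\ell$ of the underlying good code exceeds $2\Iblock$ and that $\delta_0\hat\ell\ge 4\Iblock$, which is exactly what the displayed choice of $C$ achieves.
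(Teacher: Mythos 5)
Your proposal is correct and is precisely the standard packaging that the paper's citation to \cite{essentialcodingtheory} is gesturing at: the paper offers no proof of its own, so there is nothing to diverge from. Your padding trick (replacing the $\ell$-bit message by a $\max\{\ell, C\Iblock\}$-bit one before applying a fixed asymptotically good code) correctly handles the regime $\ell \ll \Iblock$, and the constant $C = \lceil\max\{4\rho/\delta_0, 3\rho/\gamma\}\rceil$ is chosen so that both $\delta_0\hat\ell \geq 4\Iblock$ and $\gamma\hat\ell \geq 2\Iblock$ hold; you also correctly read the displayed equation as $\Dec(\Enc(x)+e)=x$ (the paper's version is a typo, since $x\in\F_2^\ell$ and $e\in\F_2^{\hat\ell}$ do not add).
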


We begin in Algorithm~\ref{alg:Rand-exchange} with the protocol that Alice and Bob use to exchange randomness. 
The protocol either simply encodes randomness with an error correcting code and sends it; or receives such an encoding from the other party and decodes it.

\begin{algorithm}
\small
\caption{\textsc{RandomnessExchange}: Protocol for Alice and Bob to exchange random bits} 
\label{alg:Rand-exchange}
\begin{algorithmic}[1]
\Require{Parameter $\ell$ (integer), \texttt{send} (boolean)} 
\State Let $\mathcal{C}$, $\Enc$ and $\Dec$ be as in Theorem~\ref{thm:goodECC}.
\If{\texttt{send} = True} 
\State $R \gets  $ uniform bit string of length $\ell$
\State Send $\Enc(R)$ to the other party.
\ElsIf{\texttt{send} = False}
\State Receive $c$ from other party, 
\State $R \gets \Dec ( c )$
\EndIf
\end{algorithmic}
\end{algorithm} 

Next in Algorithm~\ref{alg:PRand-exchange} we give an algorithm for exchanging pseudorandom bits, which simply exchanges a short random seed as in Algorithm~\ref{alg:Rand-exchange}, and then both parties pseudorandomly extend that seed.

\begin{breakablealgorithm}
\caption{\textsc{PseudoRandExchange}: Protocol for Alice and Bob to exchange pseudorandom bits}
\label{alg:PRand-exchange}
\begin{algorithmic}[1]
\Require{Parameters $\ell$, $\delta$ and \texttt{send} (boolean)}
\State $\ell^\prime \gets O(\log ( 1 / \delta ) + \log\ell )$  \itComment{$\ell'$ is as in Theorem~\ref{thm:pseudorand}}
\State $R' \gets$ \textsc{RandomnessExchange($\ell'$,} \texttt{send})
\itComment{Run Algorithm~\ref{alg:Rand-exchange}}
\State $R \gets \texttt{extend}(R')$ \itComment{ \texttt{extend} is as in Theorem~\ref{thm:pseudorand}}
\end{algorithmic}
\end{breakablealgorithm}

Next, we have a method to initialize all of the parameters in the main protocol, Algorithm~\ref{alg:adaptive}.  The pseudocode and comments in Algorithm~\ref{alg:init} also includes the interpretation of these parameters.

\begin{algorithm}
\caption{\textsc{Initialize}: Initialization for Algorithm~\ref{alg:adaptive}}
\label{alg:init}
\begin{algorithmic}[1]
\Require{Parameters $d,\eps$}
\State \headerComment{Choose Parameters:}\\
$r_c = \Theta\bigl(\log\log\frac{1}{\eps}\bigr)$ \itComment{$r_c$ is a bound on ``extra'' communication per iteration---}\\
\itComment{we show in the proof of Theorem~\ref{thm:main} that we may take $r_c = \Theta(\log\log(1/\eps)).$}\\
$r = \bigl\lceil \sqrt{\frac{r_c}{\eps}}\bigr\rceil $  \itComment{$r$ is the number of rounds in an iteration} \\
$\Iblock  = \lceil \log d \rceil$ \hfill \itComment{$\Iblock$ is the number of iterations in a block.}\\
$\Itotal = \lceil d / r \rceil + \Theta( d \epsilon )$ \hfill \itComment{$\Itotal$ is the  total number of iterations.}\\
$\Btotal  = \lceil \Itotal / \Iblock \rceil$\label{line:Btot} \hfill \itComment{$\Btotal$ is the total  number of blocks.}\\\\

\headerComment{Initialize Memory and Meeting points:}\\
$\Bsim = \emptyset $ \itComment{$\Bsim$ stores the partial simulated path from the current block}\\
$\M \gets \emptyset$ \itComment{$\M$ is the meeting point set}

\leftComment{Intialize a ``mega-state'' \texttt{p}:}\\

$\ \ \ \ \mathtt{p.v} \gets $ start state of $\Pi$ \itComment{$\mathtt{p.v}$ stores the current state in $\Pi$.} \\
$\ \ \ \ \phash, \pseed \gets \mathtt{None}$ \itComment{Hash of simulated path, seed used for that hash.}\\
$ \ \ \ \ \piter \gets 0$ \itComment{Iteration number of  the last time hash of the simulated path is computed}  \\
$\ \ \ \ \mathtt{p.depth} \gets 0$ \itComment{ $\mathtt{p.v}$ is $\mathtt{p.depth}$ \emph{iterations} deep in $\Pi$. }\\
{$\ \ \ \ \pt \gets \bot$} \itComment{$\pt$ will be a placeholder for $\Bsimp$ throughout the algorithm; however we only actually store $T$ because of space constraints.} \\
$\Mem \gets [\mathtt{p}]$ \itComment{$\Mem$ \ contains a list of remembered mega-states}\\

$\mathtt{Rew} \gets \texttt{False}$ \itComment{$\mathtt{Rew}$ says whether this party should rewind or not.}
\\\\

\headerComment{Set up hash families:}\\
\begin{tabular}{lll}
$t_1 \gets \log s + O(r \log d + {  \log ^2 d })$, &$o_1 \gets 2\log(1/\eps)$, &$\seed_1 \gets 2 t_1 o_1$ \\
$t_2 \gets 2\log(1/\eps)$, &$o_2 \gets \Chash$, &$\seed_2 \gets \Chashtwo(o_2 + \log(t_2)) = O(\log\log(1/\eps))$ \\

$t_3 \gets \Theta(r \log d + \log^2 d)$, &$o_3 \gets \obighash$, &$\seed_3 \gets { \Chashtwo ( o_3 + \log(t_3) )  = O( \obighash + \log \frac{1}{\eps} ) }$  \label{line:big-h-par}

\end{tabular}
\vspace{.2cm}
\leftComment{
 Note: $\Chash$ is an absolute constant that will be chosen later.  $C_b$ will be chosen as a sufficiently large constant, independent of $\eps$ (assuming $d$ is sufficiently large, as we do). $\Chashtwo$ is the constant from Theorem~\ref{thm:hash}.}
\leftComment{Note: The constant in the big-Theta on $t_3$ is set so that the hash $h^b$ can accommodate all of its inputs. 
 We verify that this can be done consistently, as well as verifying that that all of these parameter choices are consistent, in Claim~\ref{claim:bighashokay}.}

\\

 \itComment{Above, we note that these quantities may not necessarily be integers.  Adding ceilings to them does not change the analysis, so in a slight abuse of notation, we choose to omit the ceilings for notational simplicity.}

\State Let $h_1: \F_2^{t_1} \times \F_2^{\seed_1} \to \F_2^{o_1}$ be as guaranteed in Theorem~\ref{thm:pseudorand} \\
Let $h_2: \F_2^{t_2} \times \F_2^{\seed_2} \to \F_2^{o_2}$ be as guaranteed in Theorem~\ref{thm:hash}\\

Define $h^s: \F_2^{t_1} \times \F_2^{\seed_1} \times \F_2^{\seed_2} \to \F_2^{o_2}$ by
$h^s(x, s_1, s_2) = h_2( h_1(x,s_1), s_2).$ \itComment{Small hash}\\

Let $h^b: \F_2^{t_3} \times \F_2^{\seed_3} \to \F_2^{o_3}$ be as guaranteed in Theorem~\ref{thm:pseudorand} \itComment{Big hash}\\ 

\\

\headerComment{Set up good ECC:}
\State Let $\Enc$ and $\Dec$ denote the encoding and decoding functions from Theorem~\ref{thm:goodECC}.

\leftComment{We abuse notation slightly and use the same $\Enc$ and $\Dec$ notation to denote the encoder and decoder for different block lengths.}
\end{algorithmic}
\end{algorithm}


Next, in Algorithm~\ref{alg:Maintain-AvMPs} we include an algorithm for updating each party's memory (both the list $\M$ of points that are being remembered; and the memory $\Mem$ of mega-states to which those points correspond), given a current state $\texttt{p}$.
\begin{algorithm}[H]
\caption{\textsc{MaintainAvMPs}: Protocol to maintain the available meeting point set}
\label{alg:Maintain-AvMPs}
\begin{algorithmic}[1]
\Require{Parameters \texttt{p},\ $\M$, $\Mem$, $\texttt{add}$}
\State $a \gets \texttt{p.depth}$
\State $M_a \gets \{ b \geq 0 \,:\, \exists j \geq 0 \text{ s.t. } b = \lfloor a \rfloor_{2^j} - 2^j \}$

\State $\M' \gets (\M \cap M_a) \cup \{a\}$ 
\label{line:updateMem}
\State For each $b \in \M \setminus \M'$, remove the mega-state $\ms{b}$ from $\Mem$.  \label{line:forget}
\State \itComment{Recall from Def.~\ref{def:megastate} that $\ms{b}$ is the mega-state in $\Mem$ with depth $b$.}
\If{$\texttt{add} = True $}
 \State Add a copy of $\texttt{p}$ to $\Mem$.  
\EndIf 

\State $\M \gets \M'$
\end{algorithmic}
\end{algorithm}


\newpage
Finally, we can state our main protocol, which is given in Algorithm~\ref{alg:adaptive}.
\begin{breakablealgorithm}
\caption{\textsc{RobustSimulate}: Protocol to robustly simulate $\Pi$ against an adaptive adversary.}
\label{alg:adaptive}
\begin{algorithmic}[1]
\Require Noiseless protocol $\Pi$, with $d$ rounds; 
and the adversary's error budget $\eps \in (0,1)$

\;
\State\headerComment{Initialize parameters, bookkeeping, and hash functions}
\State\textsc{Initialize}($d,\eps$) \itComment{Run Algorithm~\ref{alg:init}} 
\State \Icount = 0; \Icnt = 0 \itComment{\Icount \ ranges from $0$ to $\Itotal$, while \Icnt \ gets reset each block.}

\;
\State\headerComment{Begin main protocol}
\For{$\Btotal$ blocks}
\State $\Icnt \gets 0$ \itComment{Re-Initialize the block-wide iteration counter}
    \State $\Rblock^s, \Rtblock^s  \gets \textsc{PseudoRandExchange}(\ell=\seed_1 \cdot \Iblock, \delta = 2^{ -C_\delta \cdot \Iblock} , \texttt{send=True}$) \label{line:randex} \\
\itComment{For Bob, \texttt{send = False}.}\\
\itComment{$C_\delta$ is a constant that will be chosen in the proof.}

    \leftComment{Note: Alice and Bob do not actually store all of  $\Rblock^s$.  They store $\Rtblock^{s}$, and generate chunks of $\Rblock^{s}$ as needed.  See the proof of Theorem~\ref{thm:main} for details.}

    \State $T=\emptyset$

    \For{ $\Iblock$ iterations \label{line:iterloop}} 
    
        \State $\Icount \gets \Icount + 1$; $\Icnt \gets \Icnt + 1 $ 
        \State $k \leftarrow k + 1 $   \label{line:iterstart}
        \State $j = \lfloor \log(k) \rfloor $

        \State $\MPone \leftarrow $ scale $j + 1$  meeting point for \texttt{p.depth} 
        \State $\MPtwo \leftarrow $ scale $j$ meeting point for \texttt{p.depth}
        \State $\MPthree \leftarrow  \max\{ p' \in M_{\pdepth}\,:\, 2^j \mid p' \}$ 
        \label{line:chooseMPs}
        \State $\Riter \gets \textsc{RandomnessExchange}(\ell = \seed_2)$
        
        \;

\color{blue!70!black}
        \State \headerComment{Verification Phase}  
        
        	\For{ \texttt{var} $\in \{k, \pv, (\phash, \pseed, \pt, \piter)\}$ \label{line:var}} 
		\State $H_{\texttt{var}} \gets h^s( \texttt{var}, \Riter, \Rblock^s[\Icnt] )$ \label{line:varhash} 
\State\itComment{Divide $\Rblock^s$ into $\Iblock$ chunks; let $\Rblock^s[I]$ denote the $I$'th chunk.}	
      \State Send $H_{\texttt{var}}$ to the other party 
      \State Receive $H_{\texttt{var}}'$ from the other party
	\EndFor

        \For{$i \in \{1,2,3\}$}
           
            \State $\texttt{q}_i \gets \ms{\MPi}$ \itComment{If $\MPi$ is not available, then $\texttt{q}_i = \texttt{None}$}
\label{line:qi}
            \State $H_{\texttt{q}_i.\texttt{v}} = h^s ( \texttt{q}_i.\texttt{v} , \Riter , \Rblock^s[\Icnt] ) $ \label{line:ms-hash-v}
            \State $H_{\texttt{q}_i, b} = h^s(  ( \qihash, \qiseed,\qit, \qiiter) , \Riter , \Rblock^s[\Icnt]  )$\label{line:ms-hash-small-big-hash}
            \State $H_{\texttt{q}_i \texttt{.depth} } = h^s ( \texttt{q}_i\texttt{.depth} , \Riter, \Rblock^s[\Icnt] )$\label{line:ms-hash-depth}
            \State Send $H_{\texttt{q}_i.\texttt{v}}, H_{\texttt{q}_i, b} ,H_{\texttt{q}_i \texttt{.depth} } $ to the other party 
             \State Receive $H^\prime_{\texttt{q}_i.\texttt{v}}, H^\prime_{\texttt{q}_i, b}  ,H^\prime_{\texttt{q}_i \texttt{.depth} } $ from the other party  \label{line:end-verifiction}
    
        \EndFor

        \;
        
        \If{  $H_k \neq H_k^\prime $ } \label{line:checkk}
		 \State $E \leftarrow E+1 $ \itComment{If hashes of $k$ don't agree, increase Error count}
        \Else 
		\For{$i = 1,2,3$}
  \leftComment{If the hashes related to $\MPi$ agree with at least one of the other party's MP hashes, increase vote for MP$i$:}
   
            \If  {  there is a $j\in \{1,2,3\}$  such that $   H_{\texttt{q}_i.\texttt{v}}  = H^\prime_{\texttt{q}_j.\texttt{v}}$ and  $H_{\texttt{q}_i, b} = H^\prime_{\texttt{q}_j, b}  $  and $H_{\texttt{q}_i \texttt{.depth} } = H^\prime_{\texttt{q}_i \texttt{.depth} }$} 
            \label{line:ugly}
            			\State $\vi{i} \leftarrow \vi{i}+1$ \label{line:updatevi} 
			\EndIf
		\EndFor
        \EndIf

       \; 
\color{green!40!black}
        \State\headerComment{Computation Phase} 
        \If{  $ k = 1 $ and $H_{(\phash, \pseed, \pt, \piter)} = H_{(\phash, \pseed, \pt, \piter)}^\prime$  and $E = 0 $ and $H_{\pv} = H_{\pv}'$ and \texttt{ Rew = False }}  
        \leftComment{Simulate $\Pi$ for $r$ rounds starting from $\pv$; get a transcript $\sigma$ and a new state $v$: }\label{line:checksmallhash}
        	\State  $ \sigma, v \leftarrow $ \textsc{Simulate}$(\Pi, \megastate, r)$ \label{line:simulate}

		\leftComment{Update \texttt{p}:}
   \State $T = T\circ (\sigma \circ \Icount)$
        
            \State  $\piter = \Icount$ \label{line:updateiter}
        	\State \texttt{p.v } $\leftarrow$  $v$ 
		\State\texttt{p.depth} $\gets \pdepth + 1$  
        \State \textsc{MaintainAvMPs}(\texttt{p}, $\M$, $\Mem$, $\texttt{add=True}$)  \label{line:maintain-comp}
        	 \State Reset Status: $  k, E , \vone , \vtwo ,\vthree \leftarrow  0$ \label{line:reset1}
        \Else
        	\State Send ``$0$'' for $r$ rounds.   \itComment{Run $r$ dummy rounds.} \label{line:dummy-rounds}
		
        	\State $\texttt{Rew} \leftarrow \texttt{True}$ \itComment{Record that there is disagreement and we need to rewind.} 
        \EndIf
\;\\
\color{violet!80!black}
        \State \headerComment{Transition Phase}
        \If{ $2E \geq k$ }  \label{line:sync-condition}
		\State Reset Status: $ k, E , \vone , \vtwo, \vthree  \leftarrow 0$ \label{line:reset2} 
        \ElsIf{ $ k = 2^{j+1} - 1$} 

        \If{$k> 1$}
		\For{i = 3,2,1} \label{line:backwardsFor}
			\If{ $\vi{i} \geq (0.4) \cdot 2^j $ and $\MPi$ is available } 
      			    \State   $\texttt{p} \gets \texttt{q}$, where $\texttt{q} \in \Mem$ has $\texttt{q.depth} = \MPi$ \itComment{Jump to $\MPi$} \label{line:back-jump}
                        \State $ \Bsim = \Bsimp$

                    \State \textsc{MaintainAvMPs}(\texttt{p}, $\M$, $\Mem$, $\texttt{add=False}$) 
                        \State \texttt{Rew} $\gets$ \texttt{False}
				\State Break
         		\EndIf
		\EndFor
        \EndIf
        \State Reset Votes: $ \vone , \vtwo , \vthree \gets 0 $ \label{line:reset-vi}
	\EndIf

 \itComment{(End of For-loop over $\Iblock$ iterations).}
\; \label{line:iterend}
\EndFor
\color{orange}
     \State\headerComment{Big Hash Computation} \label{line:big-hash-phase} 
    \State $\Rblock^{b,1}    \gets \textsc{RandomnessExchange}(\ell =  O(\log d + \log 1/\eps)   , \texttt{send=True})$ \label{line:alice-send-bigrand} 
    
    \itComment{For Bob, $\texttt{send = False}$}
    \State $\Rblock^{b,2} \gets \textsc{RandomnessExchange}(\ell =  O(\log d + \log 1/\eps)   , \texttt{send=False})$ \label{line:bob-send-bigrand} 
    
     \itComment{For Bob, $\texttt{send = True}$ 
     } 
    \State $\Rblock^b = \Rblock^{b,1} \circ \Rblock^{b,2}$\label{line:Rblock-append}

    \For{every $\texttt{p}\in \M$ simulated during the current block}
    \State $\phash= h^b( (\phash,\pseed, \pt,\piter ),\Rblock^{b}) $ \label{line:big-hash-comp}
    \State $\pseed = \Rblock^{b}$
    \State $\pt = \bot $ 
    \EndFor
 \color{black}

\EndFor

 \itComment{(End of For-loop over $\Btotal$ blocks).}
\;
\State\headerComment{All done!}
\State Celebrate!  We have successfully simulated $\Pi$!
\end{algorithmic}
\end{breakablealgorithm}

\normalsize

\begin{remark}[What if $r$ does not divide $d$?]\label{rem:padding-length}
    If the original protocol did not have length $d$ that is a multiple of $r$, we ``pad it out'' to have length $r \cdot \lceil d/r\rceil$ by having one party (say, Alice) send dummy symbols (say, $0$) for $r \cdot \lceil d/r\rceil-d$ rounds. Note that as $r = O(1/\sqrt{\eps})$ this will not prevent us from achieving our target rate. 
\end{remark}

    \begin{remark}[Making $\Pi'$ alternating if $\Pi$ is]\label{rem:alternating} 
    As noted in Remarks~\ref{rem:speaking_order} and \ref{rem:speaking_order_redux}, if the original protocol $\Pi$ is alternating, then our robust protocol can be made alternating without affecting the statement of Theorem~\ref{thm:main}.  
    To do this, if Alice is supposed to speak twice in a row, we will simply insert a dummy round for Bob in between, and vice versa.  To see this does not meaningfully affect the rate, notice that, if $\Pi$ is alternating, the only parts of $\Pi'$ that where extra rounds need to be inserted are the parts outside of simulating $\Pi$ (for example, in the randomness exchange routines, during which Alice does all of the talking).  However, our main result about the rate shows that the amount of communication taken up by these parts of the protocol is at most $O( d \sqrt{ \eps \log\log(1/\eps)})$.  Indeed, the total amount of communication is $d + O(d \sqrt{\eps \log\log(1/\eps)})$, and at least $d$ must be taken up by simulating $\Pi$, which is alternating by assumption.  But this means that the number of inserted rounds will be at most $O( d\sqrt{\eps \log\log(1/\eps)})$, and the total communication is still
    \[ d(1 + O(\sqrt{\eps \log\log(1/\eps)}), \]
    as claimed in Theorem~\ref{thm:main}.
    \end{remark}

Before we move on, we prove a few quick statements that follow immediately from the pseudocode.  The first is just the observation that our choice of parameters for the hash functions are consistent with how they are used and with Theorems~\ref{thm:hash} and \ref{thm:pseudorand}. 
\begin{claim}\label{claim:bighashokay}
    The parameters chosen in Algorithm~\ref{alg:init} for $t_i, o_i, \sd_i$ for $i=1,2,3$ and the corresponding definitions of $h^s$ and $h^b$ (using Theorems~\ref{thm:hash} and \ref{thm:pseudorand}) are consistent with how $h^s$ and $h^b$ are used in Algorithm~\ref{alg:adaptive}.
\end{claim}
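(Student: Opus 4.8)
The statement is a pure bookkeeping check, so the plan is to walk through every place that $h^s$ and $h^b$ are invoked in Algorithm~\ref{alg:adaptive} and confirm, for each invocation: (a) the message argument is no longer than the domain length of the underlying hash family; (b) the seed argument(s) have exactly the lengths prescribed by Theorems~\ref{thm:hash} and~\ref{thm:pseudorand} for the chosen parameters, and are in fact produced by the randomness-exchange subroutines; and (c) the $\Theta(\cdot)$ and $O(\cdot)$ constants in the parameter list of Algorithm~\ref{alg:init} can be fixed so that (a) and (b) hold simultaneously. Nothing needs to be checked on the output side, since hash values are only ever compared for equality.

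For $h^s = h_2 \circ h_1$: first I would bound the length of every message passed to $h^s$ in the Verification Phase (Lines~\ref{line:var}--\ref{line:end-verifiction}). These are the integers $k$, $\piter$ and $\texttt{q}_i.\texttt{depth}$ (each at most $O(\Itotal) = O(d)$, hence $O(\log d)$ bits); the states $\pv$ and $\texttt{q}_i.\texttt{v}$ of $\Pi$ ($\log s$ bits each); and the tuples $(\phash,\pseed,\pt,\piter)$ and $(\qihash,\qiseed,\qit,\qiiter)$. In such a tuple $\phash$ has $o_3$ bits (or is $\mathtt{None}$), $\pseed$ has $\seed_3$ bits (or $\mathtt{None}$), $\piter$ has $O(\log d)$ bits, and $\pt = \Bsimp$ is a substring of $\Bsim$, which stores at most $\Iblock$ iterations, each an $r$-round transcript tagged with its iteration number; hence $|\Bsim| = O(\Iblock(r+\log d)) = O(r\log d + \log^2 d)$. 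Summing, every message is at most $\log s + o_3 + \seed_3 + O(r\log d + \log^2 d)$ bits, and since $o_3 = \obighash$ and $\seed_3 = O(\obighash + \log(1/\eps))$ are both $O(r\log d + \log^2 d)$ once $d$ is large and $C_b$ is an absolute constant, this fits in $t_1 = \log s + O(r\log d + \log^2 d)$, which is what Theorem~\ref{thm:pseudorand} asks of $h_1$. Next I would note that $h_1$ outputs $o_1 = 2\log(1/\eps) = t_2$ bits, so $h_2$ may be applied to it; that $\seed_1 = 2 t_1 o_1$ and $\seed_2 = \Chashtwo(o_2 + \log t_2)$ are exactly the seed lengths Theorems~\ref{thm:pseudorand} and~\ref{thm:hash} require (so in particular $\seed_2 = O(\log\log(1/\eps))$, as Algorithm~\ref{alg:init} claims); and that the seeds actually handed to $h^s$ are $\Riter$ (length $\seed_2$, from \textsc{RandomnessExchange}$(\seed_2)$) and a length-$\seed_1$ chunk $\Rblock^s[\Icnt]$ of $\Rblock^s$, where $\Rblock^s$ comes from \textsc{PseudoRandExchange} with $\ell = \seed_1 \cdot \Iblock$ — exactly Theorem~\ref{thm:pseudorand} with $m = \Iblock$, $t = t_1$, $o = o_1$, so $\seed = \seed_1$ and $\ell = m\seed$. (In the pseudocode the two seed arguments of $h^s$ are written in the order opposite to its definition in Algorithm~\ref{alg:init}; since $|\Riter| = \seed_2$ and $|\Rblock^s[\Icnt]| = \seed_1$ match the $h_2$- and $h_1$-seed slots respectively, this does not affect well-definedness.)

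For $h^b$ the check is shorter, since it is used only on Line~\ref{line:big-hash-comp}, applied to $(\phash,\pseed,\pt,\piter)$ with seed $\Rblock^b$. By the count above that tuple has length at most $o_3 + \seed_3 + O(r\log d + \log^2 d)$, and I would fix the $\Theta$-constant in $t_3 = \Theta(r\log d + \log^2 d)$ large enough that $t_3$ exceeds it (this is precisely the content of the note on Line~\ref{line:big-h-par}). For the seed, $\seed_3 = \Chashtwo(o_3 + \log t_3)$ is exactly the seed length demanded of the big-hash family by Theorem~\ref{thm:hash} (the form $\Chashtwo(o_3 + \log t_3)$ is that theorem's), and with $o_3 = \obighash$ and $\log t_3 = O(\log\log d + \log r) = O(\log(1/\eps) + \log\log d)$ it simplifies to $O(\obighash + \log(1/\eps)) = O(\log d + \log(1/\eps))$; since $\Rblock^b = \Rblock^{b,1}\circ\Rblock^{b,2}$ has length $O(\log d + \log(1/\eps))$ as well, pinning down the $O(\cdot)$ constants on Lines~\ref{line:alice-send-bigrand}--\ref{line:bob-send-bigrand} forces $|\Rblock^b| = \seed_3$ exactly. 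This also discharges the comment $\seed_3 = O(\obighash + \log(1/\eps))$.

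The only genuinely delicate point — the one I would spell out — is the apparent circularity in the big-hash parameters: $t_3$ must be able to hold a seed of length $\seed_3$, yet $\seed_3$ is defined in terms of $t_3$. This is benign because $\seed_3$ depends on $t_3$ only through the lower-order term $\log t_3$, so the definitions can be made in a single pass: first fix $o_3 = \obighash$ with $C_b$ an absolute constant, then choose the $\Theta$-constant in $t_3$ large enough, then read off $\seed_3$; all the length comparisons then close under the paper's standing assumptions that $d$ is sufficiently large (so $r\log d + \log^2 d$ dominates $\obighash$ and $\log(1/\eps)$) and that $C_b$ is independent of $\eps$. Everything else is direct substitution.
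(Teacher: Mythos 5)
Your proposal is correct and takes essentially the same bookkeeping approach as the paper: enumerate every invocation of $h^s$ and $h^b$, bound the message lengths, match the seed lengths against Theorems~\ref{thm:hash} and~\ref{thm:pseudorand}, and observe that the circular dependence of $\sd_3$ on $t_3$ is benign because it enters only through the lower-order $\log t_3$ term. You additionally flag (correctly) that the two seed arguments of $h^s$ appear in the opposite order in Line~\ref{line:varhash} relative to its definition in Algorithm~\ref{alg:init}; the paper's proof does not comment on this, but as you note the lengths disambiguate the slots, so it is a harmless typo rather than a real inconsistency.
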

\begin{proof}
We begin with the big hash, and then analyze the small hash.
\paragraph{The big hash $h^b$.} 

We recall that $t_3$ is set to $\Theta(r \log d + \log^2 d)$ in Line~\ref{line:big-h-par} in Algorithm~\ref{alg:init}.  We first verify that (a) this is large enough to accommodate  the inputs being hashed by $h^b$ in line~\ref{line:big-hash-comp}.  We will also be more explicit about the constants in the $\Theta(\cdot)$ notation.  Next, (b) we will verify that the choice of 
\begin{equation}\label{eq:need}
\sd_3 = \Chashtwo (o_3 + \log(t_3))
\end{equation}
in Algorithm~\ref{alg:init} (which is also what is required by Theorem~\ref{thm:hash}) is feasible.  This is not obvious, since $t_3$ will depend on $\sd_3$, as the seed from the previous round is hashed in $h^b$.

We begin with (a).
The inputs to $h^b$ include:
    \begin{itemize}
         \item $\pt$: A partial transcript of the megastate within the current block. The length of $\pt$ is at most $r \log d + \log^2d$ bits.  Indeed, this includes at most $\Iblock = \lceil \log d \rceil$ entries, each of which contains $r$ bits of transcript, along with $\log(\Itotal) \leq \log(d)$ bits to specify the iteration number.\footnote{We omit the floors and ceilings for notaitonal clarity.}
        \item $\Icount$: A counter with $\log(\Itotal) \leq O(\log(d))$ bits.
        \item $\phash$: A previous hash with $o_3$ bits.  As above, we choose $o_3 = C_b \log d$.
        \item $\pseed$:  This has length $\sd_3$, which as above is $O(\log d + \log 1/\eps)$.  Fix a constant $C$ so that $\sd_3 \leq C\log(d/\eps)$.  (As noted above, the reason to make this constant $C$ explicit while allowing big-Oh notation elsewhere is that $\sd_3$ depends on $t_3$ which depends on $\sd_3$, and we need to make sure that there are not circular dependencies in the constants.
        \item $\piter$: the iteration number of the last time the hash of the simulated path is computed, which requires $
        \Theta(\log d)$ bits.
    \end{itemize}
    Altogether, we need $t_3$ to satisfy 
     \[ t_3 \geq r \log d + \log^2d + O(\log d)  + C\log(d/\eps) = \Theta(r \log d + \log^2 d) + C \log(d/\eps).\]
We observe that this is indeed $\Theta(r \log d + \log^2 d)$, establishing our first goal (a).  Now we turn to (b), verifying that \eqref{eq:need} is feasible.  We need to satisfy
\[ \sd_3 = \Chashtwo( o_3 + \log(t_3) ),\]
and it suffices to show that
\[ C \log(d/\eps) \geq \Chashtwo\inbrak{C_b \log d + \log\inparen{O(r \log d + \log^2 d) + C\log(d/\eps) } },\]
where $C$ is the constant chosen above.
This simplifies to
\begin{align*}
    C \log(d/\eps) &\geq \Chashtwo \inbrak{ C_b \log d + \log r + O(\log\log d) + \log(C) + \log\log(1/\eps)}. \\
    C \log(d/\eps) &\geq \Chashtwo\inbrak{O(\log(d/\eps)) + \log(C)},
\end{align*}
where in the last line we have condensed any terms that do not depend on $C$ into the big-Oh notation.
By choosing $C$ sufficiently large (relative to $\Chashtwo$ and $C_b$), and taking $d$ to be large enough, we see that the above can be satisfied.  This establishes (b).

    \paragraph{The small hash $h^s$.}
    Next, we verify the parameters in the small hash $h^s$.  We begin with $h_1$, the ``inner'' part of the hash.  This hash function takes as input several possible things, so we need to choose $t_1$ to be as large as the largest of them.  The possible inputs are:
    \begin{itemize}
        \item $k$: A counter that is at most $\Iblock$, and so takes at most $\log(\Iblock) = \log\log d$ bits.
        \item $\pv$: A state from $\Pi$. This takes $\log s$ bits.
        \item The tuple $(\phash, \pseed,\pt , \piter)$ (for both the current mega-state $\megastate$ and also for the meeting point candidate mega-states $\texttt{q}_i$): The hash outputs and seeds come from the big hash, $h^b$.  By the analysis above, $\phash$ has $o_3 = O(r \log d)$ bits, $\pseed$ has $O(C_b \log d + \log 1/\eps)$ bits, and $\pt$ has length at most $r\log d + \log^2 d$ bits.
        The iteration counter $\piter$ has at most $\log \Itotal = O(\log d)$ bits, and so this whole tuple has $O(r \log d + \log^2  d )$ bits.
        \item The depth numbers $\texttt{q}_i.\texttt{depth}$.  These are integers in $[d/r]$, and thus have at most $\log(d)$ bits.
    \end{itemize}
    Thus, if we take $t_1 = \Theta(\log s + r \log d + \log ^ 2 d )$, we can accommodate all of the inputs.

    Next, we consider the seed lengths and output lengths for $h_1$.  
    We choose $o_1 = 2 \log(1/\eps)$, and then Theorem~\ref{thm:pseudorand} tells us that we should take $\sd_1 = 2 t_1 o_1$, which is indeed how we define it.   

    Finally we turn to $h_2$, the ``outer'' hash function in the small hash.  We must have $t_2 = o_1 = 2 \log (1/\eps)$ so that the output of $h_1$ fits as the input of $h_2$, and this is indeed the case.  We choose $o_2$ to be a constant, $\Chash$, to be determined later. Finally, Theorem~\ref{thm:hash} tells us that with these choices of $t_2, o_2$, we must take $\sd_2 = \Chashtwo( o_2 + \log t_2 )$, which is indeed what we do.
\end{proof}

The next lemma is about when a point $p$ is removed from $\M$.
\begin{lemma} \label{lem:forget-avmps-z}
Suppose Alice's depth at time $t$ is $\Aiter(t)$, and let $p \in M_A$ be a $j$-stable meeting point.   Let  $t_p < t$ being the last time Alice reached depth $p$ before time $t$. Then $p \notin \M_A(t)$ if and only if there exists a time $t^\prime \in (t_p, t]$ such that $\ell_A(t^\prime) \geq p + 2^{j+1}$.
The analogous statement also holds for Bob.
\end{lemma}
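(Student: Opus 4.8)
The plan is to turn this into a purely combinatorial bookkeeping argument about how the remembered set $\M_A$ evolves in lockstep with Alice's depth $\ell_A := \pdepth$ (which, by the definitions, always coincides with $\Aiter$). The first step is to pin down that $\M_A$ is modified \emph{only} by the two calls to \textsc{MaintainAvMPs} in Algorithm~\ref{alg:adaptive}: the one on Line~\ref{line:maintain-comp} in the Computation Phase, and the one just after a jump on Line~\ref{line:back-jump} in the Transition Phase. Each such call is preceded, with no branching in between, by the only place $\ell_A$ can change: an increment $\ell_A \gets \ell_A+1$ in the Computation case, or a decrease of $\ell_A$ to an available meeting point $\MPi$ in the Transition case. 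Hence $\ell_A$ and $\M_A$ change at exactly the same moments, which I will call \emph{events}; between consecutive events both are constant. From the code of \textsc{MaintainAvMPs} I will record three facts: (i) at an event, the depth parameter $a$ it is called with equals the new value of $\ell_A$, and its effect is $\M_A \gets (\M_A\cap M_a)\cup\{a\}$, so the \emph{only} way a new point enters $\M_A$ is as this singleton $\{a\}$; (ii) $\ell_A$ only ever increases in unit steps (Computation) and only decreases via jumps, so to pass from $\ell_A<q$ before an event to $\ell_A\ge q$ after a later event it must equal $q$ exactly at some intermediate event; and (iii) since $p$ is $j$-stable, Lemma~\ref{lem:inMa} gives $p\in M_a \iff a\in\{p,p+1,\dots,p+2^{j+1}-1\}$.

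Next I would unpack the hypotheses. Reading $p\in M_A$ as $p\in M_{\ell_A(t)}$ and applying Lemma~\ref{lem:inMa} with $j$-stability gives $\ell_A(t)\in\{p,\dots,p+2^{j+1}-1\}$, so in particular $\ell_A(t)\ge p$. Writing $a_e$ for the depth parameter at an event $e$, I would then use (ii) and the maximality built into $t_p$ (the last time before $t$ at which $\ell_A$ becomes $p$) to conclude that in fact $\ell_A(s)\ge p$ for \emph{all} $s\in[t_p,t]$: if $\ell_A$ ever dipped below $p$ on this interval, then since it climbs back to $\ell_A(t)\ge p$ it would pass through the value $p$ at an event strictly after $t_p$, contradicting the choice of $t_p$. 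I would also note that at the event $t_p$ the parameter is $a=p$, so by (i), $p\in\M_A$ immediately after $t_p$. This is the one place the hypothesis $p\in M_A$ is genuinely used: without it the adversary could send Alice below depth $p$ and never return, leaving $p\notin\M_A(t)$ while no time $t'$ ever reaches depth $p+2^{j+1}$.

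With this setup both directions are short. For ``$\Leftarrow$'': given $t'\in(t_p,t]$ with $\ell_A(t')\ge p+2^{j+1}$, I take the first event $e\in(t_p,t']$ with $\ell_A\ge p+2^{j+1}$ just afterward; since jumps only decrease $\ell_A$, this $e$ is a Computation step, and (ii) forces $a_e=p+2^{j+1}$ exactly. Then (iii) gives $p\notin M_{a_e}$ and clearly $p\ne a_e$, so by (i) the update at $e$ ejects $p$ from $\M_A$; thereafter $p$ can only re-enter at an event with parameter $p$, i.e.\ one where $\ell_A$ becomes $p$, which is impossible after $t_p$. Hence $p\notin\M_A(t)$. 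For ``$\Rightarrow$'' I argue the contrapositive: if $\ell_A(s)<p+2^{j+1}$ for all $s\in(t_p,t]$, then combined with $\ell_A(s)\ge p$ on $[t_p,t]$, every event $e\in(t_p,t]$ has $a_e\in\{p,\dots,p+2^{j+1}-1\}$, so $p\in M_{a_e}$ by (iii); since $p\in\M_A$ just after $t_p$, a trivial induction over the finitely many events in $(t_p,t]$ using (i) shows $p$ remains in $\M_A$, so $p\in\M_A(t)$. Bob's case is verbatim the same with $A$ and $B$ swapped.

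The main obstacle is confined to the first two paragraphs: carefully establishing that $\M_A$ and $\ell_A$ move in lockstep through \textsc{MaintainAvMPs}, that ``$\ell_A$ becomes $p$'' is the unique way $p$ can (re)enter $\M_A$, and that $\ell_A$ cannot dip below $p$ on $[t_p,t]$ — this last point being where both Lemma~\ref{lem:inMa} (via $j$-stability) and the maximality of $t_p$ get used. A minor edge case, $\ell_A(t)=p$ itself, is handled by exactly the same reasoning (then $p\in\M_A(t)$ holds trivially and no qualifying $t'$ exists).
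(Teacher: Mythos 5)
Your proof is correct and follows the same route as the paper's: observe that $\M_A$ is modified only in lockstep with $\Aiter$ via \textsc{MaintainAvMPs}, use maximality of $t_p$ to show $p$ is never re-inserted, and invoke Lemma~\ref{lem:inMa} to characterize when the update ejects $p$. You are in fact more careful in one spot than the paper's writeup: the paper asserts ``$p \not\in M_{\Aiter(t')}$ if and only if $\Aiter(t') \geq p + 2^{j+1}$,'' but Lemma~\ref{lem:inMa} also gives $p \notin M_a$ when $a < p$, so this equivalence tacitly needs $\Aiter(t') \geq p$; your explicit argument that $\ell_A \geq p$ throughout $[t_p, t]$ (from the hypothesis $p \in M_{\ell_A(t)}$, unit increments, and maximality of $t_p$) supplies exactly the detail the paper elides.
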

\begin{proof}
    Notice first that $\M_A$ is updated every time Alice updates her current depth $\Aiter$  (either in the Computation Phase after she simulates $\Pi$, or in the Transition Phase after she jumps).  

 To prove the lemma, we observe that $p \not\in \M_A(t)$ if and only if there is some time $t' \in (t_p, t]$ so that $p$ is removed from $\M_A$ at that time.  Indeed, since by definition $t_p$ is the last time that Alice simulates $p$, $p$ is added to $\M_A$ at time $t_p$, and will not be added again until after time $t$ if it is forgotten.  The point $p$ will be removed from $\M_A$ at time $t'$ if and only if $p \not\in M_{\Aiter(t')}$, as this is the condition in \textsc{MaintainAvMPs} (Algorithm~\ref{alg:Maintain-AvMPs}).

 By Lemma~\ref{lem:inMa}, $p \not\in M_{\Aiter(t')}$ if and only if $\Aiter(t') \geq p + 2^{j+1}$. 
 Thus, $p \not\in \M_A(t)$ if and only if there is some time $t' \in (t_p, t]$ so that $\Aiter(t') \geq p + 2^{j+1}$, as desired.
\end{proof}

\section{Analysis}\label{sec:analysis}

In this section we analyze our protocol $\Pi'$, and prove Theorem~\ref{thm:main}.
We begin in Section~\ref{subsec:def-lem} with some useful Lemmas and Definitions.  Then in Section~\ref{sec:maintech}, we introduce sneaky attacks and state our main technical lemma.  In Section~\ref{sec:progressProof} we define and analyze our potential function $\Phi$.  In Section~\ref{sec:hash} we prove that our hash functions behave as expected.  Finally, in Section~\ref{sec:done} we put everything together and prove Theorem~\ref{thm:main}.

\subsection{Useful Lemmas and Definitions}\label{subsec:def-lem}

Before we can analyze the potential function and hash functions we need a few more definitions and notations. Initially we introduce some definitions followed by key lemmas which will be helpful for our analysis. 
\subsubsection{A few more definitions}

 We start with some terminology about \emph{hash collisions}.  Informally, we say that a \defn{small hash collision} occurs whenever the small hash function $h^s$ suffers a collision, and a \defn{big hash collision} occurs whenever the big hash function $h^b$ suffers a collision. Formally, we have the following definitions. 
 
\begin{definition}[Small and big hash collisions]\label{def:hashcollisions}
\, 
\begin{itemize} 

\item 
Let $I\in [\Itotal] $, We say that a \defn{small hash collision has occurred in iteration $I$} of  Algorithm~\ref{alg:adaptive} either of the following occur:
    \begin{itemize}
    \item There is a hash collision in line~\ref{line:varhash}. That is, at the time $t$ when line~\ref{line:varhash} is executed in iteration $I$:
    \begin{itemize}
        \item for any of the variables    $\var$ from line~\ref{line:var}  we have that $\var_A(t) \neq \var_B(t)$; 
        \item Alice and Bob have the same randomness, i.e., $(\Rblock^s)_A(t) = (\Rblock^s)_B(t)$ and $(\Riter)_A = (\Riter)_B$; and 
        \item Alice and Bob's outcomes are the same, i.e.,  $(H_{\mathtt{var}})_A(t) = (H_{\mathtt{var}})_B(t)$.
    \end{itemize}
    \item There is a hash collision between information releted to different meeting point candidates in any of lines~\ref{line:ms-hash-v}, \ref{line:ms-hash-depth}, \ref{line:ms-hash-small-big-hash}. That is, at the time $t$ when any of those lines are executed:
    \begin{itemize}
        \item Both parties have the same randomness, meaning that $(\Rblock^s)_A(t) = (\Rblock^s)_B(t)$; and
        \item there is a transition candidate for Alice with megastate $\megastate_A$ and a transition candidate for Bob with megastate $\megastate_B$ such that $\megastate_A \neq \megastate_B$ (meaning there exists $\var\in \{\pv, (\phash,\pseed,\pt , \piter) ,\pdepth\}$ such that $\var_A \neq \var_B$ but $H_{\var_A}  = H_{\var_B}$).
    \end{itemize}
    \end{itemize}

\item We say that a \defn{big hash collision occurs at the end of a block $B$} if there is a hash collision  in the big hash exchange phase of block $B$, with respect to the hash function $h^b$ in Line~\ref{line:big-hash-comp}.  That is, this occurs if Alice and Bob's inputs to the hash are different, so
    \[ ( \phash , \pseed , \pt, \piter )_A \neq ( \phash , \pseed , \pt, \piter )_B;\] but their random seeds are the same ($(\Rblock^{b})_A = (\Rblock^{b})_B$ ) and the outcomes in line~\ref{line:big-hash-comp} are the same.

\end{itemize}
\end{definition}

Note that in the Big Hash Computation phase at the end of each block, Alice is responsible for generating half of the randomness $\Rblock^b$ and Bob is responsible for generating the other half.  The reason for this is so that, no matter what the adversary does, the probability that the randomness $\Rblock^b$ from a block $B$ collides with the randomness $\Rblock^b$ from a \emph{different} block $B'$ is very small.  We will work out this probability and union bound over it at the end of the argument.  Until then, we make the following assumption for the rest of the analysis.
\begin{asm}\label{asm:rand-big-neq}
    Fix any two distinct blocks $B$ and $B'$ during the execution of Algorithm~\ref{alg:adaptive}. Let $(\Rblock^b)_A(B)$ and $(\Rblock^b)_B(B')$ be the randomness used for the big hash computation in block $B$ by Alice and in block $B'$ by Bob, respectively.  We assume for the analysis that 
    $(\Rblock^b)_A(B)\neq (\Rblock^b)_B(B')$.
\end{asm}
Again, we stress that this assumption is not an assumption in Theorem~\ref{thm:main}, it is only for the purpose of analysis.  We will remove this assumption in the proof of Theorem~\ref{thm:main} by showing that it holds with high probability.

\begin{remark}\label{rem:mseq}
For two megastates $\megastate$ and $\megastate'$ that belong to the same party, we say that $\megastate \mseq \megastate'$ if the two are equal in every variable \emph{except} possibly $\pseed, \phash, \pt$.  In this case, we say that ``$\megastate$ is equal to $\megastate'$''. 
The reason for this is that once a new megastate $\mathtt{p}$ is ``initialized'' in Line~\ref{line:maintain-comp} and stored in memory, the variables $\piter, \pv, \pdepth$ will not change.  However, during the big hash computation phase at the end of the block, the variables $\pseed, \phash, \pt$ may be retro-actively updated.  Thus, we say that that $\megastate$ and $\megastate'$ are ``equal'' even if these latter variables have changed.
\end{remark}

We will single out some iterations as being ``dangerous.'' We have the following definition:
\begin{definition}[Dangerous iteration]\label{def:danger}
    We say that an iteration $I \in \{1, \ldots, \Itotal\}$ is a \defn{dangerous iteration} if at the beginning of iteration $I$ (after $k \gets k+1$ in Line~\ref{line:iterstart}), it is the case that $\ell^- > 0$ or $k_A > 1$ or $k_B > 1$.
\end{definition}

\begin{lemma}\label{lem:notdangerous}
Suppose that iteration $I \in [\Itotal]$ is not dangerous, and suppose that the adversary does not introduce any corruptions in iteration $I$ and there are no corrupted randomness.  If Alice and Bob both simulate $\Pi$ in iteration $I$ (that is, Line~\ref{line:simulate} executes for both), then their simulation in iteration $I$ is correct, and  $\ell^-$ will still be equal to zero after the Computation Phase completes.
Notice that this holds whether or not a small hash collision occurs. 
\end{lemma}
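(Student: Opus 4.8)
The plan is to use the hypothesis that iteration $I$ is non-dangerous to pin Alice and Bob into a fully synchronized state at the start of the iteration, and then to observe that the Computation Phase is a deterministic routine which, run from synchronized and corruption-free inputs, both preserves synchronization and extends a correct simulation of $\Pi$ by exactly one iteration.

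First I would unpack Definition~\ref{def:danger}: since iteration $I$ is not dangerous, immediately after the increment on Line~\ref{line:iterstart} we have $\ell^- = 0$ and $k_A, k_B \le 1$, and since $k$ was just incremented from a nonnegative value this forces $k_A = k_B = 1$ and $j_A = j_B = 0$. The condition $\ell^- = 0$ means there is no divergent point, so $\simPath_A = \simPath_B$ and the current mega-states agree, i.e. $\pv_A = \pv_B$, $\pdepth_A = \pdepth_B$, and $\piter_A = \piter_B$ (with the hash-chain fields agreeing only up to the convention of Remark~\ref{rem:mseq}). Moreover, as observed in Section~\ref{sec:robust} when the correct simulated path is defined, $\simPath_A = \simPath_B$ implies that this common path is a \emph{correct} simulation of the first $r \cdot \ell_A$ rounds of $\Pi$, so $\pv_A = \pv_B$ is the correct state of $\Pi$ at that depth.

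Next I would run the two parties through the Computation Phase in parallel. Since Line~\ref{line:simulate} executes for both (by hypothesis), both call \textsc{Simulate}$(\Pi, \megastate, r)$ from the common state $\pv$. By a short induction on the round within the iteration, the parties agree on the current state of $\Pi$ at each of the $r$ rounds; hence at every round exactly one of them owns the pebble and transmits the bit prescribed by $\tau$, the other listens, and---because there is no corruption in iteration $I$---the listener receives the transmitted bit correctly. So both parties come out with the same transcript $\sigma$ and the same new state $v$, and $\sigma$ is precisely what $\Pi$ produces from $\pv$; together with the previous paragraph, the iteration-$I$ simulation is correct. The remaining Computation-Phase updates---appending $\sigma \circ \Icount$ to $T$, setting $\piter \gets \Icount$ and $\pv \gets v$, incrementing $\pdepth$, calling \textsc{MaintainAvMPs}, and resetting $k$, $E$, and the vote counters $\vi{1}, \vi{2}, \vi{3}$---are deterministic functions of data the two parties share, and they leave $\phash, \pseed$ untouched, so afterwards $\simPath_A = \simPath_B$ and the current mega-states still agree up to Remark~\ref{rem:mseq}; hence there is no divergent point and $\ell^- = 0$, as claimed. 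For the parenthetical sentence, I would note that a small hash collision can only affect the Verification Phase (the value of $E$, the vote counters, and whether the guard on Line~\ref{line:checksmallhash} is met), none of which enters the argument once we are handed that both parties reach Line~\ref{line:simulate} and given that the Computation Phase resets $E$ and the $\vi{i}$.

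I expect the only delicate point to be the first step: reading off, from the somewhat involved definitions of $\ell^-$, divergent point, and ``correct simulated path,'' that $\ell^- = 0$ really gives exactly what is needed---agreement of $\simPath$, $\pv$, $\pdepth$, $\piter$ together with correctness of the shared path---while being careful that the hash-chain fields $\phash, \pseed, \pt$, which may legitimately differ under the $\mseq$ convention, are never required for either conclusion. After that, the argument is routine deterministic bookkeeping through a single pass of the Computation Phase, and the speak-or-listen subtlety is handled by the induction noted above.
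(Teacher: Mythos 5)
Your proposal is essentially the paper's own argument, carried out with a bit more explicit bookkeeping: pin down the fully synchronized starting state from $\ell^- = 0$, observe that the Computation Phase is a deterministic routine whose inputs agree, and conclude that the outputs agree. One small point worth flagging: you assert that $\ell^- = 0$ gives agreement of the current mega-states ``only up to the convention of Remark~\ref{rem:mseq}'' (i.e.\ possibly differing in $\phash,\pseed,\pt$), and you also close your argument by inferring ``no divergent point'' from agreement up to $\mseq$. Taken at face value these two statements do not fit together, since the divergent point is defined via \emph{exact} equality $\ms{p'}_A = \ms{p'}_B$, which per Definition~\ref{def:megastate} includes the hash-chain fields. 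In fact the hedge is unnecessary: $\ell^- = 0$ already forces $\ms{\Aiter}_A = \ms{\Biter}_B$ on all fields, and since the Computation Phase leaves $\phash,\pseed$ untouched and extends $\pt$ by identical data (given no corruption and no corrupted randomness), exact agreement is preserved, which is what the conclusion needs. With that slip removed, your proof matches the paper's.
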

\begin{proof}
Since $I$ is not a dangerous iteration, by definition $\ell^- = 0$ at the beginning of the iteration.  Thus, Alice and Bob's simulated paths $\simPath_A$ and $\simPath_B$ agree.  Therefore, if the adversary does not introduce any corruptions during their simulation of the next $r$ rounds, their paths will continue to agree. Let  $\megastate_A $ be the mega-state of Alice at the end of this iteration and define $\megastate_B$ similarly for Bob. Given that there is no corrupted randomness during this iteration, at the end of this iteration we have that, for every $\var \in \{ \pv ,\pdepth , \pt, \piter, \phash , \pseed \}$ , $\var_A = \var_B$ and we will continue to have $\ell^- = 0$.

\end{proof}
\begin{definition} \label{def:dangerous-small-hash-coll}
    We say a small hash collision is \defn{dangerous} if the hash collision happens during a dangerous iteration. 
\end{definition}

\begin{definition}[Corrupted Randomness]\label{def:corrupted_rand}
\ 
\begin{itemize}
\item Let $B \in [\Btotal]$.  We say that \defn{block $B$ has corrupted randomness} if, at the end of block $B$ in Algorithm~\ref{alg:adaptive}, either $(\Rblock^s)_A \neq (\Rblock^s)_B$ or $(\Rblock^b)_A \neq (\Rblock^b)_B$. 
\item Let $I \in [\Itotal].$  We say that \defn{iteration $I$ has corrupted randomness} if, at the end of iteration $I$, either
 $(\Rblock^s)_A \neq (\Rblock^s)_B$ or $(\Riter)_A \neq (\Riter)_B$.
\end{itemize}
\end{definition}

The next notion we need is a ``shadow variable'' in Algorithm~\ref{alg:adaptive}, called the \defn{Bad Vote Count} ($\BVC$).

\begin{definition}[Bad Vote Count ($\BVC$)]\label{def:BVC}
Fix a party (say, Alice) executing Algorithm~\ref{alg:adaptive}.  We define a ``shadow variable'' $\BVC$ as follows. 
\begin{itemize} 
    \item $\BVC$ is initialized to $0$ at the beginning of Algorithm~\ref{alg:adaptive}.
    \item Whenever Alice \emph{resets votes} (Lines \ref{line:reset1}, \ref{line:reset2}, and \ref{line:reset-vi}), $\BVC$ also resets to $0$.
    \item In Line~\ref{line:updatevi}, $\BVC$ increments by one if any of the following occur:
    \begin{itemize} 
        \item $\vi{i}$ is incremented but 
        \[    \ttq_i \notin \{ \ttq_{1,B} , \ttq_{2,B} , \ttq_{3,B} \}\]
        where $\ttq_i = \ms{\MPi}$ is as defined on Line~\ref{line:qi}, and $\ttq_{j,B}$ is Bob's value of $\ttq_j$ during this iteration.
         \item $\vi{i}$ is not incremented but 
    \[    \ttq_i \in \{ \ttq_{1,B} , \ttq_{2,B} , \ttq_{3,B} \}\]

    \end{itemize}

\end{itemize}
\end{definition}
As usual, we use $\BVC_A$ to refer to Alice's ``copy'' of $\BVC$, and $\BVC_B$ to refer to Bob's ``copy.''  (Here, ``copy'' is in quotes, because of course Alice cannot compute $\BVC_A$; it is for analysis only.) 

\begin{lemma} Let $I \in [\Itotal]$. If the $\BVC$ (for either Alice or Bob) increments during iteration $I$, then one of the following occured: 

\begin{itemize}
    \item A small hash collision occured in iteration $I$.
    \item A corruption was introduced during iteration $I$.
    \item Iteration $I$ had corrupted randomness. 
\end{itemize}
\label{lem::BVC-coll-corr}
\end{lemma}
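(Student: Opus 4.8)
The plan is a proof by contradiction. Assume the $\BVC$ of one party --- say Alice, the other case being entirely symmetric --- increments during iteration $I$, and assume for contradiction that \emph{none} of the three listed events occurs: no small hash collision occurs in iteration $I$, no corruption is introduced in iteration $I$, and iteration $I$ has no corrupted randomness. I will show that under these hypotheses every hash comparison Alice performs in the Verification Phase of iteration $I$ returns the ``correct'' answer, which forces her vote-update decision to agree with the ground truth about whether $\ttq_i$ appears among Bob's candidates, contradicting Definition~\ref{def:BVC}.

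\textbf{Step 1 (setup).} By Definition~\ref{def:BVC} a $\BVC$ increment occurs at Line~\ref{line:updatevi}, which is only reached in the branch where $H_k = H_k'$ for Alice. Since the block/iteration/phase structure of $\Pi'$ is fixed in advance and independent of the transcript, both parties execute the Verification Phase of iteration $I$ simultaneously, each sending its hashes and listening. Because iteration $I$ has no corrupted randomness, the seeds $\Riter$ and $\Rblock^s[\Icnt]$ are identical for Alice and Bob throughout iteration $I$ (they are fixed at the start of the iteration, resp.\ the block). Because no corruption is introduced in iteration $I$, every hash value Alice receives is exactly the value Bob transmitted, and vice versa.

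\textbf{Step 2 (hash comparisons are accurate).} Combining the two facts of Step 1: for any value $x$ held by Alice and $y$ held by Bob that are hashed (with the common seed) and exchanged in the Verification Phase of iteration $I$, Alice's test $h^s(x,\cdot)=h^s(y,\cdot)$ returns true iff $x=y$ or a small hash collision occurs in the sense of Definition~\ref{def:hashcollisions}; since we assumed no small hash collision, it returns true iff $x=y$. Applying this to the variable $k$ and using $H_k=H_k'$ gives $k_A=k_B$, hence $j_A=j_B$, so the two parties use the same scale. Then, for each $i$, every individual hash comparison appearing in the condition of Line~\ref{line:ugly} (the hashes of $\pv$, of the tuple $(\phash,\pseed,\pt,\piter)$, and of $\pdepth$ for the candidates) evaluates exactly as the corresponding equality of the underlying quantities.

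\textbf{Step 3 (vote decision matches ground truth).} It follows that $\vi{i}$ is incremented on Line~\ref{line:updatevi} if and only if the ``ground-truth'' version of the Line~\ref{line:ugly} condition holds. I then check that this ground-truth version is logically equivalent to the membership statement $\ttq_i \in \{\ttq_{1,B},\ttq_{2,B},\ttq_{3,B}\}$ used in Definition~\ref{def:BVC}: comparing the hashes of $\pv$, of the big-hash tuple, and of $\pdepth$ together pins down equality of the two mega-states (up to $\mseq$; cf.\ Remark~\ref{rem:mseq}), while the depth clause enforces that the $i$-th candidates of the two parties sit at the same depth. Hence $\vi{i}$ is incremented iff $\ttq_{i,A}\in\{\ttq_{1,B},\ttq_{2,B},\ttq_{3,B}\}$, which directly contradicts the $\BVC$-increment condition (which requires that exactly one of ``$\vi{i}$ incremented'' and ``$\ttq_{i,A}$ is among Bob's candidates'' holds). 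This is the desired contradiction, so at least one of the three events must have occurred.

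\textbf{Main obstacle.} The delicate part is Step 3: reconciling the exact Boolean formula on Line~\ref{line:ugly} --- which is asymmetric, using an existentially quantified index $j$ for the $\pv$ and big-hash comparisons but a fixed index $i$ for the depth comparison --- with the clean membership statement in Definition~\ref{def:BVC}. One must pin down precisely what ``$\ttq_i \in \{\ttq_{1,B},\ttq_{2,B},\ttq_{3,B}\}$'' means (literal equality versus equality up to $\mseq$, per Remark~\ref{rem:mseq}) and verify that the depth-alignment clause is exactly what makes the two formulations coincide, crucially using $k_A=k_B$ (hence $j_A=j_B$) from Step 2. The remaining steps are routine once the no-corruption and no-corrupted-randomness hypotheses are in hand.
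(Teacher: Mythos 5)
Your proposal is the contrapositive of the paper's two-case analysis and rests on the same key observation: with matching randomness, no transmission error, and no hash collision, every small-hash comparison in the Verification Phase reports equality of the underlying values exactly, so the vote-update decision must agree with the ground-truth membership $\ttq_i \in \{\ttq_{1,B},\ttq_{2,B},\ttq_{3,B}\}$. The indexing asymmetry you flag on Line~\ref{line:ugly} is a genuine oddity in the pseudocode as written, but the paper's own proof tacitly reads the depth comparison as indexed by the existential $j$ (it writes $H^\prime_{\ttq_j.\texttt{depth}}$ in its second case), which is clearly the intended reading and under which your claimed equivalence in Step~3 holds exactly.
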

\begin{proof}
    Fix a party, say, Alice.  Suppose that the variable $\BVC$ increments for Alice.
    By definition, this means that for some $i$, either $\vi{i}$ increments when        $   \ttq_i \notin \{ \ttq_{1,B} , \ttq_{2,B} , \ttq_{3,B} \}$, or it doesn't increment when $\ttq_i \in \{ \ttq_{1,B} , \ttq_{2,B} , \ttq_{3,B} \}$.

   In Algorithm~\ref{alg:adaptive}, $\vi{i}$ only increments if there is some index $j \in \{1,2,3\}$ such that $H_{\texttt{q}_i.\texttt{v}}  = H^\prime_{\texttt{q}_j.\texttt{v}}$, $H_{\texttt{q}_i.\texttt{depth}}  = H^\prime_{\texttt{q}_j.\texttt{depth}}$, and  $H_{\texttt{q}_i, b} = H^\prime_{\texttt{q}_j, b} $.

   If iteration $I$ has corrupted randomness, we are done, so suppose that it does not.
     Suppose that $\BVC$ increments, and consider the following two scenarios.
        \begin{itemize}
            \item \sloppy Suppose that $\vi{i}$ increments, so such a $j$ exists, but  $\ttq_i \neq \ttq_{j,B}$. 
              Since $\ttq_i \neq \ttq_{j,B}$, then Alice's $\ttq_i$ and Bob's  $\ttq_{j,B}$ differ on at least one of their three hashed attributes: $\texttt{v}$, 
            $(\texttt{prev-seed}, \texttt{prev-hash}, { \texttt{T} }, \texttt{iter}$), or $\texttt{depth}$. If they differ on $\texttt{v}$, then since $H_{\ttq_i.\texttt{v}} = H'_{\ttq_j.\texttt{v}}$, either we must have  $H_{\ttq_i.\texttt{v}} = (H_{\ttq_j.\texttt{v}})_B$, in which case there was a small hash collision (as we are assuming uncorrupted randomness in this iteration); or else $(H_{\ttq_j.\texttt{v}})_B \neq H_{\ttq_j.\texttt{v}}'$, in which case what Alice received was different than what Bob sent, so the adversary must have introduced corruptions. The same argument holds for the other two attributes.

            \item On the other hand, suppose that $\vi{i}$ does not increment, so $\ttq_i = \ttq_{j,B}$ for some $j$, but either $H_{\ttq_i.\texttt{v}} \neq H'_{\ttq_j.\texttt{v}}$, $H_{\texttt{q}_i.\texttt{depth}}  \neq H^\prime_{\texttt{q}_j.\texttt{depth}}$, or $H_{\ttq_i,b} \neq H'_{\ttq_i,b}$.  As we are assuming that iteration $I$ has uncorrupted randomness, then all hashes of attributes of $\ttq_i$ and $\ttq_{i,B}$ agree, so this implies that the adversary must have introduced a corruption.

        \end{itemize}
 This completes the proof.

\end{proof}
\paragraph{The potential function.}
Now we can define the potential function $\Phi$ that we will use to track Alice and Bob's progress through the protocol.

Recall that for a variable $x$ appearing in Algorithm~\ref{alg:adaptive}, $x_A$ refers to Alice's copy, $x_B$ refers to Bob's copy, and $x_{AB} = x_A + x_B$.  Recall that the definitions of $\ell^+, \ell^-,$ and $L^-$ are given in Section~\ref{sec:prelim}.
Define:
\begin{equation}\label{eq:potential}
\Phi = \begin{cases} \ell^+ - C_3 \ell^- - C_2 L^- + C_1 k_{AB} - C_5 E_{AB} - 2C_6 \BVC_{AB} & \text{if } k_A = k_B \\
\ell^+ - C_3 \ell^- - C_2 L^- - 0.9 C_4 k_{AB} + C_4 E_{AB} - C_6 \BVC_{AB} & \text{if } k_A \neq k_B
\end{cases}
\end{equation}
where $C_1, \ldots, C_6$ are constants that will be chosen later. We note that this is very similar to the potential function that was studied in the paper \cite{H14}, and is the same as that in manuscript \cite{HR18}.

 As per our conventions in Section~\ref{sec:notationABtime},
for a time $t$, we use $\Phi(t)$ denote the value of $\Phi$ at time $t$. For an iteration $I \in [\Itotal]$, we use $\Phi(I)$ to denote the value of $\Phi$ at the \emph{end} of iteration $I$ (in Line~\ref{line:iterend} of Algorithm~\ref{alg:adaptive}).

\subsubsection{Useful lemmas} \label{sec:usefullemmas}

In this section, we prove a few lemmas that will be useful in our analysis of \protosneakyf{s} and the potential function.

Our first lemma shows that our implementation of the ``hash chaining'' strategy is effective.  In particular,  Lemma~\ref{lem:bigh-hash-corruption} below shows that, assuming no big hash collisions ever occur, the values of $\phash$, $\pseed$, $\pt$, and $\piter$ in Algorithm~\ref{alg:adaptive} will pick up on any discrepancies between Alice and Bob's simulated paths. Before we state Lemma~\ref{lem:bigh-hash-corruption}, we state and prove two claims that will help us in the proof of Lemma~\ref{lem:bigh-hash-corruption}.

\begin{claim}\label{claim:neq-in-and-out-of-block}
 Suppose there are no big hash collisions throughout the entire protocol $\Pi'$. Fix any block $B$, let $t_1$ be the time just before the big hash computation phase starts in line~\ref{line:big-hash-comp}, and let $t_2$ be the time at the end of the block $B$. Consider any two megastates $\megastate^*_A$ and $\megastate^*_B$ from Alice and Bob's memory, respectively. Suppose that 
 \begin{align*}
   & (\pahash , \paseed , \pat , \paiter ) (t_1) \\&\qquad\qquad\qquad\neq (\pbhash, \pbseed, \pbt, \pbiter) (t_1) 
 \end{align*}
Then 
  \begin{align*}
   & (\pahash , \paseed , \pat , \paiter ) (t_2) \\&\qquad\qquad\qquad\neq (\pbhash, \pbseed, \pbt, \pbiter) (t_2).
 \end{align*}
\end{claim}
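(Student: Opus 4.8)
The plan is to trace through what happens to the tuple $(\phash,\pseed,\pt,\piter)$ for both megastates during the Big Hash Computation phase at the end of block $B$, and argue by cases on \emph{why} the two tuples differ at time $t_1$. After Line~\ref{line:big-hash-comp} runs, each megastate that was simulated during block $B$ has its tuple overwritten as $\phash \gets h^b((\phash,\pseed,\pt,\piter),\Rblock^b)$, $\pseed \gets \Rblock^b$, $\pt \gets \bot$, while $\piter$ is unchanged; and for megastates \emph{not} simulated in block $B$ the tuple is untouched. So I would split into the cases based on which of $\megastate^*_A$, $\megastate^*_B$ get updated in this phase.

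First I would dispose of the easy cases. If neither megastate is updated, the tuples at $t_2$ equal those at $t_1$, and we are done. If exactly one is updated — say $\megastate^*_A$ is updated but $\megastate^*_B$ is not — then at $t_2$ we have $\pat(t_2)=\bot$ while $\pbt(t_2)=\pbt(t_1)$; but wait, this could coincide if $\pbt(t_1)=\bot$ as well, so I cannot conclude from $\pt$ alone. Instead I would use $\piter$: $\paiter$ is unchanged by the update, and $\pbiter$ is unchanged because $\megastate^*_B$ is not touched; but actually these could be equal too. The cleanest route: a megastate is updated in block $B$'s big hash phase iff it was simulated during block $B$, which by Remark~\ref{rem:mseq} and the bookkeeping of $\Bsim$ is recorded in whether $\piter$ lies in block $B$'s iteration range (equivalently, whether $\pt \neq \bot$ at time $t_1$). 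So ``updated vs.\ not updated'' is itself determined by the tuple value at $t_1$; if the two megastates are in different categories, then $\pt(t_1)$ is $\bot$ for one and non-$\bot$ for the other (or the $\piter$ ranges differ), and since the update sets $\pt\gets\bot$ only for the updated one, a short sub-argument on $\piter$ (which is never modified) separates them at $t_2$.

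The main case — and the one I expect to be the real obstacle — is when \emph{both} megastates are updated. Then at $t_2$, both have $\pt=\bot$ and $\pseed=\Rblock^b$ (the \emph{same} $\Rblock^b$, since this is within a single block $B$, so no Assumption~\ref{asm:rand-big-neq} issue arises here). Hence the tuples can only differ in $\phash$ or $\piter$. If $\paiter(t_1)\neq\pbiter(t_1)$ we are immediately done since $\piter$ is not changed. Otherwise $\paiter(t_1)=\pbiter(t_1)$, so the difference at $t_1$ must be in $(\phash,\pseed,\pt)$, i.e.\ the $h^b$-inputs $(\pahash,\paseed,\pat,\paiter)(t_1)$ and $(\pbhash,\pbseed,\pbt,\pbiter)(t_1)$ are distinct strings. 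Now invoke the hypothesis that \emph{no big hash collision ever occurs}: since both parties feed the same seed $\Rblock^b$ into $h^b$ on distinct inputs, by Definition~\ref{def:hashcollisions} the absence of a big hash collision forces the outputs to differ, i.e.\ $\pahash(t_2)\neq\pbhash(t_2)$. Therefore the full tuples differ at $t_2$, as claimed. I would write this final case carefully, making explicit that the ``no big hash collision'' hypothesis is exactly what licenses passing from ``inputs differ'' to ``outputs differ,'' and noting that $\Rblock^b$ is common to both computations because we are inside one block.
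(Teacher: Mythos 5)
Your case split (neither updated / exactly one updated / both updated) is a reorganization of the paper's decomposition but reaches the same places, and your handling of the first two cases is correct once you land on the $\piter$ argument: if exactly one megastate was simulated in block $B$, the two $\piter$ values lie in different blocks' iteration ranges, and since $\piter$ is never modified by the Big Hash Computation phase, the tuples still differ at $t_2$.

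There is, however, a genuine gap in your ``both updated'' case. You assert that at $t_2$ both megastates have $\pseed = \Rblock^b$ with ``the \emph{same} $\Rblock^b$, since this is within a single block $B$,'' and you dismiss any seed discrepancy by noting Assumption~\ref{asm:rand-big-neq} is not at issue. But Assumption~\ref{asm:rand-big-neq} concerns randomness from \emph{different} blocks; the relevant threat within a single block is that the adversary corrupts the randomness exchange in Lines~\ref{line:alice-send-bigrand}--\ref{line:bob-send-bigrand}, so that $(\Rblock^b)_A \neq (\Rblock^b)_B$. In that event your claim that ``the tuples can only differ in $\phash$ or $\piter$'' is false, and, more importantly, your invocation of ``no big hash collision'' is vacuous: Definition~\ref{def:hashcollisions} only constrains outputs when the two parties use the \emph{same} seed, so differing inputs with differing seeds tell you nothing about the outputs. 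The fix is short and is exactly what the paper does: split further on whether $\Rblock^b$ is corrupted. If it is, then $\paseed(t_2) = (\Rblock^b)_A \neq (\Rblock^b)_B = \pbseed(t_2)$ and you are done outright; if it is not, the seeds agree, the $h^b$-inputs are distinct, and the no-collision hypothesis forces $\pahash(t_2) \neq \pbhash(t_2)$ as you argued. Without this sub-split the proof is incomplete.
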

\begin{proof}
First notice that big hash computation during the end  of the block does not modify variable $\piter$ in either mega-state. As a result, if $\paiter(t_1) \neq \pbiter(t_1) $ then it immediately follows that $\paiter(t_2) \neq \pbiter(t_2)$, and we are done.

Now assume that $\paiter (t_1) = \pbiter(t_1) = I$. Then, we have two scenarios. Either $I\in B$ or $I\notin B$; we consider each case below.

If $I\notin B$, then none of the variables within the mega-states gets updated during the big hash computation. Hence $\megastate^*_A (t_1) = \megastate^*_A(t_2)$. Similarly for Bob, $\megastate^*_B (t_1) = \megastate^*_B(t_2)$ and the claim follows. 

If $I\in B$, as the tuples do not match at time $t_1$ then there must be a variable $$\var\in \{\operatorname{\mathtt{prev-hash}, \mathtt{prev-seed}, \mathtt{T} }\}$$ such that $\megastate^*_A.\var(t_1) \neq\megastate^*_B.\var  (t_1)$. If the randomness $\Rblock$ is corrupted then at time $t_2$, then by definition $\paseed(t_2) \neq \pbseed(t_2)$, and we are done. On the other hand, if $\Rblock$ is not corrupted then $\var$ is an input to the big hash function, and assuming there are no hash collisions we can conclude that $\pahash(t_2) \neq \pbhash(t_2)$ which completes the proof.
\end{proof}

\begin{claim}\label{claim:piter-neq} 
   Assume that there are no big hash collisions throughout the protocol.  For any time $t$, and any two mega-states $\megastate^*_A$ and $\megastate^*_B$ from Alice and Bob's memory at time $t$, if $\paiter(t) \neq \pbiter(t)$ then there exists a variable $\var\in \{\pseed, \phash, \pt\}$ such that $\megastate^*_A.\var(t) \neq \megastate^*_B.\var(t)$. 
\end{claim}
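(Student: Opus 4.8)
The plan is to establish the contrapositive: supposing that $\megastate^*_A.\var(t) = \megastate^*_B.\var(t)$ holds for every $\var \in \{\pseed, \phash, \pt\}$, I will show that $\paiter(t) = \pbiter(t)$. In particular the hypothesis gives $\pat(t) = \pbt(t)$, and I would split the argument according to whether this common value is the placeholder $\bot$ or a genuine (nonempty) partial transcript, since $\pt = \bot$ marks exactly those mega-states whose hash data have already been folded into the chained big hash at the end of a block.

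\emph{The case $\pat(t) = \pbt(t) \neq \bot$.} Here the iteration number can be read off from $\pt$ itself. Whenever a mega-state is created in the Computation Phase (around Line~\ref{line:maintain-comp}), the algorithm simultaneously sets $\pt \gets T \circ (\sigma \circ \Icount)$ and $\piter \gets \Icount$, and thereafter neither field changes until $\pt$ is overwritten with $\bot$ during the Big Hash Computation at the end of the block. Consequently, as long as $\pt \neq \bot$, the last entry of $\pt$ has the form $\sigma \circ \piter$, so $\piter$ is a function of $\pt$. Thus $\pat(t) = \pbt(t)$ immediately yields $\paiter(t) = \pbiter(t)$.

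\emph{The case $\pat(t) = \pbt(t) = \bot$.} By inspection of Algorithm~\ref{alg:adaptive}, a mega-state in memory has $\pt = \bot$ only if it is the initial mega-state --- in which case $\phash = \pseed = \mathtt{None}$ and $\piter = 0$ --- or if its tuple $(\phash, \pseed, \pt, \piter)$ was finalized during the Big Hash Computation (Line~\ref{line:big-hash-comp}) at the end of some block, in which case $\phash, \pseed \neq \mathtt{None}$ and that tuple is never modified again (a mega-state with $\pt = \bot$ is not ``simulated during'' any subsequent block, so it is skipped by Line~\ref{line:big-hash-comp}). If $\pahash(t) = \mathtt{None}$ then by hypothesis $\pbhash(t) = \mathtt{None}$ as well, both mega-states are the initial one, and $\paiter(t) = 0 = \pbiter(t)$. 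Otherwise $\pahash(t), \pbhash(t) \neq \mathtt{None}$ and $\paseed(t) = \pbseed(t) \neq \mathtt{None}$; write $\paseed(t) = (\Rblock^b)_A(B'_A)$ and $\pbseed(t) = (\Rblock^b)_B(B'_B)$, where $B'_A$ (resp. $B'_B$) is the block whose Big Hash Computation produced the current fields of $\megastate^*_A$ (resp. $\megastate^*_B$). Equality of the seeds, combined with Assumption~\ref{asm:rand-big-neq}, forces $B'_A = B'_B =: B'$, so both hash values were computed in Line~\ref{line:big-hash-comp} of the \emph{same} block $B'$ under the \emph{same} seed. Since the outputs coincide and (by hypothesis) no big hash collision occurs, the two inputs to $h^b$ must be identical; in particular their $\piter$-components agree, and since $\piter$ is frozen from that block onward, $\paiter(t) = \pbiter(t)$.

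I expect the second case to be the crux. It rests on three ingredients that each need a careful, if short, justification: that once $\pt$ becomes $\bot$ the remaining fields of a stored mega-state are permanently frozen; that the value currently stored in $\pseed$ pins down a \emph{unique} block $B'$ responsible for the current $\phash$ (this is exactly where Assumption~\ref{asm:rand-big-neq} enters); and that the no-big-hash-collision hypothesis then lets us ``invert'' $h^b$ to recover equality of the hashed tuples. The first case, by contrast, is essentially the syntactic observation that $\pt$ records its own terminal iteration number.
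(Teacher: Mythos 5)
Your proof is correct and takes essentially the same approach as the paper's, just phrased as the contrapositive: the same two-case split on whether $\pt = \bot$, the same observation that the last entry of a non-$\bot$ $\pt$ records $\piter$, and the same use of Assumption~\ref{asm:rand-big-neq} plus the no-big-hash-collision hypothesis in the $\bot$ case. The contrapositive phrasing has the small advantage of absorbing the corrupted-$\Rblock^b$ subcase automatically (if $\Rblock^b$ were corrupted within the shared block, $\paseed \neq \pbseed$, which your hypothesis already excludes), whereas the paper treats it explicitly.
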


\begin{proof}
Throughout the proof, all variables are referenced at time $t$.
 Without loss of generality, assume that $\paiter > \pbiter$.

 Then we have two cases: 

 \paragraph{Case 1: At time $t$, at least one of $\pat$ or $\pbt$ is not equal to $\bot$.} Then we claim that $\pat \neq \pbt$.  Notice that at time $t$, $\pat \neq \bot$.  Indeed, given our assumption that $\paiter > \pbiter$, if $\pat = \bot$ then $\pbt = \bot$, and we are assuming that they are not both $\bot$.   Moreover, $\pat$ includes $\paiter$ in the last piece of the partial simulated path that stored in $\pat$.  On the other hand, $\pbt$ is either $\bot$ or includes $\pbiter$ in the last piece of simulated path stored in $\pbt$.  So we can conclude that $\pat \neq \pbt$, establishing the claim.
 \paragraph{Case 2: At time $t$, we have that $\pat = \pbt = \bot$.}  Then $t$ must be after Alice computes the big hash values at the end of the block corresponding to $\paiter$.
 We claim that $(\pahash, \paseed ) \neq (\pbhash , \pbseed ).$ 
 Indeed, when Alice computes the big hash value for $\megastate^*_A$, her input is $\paiter$, while Bob's input is $\pbiter$ when he computes the big hash value for $\megastate^*_B$. Hence, Alice and Bob have different inputs to the big hash function. If $\paiter$ and $\pbiter$ are in different blocks then,  by Assumption~\ref{asm:rand-big-neq}, $\paseed \neq \pbseed.$

 On the other hand,  if iteration $\pbiter$ is within the same block as iteration $\paiter$, then either $\Rblock^b$ is corrupted in line~\ref{line:alice-send-bigrand} of Algorithm~\ref{alg:adaptive}, or it is not.  If it is corrupted, then $$\paseed \neq \pbseed,$$ establishing the claim.  On the other hand, if $\Rblock^b$ is not corrupted then, as we are assuming there are no big hash collisions, we can conclude that $\pahash \neq \pbhash$ since Alice and Bob have different inputs for the big hash function. This completes the proof.
\end{proof}

Now we can state and prove Lemma~\ref{lem:bigh-hash-corruption}, which informally says that if there is a corruption---that is, if Alice and Bob's simulated paths do not match---then the big hash will eventually catch it.

\begin{lemma}[The big hash catches all corruptions]\label{lem:bigh-hash-corruption}
Suppose there are no big hash collisions throughout the entire protocol $\Pi'$.  
Fix any time $t$, and let $q_A,q_B$ be points so that $1 \leq q_A \leq \Aiter(t)$ and $1 \leq q_B \leq \Biter(t)$.
Define
\[ \megastate_A^* = \ms{q_A}_A(t) \qquad \megastate_B^* = \ms{q_B}_B(t).\]
That is, $\megastate_A^*$ is the Alice's megastate corresponding to depth $q_A$ at time $t$.\footnote{Notice that Alice may not have $\megastate_A^*$ in memory at time $t$; $\megastate_A^*$ is just the megastate at depth $q_A$ that Alice most recently computed before time $t$.}

Suppose that $\simPath_A^{(\leq q_A)} \neq \simPath_B^{(\leq q_B)}.$
Then 
\[ (\pahash, \paseed,\pat, \paiter) \neq (\pbhash, \pbseed, \pbt ,\pbiter). \]
\end{lemma}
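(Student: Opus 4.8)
The plan is to prove the statement by strong induction on the time $t$ — say, measured by the number of iterations that have elapsed, with ties broken by the event order within an iteration. The point is that, by design of the hash chaining, the tuple $(\phash,\pseed,\pt,\piter)$ of a megastate at depth $q$ is a hash of the tuple of the megastate that was ``current at the last block boundary'' on the path that reached $q$, concatenated with the within-block transcript $\pt$ that extends that megastate to depth $q$. So a disagreement between $\simPath_A^{(\le q_A)}$ and $\simPath_B^{(\le q_B)}$ is either a disagreement \emph{inside the current block}, which is directly visible in $\pt$, or a disagreement \emph{before the current block}, which gets inherited — via the induction hypothesis applied to the anchor megastates at the block boundary — into the $(\phash,\pseed)$ coordinates; the no-collision hypothesis guarantees this inheritance is not washed out.

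Concretely: if $\megastate_A^*.\piter \ne \megastate_B^*.\piter$ the tuples already differ, so assume both equal some $I$; since $q_A,q_B\ge 1$ we have $I\ge 1$, and $\megastate_A^*$ (resp.\ $\megastate_B^*$) was produced by the Computation Phase of iteration $I$ simulating Alice's (resp.\ Bob's) path from depth $q_A-1$ to $q_A$ (resp.\ $q_B-1$ to $q_B$); write $\sigma_A\circ I=\simPath_A[q_A]$ and $\sigma_B\circ I=\simPath_B[q_B]$. Let $B_I$ be the block containing iteration $I$. First I would treat the case that time $t$ lies in $B_I$, before its Big Hash Computation. Here one uses the invariant (preserved by jumps, which only truncate $\Bsim$ to a prefix and truncate $\simPath$ consistently) that $\megastate_A^*.\pt$ equals exactly the sub-list of entries of $\simPath_A^{(\le q_A)}$ whose iteration numbers lie in $B_I$, ending with $\sigma_A\circ I$ — likewise for Bob — while $\megastate_A^*.(\phash,\pseed)$ is inherited from the ``anchor'' $\nu_A$, the megastate of the deepest entry of $\simPath_A^{(\le q_A)}$ with iteration number outside $B_I$ (or the initial megastate), which had $\pt=\bot$ at the start of $B_I$. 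If the earliest disagreement of the two paths lies in the $B_I$-portion of both (or the $B_I$-portions have different lengths), the $\pt$'s differ; otherwise the $B_I$-portions coincide, so the anchors' paths disagree, and — since at the start of $B_I$ the anchor is the most recently computed megastate at its depth, so the induction hypothesis applies — the anchors' tuples differ there; as both anchors have $\pt=\bot$, Claim~\ref{claim:piter-neq} forces this disagreement into $(\phash,\pseed)$, which is then inherited verbatim by $\megastate_A^*,\megastate_B^*$. Either way the tuples differ.

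For the remaining case, $t$ after the end of $B_I$, I would apply the case just handled at the time $t_1$ just before the Big Hash Computation of $B_I$, then invoke Claim~\ref{claim:neq-in-and-out-of-block} to carry the disagreement to the end of $B_I$, and finally note that a megastate with $\piter=I\in B_I$ is never touched by a Big Hash Computation of a later block (its $\piter$ would change), so its stored tuple is frozen and still differs at time $t$. The base case is $t$ in the first block: no Big Hash Computation has run, every megastate has $\phash=\pseed=\mathtt{None}$, and $(\pt,\piter)$ records the whole path from \texttt{start}, so the claim reduces to the routine fact that $\pt$ (as a sub-list of $\simPath$) together with $\piter$ determines $\simPath^{(\le q)}$.

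I expect the main obstacle to be the bookkeeping rather than any single clever step: in the presence of jumps one must pin down precisely what $\pt$ equals as a function of $\simPath^{(\le q)}$ and which earlier megastate's $(\phash,\pseed)$ is inherited, and one must check the induction measure strictly decreases when passing to the anchor megastates. A secondary subtlety is the case $q_A\ne q_B$ (formally included, since then the paths differ by length alone): it is handled by the same induction, the recursion on anchors terminating at the \emph{unique} initial megastate at depth $0$ with $\piter=0$, which forces the two chains — hence the depths — to line up all the way down if the tuples are to agree. Claim~\ref{claim:piter-neq} is exactly what keeps a disagreement alive across a block boundary, and the no-big-hash-collision hypothesis together with Assumption~\ref{asm:rand-big-neq} is what makes Claim~\ref{claim:neq-in-and-out-of-block} available.
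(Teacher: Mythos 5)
Your proposal is correct but takes a genuinely different route from the paper's proof. The paper proves the statement by a fine-grained induction on iterations $I\in\mathcal{I}_{\simPath_{AB}}$, starting from the earliest point of disagreement $\Icorr$, with Claim~\ref{cl:useful} (an analogue of your ``anchor,'' but at the iteration level) stitching together successive iterations and Claims~\ref{claim:piter-neq} and \ref{claim:neq-in-and-out-of-block} carrying the disagreement across a block boundary. You instead recurse \emph{directly} at the block level: identify the anchor as the deepest path entry with $\piter\notin B_I$, observe that $(\phash,\pseed)$ is inherited verbatim from it and $\pt$ records the $B_I$-portion of the path, split on whether the $B_I$-portions agree, and fall back to the induction hypothesis applied to the anchors at an earlier time. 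Both approaches lean on the same two claims; yours is conceptually a bit cleaner because it mirrors the fact that $(\phash,\pseed)$ only actually changes at block boundaries, trading that for somewhat heavier bookkeeping — one has to verify that the anchor's tuple and the sub-path up to its depth are indeed frozen between the start of $B_I$ and time $t$ (which does hold, since $\piter$'s increase with depth and the anchor was not re-simulated during $B_I$), and that $q_A'\leq\Aiter(t_0)$ at the earlier time. One small point you gloss over: when exactly one anchor sits at depth $0$ (the initial megastate, $\piter=0$), your invocation of Claim~\ref{claim:piter-neq} is just outside its stated regime, since there is no ``block corresponding to $\piter=0$''; this still goes through because $\phash=\pseed=\mathtt{None}$ for the initial megastate can never equal a big-hash output and a genuine seed, but it is worth saying so explicitly. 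Modulo that and the acknowledged bookkeeping, your argument is sound.
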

\begin{proof}
We begin with some useful notation.
     Recall that the simulated paths $\simPath_A := \simPath_A^{(\leq q_A)}$ and $\simPath_B := \simPath_B^{(\leq q_B)}$ store both the simulated transcripts, as well as the iteration numbers in $\Pi'$ where each chunk of transcript was simulated. Define $\mathcal{I}_\simPath$ as the set of  iteration numbers included in the simulated path $\simPath$. 
 Define $\mathcal{I}_{\simPath_{AB}} = \mathcal{I}_{\simPath_A} \cup \mathcal{I}_{\simPath_B}$.

With this notation, we have the following claim.
\begin{claim}\label{cl:useful}

Let $I \in \mathcal{I}_{\simPath_A}$.
Let \[ I_A = \max\{I' \in \mathcal{I}_{\simPath_A} \,:\, I' < I \}\]
be the last iteration in $\mathcal{I}_{\simPath_A}$ before $I$ where Alice simulated $\Pi$.  Let $\tcmp$ be the time at the beginning of the computation phase of iteration $I$.  Then, using the notation from Remark~\ref{rem:mseq}, we have
\[ \megastate_A(\tcmp)  {\ \mseq\ }   \megastate_A(I_A).\]
That is, Alice's megastate $\megastate_A$ at time $\tcmp$ is the same as  her megastate at the end of iteration $I_A$ ,up to possible differences in the variables $\phash, \pseed, \pt$.

\end{claim}
\begin{proof}
Without loss of generality, we focus on Alice; the case for Bob is identical.

Let $q = \pdepth_A(I)$ be the point that Alice simulated in iteration $I$.
Notice that during iteration $I_A$, Alice simulated the point $q-1$; this is because in a simulated path, the simulated points are consecutive.

Further, we claim that $I_A$ is the \emph{last} iteration (either in $\mathcal{I}_{\simPath_A}$ or not) before time $t$, where Alice simulates the point $q-1$.  If not, suppose that at some iteration $I' > I_A$, Alice also simulated the point $q-1$.  But this would overwrite the simulated path $\simPath_A$ at the depth $q-1$, and $I_A$ would not appear in $\mathcal{I}_{\simPath_A}$, a contradiction.
This implies that $I_A$ is the last iteration (either in $\mathcal{I}_{\simPath_A}$ or not) before iteration $I$ where Alice simulates the point $q-1$.  

Since Alice simulated depth $q$ in iteration $I$, she must have began iteration $I$ at depth $q-1$.  Since $I_A$ was the last iteration that Alice simulated depth $q-1$, the megastate that she begins iteration $I$ with is the same as the megastate that she ended iteration $I_A$ with.  Since the megastate does not update until the computation phase, (except possibly the variables $\phash, \pseed, \pt$ that may be updated during the big hash phase at the end of the block, if $I$ and $I_A$ are in different blocks), the megastate she begins iteration $I$ with is equal to her megastate at time $\tcmp$.  We conclude that $\megastate_A(\tcmp) {\ \mseq\ } \megastate_A(I_A),$
as desired.
\end{proof}

Now we continue with the proof of the lemma.
Notice that if $\paiter \neq \pbiter$, then we are done, so we assume that $\paiter = \pbiter$.  Thus, it suffices to show that in iteration
\[ I^* := \paiter = \pbiter,\]
we have
\[ (H_A, R_A, \Bsim^{\leq \paiter} , I_{H_A})(I^*) \neq (H_B, R_B, \Bsim^{\leq \pbiter} , I_{H_B})(I^*),\]
where $(H_A, R_A, \Bsim^{\leq \paiter} I_{H_A})(I)$ is defined as the tuple $(\phash_A, \pseed_A, \pt_A , \piter_A)(I)$ at the end of iteration $I$.

The proof proceeds by induction.
To set this up, suppose that $\simPath_A \neq \simPath_B$, and 
 let $\Icorr$ be 
 the smallest $I' \in \mathcal{I}_{\simPath_{AB}}$ such that either $I' \not\in \mathcal{I}_{\simPath_A} \cap \mathcal{I}_{\simPath_B}$, or $\sigma_A(\Icorr) \neq \sigma_B(\Icorr)$.  (Here, $\sigma_A(\Icorr)$ denotes the transcript $\sigma$ that Alice simulated in iteration $\Icorr$, and similarly for Bob.) That is, $\Icorr$ is the earliest iteration of $\Pi'$  in which this simulated path encountered a problem, either because one party simulated while the other did not; or both simulated but simulated divergent paths.

 We now prove by induction on $I \in \mathcal{I}_{\simPath_{AB}}$ that
 \begin{equation}\label{eq:IH}
 \forall I \in \mathcal{I}_{\simPath_{AB}} \text{ such that } \Icorr\leq I,\ \ \  (H_A,R_A,\Bsimp_A,I_{H_A})(I) \neq (H_B, R_B,\Bsimp_B, I_{H_B})(I).
 \end{equation}

 \paragraph{Base Case.}
 Let $I = \Icorr$.  There are two cases, either $I \in \mathcal{I}_{\simPath_A} \cap \mathcal{I}_{\simPath_B}$ or it is not.  
 \begin{itemize}
     \item [1.] In the first case, $I \in \mathcal{I}_{\simPath_A} \cap \mathcal{I}_{\simPath_B}$. Then by the definition of $\Icorr$, $\sigma_A(I) \neq \sigma_B(I)$. Then we have that $\Bsimp_A \neq \Bsimp_B$  and hence,
 \[ (R_A, H_A, \Bsimp_A,\IHA)(I) \neq (R_B, H_B, \Bsimp_B, \IHB)(I),\]
establishing \eqref{eq:IH} for the base case of $I = \Icorr$.
\item[2.] In the second case, $I \not\in \mathcal{I}_{\simPath_A} \cap \mathcal{I}_{\simPath_B}$.  Assume without loss of generality that $I \in \mathcal{I}_{\simPath_A}$, but not in $\mathcal{I}_{\simPath_B}.$  This means that Bob did not simulate $\Pi$ during iteration $I$, which implies that $\piter_B(I) \neq I$.  In particular, $\IHA \neq \IHB$, and we again conclude that 
 \[ (R_A, H_A, \Bsimp_A, \IHA)(I) \neq (R_B, H_B, \Bsimp_B, \IHB)(I),\]
establishing \eqref{eq:IH} for the base case of $I = \Icorr$.
\end{itemize}
 \paragraph{Inductive Step.}
Fix $I \in \mathcal{I}_{\simPath_{AB}}$ so that $I > \Icorr$, and assume that \eqref{eq:IH} holds for all $I' < I$ in $\mathcal{I}_{\simPath_{AB}}$.

Again we have two cases, either $I \in \mathcal{I}_{\simPath_A} \cap \mathcal{I}_{\simPath_B}$ or it is not.  

\begin{itemize}
  \item [1.]  Suppose that $I \in \mathcal{I}_{\simPath_A} \cap \mathcal{I}_{\simPath_B}$, let 
 \[ I_A = \max\{ I' \in \mathcal{I}_{\simPath_A} \,:\, I' < I \},\]
 and define $I_B$ similarly.  That is, $I_A$ is the iteration where Alice last simulated $\Pi$ before $I$, and $I_B$ is the iteration where Bob last simulated $\Pi$ before $I$. 

 Let $\tcmp$ be the time at the beginning of the computation phase of iteration $I$.  Then from Claim~\ref{cl:useful} we conclude that
 \[ \megastate_A(\tcmp) \mseq \megastate_A(I_A) \qquad \text{ and } \qquad \megastate_B(\tcmp) \mseq \megastate_B(I_B).\]
Now, either (a) $I_A \neq I_B$ or (b) $I_A = I_B$. 
\paragraph{Case (a) $I_A \neq I_B$.} From Claim~\ref{claim:piter-neq} we know that there exists a variable $\var\in\{\phash$, $\pseed, \pt\}$ such that $\megastate_A.\var(\tcmp) \neq \megastate_B.\var (\tcmp)$. 
During iteration $I$, the variables $\phash$ and $\pseed$ do not change. As a result, if $\var\in \{\phash, \pseed\}$, then the lemma follows. If $\var = \pt$, then during $I$ both Alice and Bob append a piece of transcript $(\sigma_A,I)$ and $(\sigma_B,I) $ to $\pt_A$ and $\pt_B$, respectively. However if $\megastate_A.\mathtt{T} (\tcmp)\neq \megastate_B.\mathtt{T} (\tcmp)$ then we have that $\megastate_A.\mathtt{T}  \circ (  \sigma_A \circ I )  \neq\megastate_B.\mathtt{T}  \circ  ( \sigma_B \circ I)$.  Thus, $\Bsim^{\leq \megastate}_A \neq \Bsim^{\leq \megastate}_B$, therefore 
 \[ (R_A, H_A, \Bsimp_A, \IHA)(I) \neq (R_B, H_B, \Bsimp_B, \IHB)(I),\]
 as desired.

\paragraph{Case (b) $I_A = I_B= I'$.} Then by induction we know that  $ (R_A, H_A, \Bsimp_A, \IHA)(I') \neq (R_B, H_B, \Bsimp_B, \IHB)(I').$ Further, using Claim~\ref{claim:neq-in-and-out-of-block} we have that $ (R_A, H_A, \Bsimp_A, \IHA)(\tcmp) \neq (R_B, H_B, \Bsimp_B, \IHB)(\tcmp)$.  This is because after the end of a block, the values corresponding to mega-states created during that block are fixed. Now during iteration $I$, the parties copy the $\pseed$ and $\phash$ from $\megastate_A(I_A)$ and $\megastate_B(I_B)$, so if those values differ then $(R_A,H_A)(I)\neq (R_B,H_B)(I)$. If the difference is in the variable $\pt$ then both Alice and Bob append another piece of transcript and, similar to Case~(a), $\Bsim_A^{\leq \megastate} \neq \Bsim_B^{ \leq \megastate }$ and the proof is complete.

 \item[2.] Next, suppose that $I \not\in \mathcal{I}_{\simPath_A} \cap \mathcal{I}_{\simPath_B}.$
 In this case, the same argument as in Case 2 of the base case applies.
\end{itemize}

This shows that \eqref{eq:IH} holds for all $I \in \mathcal{I}_{\simPath_{AB}}$. In particular, we have
\[ (H_A, R_A, \Bsimp_A ,  \IHA)(I^*) \neq (H_B, R_B,\Bsimp_B, \IHB)(I^*),\]
which is what we wanted to show.
\end{proof}

\begin{corollary} \label{cor:transcript-ms} At any time $t$ and
    for any $p$ and $p'$, if $\simPath^{(\leq p)}_A \neq \simPath^{(\leq p')}_B$ then $\ms{p}_A \neq \ms{p'}_B$.
\end{corollary}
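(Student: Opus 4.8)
The plan is to read Corollary~\ref{cor:transcript-ms} off Lemma~\ref{lem:bigh-hash-corruption} directly, after clearing two degenerate cases. Fix a time $t$ and points $p \le \Aiter(t)$ and $p' \le \Biter(t)$ with $\simPath^{(\leq p)}_A \neq \simPath^{(\leq p')}_B$. If $p = p' = 0$, then $\simPath^{(\leq 0)}_A$ and $\simPath^{(\leq 0)}_B$ are both the empty path, so the hypothesis is violated and there is nothing to prove. If exactly one of $p,p'$ is $0$, say $p = 0 \neq p'$, then since the $\pdepth$ field of $\ms{q}$ equals $q$ for every $q$, the mega-states $\ms{0}_A$ and $\ms{p'}_B$ already disagree on $\pdepth$, so $\ms{p}_A \neq \ms{p'}_B$; the case $p \neq 0 = p'$ is symmetric.

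In the remaining case $p, p' \geq 1$, the hypothesis of Lemma~\ref{lem:bigh-hash-corruption} holds verbatim with $q_A = p$, $q_B = p'$, and the same time $t$. The lemma then gives
\[ (\pahash, \paseed, \pat, \paiter)(t) \neq (\pbhash, \pbseed, \pbt, \pbiter)(t), \]
and in that lemma's notation these tuples are exactly (part of) the data stored in $\ms{p}_A(t)$ and $\ms{p'}_B(t)$ respectively; see Definition~\ref{def:megastate}. Hence $\ms{p}_A(t) \neq \ms{p'}_B(t)$. Note that the discrepancy lies either on $\piter$ (a field that the equivalence of Remark~\ref{rem:mseq} does \emph{not} allow to vary) or on $\phash, \pseed, \pt$ themselves, so the inequality is genuine. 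As throughout this part of the analysis, this argument is carried out under the standing hypothesis of Lemma~\ref{lem:bigh-hash-corruption} that no big hash collision occurs in $\Pi'$, an assumption later discharged by a union bound in the proof of Theorem~\ref{thm:main}.

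I do not anticipate any genuine difficulty: the entire content is supplied by Lemma~\ref{lem:bigh-hash-corruption}, and the only care required is the boundary bookkeeping when $p$ or $p'$ equals $0$ (which falls outside the range $1 \le q_A, q_B$ assumed in that lemma), together with the routine observation that a difference in the chained-hash tuple forces a difference in the underlying mega-states.
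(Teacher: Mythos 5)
Your proof is correct and follows essentially the same route as the paper's: both read the corollary off Lemma~\ref{lem:bigh-hash-corruption} directly, observing that a discrepancy in the chained-hash tuple $(\phash,\pseed,\pt,\piter)$ forces the corresponding mega-states to differ. The only added content in your write-up is the explicit treatment of the boundary cases $p=0$ or $p'=0$ (which the lemma's hypothesis $1\le q_A,q_B$ formally excludes and the paper's proof glosses over); this extra care is harmless and makes the argument slightly more airtight than the paper's one-line version.
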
 
\begin{proof}
This is a direct result of Lemma~\ref{lem:bigh-hash-corruption}. Define  $\megastate_A^* = \ms{p}_A(t) $ and $ \megastate_B^* = \ms{p'}_B(t).$
 Since $\simPath_A^{(\leq p)} \neq \simPath_B^{(\leq p')}$, we have that,\[ (\pahash, \paseed,{\pat}, \paiter) \neq (\pbhash, \pbseed, {\pbt}, \pbiter).\]
 The variables in this tuple are variables of the megastates for Alice and Bob. We conclude that $\ms{p}\neq \ms{p'}$, which proves the statement. 
\end{proof} 

\begin{corollary} \label{cor:ms-not-match-then-triplet-not-match}
    If a time $t$ is such that 
    $\ms{\Aiter(t)}_A \neq \ms{\Biter(t)}_B$ then 
    \[
        (H_A,R_A, { \Bsim_A^{ \leq \megastate} }  , \IHA)(t) \neq (H_B,R_B, { \Bsim_B^{ \leq \megastate} } ,\IHB)(t) \ .
    \]
\end{corollary}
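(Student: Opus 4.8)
The plan is to split on whether the two simulated paths $\simPath_A^{(\leq \Aiter(t))}$ and $\simPath_B^{(\leq \Biter(t))}$ agree, and to reduce each case to facts already in hand. First I would introduce the shorthand $\megastate_A^* = \ms{\Aiter(t)}_A(t)$ and $\megastate_B^* = \ms{\Biter(t)}_B(t)$, so that the hypothesis is $\megastate_A^* \neq \megastate_B^*$ and the target is that the four coordinates $(\phash,\pseed,\pt,\piter)$ of $\megastate_A^*$ and of $\megastate_B^*$ — which are exactly the tuple $(H_A,R_A,\Bsim_A^{\leq\megastate},\IHA)$ versus $(H_B,R_B,\Bsim_B^{\leq\megastate},\IHB)$ — disagree.

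In the first case, $\simPath_A^{(\leq \Aiter(t))} \neq \simPath_B^{(\leq \Biter(t))}$. When both $\Aiter(t)\geq 1$ and $\Biter(t)\geq 1$ this is precisely the hypothesis of Lemma~\ref{lem:bigh-hash-corruption} taken with $q_A = \Aiter(t)$ and $q_B = \Biter(t)$, whose conclusion is exactly what we want. The only wrinkle is when one depth is $0$; then that side's simulated path is empty, so the other side has depth at least $1$, and since every mega-state of depth $\geq 1$ carries $\piter \geq 1$ while the depth-$0$ mega-state has $\piter = 0$, the coordinate $\piter$ already differs. (If both depths were $0$ the paths would both be empty, so this does not occur in this case.)

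In the second case, $\simPath_A^{(\leq \Aiter(t))} = \simPath_B^{(\leq \Biter(t))}$. Then the two paths have equal length, so $\pdepth$ agrees in the two mega-states; and they record the same transcript, so the last simulated state $\pv$, obtained by walking that transcript from \texttt{start} through the DAG of $\Pi$, also agrees. Since the mega-states are nonetheless unequal, they must differ in one of the four remaining coordinates $\phash,\pseed,\pt,\piter$, which are exactly the coordinates of the tuple; hence the tuples differ, as required.

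I expect the only mildly delicate point — and the step I would be most careful about — to be the observation in the second case that $\pv$ and $\pdepth$ are fully determined by the simulated path up to the given depth; this is what rules out a mega-state mismatch that the tuple cannot see. It is worth noting one should \emph{not} try to prove the cleaner-sounding converse ``equal simulated paths $\Rightarrow$ equal mega-states,'' since corrupted big-hash randomness can make $\pseed$ (hence $\phash$) differ even on equal paths; but that is harmless for us precisely because $\pseed$ lies inside the tuple. Everything else is bookkeeping against Lemma~\ref{lem:bigh-hash-corruption} and Definition~\ref{def:megastate}.
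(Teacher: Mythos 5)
Your proof is correct and follows essentially the same reasoning as the paper's: the paper splits on whether $\pv$ and $\pdepth$ agree (concluding in the disagreeing case that the simulated paths differ and invoking Lemma~\ref{lem:bigh-hash-corruption}, and in the agreeing case that one of the remaining four mega-state coordinates, which are exactly the tuple, must differ), which is the same argument you give, just with the case split phrased via paths rather than via $\pv,\pdepth$. If anything you are slightly more careful, since you explicitly handle the boundary case where one of $\Aiter(t)$, $\Biter(t)$ is $0$ (where Lemma~\ref{lem:bigh-hash-corruption}'s requirement $q_A,q_B\geq 1$ is not met) via the $\piter$ coordinate, a corner case the paper's write-up glosses over.
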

\begin{proof}
Let $ \megastate_A  = \ms{\Aiter(t)}_A$  and $\megastate_B = \ms{\Biter(t)}_B$. Notice that if $\pv_A \neq \pv_B$ or $\pdepth_A \neq \pdepth_B$ then $\simPath_A(t) \neq \simPath_B(t)$ and using Lemma~\ref{lem:bigh-hash-corruption} we can conclude that
   \[
        (H_A,R_A,\Bsim_A^{\leq \megastate}  ,\IHA)(t) \neq (H_B,R_B, { \Bsim_B^{\leq \megastate}},\IHB)(t) \ .
    \]
Now if $\pv_A = \pv_B$ and $\pdepth_A = \pdepth_B$, but $\megastate_A \neq \megastate_B$, then there exists a variable $\var$ so that
    $$\var\in \{\piter, \phash,\pt, \pseed\},$$
    such that $\var_A \neq \var_B$. As $\var$ is also one of the variables of $(H,R,I)$ then we can conclude that, 
    
   \[
        (H_A,R_A,\IHA)(t) \neq (H_B,R_B,\IHB)(t) \ .
    \]
    this completes the proof.
\end{proof}

For the remainder of Section~\ref{sec:analysis} and also in Section~\ref{sec:grossproof}, we will analyze Algorithm~\ref{alg:adaptive} under the assumption that \textbf{no big hash collisions ever occur}. In Section~\ref{sec:hash} we will show that the probability that a big hash collision occurs at some point is at most $1/\poly(d)$, which will be an acceptable contribution to the final probability of failure when we prove Theorem~\ref{thm:main} in Section~\ref{sec:done}. 

\begin{lemma}\label{lem:ms-not-matching-sim-ms-not-matching}
    Fix an iteration $I$.  Let $t_0$ be the start of this iteration (Line~\ref{line:iterstart}) and let $t_1$ be the end of this iteration (Line:~\ref{line:iterend}). Further define $\megastate_A(t_0) = \ms{\Aiter(t_0)}_A $ and $\megastate_B(t_0) =  \ms{\Biter(t_0)}_B$. Define $ \megastate_A(t_1)$ and $\megastate_B(t_1)$ similarly for time $t_1$. We claim that, if $\megastate_A(t_0)\neq \megastate_B(t_0)$ and both Alice and Bob simulate during the computation phase of iteration $I$, then $\megastate_A(t_1)\neq \megastate_B(t_1)$.
\end{lemma}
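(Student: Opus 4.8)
The plan is to first read off from the pseudocode exactly what $\megastate_A(t_1)$ and $\megastate_B(t_1)$ are, and then reduce the claim to tracking the tuple $(\phash,\pseed,\pt,\piter)$ of the current megastate through iteration $I$. Since both parties execute the Simulate branch of the Computation Phase, both run the reset on Line~\ref{line:reset1}, so $k_A=k_B=0$ once the Computation Phase completes; hence in the Transition Phase the condition $2E\ge k$ on Line~\ref{line:sync-condition} holds trivially, neither party jumps, and $\megastate$ is not altered after the Computation Phase. Therefore $\megastate_A(t_1)$ is exactly the megastate Alice produced in her Computation Phase: its depth is $\Aiter(t_0)+1$, its $\piter$ equals $I$, its $\pt$ equals $\pt_A(t_0)\circ(\sigma_A\circ I)$ where $\sigma_A$ is the $r$-round transcript Alice simulated, and — crucially — its $\phash$ and $\pseed$ are unchanged from $\megastate_A(t_0)$, because these variables are modified only in the Big Hash Computation Phase, which does not occur between $t_0$ (Line~\ref{line:iterstart}) and $t_1$ (Line~\ref{line:iterend}). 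The symmetric statement holds for Bob.

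Next, by hypothesis $\megastate_A(t_0)=\ms{\Aiter(t_0)}_A\ne\ms{\Biter(t_0)}_B=\megastate_B(t_0)$, so Corollary~\ref{cor:ms-not-match-then-triplet-not-match} gives that the tuples $(\phash,\pseed,\pt,\piter)$ of the two current megastates differ at time $t_0$. Fix a coordinate $\var$ in which they differ; if $\var=\piter$, Claim~\ref{claim:piter-neq} lets us instead pick a coordinate in $\{\phash,\pseed,\pt\}$ on which the two megastates also differ at $t_0$, so we may assume $\var\in\{\phash,\pseed,\pt\}$. If $\var\in\{\phash,\pseed\}$, then by the previous paragraph this coordinate of each party's current megastate is unchanged between $t_0$ and $t_1$, so $\megastate_A(t_1)$ and $\megastate_B(t_1)$ still differ in $\var$. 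If $\var=\pt$, then $\pt_A(t_1)=\pt_A(t_0)\circ(\sigma_A\circ I)$ and $\pt_B(t_1)=\pt_B(t_0)\circ(\sigma_B\circ I)$; since $\pt_A(t_0)\ne\pt_B(t_0)$ as sequences of (transcript, iteration-number) pairs and each receives exactly one appended pair, the resulting sequences still differ (they either already had different lengths, or disagreed at a common index, and appending preserves that). In all cases $\megastate_A(t_1)\ne\megastate_B(t_1)$, which is the claim.

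The only mildly delicate points are the two facts used above: that $\phash$ and $\pseed$ are untouched during an ordinary iteration (they change only in the end-of-block Big Hash phase, hence not between $t_0$ and $t_1$), and that appending one element each cannot collapse two distinct sequences. Both are immediate from the pseudocode and the definition of $\pt$ as shorthand for $\Bsimp$, so I do not expect a real obstacle — the content of the lemma lies entirely in unwinding the definitions and correctly invoking Corollary~\ref{cor:ms-not-match-then-triplet-not-match} and Claim~\ref{claim:piter-neq}.
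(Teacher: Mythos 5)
Your proof is correct and follows essentially the same route as the paper's: both reduce the claim via Corollary~\ref{cor:ms-not-match-then-triplet-not-match} to a disagreement in the tuple $(\phash,\pseed,\pt,\piter)$, invoke Claim~\ref{claim:piter-neq} to rule out a disagreement purely in $\piter$, and then observe that $\phash,\pseed$ pass through the iteration unchanged while $\pt$ has exactly one element appended, which cannot collapse a difference. Your write-up is just slightly more explicit than the paper's about (i) why the transition phase is a no-op (you argue from $k=0$ after Line~\ref{line:reset1} and the $2E\ge k$ reset being vacuous, whereas the paper argues from $k_A=k_B=1$ at the verification check), and (ii) why appending preserves sequence inequality; neither of these amounts to a genuinely different argument.
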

\begin{proof}
Notice that as both parties have simulated during iteration $I$, we have that $k_A = k_B = 1$. As a result, no transition will take place during this iteration. Thus, the only phase affecting the mega-states of  Alice and Bob during this iteration is the computation phase.

    From Corollary~\ref{cor:ms-not-match-then-triplet-not-match}, we have that $ (H_A,R_A,\Bsimp_A, \IHA)(t_0)\neq (H_B,R_B,\Bsimp_B, \IHB)(t_0)$.  During the computation phase, Alice and Bob copy the values of $H$ and $R$ and append this iteration's simulation to the end of $\Bsimp_A$ and $\Bsimp_B$, respectively.  Hence, If any of these variables do not agree with each other at time $t_0$, then they will not agree at time $t_1$ either. Further, from Claim~\ref{claim:piter-neq}, we know that it is impossible for Alice and Bob's megastates to only differ in the variable $\piter$, so they must differ in one of the other variables. Thus the proof is complete.

\end{proof}

We next record the following useful facts about bad spells, formalizing facts that (we hope) are intuitive.
\begin{lemma}[Useful facts about bad spells]  \label{lem:noweirdcoincidences}

Suppose that a point $b$ becomes the divergent point  at time $t_0$, and remains the divergent point until time $t_1$.

Then the following hold.
\begin{enumerate}
 \item If $t_0$ happens within  an iteration (that is, between Lines~\ref{line:iterloop} and Line~\ref{line:iterend} in Algorithm~\ref{alg:adaptive}; in particular, \emph{not} during the Big Hash Computation phase at the end of a block), then one of the following will occur during that iteration:  
    \begin{itemize}
        \item [(1)] $\Aiter=\Biter=b$ and $\ell^- = 0 $ before the computation phase, and afterwards $\max(\Aiter,\Biter)>b$ and $\ell^->0$ and $\ms{\Aiter} \neq \ms{\Biter}$.
        \item [(2)] Exactly one party jumps to $b$. 
    \end{itemize}
\item If $t_0$ is not during an iteration (that is, \emph{not} between Lines~\ref{line:iterloop} through \ref{line:iterend}), then $t_0$ is during the Big Hash Computation phase starting at Line~\ref{line:big-hash-phase} at the end of a block. 
Further, the randomness $\Rblock^b$ exchanged during that block is corrupted during the randomness exchange in Line~\ref{line:alice-send-bigrand}. 
    
    \item Fix a time $t\in [t_0, t_1]$. For any point $p$ such that $b < p \leq \min\{\Aiter(t),\Biter(t)\}$, $\ms{p}_A \neq \ms{p}_B$.

    \item Let $t \in [t_0,t_1]$. Then $\ms{\Aiter(t)} \neq \ms{\Biter(t)}$.
    \item If the bad spell ends at time $t_1$, then $t_1$ must occur during an iteration denoted by $I_1$ and further, one of the following occurred at time $t_1$:
    \begin{enumerate}
        \item [(1)] Both parties jumped to a point $p \leq b$. 
        \item [(2)] One party jumped to $b$, while the other party was waiting at $b$.\footnote{For example, if Alice was ``waiting at $b$,'' this means that $\Aiter(I^{(1)}) = \Aiter(I^{(1)} - 1) = b$, and similarly for Bob.}
    \end{enumerate}
 \item If the bad spell does not end at time $t_1$, then one of the following occurs: 
        \begin{itemize}
            \item[(1)] $t_1$ is during the Big Hash Computation phase at the end of a block, and the randomness $\Rblock^b$ exchanged during that phase is corrupted.
           \item[(2)] $t_1$ is  during the Transition Phase of Algorithm~\ref{alg:adaptive}, and at time $t_1$ the divergent point changes to a point $b'$ with $b'<b$.
        \end{itemize} 
\end{enumerate}
\end{lemma}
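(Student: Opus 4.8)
The plan is a phase-by-phase case analysis of Algorithm~\ref{alg:adaptive}. The only places where a party's simulated path or any mega-state changes are the Computation Phase (the updates around Line~\ref{line:simulate}), the Transition Phase (the jump on Line~\ref{line:back-jump}), and the Big Hash Computation Phase (Line~\ref{line:big-hash-comp} and the surrounding updates); the Verification Phase, the increment $k \gets k+1$, the per-block randomness exchange, and \textsc{Initialize} all leave $\simPath_A,\simPath_B$ and every $\ms{p}_A,\ms{p}_B$ untouched, hence leave the divergent point unchanged. So $b$ can only become, or cease to be, the divergent point inside one of those three phases, and $t_0,t_1$ must fall in them. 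I would first record the equivalence (immediate from the definitions in Section~\ref{sec:prelim}) that $\ell^- = 0$ iff no divergent point exists iff $\simPath_A = \simPath_B$ and $\ms{\Aiter}_A = \ms{\Biter}_B$; this links ``bad spell'' to ``the divergent point equals $b$.''

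Three structural facts then do the work. (i) Because $\simPath_A,\simPath_B$ record the iteration at which each depth was simulated (this is the point of the remark following the definition of $\simPath$), the Computation Phase can never raise the divergent point: a freshly simulated entry at depth $q$ in iteration $I$ carries label $I$, while the other party's entry at $q$ (simulated when it was below $q$) carries an earlier label. Hence the Computation Phase can only turn an $\ell^- = 0$ state into a bad spell, and when it does, the new divergent point is $b = \Aiter = \Biter$ as it stood just before the phase; together with Lemma~\ref{lem:notdangerous} (which forces a corruption or corrupted randomness if both parties simulate in a non-dangerous iteration yet end up disagreeing) and Lemma~\ref{lem:ms-not-matching-sim-ms-not-matching}, this is exactly case~(1) of Part~1, and it also rules out the spell ending in the Computation Phase. (ii) By Claim~\ref{claim:neq-in-and-out-of-block} the Big Hash Phase preserves all mega-state inequalities when $\Rblock^b$ is uncorrupted (we already assume no big-hash collisions), while if $\Rblock^b$ is corrupted then every mega-state updated in that phase acquires a fresh $\pseed$ discrepancy; so the divergent point moves during the Big Hash Phase only if $\Rblock^b$ is corrupted — this is Part~2 and case~(1) of Part~6, and it also rules out the spell ending in the Big Hash Phase. (iii) A jump can only \emph{shorten} a party's simulated path, so a jump to a point $> b$ never creates agreement there (by Part~3 below) and a jump lowers, but never raises, the divergent point; the only way a jump makes it \emph{equal to} some value is if exactly one party jumps there while the other does not jump or jumps strictly above it. This gives case~(2) of Part~1 and case~(2) of Part~6 (with $b' < b$), and, for the spell to actually end, forces either both parties to jump to a common $p \le b$, or one party to jump to $b$ while the other already sits at $b$ — this is Part~5, using Part~3 to exclude landing at a point $> b$ and using $b \le \min\{\Aiter,\Biter\}$ to pin the waiting party to $b$.

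For Parts~3 and~4 I would argue straight from the maximality in the definition of the divergent point, using the hash-chaining corollaries. Fix $t \in [t_0,t_1]$ and $p$ with $b < p \le \min\{\Aiter(t),\Biter(t)\}$. If $\simPath_A^{(\le p)} \ne \simPath_B^{(\le p)}$ then $\ms{p}_A \ne \ms{p}_B$ by Corollary~\ref{cor:transcript-ms}. Otherwise the labeled transcripts agree up to $p$, so $\pv,\pdepth,\piter$ coincide along depths $b,\dots,p$; since the divergent point is $b<p$, some depth $q'\in(b,p]$ has $\ms{q'}_A \ne \ms{q'}_B$, and by the previous sentence the discrepancy must be in $\phash$, $\pseed$, or $\pt$. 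I would then propagate this discrepancy from $q'$ up to $p$: across a run of agreeing transcript $\pt$ is appended identically, and each Big Hash Phase crossed either re-hashes unequal inputs (no collision) or corrupts $\pseed$, so $(\phash,\pseed,\pt,\piter)$ stays unequal (this is Claim~\ref{claim:neq-in-and-out-of-block} together with Lemma~\ref{lem:bigh-hash-corruption}, applied iteration by iteration). Hence $\ms{p}_A \ne \ms{p}_B$, which is Part~3. Part~4 is then immediate: in a bad spell either $\simPath_A \ne \simPath_B$, in which case Corollary~\ref{cor:transcript-ms} with $(p,p') = (\Aiter(t),\Biter(t))$ gives $\ms{\Aiter(t)} \ne \ms{\Biter(t)}$; or $\Aiter(t) = \Biter(t)$ and $\ms{\Aiter}_A \ne \ms{\Biter}_B$ outright; or $\Aiter(t) \ne \Biter(t)$ and the two mega-states already differ in $\pdepth$. (To avoid circularity I would present Parts~3 and~4 before Parts~1, 2, 5, 6 in the write-up.)

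The main obstacle I expect is the propagation step inside Part~3 — tracking which of $\phash,\pseed,\pt$ carries the discrepancy and checking it survives every $\bot$-reset of $\pt$, every re-hash in an intervening Big Hash Phase, and every jump that does not touch depths in $(b,p]$. Everything else is careful but routine bookkeeping over the phases of Algorithm~\ref{alg:adaptive}; a secondary point needing care is verifying in Part~5 that the spell can end \emph{only} in the Transition Phase, which rests on facts~(i) and~(ii) above.
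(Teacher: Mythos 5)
Your overall decomposition — locate $t_0,t_1$ in the Computation, Transition, or Big Hash phases, then case-split — is the same as the paper's, and Parts~1, 2, 5, 6 essentially mirror the paper's argument. The significant divergence is your approach to Part~3, and that is exactly where your proposal leaves a real gap.

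The paper proves Part~3 by induction on $p$. For the base case $p=b+1$, the paper splits on whether $\piter_A=\piter_B$; the non-trivial subcase (both last simulated $b+1$ in the same iteration $I^{(0)}$) is discharged by appealing to Item~1 of the lemma, which forces $\ms{\Aiter}\neq\ms{\Biter}$ at the start of the bad spell. For the inductive step, the paper again compares $\piter_A$ and $\piter_B$, and when they coincide it invokes Lemma~\ref{lem:ms-not-matching-sim-ms-not-matching}, which is precisely the black box doing the propagation you want. Your version instead case-splits on whether $\simPath_A^{(\le p)}=\simPath_B^{(\le p)}$ and, when they agree, tries to hand-propagate a discrepancy in $(\phash,\pseed,\pt)$ from some $q'\in(b,p]$ up to $p$. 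You correctly flag this propagation as "the main obstacle," and it truly is: $\ms{q'+1}$ was created from whatever megastate Alice and Bob held at depth $q'$ at the \emph{start of iteration $I_{q'+1}$}, not from $\ms{q'}$ as it stands at time $t$; reconciling the two requires tracking $\mseq$-equivalence through every intervening Big Hash phase and $\bot$-reset of $\pt$, which is exactly what Claim~\ref{cl:useful} and Claim~\ref{claim:neq-in-and-out-of-block} are engineered to do inside the proof of Lemma~\ref{lem:bigh-hash-corruption}. Your sketch re-derives a fragment of that lemma's proof rather than reusing the encapsulated version (Lemma~\ref{lem:ms-not-matching-sim-ms-not-matching}), and without actually carrying out the bookkeeping it is not a complete proof of Part~3 — merely a statement that it \emph{should} be possible.

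Two smaller points. First, on ordering: the paper proves Item~3 \emph{after} Item~1 and explicitly uses Item~1 inside the base case, so reordering to put Parts~3 and~4 first is not free — you would need a different base-case argument. In fact, one \emph{does} exist and is cleaner than the paper's: for $p=b+1$ the maximality in the definition of the divergent point directly says that either $\simPath_A^{(\le b+1)}\neq\simPath_B^{(\le b+1)}$ (use Corollary~\ref{cor:transcript-ms}) or $\ms{b+1}_A\neq\ms{b+1}_B$ outright; no appeal to Item~1 is needed. If you make that observation explicit, your reordering becomes legitimate and arguably improves the paper's presentation. Second, your structural fact~(iii) ("a jump never raises the divergent point") as stated is slightly imprecise: a jump to a point $>b$ can change which party is deeper without moving the divergent point, and what you actually need — that no jump to a common point $>b$ can resolve the spell — follows from Part~3 (as you use in Part~5), so the causal chain should be made explicit rather than left as an aside.
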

\begin{proof}  We prove each item in turn. 
\begin{enumerate}
 
    \item First, observe that the Alice and Bob's megastates do not change during the verification phase, so $t_0$ is either in the computation phase or the transition phase of some iteration $I^{(0)}$.

    Suppose first that $t_0$ is in the computation phase. 
    Since $t_0$ is when $b$ first becomes the divergent point, we claim it must be that $\ell^-=0$ right before time $t_0$. 
    Indeed, suppose not, so there was some divergent point $b'$ prior to time $t_0$.  
    First suppose that $b' < b$. But then $b$ could not have become the divergent point, a contradiction.  On the other hand, suppose that $b' = b$.  But then $b$ would not have \emph{become} the divergent point at time $t_0$, as it was already the divergent point.  Finally, suppose that $b' > b$.  But then $\Aiter(t_0), \Biter(t_0) \geq b' > b$, so since $b'$ is the divergent point, this means that Alice and Bob agree about $b$: that is, just before time $t_0$ $\ms{b}_A = \ms{b}_B$ and $\simPath_A^{(\leq b)} = \simPath_B^{(\leq b)}$.  Since $t_0$ is during the computation phase, neither $\ms{b}$ nor $\simPath^{(\leq b)}$ can change at time $t_0$ for either party, so we conclude that
    \[ \ms{b}_A(t_0) = \ms{b}_B(t_0) \qquad \text{ and } \simPath_A^{(\leq b)}(t_0) = \simPath_B^{(\leq b)}(t_0).\]
    But this contradicts the assumption that $b$ became the divergent point at time $t_0$.  In any of the three cases we have a contradiction, so we conclude that $\ell^- = 0$ right before $t_0$.

    We therefore conclude that $\ell^-=0$ before the computation phase, i.e., $\Aiter=\Biter=:p$ and for all $p' \leq p$ we have $\ms{p'}_A=\ms{p'}_B$. We now claim $p=b$. For let us consider how a bad spell can begin (which necessarily happens if $b$ becomes the divergent point during the computation phase). If both parties just computed dummy rounds, then we would still have $\Aiter=\Biter$ and, as no new mega-states were created, $\ell^-$ still equals $0$. So after the computation phase at least one party increased the length of their simulated path by $1$, implying this party's simulated path now has length $p+1$. As a bad spell began during the computation phase, either the simulated paths or the mega-states at depth $p+1$ disagree. But as these data agreed at all points up to depth $p$, it must be that $p$ becomes the divergent point during the computation phase. This implies $p=b$. 
    
    Finally, we note that if after the computation phase we have $\simPath^{(\leq \Aiter)}_A \neq \simPath^{(\leq \Biter)}_B$ then Lemma~\ref{lem:bigh-hash-corruption} implies that $\ms{\Aiter}_A \neq \ms{\Biter}_B$. Else, if $\simPath^{(\leq \Aiter)}_A = \simPath^{(\leq \Biter)}_B$, since $\ell^- > 0$ after the computation phase by definition it must be that $\ms{p}_A\neq \ms{p}_B$.

    Suppose now that $t_0$ lies in the transition phase. Note that if neither party jumps, then the divergent point cannot change, contradicting the assumption that $b$ becomes the divergent point at time $t_0$. Suppose now for a contradiction that both parties jump to the same point in the transition phase. If they both jump to a point $p \leq b$, then $b$ could not be the divergent point at time $t_0$, a contradiction. If they both jump to a point $p > b$, then for $b$ to be the divergent point after the transition it must have also been the divergent point prior to the transition, again a contradiction. 

    Thus, it cannot be that both parties jump to the same location, and that at least one party jumps. Without loss of generality, assume that Alice jumps to a point that is smaller than the point that Bob jumps to if he also jumps, or that she jumps and Bob does not. If Alice jumps to a point $p<b$, then $b$ could not be the divergent point at time $t_0$, and similarly if $p>b$ then for $b$ to be the divergent point after the transition if must have also been the divergent point prior to the transition. So the only remaining possibility is that she jumps to the point $b$ (and Bob does not). 
\item Notice that the only time \emph{outside} of an iteration that a megastate is updated is during the big hash computation line~\ref{line:big-hash-comp}. As a result, if the divergent point changes at time $t_0$, then $t_0$ must be during the big hash computation. Further, notice that if the randomness $\Rblock^b$  is not corrupted then the agreement between megastates of Alice and Bob do not change after the big hash computation. More precisely, if two megastates $\megastate_A$ and $\megastate_B$ were equal before time $t_0$, then Alice and Bob use the same inputs and same randomness to compute their big hash functions.  Then $\phash$ and $\pseed$ will be updated in the same way in the Big Hash Computation phase, and $\megastate_A$ and $\megastate_B$ we remain the same after the Big Hash Computation phase. On the other hand, if $\megastate_A \neq \megastate_B$ before time $t_0$, then according to Lemma~\ref{claim:neq-in-and-out-of-block}, $\megastate_A \neq \megastate_B$ after the Big Hash Computation phase, meaning $\megastate_A\neq \megastate_B$ at time $t_0$ as well. Thus $\Rblock^b$  must be corrupted if the divergent point changes at time $t_0$. 
  
    \item For each time $t$ we prove this statement by induction on $p$. For the base case, let $p=b+1$. Let $\megastate_A = \ms{b+1}_A $ and $\megastate_B = \ms{b+1}_B$. If $\piter_A \neq \piter_B$ then we can conclude that $\megastate_A \neq \megastate_B$ which proves our claim. On the other hand, if $\piter_A = \piter_B = I$ where $I\in [I^{(0)}, I^{(1)}] $ then Alice and Bob have both simulated during  iteration $I$. Now if $I = I^{(0)}$ then from item 1 we know that $\megastate_A \neq \megastate_B$ after time $t_0$, thus the claim is complete. 
       
       Let $p$ be any point such that $b+1 <  p \leq \min (\Aiter, \Biter)$. Define $\megastate_A = \ms{p}_A$ and $\megastate_B = \ms{p}_B$. Then similar to the base case, if $\piter_A \neq \piter_B$ then we have that $\megastate_A \neq \megastate_B$ which proves our claim. If $\piter_A = \piter_B$, then both Alice and Bob have simulated during the computation phase. Then prior to the computation phase Alice and Bob are at depth $p-1$. By induction, $\ms{p-1}_A \neq \ms{p-1}_B$. Then from Lemma~\ref{lem:ms-not-matching-sim-ms-not-matching} we know that after the simulation phase $\megastate_A \neq \megastate_B$.
    
    \item We distinguish two cases. Suppose first that $\Aiter(t)\neq \Biter(t)$. Then $\simPath_A(t) \neq \simPath_B(t)$. As we are assuming that no big hash collisions occur, Lemma~\ref{lem:bigh-hash-corruption} implies that $( H_A, R_A,\Bsimp_A , \IHA )(t) \neq (H_B,R_B, \Bsimp_B,\IHB )(t)$, which indeed implies $\ms{\Aiter(t)} \neq \ms{\Biter(t)}$.
    
   On the other hand, if $\Aiter(t)=\Biter(t)$, then we can apply Item 3 with $p=\Aiter(t)=\Biter(t)$ to deduce $\ms{\Aiter(t)} \neq \ms{\Biter(t)}$. 
    
    \item First, from Item 3, we observe that for a bad spell to end it must be that the parties are both at a point $p$ with $p \leq b$. Indeed, at any point $p>b$ we have $\ms{p}_A \neq \ms{p}_B$, implying that the bad spell is still ongoing. 

    Suppose it is not the case that both parties jumped at time $t_1$; without loss of generality, suppose Alice does not jump. If Alice simulated during the computation phase of $I^{(1)}$, then she must have been at point $p-1$ prior to this round; since $p-1<p \leq b$, this contradicts the assumption that $b$ is the divergent point. Furthermore, if she waited in this iteration (i.e., she did dummy computations in the computation phase) and $p<b$, then again we contradict the fact that $b$ was the divergent point at the start of $I^{(1)}$. We conclude that she must have been waiting at point $b$. 

    As the bad spell was still ongoing at the start of iteration $I^{(1)}$, Bob must have not been at point $b$, and as $b$ is the divergent point he must have been at a point $p>b$. Since both Alice and Bob must be at the same point at time $t_1$, it follows that Bob must have jumped to the point $b$ at time $t_1$.
    
    \item If the bad spell does not end at time $t_1$, but $b$ is no longer the divergent point after time $t_1$, then it must be that the divergent point changes at time $t_1$; let $b' \neq b$ be the new divergent point. If $b'>b$, then we know from Item 3 that $\ms{b'}_A \neq \ms{b'}_B$, implying that $b'$ cannot be a divergent point. Hence, it must be that $b'<b$. Note that as $b$ was a divergent point, by definition it must be that 
    $$\simPath_A^{(\leq b')} = \simPath_B^{(\leq b')} \text{ and } \forall p' \leq b',~\ms{p'}_A = \ms{p'}_B \ .$$

    First assume that $t_1$ is during an iteration, and denote this iteration by $I_1$.  Then
    as $b'$ is now the divergent point, we either have that (a) at least one party's simulated path is of length most $b'$, or that(b)  both parties have simulated paths of length at least $b''>b'$, but the simulated paths disagree at depths $b'+1,\dots,b''$.   As $b$ was the divergent point before time $t_1$, it must be that the simulated paths had agreed at depths $b'+1,\dots,b$, showing that case (b) case is in fact impossible, and (a) holds: at least one party's simulated path has length at most $b'$.  Suppose without loss of generality that this party was Alice; then Alice's simulated path had length \emph{exactly} $b'$ at time $t_1$ (or else the divergent point would be higher).  Since $\Aiter \geq b > b'$ before time $t_1$, it must be that she jumped back to the point $b'$ at time $t_1$. We also note that both Alice and Bob could not have both jumped back to $b'$, as otherwise---appealing to the proof of the previous part---the bad spell would have ended at time $t_1$. Thus, one party jumped back to $b'$, which must occur during the transition phase of Algorithm~\ref{alg:adaptive}. It follows that $t_1$ is during the transition phase of Algorithm~\ref{alg:adaptive} and that the new divergent point is $b'$ with $b'<b$, as claimed. 
    
    Next, assume that $t_1$ is not during an iteration.  By Item 2, $t_1$ must be during the Big Hash Computation phase at the end of the block.  As no megastates are created during the Big Hash Computation phase, there must be two megastates $\megastate_A$ and $\megastate_B$ such that $\megastate_A = \megastate_B$ prior to Big Hash Computation phase and $\megastate_A \neq \megastate_B$ after the Big Hash Computation phase. If $\Rblock^b$ is not corrupted, then with a similar argument to Item 2, we have that the agreement/disagreement status between megastates of Alice and Bob do not change during the big hash computation. Hence, $\Rblock^b$ must be corrupted. 
\end{enumerate}

\end{proof}

We begin by giving an upper bound for the number of iterations that either have corrupted randomness or corrupted communication. In particular, we show that the they are both on the order of $\eps d$, which is the adversary's corruption budget. 
\begin{lemma} \label{lem:bound-corrupted-iterations} Define $Q$ as  the total number of iterations with either corrupted communication or corrupted randomness. Then, $ Q = O( \eps d)$.
\end{lemma}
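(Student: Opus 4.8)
The plan is to charge each of the $Q$ iterations to corruptions the adversary actually spends, with essentially no double‑charging, and then to invoke the adversary's budget $\eps|\Pi'|$ together with the (purely deterministic) bound $|\Pi'| = O(d)$ that is immediate from the parameter settings in Algorithm~\ref{alg:init}.

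First I would localize the corruptions. For each iteration $I\in[\Itotal]$ let $c_I$ be the number of bits the adversary flips among the communication rounds of iteration $I$ (everything between Line~\ref{line:iterstart} and Line~\ref{line:iterend}: the $\Riter$ exchange, the verification‑phase hash exchanges, and the $r$ computation rounds), and for each block $B\in[\Btotal]$ let $b_B$ be the number of bits flipped during the \textsc{PseudoRandExchange} of Line~\ref{line:randex} at the start of block $B$. These windows, and the Big Hash Computation phases, are pairwise disjoint over the whole run of $\Pi'$, so $\sum_{I}c_I+\sum_{B}b_B\le \eps|\Pi'|$.

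Next I would split $Q$ according to whether $c_I\ge 1$ or $c_I=0$. Iterations with $c_I\ge 1$ are at most $\sum_I\mathbf{1}[c_I\ge 1]\le \sum_I c_I$ in number. Now suppose iteration $I$ is counted in $Q$ but $c_I=0$. Since the $\Riter$ exchange happens entirely within iteration $I$ along a fixed, agreed‑upon speaking order and is protected by the code of Theorem~\ref{thm:goodECC}, having no flipped bit in iteration $I$ forces $(\Riter)_A=(\Riter)_B$ at the end of the iteration; hence the only way iteration $I$ can have corrupted randomness is $(\Rblock^s)_A\neq(\Rblock^s)_B$, i.e.\ the block $B$ containing $I$ has corrupted block randomness. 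Because $\Rblock^s=\texttt{extend}(R')$ with $R'$ exchanged through the distance‑$\ge 4\Iblock$ code (which corrects every error of Hamming weight $\le 2\Iblock$), $R'_A\neq R'_B$ forces $b_B\ge 2\Iblock+1$; and a single such block contributes at most its $\Iblock$ iterations to $Q$. Therefore the number of $c_I=0$ iterations in $Q$ is at most $\Iblock\cdot\bigl|\{B: b_B\ge 2\Iblock+1\}\bigr|\le \Iblock\cdot\frac{\sum_B b_B}{2\Iblock+1}\le\tfrac12\sum_B b_B$. Adding the two contributions, $Q\le \sum_I c_I+\tfrac12\sum_B b_B\le \eps|\Pi'|=O(\eps d)$.

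The one genuinely delicate step is the second bucket: one must argue carefully that an iteration with no in‑iteration corruption but with corrupted randomness owes this to block‑level randomness corruption rather than to $\Riter$ (which uses both the ECC protection and the fact that the two parties never disagree about who sends during that fixed phase, so there are no "free" corruptions), and then amortize $\Iblock$ iterations against the $\ge 2\Iblock+1$ corruptions the adversary must have paid on that block's \textsc{PseudoRandExchange}. The remaining ingredients—disjointness of the corruption windows, and the bound $|\Pi'| = O(d)$ obtained by summing the per‑iteration length $r+O(r_c)$ over $\Itotal=O(d/r)$ iterations and the per‑block overhead $O(\log d+\log(1/\eps))$ bits over $\Btotal=O(d/(r\Iblock))$ blocks—are routine bookkeeping.
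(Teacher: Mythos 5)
Your proof is correct and follows essentially the same strategy as the paper's: iterations with any in‑iteration corruption are bounded directly by the adversary's budget, while iterations whose only corruption is block‑level randomness are amortized against the $\geq 2\Iblock$ corruptions the code of Theorem~\ref{thm:goodECC} forces the adversary to spend per corrupted block, with each such block contributing at most $\Iblock$ iterations. Your writeup is a bit more carefully bookkept about disjointness and avoids some loose double‑counting in the paper's exposition, but the key amortization is identical.
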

\begin{proof}
    We count the number of iterations suffering from each type of corruption separately. 
    For an iteration to have a corrupted randomness either the randomness shared at the start of the block is corrupted or the randomness shared at the start of the iteration is corrupted. The total number of iterations having a corrupted  randomness on iteration level is at most $2 \eps d $ as this is the maximum number of iterations in which the adversary can introduce corruptions. To count the number of iterations with corrupted randomness shared at block level, notice that according to Theorem~\ref{thm:goodECC} the randomness exchange is protected by an ECC with minimum distance $4 \Iblock $. Then the adversary must invest in at least $2 \Iblock$ corruptions to corrupt this randomness. As a result there are at most $ \frac{2  \eps d }{2 \Iblock }$ blocks having corrupted randomness which means that there are at most $ \frac{ \eps d }{\Iblock} \Iblock = \eps d $ iterations having corrupted randomness.
    Finally, similar to randomness exchanged at iteration level, there are at most $ 2 \eps d $ iterations having corrupted communication. Adding all cases together, there are at most $ O( \eps d )$ iterations having either corrupted randomness or corrupted communication, i.e., $ Q = O ( \eps d )$ as claimed.
\end{proof}

Finally, we prove a few useful lemma about meeting points.

\begin{lemma}[jumps go above stable points]\label{lem:MP-jump}

Let $p$ be a $w$-stable point, and suppose that $p$ is removed from Alice's meeting point set $\M_A$ at time $t$.  
Suppose that $\tjmp$ is the next time after $t$ that Alice jumps to a point $q < p + 2^w$.  Then $q \leq p$.
\end{lemma}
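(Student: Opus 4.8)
The plan is to argue that the point $q$ Alice jumps to at time $\tjmp$ must already have been in her memory $\M_A$ at time $t$, and then to observe that at time $t$ her memory contains no point in the window $(p,p+2^w)$ at all. First I would pin down what the removal at $t$ means. Just before $t$ we have $p\in\M_A$, and since $\M_A\subseteq M_{\ell_A}$ at all times and $p$ is $w$-stable, Lemma~\ref{lem:inMa} gives $\ell_A\in\{p,\dots,p+2^{w+1}-1\}$ just before $t$. The set $\M_A$ changes only inside \textsc{MaintainAvMPs}, and $p$ leaves $\M_A$ there exactly when $p\notin M_{\ell_A(t)}$; combined with the bound on the depth just before $t$ and the fact that a simulation step raises the depth by one, this forces $\ell_A(t)=p+2^{w+1}$ — the situation in which a stable point is forgotten by diving past it, which is the case the lemma concerns.

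Next I would show the depth stays high, so the memory below $p+2^w$ only shrinks. I claim $\ell_A(s)\ge p+2^w$ for every $s\in[t,\tjmp)$, and also at the instant the jump at $\tjmp$ is taken: indeed $\ell_A(t)=p+2^{w+1}\ge p+2^w$, simulation steps only raise the depth, and every jump strictly between $t$ and $\tjmp$ goes to a point $\ge p+2^w$ because $\tjmp$ is \emph{the next} jump after $t$ to a point below $p+2^w$. Hence on this interval Alice never simulates at a depth below $p+2^w$; since \textsc{MaintainAvMPs} only ever inserts the current depth into $\M_A$ (line~\ref{line:updateMem}) and otherwise only deletes points, no point of $[0,p+2^w)$ is inserted into $\M_A$ between $t$ and the jump at $\tjmp$, so $\M_A\cap[0,p+2^w)$ is nonincreasing on that interval. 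A jump is taken only to an \emph{available} point, so $q\in\M_A$ at the instant of the jump (line~\ref{line:back-jump}); as $q<p+2^w$, it follows that $q\in\M_A(t)$.

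Finally I would compute $\M_A(t)\cap[0,p+2^w)$. Since $\M_A(t)\subseteq M_{\ell_A(t)}=M_{p+2^{w+1}}$, it suffices to show $M_{p+2^{w+1}}\cap(p,p+2^w)=\emptyset$; in fact $M_{p+2^{w+1}}\cap[0,p+2^w)\subseteq[0,p)$. A point of $M_{p+2^{w+1}}$ has the form $\lfloor p+2^{w+1}\rfloor_{2^j}-2^j$. For $j\le w$ we have $2^j\mid p$ and $2^j\mid 2^{w+1}$, so this equals $p+2^{w+1}-2^j\ge p+2^{w+1}-2^w=p+2^w$, which is not below $p+2^w$. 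For $j\ge w+1$, the quantity $\lfloor p+2^{w+1}\rfloor_{2^j}-2^j$ is nonincreasing in $j$, so the point is at most its value at $j=w+1$; writing $p=2^w(2m+1)$ (legitimate since $p$ is $w$-stable) one gets $\lfloor p+2^{w+1}\rfloor_{2^{w+1}}=2^{w+1}(m+1)$, hence the point equals $2^{w+1}m=p-2^w<p$. Thus every element of $M_{p+2^{w+1}}$ lying below $p+2^w$ is in fact $\le p-2^w<p$, so $q<p$, and a fortiori $q\le p$, which proves the lemma.

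The main thing to get right is the first step — correctly reading off that a removal of the $w$-stable point $p$ (by the only relevant mechanism) forces $\ell_A(t)=p+2^{w+1}$ — together with the elementary but slightly fiddly arithmetic of the last step showing that the meeting points of depth $p+2^{w+1}$ ``skip over'' the whole open interval $(p,p+2^w)$. That skipping is exactly what $w$-stability of $p$ buys: it is what makes the coarse ($j\le w$) meeting points land at $p+2^w$ or higher and the fine ($j\ge w+1$) ones land at $p-2^w$ or lower, leaving nothing in between.
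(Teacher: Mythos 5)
Your proof is correct and takes essentially the same approach as the paper: pin Alice's depth at the removal time $t$ to $p+2^{w+1}$, deduce $q\in\M_A(t)\subseteq M_{p+2^{w+1}}$ (up to the harmless extra point $\{\ell_A(t)\}$, which can't be $q$), and observe that $M_{p+2^{w+1}}$ contains nothing in $(p,p+2^w)$. The presentational differences --- you argue $q\in\M_A(t)$ directly via monotonicity of $\M_A\cap[0,p+2^w)$ on $[t,\tjmp)$ and compute $M_{p+2^{w+1}}$ by hand, whereas the paper runs a contradiction and appeals to Lemma~\ref{lem:inMa} applied to the stability of $q$ --- do not amount to a different route.
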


\begin{proof}
    Suppose towards a contradiction that $q > p$.  
    Suppose that $q$ is $u$-stable, for some $u < w$; note that we must indeed have $u < w$, because by assumption $q \in (p, p + 2^w)$ and thus lies strictly between two consecutive multiples of $2^w$.
    
    By Lemma~\ref{lem:forget-avmps-z}, the point $p$ will be removed from Alice's memory when she first reaches a depth of $c_p = p + 2^{w+1}$.
    
    Now we claim that $c_p < q + 2^{u+1}$.  To see this, we first observe that Lemma~\ref{lem:inMa} implies that $q \in M_{c_p}$ if and only if $c_p \in [q, q + 2^{u+1} - 1],$ so as long as $q \in M_{c_p}$, we will have $c_p < q + 2^{u+1}$. 
    
    Next we show that $q \in M_{c_p}$.  
    As $q \in \M_A(\tjmp)$ (since Alice jumped to $q$ at that time), if $q \not\in M_{c_p} = M_{\Aiter(t)}$, then Alice had ``forgotten'' $q$ by time $t$, so we must have $\Aiter(t') = q$ for some $t' \in (t, \tjmp)$, so that $q$ can have been added back before Alice jumped there.  But the only way to add $q$ back would have involved jumping to a point $q' \leq q$ at some time $t'' \in (t, t'] \subseteq (t, \tjmp)$, contradicting our assumption that $\tjmp$ was the next time after $t$ that Alice jumped to  a point above $p + 2^w$.
    
     Therefore, $q \in M_{c_p}$, and we conclude by the reasoning above that $c_p < q + 2^{u+1}$.  Since $u < w$ by the above, we see that $c_p < q + 2^w.$  

    Thus, recalling the definition of $c_p$, we have
    \[p + 2^{w+1}  \leq c_p < q + 2^w,\]
    implying that $q > p + 2^w.$  However, this is false, since we have assumed that $q < p+2^w$.
    
\end{proof}

Suppose that Alice and Bob are voting on a shared meeting point, so the parameters $k_A$ and $k_B$ are increasing.  When $k_A = k_B \in \{2^j, \ldots, 2^{j+1} - 1\}$, they are ``voting'' for a shared scale-$j$ transition candidate, if it exists.
Intuitively, if Alice and Bob ``skip'' such a transition candidate (that is, they do not jump when $k_A, k_B$ are in this window), then the adversary had to introduce on the order of $2^j$ corruptions in order to make them do that. We formalize this intuition in the following lemma. 

\begin{lemma}[The adversary must pay to ``skip'' a meeting point] \label{lem:skipsCostAdv} 

    Fix a time $\tsp$ in iteration $\Isp = \I(\tsp)$, and suppose that  $k_A(\tsp) = k_B(\tsp) =: k$.  
    Suppose that there is some $j < \lfloor \log k \rfloor$ and some point $q$ so that 
    \begin{equation} \label{eq:needfromq}
    q \in \MPalljl(j, \Aiter(\tsp)) \cap \MPalljl(j, \Biter(\tsp)),
    \end{equation}
    and that $q$ is an available jumpable point for Alice and Bob at time $\tsp$. 
     Then there are at least $(0.6) 2^{j} $  iterations  $I$  in the window $ W = \{\Isp -k +1, \Isp - k + 2, \ldots, \Isp\}$ such that during iteration $I$, $\BVC_{AB}$ increases by at least one. In particular $\BVC_{AB}( \tsp ) \geq (0.6) 2^{j}$.

\end{lemma}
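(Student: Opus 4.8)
The plan is to zoom in on the block of $2^j$ consecutive iterations inside $W$ during which Alice and Bob vote at scale exactly $j$, and to show that unless the adversary forces $\BVC_{AB}$ to increment in $0.6\cdot 2^j$ of them, both parties accumulate enough votes for $q$ to be \emph{forced} to jump (to $q$, or to another scale-$j$ candidate) at the end of that block --- and that such a jump is incompatible with $q$ still being an available scale-$j$ transition candidate at the later time $\tsp$, contradicting the hypothesis.

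\emph{Set-up.} Since $j<\lfloor\log k\rfloor$ we have $2^{j+1}\le k$. Write $I_m$ for the iteration of $W$ in which the counter equals $m$ (just after Line~\ref{line:iterstart}); then the $2^j$ iterations $S:=\{I_{2^j},\dots,I_{2^{j+1}-1}\}$ all lie in $W$ and have $j_A=j_B=j$. The counter $k$ is never reset inside $W$ (it climbs $1,2,\dots,k$), so neither party simulates $\Pi$ anywhere in $W$; hence depths are non-increasing throughout $W$, and inside $S$ no jump occurs before the transition phase of $I_{2^{j+1}-1}$ (a jump needs $k=2^{j'+1}-1$, which in $S$ holds only for its last iteration). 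Consequently $\ell_A,\ell_B$, the memories $\M_A,\M_B$, and all stored mega-states are frozen throughout $S$ up to that final transition phase; write $\ell_A^S,\ell_B^S$ for the frozen depths. Since $\tsp$ lies after $S$ within $W$ and depths only decrease, $\ell_A^S\ge\Aiter(\tsp)$ and $\ell_B^S\ge\Biter(\tsp)$.

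\emph{Core argument.} First I would transfer the hypothesis back to $S$: using the freezing above, Corollary~\ref{cor:transcript-ms}, and the forgetting behaviour of meeting points (Lemmas~\ref{lem:inMa},~\ref{lem:forget-avmps-z},~\ref{lem:MP-jump}) to control which points can lie in $\M_A,\M_B$ between $S$ and $\tsp$, one shows that throughout $S$ the point $q$ already satisfies $q\in\MPalljl(j,\ell_A^S)\cap\MPalljl(j,\ell_B^S)$, $q\in\M_A\cap\M_B$, and $\ms{q}_A=\ms{q}_B$ --- i.e. $q$ is a jumpable point of scale $j$ throughout $S$, appearing as some $\ttq_{i_A}$ for Alice and some $\ttq_{i_B}$ for Bob with $\ttq_{i_A}=\ttq_{i_B}=\ms{q}$ and, since positions are frozen, with $i_A,i_B$ fixed over $S$. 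Call an iteration of $S$ \emph{clean} if $\BVC_{AB}$ does not increase in it; by Definition~\ref{def:BVC}, in a clean iteration each $\vi{i}$ increments exactly when $\ttq_i$ matches one of the other party's three candidates, so in particular $\vi{i_A}_A$ and $\vi{i_B}_B$ both increment. Suppose now, for contradiction, that fewer than $0.6\cdot 2^j$ iterations of $S$ are non-clean, i.e. more than $0.4\cdot 2^j$ are clean. Then at the start of the transition phase of $I_{2^{j+1}-1}$ both $\vi{i_A}_A$ and $\vi{i_B}_B$ exceed $0.4\cdot 2^j$; since $q$ is available to both and $k=2^{j+1}-1$, the guard on Line~\ref{line:back-jump} fires and \emph{both} parties jump, each either to $q$ or to a candidate examined earlier in the $i=3,2,1$ loop that also clears the threshold and is available. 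In every case this leaves Alice at a depth $p_A\le\ell_A^S-2^j$ (every scale-$j$ candidate for $\ell_A^S$ is at most $\ell_A^S-2^j$), and symmetrically for Bob. As depths are non-increasing for the rest of $W$, $\Aiter(\tsp)\le p_A\le\ell_A^S-2^j$; combining with $q\in\MPalljl(j,\Aiter(\tsp))$ (whose elements are at most $\Aiter(\tsp)-2^j$) and then chasing the meeting-point structure together with the $\M$-update rule of \textsc{MaintainAvMPs} (using that no $a$ has $a\in M_a$, and Lemmas~\ref{lem:forget-avmps-z},~\ref{lem:MP-jump} to see that after this jump $q$ is either dropped from $\M_A$ or no longer a scale-$j$ candidate for the new, smaller depth) produces a contradiction. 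Hence at least $0.6\cdot 2^j$ iterations of $S\subseteq W$ increment $\BVC_{AB}$. Finally, since no vote reset fires inside $S$ before its closing transition phase (the reset on Line~\ref{line:reset2} never fires in $W$ because $k$ is never reset there, and Line~\ref{line:reset-vi} is reached in $S$ only at $I_{2^{j+1}-1}$), these increments survive, which is what yields the ``in particular'' bound $\BVC_{AB}(\tsp)\ge 0.6\cdot 2^j$.

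The step I expect to absorb most of the work --- and the main obstacle --- is the first move of the core argument: faithfully transferring ``$q$ is an available jumpable point of scale $j$ at time $\tsp$'' back to the window $S$, and ruling out that the intervening jumps could have re-introduced $q$ into $\M_A,\M_B$ as a scale-$j$ candidate in a way that dodges the final contradiction. Handling this cleanly requires careful use of the fact that depths are non-increasing through all of $W$ (so in particular no re-simulation occurs) together with the exact forgetting dynamics of meeting points in Lemmas~\ref{lem:inMa},~\ref{lem:forget-avmps-z}, and~\ref{lem:MP-jump}.
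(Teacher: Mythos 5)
Your proof has the same skeleton as the paper's (the paper argues directly; you argue the contrapositive), but there are two issues.

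First, the contradiction you reach (``both jump, which is incompatible with $q$ still being a scale-$j$ candidate at $\tsp$'') is far more elaborate than the paper needs, and the extra machinery is where the difficulty hides. The paper's proof takes as given that $\ell_A,\ell_B,\M_A,\M_B$ are \emph{frozen} throughout $W$: since $k_A(\tsp)=k_B(\tsp)=k$, neither counter is ever reset in $W$, so both parties run dummy rounds in every iteration of $W$, and the paper uses this to assert that no jump occurs anywhere in $W$ (the same premise appears in Definition~\ref{def:votewindow}). Granting that, the argument is one step: at the iteration with $k=2^{j+1}-1$ Alice did not jump, so --- since $q=\MPi_A$ was available --- $\vi{i}_A < 0.4\cdot 2^j$; and in each of the $2^j$ iterations at scale $j$, either $\vi{i}_A$ increments or (per Definition~\ref{def:BVC}, since $q\in\{\ttq_{1,B},\ttq_{2,B},\ttq_{3,B}\}$) $\BVC_A$ does, so $\BVC_A$ increments in at least $0.6\cdot 2^j$ iterations. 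In this framing the ``transfer back to $S$'' step you flagged as the main obstacle is vacuous ($\ell_A^S = \Aiter(\tsp)$), and the contradiction after ``both jump'' is one line (a jump contradicts the frozen state). Your more cautious reading --- depths only non-increasing across $W$ --- is a defensible worry given that Line~\ref{line:back-jump} as written does not reset $k$, but the paper's analysis everywhere treats $k_A(\tsp)=k$ as ruling out any jump during $W$, and you should align with that rather than try to route around it.

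Second, a concrete error: the parenthetical ``every scale-$j$ candidate for $\ell_A^S$ is at most $\ell_A^S-2^j$'' is false. It holds for $\MPone$ and $\MPtwo$, but $\MPthree(j,a)$ --- the deepest multiple of $2^j$ lying in $M_a$ --- can be as close as $a-1$ (for example $a=9$, $j=3$ gives $\MPthree=8$). Having a nearby point that nevertheless ``counts'' like one that is $2^j$ away is exactly why $\MPthree$ was introduced (see Section~\ref{sec:tech2}). So after a jump to $\MPthree$, Alice's depth may barely decrease and the depth bound you rely on to derive the contradiction breaks; the contradiction you sketch does not go through as stated.
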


\begin{proof}
    First, notice that, because $k_A(\tsp) = k_B(\tsp)$, $k_A(I) = k_B(I)$ for each $I \in W$; indeed, in each iteration, $k_A$ and $k_B$ either increase by one or get reset to zero, so the only way $k_A(\tsp) = k_B(\tsp) = k$ is if each have increased by one in each iteration for the past $k$ iterations. Further, this logic implies that both Alice and Bob ran dummy rounds (that is, Line \ref{line:dummy-rounds} in Algorithm~\ref{alg:adaptive} was executed) during the window $W$, because otherwise $k_A$ or $k_B$ would have been reset to zero.  In particular none of the quantities $\M_A, \M_B, \ell_A,$ or $\ell_B$ change during the window $W$. Further, $\BVC_{A}$ and $\BVC_B$ do not decrease during the window $W$, as by Definition~\ref{def:BVC}, $\BVC$ only decreases when either Alice or Bob reset their status, meaning that either $k_A$ or $k_B$ gets reset to zero, which by the above does not happen during $W$.

    This discussion, along with \eqref{eq:needfromq}, implies that
    \begin{equation}\label{eq:wtsq}
    \forall I \in W \text{ s.t. } j_A(I) = j_B(I) = j,  \qquad q \in \MPall_A(I) \cap \MPall_B(I).
    \end{equation}
    That is, for all iterations $I$ in the last $k$ iterations before $\Isp$ such that $j_A = j_B = j$ during iteration $I$, $q$ is a transition candidate for both Alice and Bob.

    Now consider the iterations $I^\ast$ so that $k_A(I^*), k_B(I^*) \in [2^j, 2^{j+1} -1 ]$ (observing that all of these iterations lie in $W$, due to our assumption that $j < \lfloor \log k \rfloor$).
    
    Note that for these iterations, we have $j_A(I^\ast) = j_B(I^\ast) = j$.  We claim that $\BVC_{AB}$ increases by at least one for at least $(0.6) 2^j$ of these iterations.  This will establish the lemma,
    using the fact that $\BVC_{AB}$ cannot decrease in the window $W$. To establish the claim, fix such an $I^\ast$ and let $i\in \{1,2,3\}$ be such that $q = \MPi_A$ during $I^*$.  Then since, by \eqref{eq:wtsq},  $q\in \MPall_B(I^*)$, the only way that $\vi{i}_A$ does not increment in line \ref{line:updatevi} is if $\BVC_A$ increments (see the Definition \ref{def:BVC} of $\BVC$, and recall that since $q$ is an available jumpable point, this means that not only is $q$ available for both Alice and Bob, but also the mega-state it corresponds to is the same for both of them).
    If $\vi{i}_A > (0.4) 2^j$ at iteration $2^{j+1}-1$, Alice would have jumped to $q$ in line \ref{line:back-jump} and she did not, so we conclude $\vi{i}_A$ did not increment, and hence $\BVC_A$ did, for at least $(0.6) 2^j$ iterations.  Using the definition that $\BVC_{AB} = \BVC_A + \BVC_B$, this proves the claim and the lemma. 
\end{proof}

\subsection{Sneaky Attack and Main Technical Lemma}\label{sec:maintech}

In this section, we formally define a sneaky attack, and state and prove several useful lemmas about sneaky attacks.  
As discussed in Section~\ref{sec:tech}, a sneaky attack is the only exception to the intuition that whenever $k_A = k_B$ gets large (relative to $L^-$), then the adversary has to ``pay'' by increasing $\BVC$.

We also state our main technical lemma, Lemma~\ref{lem:maintech}, which roughly says that \emph{almost} all of the time that $k_A = k_B$ gets large (relative to $L^-$), the adversary had to pay by increasing $\BVC_{AB}$.  

We begin with an auxiliary definition, and then formally define a sneaky attack.

\begin{definition}[The constant $\csneak$]\label{def:csneak}
Fix a constant $\csneak \in \mathbb{Z}$, so that $\csneak \geq 3$.  This constant will appear in the definition of a \protosneaky below, and we will choose $\csneak$ to be sufficiently large later on in the analysis.
\end{definition}

\begin{definition}[\Protosneaky]\label{def:sneaky}
Fix a time $\tsp$ and suppose that $k_A(\tsp) = k_B(\tsp) =: k$.
Suppose that  
$\ell^-(\tsp) > 0$.
Let $j_p$ be the next \specialscale at time $\tsp$, and 
suppose that $p$ is a next available \specialpt.
Let $\Aiter = \Aiter(\tsp)$ and $\Biter = \Biter(\tsp)$.
Let $w$ be such that  $p$ is a scale-$w$ MP for Alice.   

Let $\hat{p} = p + 2^w$, $q = p + 2^{w-1}$, and  $c_q = q + 2^{w} $. 
Suppose that $b$ is the divergent point at time $\tsp$.
We say that a \defn{\protosneaky for Alice heading towards $p$ is in progress at time $\tsp$} 
if $k \leq 2^{w+1}$, and 
if 
there are times $t_{\hat{p}} < t_{c_q} < t_{\hat{q}} < t_b < \tsp$ so that all of the following hold: 
     \begin{itemize}

            \item $\Aiter(t_{c_q}) = c_q$.  Further, $t_{c_q}$ is the last time that this happens before $\tsp$.  That is,
            \[ t_{c_q} = \max\{t \leq \tsp \,:\, \Aiter(t) = c_q\}.\]

             \item $t_{\hat{p}} = \max \{t < t_{c_q} \,:\, \Aiter(t) = \hat{p}\}$ is the last time that Alice passes $\hat{p}$ before $t_{c_q}$.
            \item At time $t_{\hat{q}}$, at least one of Alice and Bob jumps to a meeting point $\hat{q} \geq \hat{p}$ 
            to end a bad spell; in particular, $\ell^-(t_{\hat{q}}) = 0$, $\Aiter(t_{\hat{q}}) =\Biter(t_{\hat{q}}) = \hat{q}$.\footnote{As per Lemma~\ref{lem:noweirdcoincidences}, either \emph{both} Alice and Bob jumped to $\hat{q}$ simultaneously; or only one of them did, but the other was already there and simulated dummy rounds.}  Moreover, $t_{\hat{q}}$ is the first time that $\ell^-$ is reset to zero after time $t_{c_q}$, and 
            $\hat{q} = \hat{p}$.%
            \footnote{The reason we give $\hat{q}$ a different name (even though it is equal to $\hat{p}$) is because the time $t_{\hat{p}}$ is already defined, so we use $t_{\hat{q}}$ to denote the time of the jump, and use the letter $\hat{q}$ for the point itself to not cause confusion.  We will use the fact that $\hat{q} \geq \hat{p}$ often, but the fact that $\hat{q} = \hat{p}$ only occasionally.}

             \item Bob jumps to $b$ at time $t_b$, and further $b\geq \hat{p} - 2 ^{w-c^\ast}$. The divergent point becomes $b$ at time $t_b$, and $b$ remains the divergent point for all times in $[t_b, \tsp]$. 
          
            \item For all $t' \in [t_{\hat{p}}, t_b)$, $\Biter(t') < \hat{p} + 2^{w-\csneak}$; and $\Biter(\tsp) < \hat{p} + 2^{w-2}$.

            \item At time $\tsp$, $q$ is not in Alice's memory.  That is, $q \not\in \M_A$ at time $\tsp$.   
    
        \end{itemize}

A \defn{\protosneaky for Bob} is defined analogously (switching the roles of Alice and Bob in the above); note that in this case $p$ would be a scale-$w$ MP for Bob, rather than for Alice.

\end{definition}

We refer the reader to Figure~\ref{fig:sneaky} for a picture of a \protosneakyf. 

\begin{observation}
We make the following observations about the definition of a sneaky attack.
\begin{enumerate}
    \item[(a)] \label{obs:cqforget} Observe that $t_{c_q}$ is the time at which Alice will ``forget'' $q$.   Indeed, 
    $q$ is $(w-1)$-stable, as it lies between two multiples of $2^w$, and Lemma~\ref{lem:forget-avmps-z} implies that Alice forgets it at depth $q +2^w = c_q$.

    \item[(b)] \label{obs:Alice-Depth-Sneaky-Attack} Observe that for any time $t \in [t_{\hat{p}}, \tsp]$, we have $\Aiter(t) \geq \hat{p}$.  Indeed, by the definition of $t_{\hat{p}}$, this holds for any $t \in [t_{\hat{p}}, t_{c_q}]$; otherwise since $c_q > \hat{p}$, Alice would have had to cross $\hat{p}$ at some time in $[t, t_{c_q}]$, contradicting the choice of $t_{\hat{p}}$ as the last time before $t_{c_q}$ that this occured. 
    
    Now suppose that for some $t \in [t_{c_q}, \tsp]$, we have $\Aiter(t) < \hat{p}$; say $t$ is the first such time.  By Lemma~\ref{lem:inMa}, none of the points between $q$ and $\hat{p} - 1$ are available for Alice after she has passed $c_q$, as $c_q \geq q + 2^w$.
    Since Alice only jumps to available meeting points, 
    this means that the first time after $t_{c_q}$ that Alice jumps above $\hat{p}$, she in fact must jump above $q$.  This implies that $\Aiter(t) < q$.  
    However, at time $\tsp > t$, $p$ is an order-$w$ MP for Alice, which means that $\Aiter(\tsp) > p + 2^w = \hat{p}$.  Thus, there was some time $t^* \in (t, \tsp)$ so that $\Aiter(t^*) = q$.  But then $q$ will be re-added to $\M_A$ at time $t^*$.  Since $t_{c_q} < t^*$ was the last time before $\tsp$ that Alice reached $c_q$, part (a) of the Observation implies that $q \in \M_A$ at time $\tsp$ as well.  But this contradicts the last part of Definition~\ref{def:sneaky}.
   
\end{enumerate}
   
\end{observation}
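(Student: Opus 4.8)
The plan is to establish the two items in order, with item~(a) feeding directly into item~(b).

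\emph{Item (a).} First I would record the arithmetic fact that $q$ is $(w-1)$-stable. Since $p$ is a scale-$w$ meeting point for Alice at time $\tsp$, Definition~\ref{def:MP} gives $p = \lfloor \Aiter(\tsp) \rfloor_{2^w} - 2^w$, which is a multiple of $2^w$; hence $q = p + 2^{w-1}$ is an \emph{odd} multiple of $2^{w-1}$, so $2^{w-1}\mid q$ but $2^w\nmid q$, i.e. $q$ is $(w-1)$-stable in the sense of Definition~\ref{def:stable}, lying strictly between the consecutive multiples $p$ and $\hat p = p+2^w$ of $2^w$. Applying Lemma~\ref{lem:forget-avmps-z} with $j=w-1$ then says: measured from the last time Alice simulated/reached depth $q$, the point $q$ leaves $\M_A$ exactly when Alice's depth first reaches $q + 2^{(w-1)+1} = q+2^w = c_q$; since $t_{c_q}$ is by definition the last time $\le \tsp$ with $\Aiter = c_q$, this is the ``forgetting'' time referred to in the statement. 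I would also flag the consequence needed below: by Lemma~\ref{lem:forget-avmps-z}, once Alice has last reached $c_q$, $q\notin\M_A$ at every subsequent time until she is again at depth $q$.

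\emph{Item (b).} I would split $[t_{\hat p},\tsp]$ into $[t_{\hat p},t_{c_q}]$ and $(t_{c_q},\tsp]$ and argue $\Aiter \ge \hat p$ on each, the running principle being \emph{depth monotonicity}: across the protocol $\Aiter$ changes only by $+1$ (an $r$-round simulation), $0$ (a dummy iteration), or a backward jump, so whenever $\Aiter$ goes from a value $<v$ to a value $\ge v$ it must equal $v$ at some intermediate time. On $[t_{\hat p},t_{c_q}]$: if $\Aiter(t)<\hat p$ for such a $t$, then since $\Aiter(t_{c_q})=c_q>\hat p$, depth monotonicity gives $\Aiter=\hat p$ at some time in $(t,t_{c_q})\subseteq(t_{\hat p},t_{c_q})$, contradicting $t_{\hat p}=\max\{t'<t_{c_q}:\Aiter(t')=\hat p\}$. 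For $(t_{c_q},\tsp]$, suppose toward a contradiction that $\Aiter(t)<\hat p$ for some $t$ there, and take $t$ minimal. This drop cannot be a simulation step, so it is caused by a jump at time $t$; moreover on $(t_{c_q},t)$ we have $\Aiter\ge\hat p>q$, so $q$ is never re-added after its removal at $t_{c_q}$, the jump at $t$ is the first jump after $t_{c_q}$ landing below $\hat p = q+2^{w-1}$, and $q\notin\M_A$ just before it. Now I invoke Lemma~\ref{lem:MP-jump}, instantiating its stable point by $q$ (which is $(w-1)$-stable), its removal time by $t_{c_q}$, and its jump time by $t$: the destination of this jump is $\le q$, so $\Aiter(t)\le q$.

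The last step, which I expect to need the most care to phrase cleanly, is turning this into a contradiction with the final clause of Definition~\ref{def:sneaky} (that $q\notin\M_A$ at $\tsp$). Because $p$ is a scale-$w$ meeting point for Alice at $\tsp$, we have $\Aiter(\tsp)\in[\hat p,\hat p+2^w)$, so $\Aiter(\tsp)\ge\hat p>q\ge\Aiter(t)$; by depth monotonicity Alice is at depth exactly $q$ at some time $t^\ast\in(t,\tsp)$, and the \textsc{MaintainAvMPs} call at $t^\ast$ (Algorithm~\ref{alg:Maintain-AvMPs}, Line~\ref{line:updateMem}) inserts the current depth, so $q\in\M_A$ right after $t^\ast$. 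Finally, $q$ is not forgotten again before $\tsp$: by Lemma~\ref{lem:forget-avmps-z} that would require $\Aiter$ to reach $c_q$ at a time in $(t^\ast,\tsp]$, but $\Aiter(t^\ast)=q<c_q$, so depth monotonicity would force $\Aiter=c_q$ at a time strictly between $t_{c_q}$ and $\tsp$, contradicting $t_{c_q}=\max\{t\le\tsp:\Aiter(t)=c_q\}$. Hence $q\in\M_A(\tsp)$, contradicting Definition~\ref{def:sneaky} and completing item~(b). All the remaining work is routine depth-monotonicity bookkeeping; the one genuinely load-bearing external input is Lemma~\ref{lem:MP-jump}, which packages exactly the meeting-point combinatorics making ``a jump below $\hat p$ must overshoot all the way down to $\le q$'' true.
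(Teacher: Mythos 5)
Your proof is correct and follows essentially the same route as the paper: item~(a) is handled identically, and item~(b) uses the same split of $[t_{\hat p},\tsp]$ into $[t_{\hat p},t_{c_q}]$ and $(t_{c_q},\tsp]$, with the same contradiction against the final clause of Definition~\ref{def:sneaky}. The only packaging difference is that for the overshoot step you invoke Lemma~\ref{lem:MP-jump} (which is itself proved from Lemma~\ref{lem:inMa}/Lemma~\ref{lem:forget-avmps-z}), whereas the paper argues directly from Lemma~\ref{lem:inMa} that the points in $[q,\hat p)$ are unavailable and hence the jump lands $<q$; both give what is needed. One cosmetic remark: in the edge case $\Aiter(t)=q$ your choice $t^\ast\in(t,\tsp)$ should be $t^\ast\in[t,\tsp)$ (the \textsc{MaintainAvMPs} call at the jump at time $t$ already re-adds $q$), though in fact $q\notin\M_A$ just before that jump and Alice only jumps to available points, so the strict inequality $\Aiter(t)<q$ holds and the edge case does not occur.
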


We next define a few auxiliary notions surrounding \protosneakyf{s}.  The first has to do with the ``end'' of a \protosneakyf, when Alice and Bob are voting about whether to jump to the point $p$.

\begin{definition}[Voting window; \sneakyjmpf]\label{def:votewindow}

If a \protosneaky is in progress at time $\tsp$, then by definition $k_A(\tsp) = k_B(\tsp) = k$, which implies that for iterations in the window $W = \{\Isp - k + 1, \ldots, \Isp\}$, Alice and Bob have been simulating dummy rounds and the parameters $\ell_A, \ell_B$ have not changed.\footnote{Indeed, these dummy rounds are simulated in Line~\ref{line:dummy-rounds} of Algorithm~\ref{alg:adaptive}; notice that the dummy rounds are simulated only if $k > 1$.  Further, if Line~\ref{line:dummy-rounds} is not executed, then $k$ gets reset to $0$ and hence will be $1$ at the beginning of the next iteration, so running dummy rounds is the only way that $k$ can increment over several iterations.} We call this window $W$ the \defn{voting window for the \protosneakyf.}

Suppose that, at some time $\tjmp \geq \tsp$, Alice and Bob have continued voting (that is, neither $k_A$ nor $k_B$ have been reset to $0$ between $\tsp$ and $\tjmp$), and then they successfully jump to $p$ to resolve the bad spell at time $\tjmp$.  Then we say that the \protosneaky is \defn{completed}, and we call the jump at time $\tjmp$ a \defn{\sneakyjmpf}.

We refer to the time interval $[t_{\hat{p}}, \tjmp]$ as the \defn{time window for the \protosneakyf}.

\end{definition}

The next definition has to do with the part of the \protosneaky where Alice is getting ``driven down'' (Step 2 in Figure~\ref{fig:sneaky}).

\begin{definition}\label{def:divewindow}
    Suppose that a \protosneaky towards $p$ is in progress at time $\tsp$ (say, for Alice), and let $w$ be the scale of $p$ for Alice.  Let $t_{\hat{p}}, t_{c_q}$ be as in Definition~\ref{def:sneaky}.  Define the \defn{diving window for the sneaky attack} to be the set $\divewindow$ of iterations $I$ between time $t_{\hat{p}}$ and $t_{c_q}$ where:
    \begin{itemize}
        \item There is some $\ell \in [\hat{p}, c_q]$ so that Alice simulates $\Pi$ at depth $\ell$ for the last time in $[t_{\hat{p}}, t_{c_q}]$; and 
        \item $\ell^- > 0$.
    \end{itemize}
\end{definition} 

That is, intuitively we would like to define $\divewindow$ to be the set of iterations where Alice is getting ``driven down'' below Bob.  However, Alice may jump during this phase, and for technical reasons we want her to be at a different point for every iteration in the diving window.  Thus, we just count the iterations where she is at a particular point for the \emph{last} time during this window.

\begin{remark} \label{rem:div-size}
    If $S$ is a sneaky attack with a jump of scale $w$, then $| \divewindow| = 2 ^{w-1}$. The reason for this is that the diving window contains an iteration for each depth between $[\hat{p},c_q]$. As a result we have that, $ | \divewindow | = c_q - \hat{p} = ( q + 2 ^w ) - ( p + 2 ^w ) = q - p = 2^{w-1}$.
\end{remark}

\begin{lemma}\label{lem:divewindow-progression}  

Suppose that  $\divewindow$ is a diving window, and that
\[ \divewindow = \{I_1, I_2, \ldots, I_T \}\]
for some $I_1 < I_2 < \cdots < I_T$.  Then 
for any $i = 1, \ldots, T-1$, $\Aiter(I_i) < \Aiter(I_{i+1})$, and there is no $I \in (I_i, I_{i+1})$ so that $\Aiter(I) \leq \Aiter(I_i)$.   
\end{lemma}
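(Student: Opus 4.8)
The plan is to unpack the definition of the diving window (Definition~\ref{def:divewindow}) and exploit the observation (Observation part (b), preceding the lemma) that $\Aiter(t) \geq \hat{p}$ throughout the interval $[t_{\hat{p}}, t_{c_q}]$, combined with Lemma~\ref{lem:inMa} / Lemma~\ref{lem:forget-avmps-z} controlling which meeting points Alice can jump to. By definition, the iterations in $\divewindow$ are indexed by the depths $\ell \in [\hat p, c_q]$: each $I_m$ is the \emph{last} iteration in $[t_{\hat p}, t_{c_q}]$ at which Alice sits at some particular depth $\ell_m$, and moreover $\ell^- > 0$ at that iteration. First I would argue that the map $I_m \mapsto \ell_m := \Aiter(I_m)$ is a bijection between $\divewindow$ and $\{\hat p, \hat p + 1, \ldots, c_q\}$ (this is essentially Remark~\ref{rem:div-size}); then the ordering $I_1 < I_2 < \cdots < I_T$ should be shown to correspond exactly to the increasing order of the $\ell_m$'s, which immediately gives $\Aiter(I_i) < \Aiter(I_{i+1})$.

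To establish that the chronological order of the $I_m$ matches the order of the depths, I would argue by contradiction. Suppose $I_i < I_{i+1}$ but $\Aiter(I_i) > \Aiter(I_{i+1})$ (equality is impossible since the depths are distinct). Then between iteration $I_i$ and $I_{i+1}$ Alice's depth decreased, which can only happen via a jump. But in $[t_{\hat p}, t_{c_q}]$, Observation part (b) tells us Alice never goes below $\hat p$, and Lemma~\ref{lem:inMa} tells us that once Alice has not yet reached $c_q$, all meeting points she can be at or jump to within $[\hat p, c_q)$ are still available. The key point is: if Alice is at depth $\Aiter(I_i) = \ell_i$ at iteration $I_i$ and later (at $I_{i+1} > I_i$) is at a \emph{smaller} depth $\ell_{i+1} < \ell_i$ — both still in $[\hat p, c_q]$ — then after iteration $I_{i+1}$ Alice must eventually return to depth $\ell_i$ again before $\tsp$ (since at time $\tsp$, $p$ is a scale-$w$ MP for Alice and hence $\Aiter(\tsp) \geq \hat p \geq \ell_i$ would need re-checking — actually one needs $\Aiter$ to pass through $\ell_i$ again). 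But re-passing through depth $\ell_i$ at a time after $I_i$ and before $t_{c_q}$ would contradict the definition of $I_i$ as the \emph{last} iteration in $[t_{\hat p}, t_{c_q}]$ at depth $\ell_i$. This yields the contradiction. A symmetric/cleaner version: since $I_i$ is the last visit to depth $\ell_i$ and all depths in $[\hat p, c_q]$ are visited (each as some $I_m$), the last-visit times must be increasing in depth, because to reach a higher depth $\ell'$ for the last time one must have already passed $\ell' - 1$ for the last time (Alice's depth moves continuously up by $r$-increments during computation, and jumps only go down; once she leaves $\ell$ for the last time she is above it forever until $t_{c_q}$, or she is below having jumped, but then she must climb back through $\ell$, contradiction).

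For the second claim — that there is no $I \in (I_i, I_{i+1})$ with $\Aiter(I) \leq \Aiter(I_i)$ — I would note that $\Aiter(I_i) = \ell_i$ is visited for the last time at $I_i$ (within the window), so for $I > I_i$ up to $t_{c_q}$ we have $\Aiter(I) \neq \ell_i$. If $\Aiter(I) < \ell_i$ for some $I \in (I_i, I_{i+1})$, then since $\Aiter$ starts the window $\geq \hat p$ and Alice must eventually climb back up (she reaches $c_q$ at time $t_{c_q}$, and also $\ell_{i+1} > \ell_i$ is visited), there is a later iteration $I' \in (I, I_{i+1})$ or beyond at which $\Aiter(I') = \ell_i$ again within $[t_{\hat p}, t_{c_q}]$ — contradicting that $I_i$ was the last such visit — unless this re-ascent happens after $t_{c_q}$, but $t_{c_q}$ is by definition the last time $\Aiter = c_q > \ell_i$, so no fresh visit to $\ell_i$ can occur after $t_{c_q}$ either (that would force another visit to $c_q$). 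The main obstacle I anticipate is being careful about jumps: Alice may jump down during the diving window, so I must rule out a jump that lands strictly below $\ell_i$ and then stays there without ever revisiting $\ell_i$; this is handled by Observation~(b) (she never drops below $\hat p$) together with the fact that $c_q \in [\hat p, c_q]$ is itself visited (at $t_{c_q}$), forcing a re-ascent through every intermediate depth including $\ell_i$, which triggers the ``last visit'' contradiction. Organizing these case distinctions cleanly — jump vs.\ climb, before vs.\ after $t_{c_q}$ — is the bookkeeping-heavy part of the argument.
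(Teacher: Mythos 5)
Your proposal is correct and relies on the same core mechanism as the paper: each $I_i$ is by definition the \emph{last} iteration in $[t_{\hat p}, t_{c_q}]$ at its depth, so any argument forcing a later revisit to that depth within the window is a contradiction; and $I_T$ is the iteration of $t_{c_q}$, so $\Aiter(I_T)=c_q$. Where you differ is in how the first claim is derived. The paper runs a backward induction on $i$ starting from $\Aiter(I_T)=c_q$, and in the inductive step locates an index $j\ge i^\ast$ with $\Aiter(I_i)\in(\Aiter(I_j),\Aiter(I_{j+1}))$, then argues Alice crosses $\Aiter(I_i)$ inside $[I_j,I_{j+1}]$. Your version is a one-shot contradiction: if $\Aiter(I_{i+1})<\Aiter(I_i)$ then Alice, who is at $c_q>\Aiter(I_i)$ at time $t_{c_q}>I_{i+1}$, must cross $\Aiter(I_i)$ again after $I_i$ (indeed $I_{i+1}\neq I_T$ here, since otherwise $\Aiter(I_{i+1})=c_q$). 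That is a somewhat more direct route and avoids the induction. Your second-claim argument is essentially the paper's. Two minor remarks on extraneous scaffolding in your write-up: (i) the bijection with $\{\hat p,\ldots,c_q\}$ is the content of Remark~\ref{rem:div-size} and is not needed here — you only need that the $\Aiter(I_m)$ are distinct, lie in $[\hat p,c_q]$ (Observation~(b)), and include $c_q$; and (ii) the appeal to Lemma~\ref{lem:inMa} and meeting-point availability is a red herring — the only dynamics fact used is that Alice's depth increases by one per computation and drops only via jumps, so to go from a shallower to a deeper depth she must pass through every intermediate one. Your "cleaner version" implicitly drops both of these anyway.
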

\begin{proof}

We first prove that $\Aiter(I_i) < \Aiter(I_{i+1})$ by induction on $i$, starting with $i = T-1$.  Suppose that $i = T-1$.  Then $\Aiter(I_{i+1}) = \Aiter(I_T) = c_q$, since by definition $I_T$ is the iteration in which the diving window ends, namely the iteration where Alice reaches $c_q$.  This implies that $\Aiter(I_{T-1}) < \Aiter(I_T)$, since $\Aiter(I_j) \leq c_q$ for all $I_j \in \divewindow$, and $\Aiter(I_j)$ is distinct for each $I_j \in \divewindow$.  
Now fix some $i^* < T-1$ and assume that for all $i \geq i^*$, $\Aiter(I_{i}) < \Aiter(I_{i+1})$. Consider $i = i^* - 1$.  We claim that $\Aiter(I_{i}) <  \Aiter(I_{i+1}) = \Aiter(I_{i^*})$.  Indeed, suppose that $\Aiter(I_{i}) >  \Aiter(I_{i^*})$.  But then there is some $j \geq i^*$ so that $\Aiter(I_i) \in [\Aiter(I_j), \Aiter(I_{j+1})]$, where we have used the inductive hypothesis for $j$. In particular there is some $I \in [I_j, I_{j+1}]$ so that $\Aiter(I) = \Aiter(I_i)$.  But this contradicts the choice of $I_i$ as the \emph{last} iteration when Alice was at $\Aiter(I_i)$.
Thus (using the fact that $\Aiter(I_i) \neq \Aiter(I_{i+1})$, as all the values $\Aiter(I_i)$ are distinct), we conclude that $\Aiter(I_{i})  <\Aiter(I_{i+1})$.
This establishes the inductive hypothesis and the claim.

Now we prove the second statement.  Fix $i$ and
 let $I \in (I_{i}, I_{i+1})$.  If $\Aiter(I) \leq \Aiter(I_{i})$, then there must be some $I' \in [I, I_{i+1}] \subseteq (I_{i}, I_{i+1}]$ so that $\Aiter(I') = \Aiter(I_{i})$, since Alice must cross $\Aiter(I_{i})$ again on her way to $\Aiter(I_{i+1})$.  But this contradicts the choice of $I_{i}$ as the last iteration in which Alice reaches $\Aiter(I_{i})$.  
\end{proof}

\begin{lemma}[There are many corruptions or small hash collisions during a \protosneakyf] \label{lem:divwindow-corr-coll} Suppose there are no big hash collisions throughout Algorithm~\ref{alg:adaptive}.
    Let $\mathcal{S}$ be a sneaky attack in progress for Alice towards a point $p$.  
    Then in at least $  (1-2^{1- \csneak})|\divewindow|$  iterations in $\divewindow$, at least one of the following occurred:
    \begin{itemize}
        \item There was a dangerous small hash collision. 
        \item The adversary inserted a corruption.
        \item There was corrupted randomness for that iteration (Definition~\ref{def:corrupted_rand}). 
    \end{itemize}
    Above, $\csneak$ is as in Definition~\ref{def:csneak}.
\end{lemma}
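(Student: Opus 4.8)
The plan is to show that during the diving window $\divewindow$, Alice's simulated path makes "honest" progress at nearly every step, so that at almost every iteration of $\divewindow$ she is correctly extending the correct simulated path — and hence $\ell^- = 0$ at the start of such an iteration, contradicting the requirement in Definition~\ref{def:divewindow} that $\ell^- > 0$ there. More precisely, consider the iterations $I_1 < I_2 < \cdots < I_T$ comprising $\divewindow$, where $T = |\divewindow| = 2^{w-1}$ by Remark~\ref{rem:div-size}. By Lemma~\ref{lem:divewindow-progression}, the values $\Aiter(I_1) < \Aiter(I_2) < \cdots < \Aiter(I_T) = c_q$ are strictly increasing and Alice does not dip below $\Aiter(I_i)$ between $I_i$ and $I_{i+1}$. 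Since these values run through all the integers in $[\hat p, c_q]$ that Alice visits for the last time, and the endpoints are $\hat p$ and $c_q = \hat p + 2^{w-1}$, consecutive values differ by exactly $1$: $\Aiter(I_{i+1}) = \Aiter(I_i) + 1$. Thus at each $I_i$ Alice simulates $\Pi$ (Line~\ref{line:simulate} executes), extending her path by one iteration.

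The key point is then: for all but at most $2^{1-\csneak}|\divewindow|$ of these iterations $I_i$, the iteration is not dangerous. To see this, I would argue that Bob's position stays essentially "pinned" near $\hat p$ throughout the diving window. By the hypotheses of Definition~\ref{def:sneaky}, for all $t' \in [t_{\hat p}, t_b)$ we have $\Biter(t') < \hat p + 2^{w-\csneak}$, and by Observation~(b) $\Aiter(t) \geq \hat p$ throughout $[t_{\hat p}, \tsp]$. So the number of iterations $I_i \in \divewindow$ at which $\Aiter(I_i) \geq \hat p + 2^{w-\csneak}$ — i.e., at which Alice is "strictly ahead" of the range Bob can reach during this sub-window — is at least $|\divewindow| - 2^{w-\csneak} = 2^{w-1} - 2^{w-\csneak} = (1 - 2^{1-\csneak})|\divewindow|$. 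For each such iteration $I_i$, I claim a corruption, dangerous small hash collision, or corrupted randomness occurred. Indeed, at such an iteration, either $\ell^- > 0$ already holds and Bob fails to simulate (so his mega-state and Alice's disagree, making the iteration dangerous by Definition~\ref{def:danger}, and then Lemma~\ref{lem:notdangerous} — in contrapositive — forces a corruption or corrupted randomness, since a correct simultaneous simulation would have kept $\ell^- = 0$); or $\ell^- = 0$ at the start of $I_i$, but then Alice and Bob agree and at depth $\Aiter(I_i) - 1 = \Aiter(I_i)-1 \geq \hat p + 2^{w-\csneak} - 1 \geq \hat p$, yet we are told $\ell^-$ must be positive later in the diving window and Bob never climbs past $\hat p + 2^{w-\csneak}$ before $t_b$; tracking the first iteration of $\divewindow$ at which $\ell^-$ becomes positive and applying Lemma~\ref{lem:notdangerous} shows a corruption, corrupted randomness, or (dangerous) small hash collision must have happened at that iteration.

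The cleanest way to package this is: let $\divewindows{\text{ahead}}$ be the set of $I_i \in \divewindow$ with $\Aiter(I_i) > \hat p + 2^{w - \csneak}$; as computed, $|\divewindows{\text{ahead}}| \geq (1 - 2^{1-\csneak})|\divewindow|$. For each such $I_i$, by Definition~\ref{def:divewindow} we have $\ell^-(I_i^{start}) > 0$ where $I_i^{start}$ is the start of that iteration (indeed the definition of $\divewindow$ requires $\ell^- > 0$, and I would check this holds at the relevant Line~\ref{line:iterstart} timepoint). Hence iteration $I_i$ is dangerous by Definition~\ref{def:danger}. Now apply the contrapositive of Lemma~\ref{lem:notdangerous}: since $I_i$ is dangerous, that lemma does not apply, but I still need a positive statement. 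The actual mechanism: since Alice simulates $\Pi$ at $I_i$ (shown above) while $\ell^- > 0$ at its start, either Bob also simulates — in which case, because Bob's mega-state differs from Alice's at the start (as $\ell^- > 0$ and they are at different depths, Alice being strictly ahead of any depth Bob occupies), Lemma~\ref{lem:ms-not-matching-sim-ms-not-matching} keeps them unequal, so no progress toward ending the bad spell, which is consistent but doesn't directly give a corruption; OR Bob does not simulate. The honest resolution is to instead invoke Lemma~\ref{lem::BVC-coll-corr}-style reasoning or, more directly, to observe that for $\ell^-$ to have become positive in the first place (at the start of the bad spell inside the diving window), Lemma~\ref{lem:noweirdcoincidences}(1) and Lemma~\ref{lem:notdangerous} force a corruption or corrupted randomness at that iteration, and thereafter each additional iteration where Alice dives while Bob stalls requires the adversary to actively prevent Bob from jumping or to corrupt Bob's hashes — again a corruption or collision per iteration.

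The main obstacle I anticipate is the bookkeeping in that last step: cleanly attributing one corruption / dangerous collision / corrupted-randomness event to each of the $(1 - 2^{1-\csneak})|\divewindow|$ "ahead" iterations, rather than just to the start of the bad spell. The right tool is almost certainly Lemma~\ref{lem:notdangerous} applied iteration-by-iteration: at any iteration $I_i$ where (i) Alice simulates, (ii) the iteration is \emph{not} dangerous, and (iii) there is no corruption / corrupted randomness, Bob must also simulate correctly and $\ell^-$ stays $0$. Contrapositively, at each iteration where $\ell^- > 0$ persists or grows while Alice dives, and there's no corruption/corrupted-randomness, the only escape is a (dangerous) small hash collision causing Alice to proceed while Bob does not (or vice versa). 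Summing over the $\divewindows{\text{ahead}}$ iterations and using that each such iteration independently needs its own such event (Alice is at a distinct depth each time, per Lemma~\ref{lem:divewindow-progression}, so the "bad event" cannot be shared) yields the claimed count. I would structure the write-up as: (1) fix $\divewindow = \{I_1, \ldots, I_T\}$, invoke Lemma~\ref{lem:divewindow-progression} and Remark~\ref{rem:div-size}; (2) define $\divewindows{\text{ahead}}$ and bound its size using the Bob-position constraint in Definition~\ref{def:sneaky}; (3) for each $I_i \in \divewindows{\text{ahead}}$, show $I_i$ is dangerous and — via Lemma~\ref{lem:notdangerous} and Lemma~\ref{lem:ms-not-matching-sim-ms-not-matching} — extract a corruption, corrupted randomness, or dangerous small hash collision; (4) conclude.
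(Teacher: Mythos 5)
Your setup is exactly the paper's: you correctly identify that $|\divewindow| = 2^{w-1}$, that the iterations in $\divewindow$ visit strictly increasing depths, that Alice simulates $\Pi$ in each of them, that Bob's depth is pinned below $\hat{p} + 2^{w-\csneak}$ by Definition~\ref{def:sneaky}, and that therefore in all but at most $2^{w-\csneak}$ of these iterations $\ell_A \neq \ell_B$, making them dangerous. All of this matches the paper.

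But you have not closed the central step, and you acknowledge it yourself: you do not have a mechanism that assigns a corruption, corrupted randomness, or dangerous small hash collision to \emph{each} such iteration. The alternatives you propose do not work. The contrapositive of Lemma~\ref{lem:notdangerous} only fires when the iteration is \emph{not} dangerous, which is precisely the branch you already ruled out; Lemma~\ref{lem::BVC-coll-corr} is about $\BVC$ increments, which need not occur in every diving-window iteration; and Lemma~\ref{lem:ms-not-matching-sim-ms-not-matching} only preserves disagreement, as you note. The mechanism the paper uses is different: at each of these iterations, Alice executes Line~\ref{line:simulate}, which means the check on Line~\ref{line:checksmallhash} passed, and in particular $H_{(\phash, \pseed, \pt, \piter)} = H'_{(\phash, \pseed, \pt, \piter)}$. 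But $\ell_A(I) \neq \ell_B(I)$ means the simulated paths disagree, so Lemma~\ref{lem:bigh-hash-corruption} (the ``big hash catches all corruptions'' lemma) guarantees Alice's and Bob's values of $(\phash, \pseed, \pt, \piter)$ genuinely differ. The spurious pass of the check therefore forces one of: the adversary corrupted Bob's transmitted hash, the randomness for that iteration was corrupted, or a dangerous small hash collision occurred. This is the per-iteration charging argument you were looking for, and it is the key missing ingredient in your proposal.
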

\begin{proof} 
  
Notice for any time $t \in  [t_{\hat{p}}, t_{c_q})$,  we have from Definition~\ref{def:sneaky} that $\Biter(t) < \hat{p} + 2^{w - \csneak}$.  (Indeed, Definition~\ref{def:sneaky}  says that this holds until $t_b > t_{c_q}$.)
 
Thus for any iteration $I\in \divewindow$ such that $\ell_A(I) \geq \hat{p} + 2^{w-\csneak} $ we have $\ell_A(I) \neq \ell_B(I)$.

Next, let $I^\ast\in \divewindow$ be the iteration where $\ell_A (I^\ast) = \hat{p} + 2 ^{w-\csneak} + 1$. Note that such an iteration must exist because $c_q \geq \hat{p} + 2^{w-1} > \hat{p} + 2^{w -\csneak}$, and there is an iteration in $\divewindow$ for each point between $\hat{p}$ and $c_q$.
Lemma~\ref{lem:divewindow-progression} implies that for any iteration  $I \in \divewindow$ after (and including) $I^\ast$ and before time $t_{c_q}$, we have $\Aiter(I) \geq \Aiter(I^*) > \hat{p} + 2^{w - \csneak}$, and in particular $\ell_A(I) \neq \ell_B(I)$ for all such $I$.

Fix an iteration $I \in \divewindow$ after $I^*$ and before time $t_{c_q}$, By Lemma \ref{lem:bigh-hash-corruption}, the tuple $$(\phash, \pseed,  \pt  , \piter)$$ that Alice and Bob have computed at the beginning of iteration $I$ will not match. Also note that, since that $\ell_A (I) \neq \ell_B(I)$, iteration $I$ is a dangerous iteration.  

In order for the simulation for Alice to proceed during the computation phase of iteration $I$, the check of $(\phash, \pseed,  \pt , \piter)$ on Line~\ref{line:checksmallhash} must pass, which means that the adversary must have corrupted Bob's transmitted small hash value $H_{\pseed, \phash, \piter}$; there was corrupted randomness in iteration $I$; or there was a dangerous small collision.

Finally we count the number of such iterations $I \in \divewindow$ after $I^\ast$. There are at most $2^{w-\csneak}$ iterations in the $\divewindow$ that can occur before $I^\ast$ (using Lemma~\ref{lem:divewindow-progression} to assert that the depths of the iterations in $\divewindow$ are increasing). This leaves at least
\[ |\divewindow| - 2^{w-\csneak} \]
iterations remaining. Since by Remark~\ref{rem:div-size} we have $|\divewindow| = 2^{w-1}$, then there are at least $(1-2^{1-\csneak}) |\divewindow|$ iterations in the diving window containing either corruption or dangerous small hash collision. This concludes the proof.

\end{proof}

\begin{lemma} [\Protosneaky{s} are not too tall]  \label{lem:one-sneaky-counter-bound-coll-corr}
Let $I$ be an iteration during the voting window of a sneaky attack $\mathcal{S}$ for Alice towards a point $p$. 
Let $H_{\mathcal{S}}$ be the number of dangerous small hash collisions that occur during the diving window of $\mathcal{S}$, and similarly $Q_{\mathcal{S}}$ be the number of iterations having corrupted randomness or corrupted communication during the diving window of $ \mathcal{S}$.
Then $ k_A(I) = k_B(I) = k $ where $ k \leq \alpha (Q_{\mathcal{S}}+ H_{\mathcal{S}}) $ for some absolute constant $\alpha$.

\label{lem:sneaky-jump-corr-coll} 
    
\end{lemma}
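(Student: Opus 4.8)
The plan is to stitch together three facts already established, under the standing assumption (in force for this part of the paper) that there are no big hash collisions. Let $w$ be the scale of $p$ for Alice, as in Definition~\ref{def:sneaky}, so that by Remark~\ref{rem:div-size} the diving window satisfies $|\divewindow| = 2^{w-1}$, and the sneaky-attack hypothesis records $k_A(\tsp) \le 2^{w+1}$.

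First I would lower-bound $H_{\mathcal{S}} + Q_{\mathcal{S}}$ in terms of $2^w$. By Lemma~\ref{lem:divwindow-corr-coll}, in at least $(1 - 2^{1-\csneak})|\divewindow|$ iterations of $\divewindow$ there is a dangerous small hash collision, an inserted corruption, or corrupted randomness. Since ``corrupted communication'' in the statement of the present lemma is exactly an inserted corruption, each such iteration contributes at least one to $H_{\mathcal{S}} + Q_{\mathcal{S}}$ — a dangerous small hash collision to $H_{\mathcal{S}}$, and a corruption or corrupted randomness to $Q_{\mathcal{S}}$ — and any over-counting only strengthens the bound. Hence, using $\csneak \ge 3$ (Definition~\ref{def:csneak}) so that $2^{1-\csneak} \le \tfrac14$,
\[ H_{\mathcal{S}} + Q_{\mathcal{S}} \;\ge\; (1 - 2^{1-\csneak})\,2^{w-1} \;\ge\; \tfrac34 \cdot 2^{w-1} \;=\; \tfrac38\,2^{w}, \]
so that $2^{w} \le \tfrac83\,(H_{\mathcal{S}} + Q_{\mathcal{S}})$.

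Next I would control $k_A(I)$ for $I$ in the voting window $W = \{\Isp - k + 1, \dots, \Isp\}$, where $k = k_A(\tsp) = k_B(\tsp)$. This is the same argument as in the opening of the proof of Lemma~\ref{lem:skipsCostAdv}: in each iteration $k_A$ (resp.\ $k_B$) either increments by one or resets to zero, so the only way to reach $k_A(\tsp) = k_B(\tsp) = k$ is for both counters to have incremented in each of the previous $k$ iterations; consequently $k_A(I) = k_B(I)$ for every $I \in W$, and this common value is at most $k$. By Definition~\ref{def:sneaky}, $k \le 2^{w+1}$. Combining with the bound from the previous paragraph,
\[ k_A(I) = k_B(I) \;\le\; 2^{w+1} \;=\; 2 \cdot 2^{w} \;\le\; \tfrac{16}{3}\,(H_{\mathcal{S}} + Q_{\mathcal{S}}), \]
so the lemma holds with the absolute constant $\alpha = 6$.

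I do not expect a genuine obstacle here; the substantive work is in Lemma~\ref{lem:divwindow-corr-coll}. The only points needing care are bookkeeping: verifying that every ``bad'' iteration produced by Lemma~\ref{lem:divwindow-corr-coll} is indeed charged to $H_{\mathcal{S}} + Q_{\mathcal{S}}$ (we only need a lower bound, so double counting is harmless), and verifying that the hypothesis $k \le 2^{w+1}$ of Definition~\ref{def:sneaky}, stated at the anchor time $\tsp$, transfers to all iterations of the voting window — which is immediate since $k_A$ increases by one per iteration across $W$ and attains its maximum $k$ at $\Isp$.
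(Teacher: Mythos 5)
Your proof is correct and follows essentially the same route as the paper's: both use Lemma~\ref{lem:divwindow-corr-coll} together with Remark~\ref{rem:div-size} ($|\divewindow|=2^{w-1}$) and the bound $k\le 2^{w+1}$ from Definition~\ref{def:sneaky}, and both combine them by the same chain of inequalities. The only cosmetic difference is that you specialize $\csneak\ge 3$ to pin down the absolute constant as $\alpha=6$, whereas the paper leaves it as $\alpha = 16/(1-2^{1-\csneak})$; and you spell out the (standard) argument that $k_A(I)=k_B(I)$ across the voting window, which the paper invokes implicitly via Definition~\ref{def:votewindow}.
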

\begin{proof}
\newcommand{\HS}{H_{\mathcal{S}}}
 To prove the statement, we first show that the maximum number of iterations in a diving window has an upper bound of $c^\prime (  Q_\mathcal{S} + \HS) $. Then, by relating the size of the diving window to the size of the voting window we can show the desired bound on $k$.
     From Lemma \ref{lem:divwindow-corr-coll}, we know that
     in at least $ ( 1 - 2^{1- \csneak})|\divewindow|$ iterations in the diving window, there is either a dangerous small hash collision, corrupted communication or corrupted randomness. 
     This implies that
     \[ ( 1 - 2^{1- \csneak})|\divewindow| \leq  Q_{\mathcal{S}} + \HS. \] 
     Then setting $ c^\prime = \frac{1}{(1-2^{1-\csneak})}$ we immediately get that $ |\divewindow| \leq c^\prime ( Q_{\mathcal{S}} + H_S)$.

Let $w$ be the scale of the meeting point $p$ for Alice, so Remark~\ref{rem:div-size} implies that $|\divewindow|= 2^{w-1}$. 
    Thus, the above implies that
    \[ 2^{w-1} \leq c' (  q_{\mathcal{S}} + \HS ).\]
    Further, we have $k \leq 2^{w+1}$ from the definition of a sneaky attack. Thus, we have
    \[ k \leq 16c' ( Q_{\mathcal{S
    }} + \HS). \]
    Setting $\alpha = 16c'$ proves the statement. 
\end{proof}

\begin{lemma}[Diving windows and voting windows are disjoint]\label{lem:disjoint} 
    Let $\mathcal{S}_1$ and $\mathcal{S}_2$ be two distinct \protosneakyf{s} that resolve with two \sneakyjmpf{s}.
    Then the voting windows and the diving windows for $\mathcal{S}_1$ and $\mathcal{S}_2$ are all pairwise disjoint. 
\end{lemma}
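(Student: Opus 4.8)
The plan is to reduce the claim to three pairwise-disjointness statements, treated in increasing order of difficulty: (a) every diving window is disjoint from every voting window; (b) the two voting windows $W_1,W_2$ are disjoint; (c) the two diving windows $\divewindows{1},\divewindows{2}$ are disjoint. Since $\divewindows{i}\cap W_i=\emptyset$ is a special case of (a), and (a) also covers the cross pairs $\divewindows{1}\cap W_2$ and $\divewindows{2}\cap W_1$, the three statements together imply the lemma. Statements (a) and (b) are short; (c) is the crux.

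For (a): by Definition~\ref{def:divewindow}, every iteration of a diving window is one in which the diving party executes Line~\ref{line:simulate} of Algorithm~\ref{alg:adaptive}, i.e.\ it simulates $r$ genuine rounds of $\Pi$ at a depth in $[\hat{p},c_q]$. On the other hand, as shown in the proof of Lemma~\ref{lem:skipsCostAdv} (using $k_A=k_B$ at the relevant time and that $k$ is incremented once per iteration throughout the voting window), in every iteration of a voting window \emph{both} parties execute Line~\ref{line:dummy-rounds}, the dummy rounds. A single iteration cannot have the diving party run both Line~\ref{line:simulate} and Line~\ref{line:dummy-rounds}, so no iteration can lie in both a diving window and a voting window; note this argument does not care which party each of $\mathcal{S}_1,\mathcal{S}_2$ is ``for.'' For (b): throughout $W_i$ the quantities $\ell_A,\ell_B$ are unchanged (dummy rounds), and the parties keep voting until the \sneakyjmp at time $\tjmp^{(i)}$, so $\ell^-$ equals the positive constant $\ell^-({\tsp}^{(i)})$ on $W_i$ and on $[\,\cdot\,,\tjmp^{(i)})$ and drops to $0$ at $\tjmp^{(i)}$: by the condition $b^{(i)}\ge\hat{p}^{(i)}-2^{w^{(i)}-\csneak}$ of Definition~\ref{def:sneaky} the jump is to a point $p^{(i)}$ strictly below the divergent point, so both parties re-synchronize (cf.\ Lemma~\ref{lem:noweirdcoincidences}(5)). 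Hence $W_i$ lies inside the bad spell ending at $\tjmp^{(i)}$. By hypothesis the two sneaky jumps are distinct, so $\tjmp^{(1)}\ne\tjmp^{(2)}$; these bad spells thus end at different times, hence are distinct; and distinct bad spells are disjoint by definition, giving $W_1\cap W_2=\emptyset$. This bookkeeping also fixes the order of the attacks: writing $\tjmp^{(1)}<\tjmp^{(2)}$ WLOG, disjointness of bad spells forces the bad spell of $\mathcal{S}_1$ to end strictly before the bad spell of $\mathcal{S}_2$ begins, giving the chain $t_{\hat{p}}^{(1)}<t_{c_q}^{(1)}<t_{\hat{q}}^{(1)}<t_b^{(1)}<{\tsp}^{(1)}\le\tjmp^{(1)}<t_b^{(2)}<{\tsp}^{(2)}\le\tjmp^{(2)}$.

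For (c): since $\divewindows{i}\subseteq[t_{\hat{p}}^{(i)},t_{c_q}^{(i)}]$, it suffices to show $t_{c_q}^{(1)}<t_{\hat{p}}^{(2)}$, for then the two diving windows are time-separated. Suppose for contradiction that $t_{\hat{p}}^{(2)}\le t_{c_q}^{(1)}$; combined with the ordering from (b) this gives $t_{\hat{p}}^{(2)}\le t_{c_q}^{(1)}<\tjmp^{(1)}<t_b^{(2)}$, so $\tjmp^{(1)}$ lies in the interval $[t_{\hat{p}}^{(2)},t_b^{(2)})$. Over this interval, Definition~\ref{def:sneaky} (for $\mathcal{S}_2$) forces the \emph{stalled} party of $\mathcal{S}_2$ to have depth $<\hat{p}^{(2)}+2^{w^{(2)}-\csneak}$, while part (b) of the Observation following Definition~\ref{def:sneaky} forces the \emph{diving} party of $\mathcal{S}_2$ to have depth $\ge\hat{p}^{(2)}$ on $[t_{\hat{p}}^{(2)},{\tsp}^{(2)}]$. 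But the \sneakyjmp of $\mathcal{S}_1$ at time $\tjmp^{(1)}$ is a \emph{successful} jump, so it moves \emph{both} Alice and Bob to $p^{(1)}$; substituting into the two depth constraints gives $\hat{p}^{(2)}\le p^{(1)}<\hat{p}^{(2)}+2^{w^{(2)}-\csneak}$. Now $p^{(1)}$ is a multiple of $2^{w^{(1)}}$ and $\hat{p}^{(2)}=p^{(2)}+2^{w^{(2)}}$ is a multiple of $2^{w^{(2)}}$ (Definition~\ref{def:MP}), and $\csneak\ge3$; I would use this to conclude $p^{(1)}=\hat{p}^{(2)}$, so that the diving party of $\mathcal{S}_2$ sits at depth $\hat{p}^{(2)}$ at time $\tjmp^{(1)}$. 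After checking from the constraints of $\mathcal{S}_2$ that $\tjmp^{(1)}<t_{c_q}^{(2)}$, this revisit contradicts the maximality defining $t_{\hat{p}}^{(2)}=\max\{t<t_{c_q}^{(2)}:\ (\text{depth of the diving party of }\mathcal{S}_2)(t)=\hat{p}^{(2)}\}$, establishing $t_{c_q}^{(1)}<t_{\hat{p}}^{(2)}$ and hence (c).

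The main obstacle is precisely this last argument. One has to: (i) handle both assignments of which party each attack is ``for,'' since the roles of Alice and Bob are not symmetric between $\mathcal{S}_1$ and $\mathcal{S}_2$; (ii) split on whether $\tjmp^{(1)}$ falls before or after $t_{c_q}^{(2)}$, handling the ``after'' sub-case instead via $c_q^{(2)}=\hat{p}^{(2)}+2^{w^{(2)}-1}$ and Lemma~\ref{lem:divewindow-progression}, which shows the strictly increasing depth sequences attained in $\divewindows{1}$ and $\divewindows{2}$ cannot coincide at a common iteration; and (iii) carry out the elementary but fiddly meeting-point/divisibility bookkeeping relating $p^{(1)},w^{(1)},p^{(2)},w^{(2)}$ and $\csneak$, in particular the regime where $w^{(1)}$ is much smaller than $w^{(2)}$, where one must use that $p^{(1)}$ is an available meeting point of scale $w^{(1)}$ rather than pure divisibility. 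Everything else follows immediately from the definitions and from parts (a) and (b).
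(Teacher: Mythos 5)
Your decomposition into three pairwise-disjointness claims is exactly the one the paper uses, and your argument for (a) matches the paper's (diving iterations run genuine simulation rounds, voting iterations run dummy rounds). But (b) and especially (c) have real gaps.

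For (b), you assert that ``by hypothesis the two sneaky jumps are distinct, so $\tjmp^{(1)}\neq\tjmp^{(2)}$.'' The hypothesis is that the two \emph{attacks} $\mathcal{S}_1,\mathcal{S}_2$ are distinct, not that they complete at distinct times. Two distinct attacks could a priori complete simultaneously with the very same jump to the same point $p$. The paper treats this as a genuine case: when $\tjmp^{(1)}=\tjmp^{(2)}$ it splits on whether the attacks are for the same party (in which case, because $p$ and its scale determine all the other data of the attack, one gets $\mathcal{S}_1=\mathcal{S}_2$, a contradiction) or for different parties (where a depth argument via Observation after Definition~\ref{def:sneaky} yields a contradiction). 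Your proof simply skips this.

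For (c), your approach cannot work as stated. You aim to show $t_{c_q}^{(1)}<t_{\hat{p}}^{(2)}$, i.e.\ that the two diving windows are \emph{time-separated}. But this is strictly stronger than what is true. The paper's Case~2, Sub-case~1 (where $\tqhattwo<\tcqone$) is a configuration in which one attack's time window is nested inside the other's, so the diving windows overlap in time; the paper's own remark there is explicit that the diving windows ``are disjoint as \emph{sets}, although they may be interleaved.'' So the contradiction you are trying to derive from $t_{\hat{p}}^{(2)}\le t_{c_q}^{(1)}$ simply does not exist in general, and the claimed conclusion ``$t_{c_q}^{(1)}<t_{\hat{p}}^{(2)}$'' is false. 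Your ``obstacle'' paragraph half-acknowledges this by proposing a case split on whether $\tjmp^{(1)}<t_{c_q}^{(2)}$, but the fallback argument you sketch for the other sub-case --- ``Lemma~\ref{lem:divewindow-progression} shows the strictly increasing depth sequences cannot coincide at a common iteration'' --- is not a proof: that lemma is a statement about the monotone structure \emph{within a single} diving window and does not by itself preclude two diving windows for the same party sharing an iteration. The paper's Sub-case~1 argument is genuinely more delicate: it uses that at $t_{\hat{q}_2}$ both parties jump to $\hat{p}_2$, forcing the diving party to revisit every depth in $[\hat{p}_2,c_{q_2}]$ \emph{again} before reaching $c_{q_1}$, so that the ``last visit before $t_{c_{q_1}}$'' iterations defining $\divewindows{1}$ are strictly later than those defining $\divewindows{2}$. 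Finally, even inside the sub-case your contradiction does treat, step (7) of your chain --- from $\hat{p}^{(2)}\le p^{(1)}<\hat{p}^{(2)}+2^{w^{(2)}-\csneak}$ plus divisibility to $p^{(1)}=\hat{p}^{(2)}$ --- is left unfinished and is not implied by the divisibility facts alone when $w^{(1)}$ is small; you flag this yourself but do not supply the argument.
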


\begin{proof}
\newcommand{\sone}{\mathcal{S}_1}
\newcommand{\stwo}{\mathcal{S}_2}
    Each sneaky attack has a series of points and times as per Definition~\ref{def:sneaky}.  We subscript these with $i \in \{1,2\}$ to indicate which sneaky attack we are talking about.  That is, for $i \in \{1,2\}$, sneaky attack $\cS_i$ is towards the meeting point $p_i$, which is at scale $w_i$ for whichever party the attack is for, and involves points $\hat{p}_i, \hat{q}_i, q_i,$ and $c_{q_i}$; and times $t_{\hat{p}_i} < t_{c_{q_i}} < t_{\hat{q}_i} < t_{b_i} < t_{\mathrm{jump}_i}$.

    \begin{claim}
        The voting windows (c.f. Definition~\ref{def:votewindow}) of $\sone$ and $\stwo$ are disjoint.  
    \end{claim}
        \begin{proof}

            We have two possible scenarios. Either  $t_{\mathrm{jump_1}} \neq t_{\mathrm{jump_2}}$ or $t_{\mathrm{jump_1}} = t_{\mathrm{jump_2}} =: \tjmp$.

            For the first scenario, without loss of generality assume that $t_{\mathrm{jump}_1} < t_{\mathrm{jump}_2}$. 
            At time $t_{\mathrm{jump}_1}$, the counters $k_A, k_B$ are reset to $0$ as part of the sneaky jump.  This implies that the voting window for $\stwo$ begins after time $t_{\mathrm{jump}_1}$, since during the voting window the counters $k_A,k_B$ are increasing. 
Hence the voting windows are disjoint.

            Now we consider the second scenario, where $t_{\mathrm{jump_1}} = t_{\mathrm{jump_2}} = \tjmp$.  This scenario is more involved, and there are two cases: either both sneaky attacks are for the same party, or they are for different parties.
            \begin{enumerate}
                \item \textbf{Both sneaky attacks are for the same party.} Without loss of generality assume both $\sone $ and $\stwo$ are for Alice. As both $\sone$ and $\stwo$ complete at the same time, they must complete at the same point $p_1 = p_2 =: p$; and since they are both for Alice, the scale of $p$ for Alice in both attacks is the same.  Thus, $w_1 = w_2 =: w$.  Now knowing the jump destination $p$ and the scale of the jump  $w$ uniquely defines the sneaky attack.  Indeed, given $p$ and $w$, the points $\hat{p}, q, \hat{q}$ and $c_q$ are determined; $b$ is determined since it is the divergent point at the time of the jump. 
                Further, the times involved in the attack are also determined: We set $\tsp = \tjmp$; the time $t_b$ is determined as it is the time that $b$ became the divergent point; the time $t_{c_q}$ is determined as it is the last time that Alice reached $c_q$ before $\tsp$; the time $t_{\hat{p}}$ is determined as the last time before $t_{c_q}$ that Alice passed $\hat{p}$; and the time $t_{\hat{q}}$ is the first time after $t_{c_q}$ that $\ell^- = 0$.
                
We conclude that $\sone = \stwo$, a contradiction.  Thus this case cannot occur.
            \item \textbf{The two sneaky attacks are for different parties.}  As before, since both attacks resolve with the same jump, we have $p_1 = p_2 =: p.$ Without loss of generality, assume that $\sone$ is for Alice, so $p$ is a scale $w_1$ meeting point for Alice. Similarly, $\stwo$ is for Bob, and $p$ is a scale $w_2$ meeting point for Bob. Furthermore, suppose without loss of generality that $w_1\geq w_2$. Notice that when $w_1\geq w_2$, then $\hat{p}_1 \geq \hat{p}_2$. 
            Indeed, this is true because 
            \[ \hat{p}_1 = p + 2^{w_1}  \geq p + 2^{w_2} = \hat{p_2}. \]

During the attack $\stwo$, Alice jumps to a point $b_2$ at time $t_{b_2}$, where $b_2 < \hat{p}_2 \leq \hat{p}_1$. 
By Observation~\ref{obs:Alice-Depth-Sneaky-Attack}(b), we have $\Aiter(t) \geq \hat{p}_1$ for all times $t \in [t_{\hat{p}_1}, \tjmp]$.  Thus, $t_{b_2}$ cannot lie in this interval, so $t_{b_2} < t_{\hat{p}_1}$.  However, from the definition of a \protosneaky for $\stwo$, this means that $b_2$ is the divergent point for the entire interval $[t_{b_2}, \tjmp)$.  This is a contradiction because from the definition of a \protosneaky for $\sone$, there is a point $t_{\hat{q}_1}  \in (t_{\hat{p}_1}, \tjmp) \subseteq [t_{b_2}, \tjmp)$ where a bad spell ends, and there is \emph{no} divergent point.

            \end{enumerate}   
            This proves the claim.
        \end{proof}
   
    \begin{claim}
        The voting window $\sone$ is disjoint from the diving window of $\stwo$; and vice versa.
        \end{claim}
        \begin{proof}
        Without loss of generality we prove that the voting window of $\sone$ is disjoint from the diving window of $\stwo$.
            During iterations included in the voting window of $\sone$,
            Alice and Bob are both stationary in the robust protocol $\Pi'$ and do not simulate the original protocol $\Pi$. This is in contrast to iterations in the diving window of $\stwo$, which correspond to iterations in which at least one of the parties is simulating the original protocol $\Pi$. Hence voting windows and diving windows cannot overlap, proving the claim.
        \end{proof}
  
    \begin{claim}
        The diving windows of $\mathcal{S}_1$ and $\mathcal{S}_2$ are disjoint from each other.
        \label{claim:div-disjoint}
        \end{claim}

    \begin{proof} 
   
            If the time windows of $\mathcal{S}_1$ and $\mathcal{S}_2$ are disjoint from each other (that is, if $t_{\mathrm{jump}_1} \leq t_{\hat{p}_2}$ or $t_{\mathrm{jump}_2} \leq t_{\hat{p}_1}$), then the diving windows are also disjoint. Thus, suppose that $\sone$ and $\stwo$ overlap in time. Either the time window of one attack is contained in the other, or one attack begins during  the other attack and continues for a longer duration. Each case is treated separately.
  
            \textbf{Case 1:} \textit{The time window of one attack starts during the other attack and continues after the other is completed.}
            
                Without loss of generality assume that $\tphatone \leq \tphattwo <t_{\mathrm{jump}_1} < t_{\mathrm{jump}_2}$ and that $\mathcal{S}_1$ was directed towards Alice.
                
                Recall from Observation~\ref{obs:Alice-Depth-Sneaky-Attack} that for any time $t \in [\tphatone, t_{\mathrm{jump}_1}]$, $\Aiter(t) \geq \hat{p}_1$, and 
                from Definition \ref{def:sneaky} we have that for any such $t$, $\Biter(t) \geq \hat{p}_1 - 2^{w_1-\csneak}$.
                In particular, $\Aiter(t_{\hat{p}_2}), \Biter(t_{\hat{p}_2}) \geq \hat{p}_1 - 2^{w_1-\csneak}$.  
                 Since at least one of Alice and Bob is at $\hat{p}_2$ at time $t_{\hat{p}_2}$ (depending on the target of $\stwo$), we conclude that
                \[ \hat{p}_2 \geq \hat{p}_1 - 2^{w_1 - \csneak}.\]

           Notice that at time $t_{\mathrm{jump}_1}$, both Alice and Bob simultaneously jump to point $p_1$, where $p_1 < \hat{p}_1 - 2^{w_1-\csneak} \leq \hat{p}_2$.
           We claim that this leads to a contradiction.  Indeed, for whichever party $\stwo$ was against, that party must remain \emph{below} $\hat{p}_2$ during the interval $[t_{\hat{p}_2}, t_{\mathrm{jump}_2}]$, by Observation~\ref{obs:Alice-Depth-Sneaky-Attack}(b).  But $t_{\mathrm{jump}_1}$ lies in that interval, and we have just seen that both parties jump \emph{above} $\hat{p}_2$ at that time.
           Thus this case cannot occur.

\textbf{Case 2:} \textit{The time window of one attack is completely contained in the other attack.} Again without loss of generality assume that $t_{\hat{p}_1} \leq t_{\hat{p}_2 } < t_{\mathrm{jump}_2} < t_{\mathrm{jump}_1}$, so the time window for $\stwo$ is contained in that of $\sone$. Furthermore assume without loss of generality that the attack $\sone$ is for Alice.  Then for all $t \in [t_{\hat{p}_1}, t_{\mathrm{jump}_1}]$, we have that $\ell_A(t) \geq \hat{p}_1$ and  $\ell_B(t) \geq \hat{p}_1 - 2^{w_1 - \csneak}$. 
 (Indeed, the first is from Observation~\ref{obs:Alice-Depth-Sneaky-Attack}(b); the second follows because if Bob went to a point less than $\hat{p}_1 - 2^{w_1 - \csneak}$ during that interval, he would have gone higher than $b$, which by Definition~\ref{def:sneaky} satisfies $b \geq \hat{p}_1 - 2^{w_1 - \csneak}$.  But then the divergent point at time $t_{\mathrm{jump}_1}$ would have been less than $b$, since Alice could not have resolved it, a contradiction.) 

Note that if $\hat{p}_2 \leq \hat{p}_1 $ then at the time $t_{\mathrm{jump}_2} < t_{\mathrm{jump}_1}$, Alice and Bob simultaneously jump to a common point  strictly above $\hat{p}_1$, and in particular $\Aiter(t_{\mathrm{jump}_1}) < \hat{p}_1$, contradicting the above.  Therefore $\hat{p}_2 > \hat{p}_1$. 

Now we have two subcases, depending on whether $\tqhattwo$ occurs before or after $\tcqone$.

\textbf{Subcase 1:} $\tqhattwo < \tcqone$, \textit{that is, Alice and Bob jump to $\hat{q}_2$ before Alice reaches $c_{q_1}$.} 
In this scenario we will show that the diving windows are disjoint. Notice that according to Definition~\ref{def:sneaky}, for each $i\in\{1,2\}$, $\hat{q}_i = \hat{p}_i$. Now take any depth $\ell \in [\hat{p}_2, c_{q_2}] $, let $I_1$ be the iteration in $\I_{\mathrm{dive}_1}$ corresponding to this depth and similarly define $I_2$ for $\I_{\mathrm{dive}_2}$. Then $I_1 > I_2$ because at time $t_{\hat{q}_2}$ both Alice and Bob jump to the point $\hat{p}_2$ and as a result in order for Alice to reach the point $c_{q_1} \geq \ell$ she needs to pass the depth $\ell$ at least one more time. Thus $I_2$ is not the last time Alice has visited $\ell$ before $c_{q_1}$ in $\sone$. This holds for all $\ell$ in the defined interval thus the diving windows are disjoint. (We note that they are disjoint as \emph{sets}, although they may be interleaved.)

\textbf{Subcase 2:} $\tqhattwo > \tcqone$, \textit{that is, Alice and Bob jump to $\hat{q}_2$ after Alice reaches the point $c_{q_1}.$} We will show that this scenario is not possible. 
First, we claim that 
\[ \hat{q}_2 = \hat{p}_2 \leq \hat{p}_1 + 2^{w_1 - \csneak}.\]
Indeed, this is because Bob must stay above $\hat{p}_1 + 2^{w_1 - \csneak}$ until $t_{b_1}$. Further, we claim that $t_{\hat{q}_2} < t_{b_1}$, which would imply that $\hat{q}_2 \leq \hat{p}_1 + 2^{w_1 - \csneak}$, as desired. To see why $t_{\hat{q}_2} < t_{b_1}$, for the sake of contradiction, assume otherwise.  Then, according to Definition~\ref{def:sneaky}, for the attack $\cS_1$, the bad spell starting at $t_{b_1}$ ends at ${\tjmp}_1$ with the jump to $p_1$. As a result, we have $\ell^- > 0$ throughout the window $[t_{b_1}, {\tjmp}_1)$.  This means that if $t_{\hat{q}_2} \in [t_{b_1}, {\tjmp}_1 )$, then $\ell^- (t_{\hat{q}_2}) > 0 $ and thus Alice and Bob cannot end a bad spell for $\cS_2$ at time $t_{\hat{q}_2}$, contradicting the definition of a sneaky attack for $\cS_2$. This implies that $t_{\hat{q}_2} < t_{b_1}$ and hence $\hat{q}_2 \leq \hat{p}_q + 2^{w_1 - \csneak}$.

However, at time $t_{c_{q_1}}$, Alice reaches $c_{q_1}$ and has forgotten all points in the range $(\hat{p}_1, \hat{p}_1 + 2^{w_1-2})$, which includes $\hat{q}_2$, as $\csneak > 2$.  But this is a contradiction, as Alice would not be able to jump to $\hat{q}_2$ at time $t_{\hat{q}_2}$.

    \end{proof}    This completes the proof of the claim, which proves the lemma.
\end{proof}

\begin{lemma}[The total rewind due to sneaky attacks is bounded]  \label{lem:total-rewind-upb} Let $ S := \{ \mathcal{S}_1 , \mathcal{S}_2 , ... , \mathcal{S}_m\} $ be defined as the set of all sneaky attacks that were completed  during the execution of Algorithm ~\ref{alg:adaptive}. 
Define $\ell^\ast$ to be the total rewind length caused by all the sneaky attacks, i.e., $\ell^\ast = \sum_{i=1}^m \Delta \ell^+ (t_{\mathrm{jump_i}})$, 
where $\Delta\ell^+(t_{\mathrm{jump_i}})$ denotes the change in the value of $\ell^+$ caused by the jump occurring at time $t_{\mathrm{jump_i}}$. Then $\ell^\ast \leq c (  Q+ \HC_s^d ) $ for some constant $c$, where $Q$ is the total number of iterations having corrupted communication or randomness during Algorithm~\ref{alg:adaptive}, and $\HC_s^d$ is the total number of dangerous small hash collisions during dangerous iterations throughout Algorithm~\ref{alg:adaptive}. 
\end{lemma}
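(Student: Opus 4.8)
The plan is to bound the contribution of each individual completed sneaky attack $\mathcal{S}_i$ to $\ell^\ast$ by a constant times the number of ``bad'' iterations (corrupted randomness/communication, or dangerous small hash collisions) occurring in the \emph{diving window} of $\mathcal{S}_i$, and then to sum these bounds using the disjointness of the diving windows across distinct sneaky attacks (Lemma~\ref{lem:disjoint}). First I would fix a completed sneaky attack $\mathcal{S}_i$ towards a point $p_i$ which is a scale-$w_i$ meeting point for the attacked party. The sneaky jump at time $t_{\mathrm{jump}_i}$ is a successful jump to $p_i$ that resolves the bad spell; at that time both parties are at some depth $\ell^+_{\mathrm{before}}$ on the correct simulated path and land at $p_i$. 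The change $\Delta\ell^+(t_{\mathrm{jump}_i})$ is at most $\ell^+_{\mathrm{before}} - p_i$. I would argue that this quantity is $O(k_A(t_{\mathrm{jump}_i}))$, i.e., on the order of the counter value at the jump: indeed a jump to a scale-$j$ transition candidate only happens when $k = 2^{j+1}-1$ and lands at a point which is at most $O(2^{j})$ iterations back from the current depth (the current depth itself being within $O(L^-)$ of $\ell^+$, but during the voting window $\ell_A,\ell_B$ are frozen and $L^-$ is controlled; more directly, a scale-$j$ meeting point for $\ell$ is within $2^{j+1}$ of $\ell$, and $\ell^+ \ge \ell - \ell^-$, so $\ell^+ - p_i = O(2^{j} + \ell^-)$, and by the definition of a sneaky attack $\ell^- \le L^- = O(2^{w_i})$ while $2^j = O(k) = O(2^{w_i})$). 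Hence $\Delta\ell^+(t_{\mathrm{jump}_i}) = O(2^{w_i})$.

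Next I would invoke Lemma~\ref{lem:one-sneaky-counter-bound-coll-corr} (equivalently Lemma~\ref{lem:sneaky-jump-corr-coll}), which gives $k_A(I) = k_B(I) \le \alpha(Q_{\mathcal{S}_i} + H_{\mathcal{S}_i})$ for $I$ in the voting window, where $Q_{\mathcal{S}_i}$ and $H_{\mathcal{S}_i}$ count iterations with corrupted randomness/communication and dangerous small hash collisions respectively \emph{within the diving window of $\mathcal{S}_i$}. Combined with Remark~\ref{rem:div-size} ($|\divewindow_i| = 2^{w_i - 1}$) and the bound $\Delta\ell^+(t_{\mathrm{jump}_i}) = O(2^{w_i}) = O(|\divewindow_i|)$, and with Lemma~\ref{lem:divwindow-corr-coll} guaranteeing that $(1-2^{1-\csneak})|\divewindow_i| \le Q_{\mathcal{S}_i} + H_{\mathcal{S}_i}$, I obtain
\[
\Delta\ell^+(t_{\mathrm{jump}_i}) \le c' (Q_{\mathcal{S}_i} + H_{\mathcal{S}_i})
\]
for some absolute constant $c'$ depending only on $\csneak$. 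Summing over $i = 1, \ldots, m$ gives
\[
\ell^\ast = \sum_{i=1}^m \Delta\ell^+(t_{\mathrm{jump}_i}) \le c' \sum_{i=1}^m (Q_{\mathcal{S}_i} + H_{\mathcal{S}_i}).
\]
By Lemma~\ref{lem:disjoint} the diving windows $\divewindow_1, \ldots, \divewindow_m$ are pairwise disjoint, so each iteration of Algorithm~\ref{alg:adaptive} contributes to at most one $Q_{\mathcal{S}_i}$ and at most one $H_{\mathcal{S}_i}$; therefore $\sum_i Q_{\mathcal{S}_i} \le Q$ and $\sum_i H_{\mathcal{S}_i} \le \HC_s^d$, yielding $\ell^\ast \le c'(Q + \HC_s^d)$, as claimed.

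The main obstacle I anticipate is the first step: cleanly showing that $\Delta\ell^+(t_{\mathrm{jump}_i})$ is genuinely $O(2^{w_i})$ rather than something larger. The subtlety is that a sneaky jump resolves the bad spell, so after it $\ell^- = 0$ and $\ell^+$ jumps up to $p_i$; one must be careful that $p_i$ is not astronomically far below the pre-jump correct path. This should follow because $p_i$ is a scale-$w_i$ (or related) transition candidate for the parties' \emph{current} positions $\ell_A, \ell_B$ at the jump time — hence within $O(2^{w_i})$ of those positions — and the correct path length $\ell^+$ is sandwiched between $\min(\ell_A,\ell_B) - \ell^-$ and $\min(\ell_A,\ell_B)$, with $\ell^- = O(2^{w_i})$ forced by the sneaky-attack definition (the constraint $k \le 2^{w+1}$ together with $2^j \ge 8L^-$-type reasoning, or more simply the bound on $\Biter$ in Definition~\ref{def:sneaky}). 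I would need to track these inequalities carefully, but no new idea is required beyond bookkeeping with the quantities already set up in Definition~\ref{def:sneaky} and the meeting-point lemmas of Section~\ref{sec:MPs}.
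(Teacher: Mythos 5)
Your proposal is correct and follows essentially the same route as the paper's proof: bound each $\Delta\ell^+(t_{\mathrm{jump}_i})$ by $O(|\divewindows{i}|)$ via Remark~\ref{rem:div-size} and the scale restriction $k\le 2^{w_i+1}$, bound each $|\divewindows{i}|$ by $O(Q_{\mathcal{S}_i}+H_{\mathcal{S}_i})$ via Lemma~\ref{lem:divwindow-corr-coll}, and sum using the pairwise disjointness of diving windows from Lemma~\ref{lem:disjoint}. One small simplification you could make in the step you flagged as the ``main obstacle'': you do not need to sandwich $\ell^+$ via $\ell^-$ at all. Since $\Delta\ell^+ = b - p_i$ where $b$ is the divergent point, and $b \le \min(\ell_A,\ell_B)$, while $p_i$ is a scale-$\le w_i$ transition candidate for the deeper party (hence within $2^{w_i+2}$ of that party's depth), the one-line bound $b - p_i \le \ell_A - p_i \le 2^{w_i+2}$ suffices, with no $\ell^-$ correction term. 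Also, invoking Lemma~\ref{lem:one-sneaky-counter-bound-coll-corr} is redundant here; Lemma~\ref{lem:divwindow-corr-coll} together with Remark~\ref{rem:div-size} already gives you everything you need.
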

\begin{proof}
Recall from Lemma  \ref{lem:divwindow-corr-coll} that, for every sneaky attack $\mathcal{S}_i$,  
there are at least $(1 - 2^{1 - \csneak})|\divewindow|$ iterations where either a small hash collision, a corruption, or an iteration with corrupted randomness occurred.
In particular, this implies that 
\[ ( 1 - 2^{1- \csneak})|\divewindow| \leq  Q_{\mathcal{S}_i}+ H_{\cS_i}, \]
where $Q_{\cS_i}$ is the number of iterations having corrupted randomness or communication during $\cS_i$, and $H_{\cS_i}$ is the number of dangerous small hash collisions during $\cS_i$.

By Claim~\ref{claim:div-disjoint}, we know that the diving windows of sneaky attacks are disjoint, as a result each small hash collision corresponds to diving window of at most one sneaky attack. Then, $\sum_{i=0}^{k} H_{\cS_i} \leq \HC_s^d$. 

Similarly, each corruption also corresponds to at most a single sneaky attack, which implies that  $\sum_{i=1}^m Q_{\mathcal{S}_i} \leq Q$ . As a result, 
\begin{equation*}
    \sum_{i=0}^k |\divewindows{i}| \leq c' \left( \sum_{i=0}^k Q_{\mathcal{S}_i} + \sum_{i=0}^k H_{S_i} \right)  \leq  c' ( Q + \HC_s^d ),
\end{equation*}
where $c' = 1/(1 - 2^{1-\csneak})$ is a constant, and where $\divewindows{i}$ is the diving window for $\cS_i$.

 Finally, let $\cS$ be a sneaky attack that completes at time $\tjmp$, and suppose that $\cS$ has scale $w$ at time $\tjmp$.  Since in the definition of a sneaky attack, we have $k < 2^{w+1}$, we conclude that $ j \leq w $. Then, from Definition~\ref{def:transition-candidates}, the scale of the jump is at most $ w+1$ for either party. As a result,
 \[ \Delta \ell^+(\tjmp) \leq 2^{w+2}. \]

 On the other hand, by Remark~\ref{rem:div-size}, the size of the diving window for such an attack satisfies
 \[ |\divewindow| = 2^{w-1}. \]
 Therefore we can conclude that 
 \[\Delta \ell^+(\tjmp) \leq 16 | \divewindow |. \]
Summing up over all of the completed sneaky attacks, we conclude that
 \begin{equation*}
     \ell^* \leq 16  \sum_{i=0}^k |\divewindows{i}| \leq 16 c' ( Q+ \HC_s^d ),
 \end{equation*}
 as desired.
\end{proof}

Finally, we state our main technical lemma, which roughly says that if Alice and Bob rewind a lot relative to $L^-$, then either $\BVC$ increases a lot, or a sneaky attack is in progress.

\begin{restatable}{lemma}{mainTechLem}[Main Technical Lemma] \label{lem:maintech}
Fix a time $\tsp$, and suppose  that $k_A(\tsp) = k_B(\tsp) =: k$.  
Let $\jsp = \lfloor \log k \rfloor.$
Suppose that $p$ is the next available \specialptf, and that the next jumpable scale is $j_p$.
Let $w$ be such that $p$ is a scale-$w$ MP for whichever party is deeper at time $\tsp$. Further let $b$ be the divergent point at time $\tsp$ and let $t_b$ be the time when $b$ became the divergent point. 
Suppose that 
\begin{equation}
L^-(\tsp) < 2^{j_p-\csneak} \leq 2^{j_p-3}, 
\end{equation}
recalling from Definition~\ref{def:csneak} that $\csneak \geq 3$ is the constant from Definition~\ref{def:sneaky}.

 Then at least one of the following must occur:

\begin{enumerate}
 \renewcommand{\labelenumi}{(\arabic{enumi})}
    \item $\BVC_{AB}(\tsp) \geq \frac{0.6}{4}k$; or
    \item $t_b$ is during the Big Hash Computation phase at the end of a block. 
    \item A \protosneaky heading towards $p$ is in progress for either Alice or Bob at time $\tsp$.
\end{enumerate}

\end{restatable}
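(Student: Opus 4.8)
## Proof Proposal for Lemma~\ref{lem:maintech}

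The plan is to prove the statement in a ``trichotomy-elimination'' form: assume conclusions (1) and (2) both fail — so $\BVC_{AB}(\tsp) < \tfrac{0.6}{4}k$ and the time $t_b$ at which $b$ became the divergent point lies inside some iteration — and deduce that a sneaky attack heading towards $p$ is in progress. Without loss of generality take Alice to be the deeper party at time $\tsp$, i.e.\ $\Aiter(\tsp)\geq\Biter(\tsp)$, so that $p$ is a scale-$w$ MP for Alice with $w\geq j_p\geq \jsp=\lfloor\log k\rfloor$, and set $\hat p=p+2^w$, $q=p+2^{w-1}$, $c_q=q+2^w$ as in Definition~\ref{def:sneaky}. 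First I would record the routine static facts: since $k_A(\tsp)=k_B(\tsp)=k$, both parties have run dummy rounds for the last $k$ iterations, so $\ell_A,\ell_B,\M_A,\M_B$ are unchanged over the voting window and $\BVC_A,\BVC_B$ have not decreased there; in particular $\BVC_{AB}$ stays small throughout this window. Also, since $L^-(\tsp)<2^{j_p-\csneak}\leq 2^{j_p-3}$, we have $2^{j_p}\geq 8L^-(\tsp)$, so Lemma~\ref{lem:commonpt} applies at scale $j_p$ (and at every scale $j$ with $2^j\geq 8L^-(\tsp)$).

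The heart of the argument is a forensic reconstruction of the adversary's attack, mirroring Figure~\ref{fig:sneaky}. Lemma~\ref{lem:commonpt} produces, for each scale $j$ with $2^j\geq 8L^-(\tsp)$, a jumpable point of scale $j$ on which Alice and Bob agree; the fact that $p$ (at scale $j_p$) is the \emph{next available} jumpable point therefore forces every such jumpable point at a strictly smaller scale $j<j_p$ to be \emph{unavailable} — some party has forgotten it. I would argue this party must be Alice: if Bob's memory $\M_B$ had changed recently enough for him to be the one responsible, then combining Lemma~\ref{lem:skipsCostAdv} (applied to the forgotten jumpable point at the smaller scale) with the assumed smallness of $\BVC_{AB}(\tsp)$ yields a contradiction, since skipping a low-scale \emph{available} jumpable point is expensive in $\BVC$. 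Having localized the forgetting to Alice, Lemma~\ref{lem:forget-avmps-z} together with Lemma~\ref{lem:inMa} pins down that Alice must have reached depth $c_q=q+2^w$ (the depth at which the $(w-1)$-stable point $q$ is dropped) after last being at $\hat p$; this defines $t_{c_q}$ (the last visit to $c_q$ before $\tsp$) and $t_{\hat p}$ (the last visit to $\hat p$ before $t_{c_q}$), matching the first two bullets of Definition~\ref{def:sneaky}, and gives the Observation~(b)-type control $\Aiter(t)\geq\hat p$ on $[t_{\hat p},\tsp]$.

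Next I would locate $t_{\hat q}$ and $t_b$. Between the dive (ending at $t_{c_q}$) and $\tsp$ the value of $\ell^-$ must return to zero at least once — otherwise the bad spell containing $t_{c_q}$ never ends, and since $t_b\in[t_{\hat p},\tsp]$ must also lie inside this spell, Lemma~\ref{lem:noweirdcoincidences} (using that $t_b$ is inside an iteration, i.e.\ (2) fails, and the standing no-big-hash-collision assumption) contradicts the divergent-point bookkeeping. Let $t_{\hat q}$ be the first reset of $\ell^-$ after $t_{c_q}$; by Lemma~\ref{lem:noweirdcoincidences} this is a one- or two-sided jump to a common point $\hat q$, and the geometry ($\Aiter\geq\hat p$ on this interval, $q$ already forgotten) forces $\hat q=\hat p$. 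After $t_{\hat q}$, for a new divergent point $b$ to appear by time $\tsp$ there must be a short rewind; Lemma~\ref{lem:noweirdcoincidences} shows exactly one party — necessarily Bob, since Alice stays above $\hat p$ — jumps, creating the divergent point $b$ at $t_b$, and $b$ remains divergent on $[t_b,\tsp]$. A further appeal to Lemma~\ref{lem:skipsCostAdv} with the smallness of $\BVC_{AB}$ shows Bob cannot have drifted far during $[t_{\hat p},t_b)$, which is the starting point for the quantitative bullets.

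The main obstacle I anticipate — and the reason the full argument is deferred to Section~\ref{sec:grossproof} — is verifying the precise numerical containments in Definition~\ref{def:sneaky}: that $b\geq\hat p-2^{w-\csneak}$; that $\Biter(t')<\hat p+2^{w-\csneak}$ for all $t'\in[t_{\hat p},t_b)$ and $\Biter(\tsp)<\hat p+2^{w-2}$; that $q\notin\M_A$ at time $\tsp$; and that $k\leq 2^{w+1}$. Each of these requires tightly tracking Bob's depth through several (possibly interleaved) jumps, using the bound $L^-(\tsp)<2^{j_p-\csneak}$ to limit how far Bob's path and the correct path can separate, Lemma~\ref{lem:MP-jump} to prevent Bob's small rewinds from overshooting stable barrier points, and Lemma~\ref{lem:forget-avmps-z}/Lemma~\ref{lem:inMa} to control when $q$ can re-enter or leave $\M_A$. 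Managing the relationship among the three scales $\jsp\leq j_p\leq w$ — and ruling out degenerate ``scale-collapse'' cases in which some of the reconstructed times coincide, or $p$ fails to be the point the parties will actually jump to — is where most of the case analysis lives; the $\BVC$ budget supplied by Lemma~\ref{lem:skipsCostAdv} is the tool that repeatedly eliminates the alternative scenarios.
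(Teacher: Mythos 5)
Your high-level strategy---assume $(1)$ and $(2)$ both fail and reconstruct a sneaky attack matching Definition~\ref{def:sneaky}---is the right shape, and you correctly identified Lemma~\ref{lem:skipsCostAdv} as the workhorse that converts ``a cheap jumpable point was skipped'' into a $\BVC$ cost.  However, there are three structural gaps that would prevent the proposal from closing.

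First, the reconstruction tacitly assumes $b < \hat p$ (and $\ell^-(\tsp)>0$).  The paper's proof handles a separate regime (its ``Item~2'') in which $b \geq \hat p$ or $\ell^-(\tsp)=0$.  In that regime no sneaky attack can be constructed at all: the relevant barrier point $\hat p = p + 2^w$ is above the divergent point, hence its mega-states agree, and one shows it is actually \emph{in both} $\M_A$ and $\M_B$ (Claim~\ref{claim:phat-avaialble}) and appears as a scale-$(j_p{-}1)$ transition candidate for both parties (Claim~\ref{cl:phat_av_jump}).  Lemma~\ref{lem:skipsCostAdv} then forces conclusion $(1)$ directly.  You never show that $\neg(1)$ forces $b < \hat p$, and indeed it does not on its own; this case must be treated separately.

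Second, you do not establish that $p = \MPone$ for the deeper party, equivalently that $j_p = w-1$.  The paper spends two non-trivial claims on this (Claims~\ref{claim:bvc-or-mp12} and~\ref{claim:only-MP1}), showing that if $p$ were $\MPtwo$ or $\MPthree$ at scale $j_p$, it would already be a transition candidate for both parties at scale $j_p-1$, and skipping it there would force $(1)$ via Lemma~\ref{lem:skipsCostAdv}.  Without this step, the identities $\hat p = p + 2^w$, $q = p + 2^{w-1}$, $c_q = q + 2^w$ do not line up with the scale $w$ in Definition~\ref{def:sneaky}, and the numerical bullets you acknowledge as the ``main obstacle'' cannot be verified; you mention the relationship $\jsp \leq j_p \leq w$ but it is the equality $j_p = w-1$ that does the work.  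Your proposed use of Lemma~\ref{lem:commonpt} does not substitute for this: that lemma produces a common \emph{transition candidate} at each large-enough scale, but says nothing about whether the point is still in $\M_A\cap\M_B$, which is exactly what the rest of the argument hinges on.

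Third, your argument that the forgetting party ``must be Alice'' does not work.  You propose applying Lemma~\ref{lem:skipsCostAdv} to the forgotten lower-scale point, but that lemma requires the point to be available for \emph{both} parties at time $\tsp$---which is precisely what fails if either party has forgotten it.  In fact the WLOG ``$\Aiter \geq \Biter$ at time $\tsp$'' does not constrain who was deeper during the dive, since the dive happens at an earlier time.  The paper's Claim~\ref{cl:if_q_available_once_sneaky_attack} treats $q\notin\M_A$ and $q\notin\M_B$ symmetrically, producing a sneaky attack targeting whichever party forgot $q$; both possibilities must be allowed.
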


As the proof of Lemma~\ref{lem:maintech} is quite long, we defer it to its own section at the end of the paper, Section~\ref{sec:grossproof}.

\subsection{Potential Function Analysis}\label{sec:progressProof}
In this section we will analyze how the potential function evolves throughout Algorithm~\ref{alg:adaptive}. We analyze this in two steps. First we study the potential change per iteration and we show that the potential makes progress during most iterations. Next we analyze the potential per block, showing that the big hash computation  that happens at the end of the block minimally effects the potential function.  Finally, we conclude by presenting upper and lower bounds on the final values of the potential function. 

\subsubsection{Potential Change Per Iteration}
\begin{lemma}[Potential function makes progress]
\label{lem:progress}
There exist constants $C_1 < C_2 < \cdots < C_6$ (from the definition of $\Phi$ in \eqref{eq:potential}), and a constant $C^-$, so that the following holds.

Suppose that no big hash collision ever occurs throughout the execution of Algorithm~\ref{alg:adaptive}. 
Let $I \in [\Itotal]$, and let $t_0$ be the time at the beginning of iteration $I$ (Line~\ref{line:iterstart}) and let $t_1$ be the time at the end of iteration $I$ (Line~\ref{line:iterend}).
Then either a \sneakyjmpf\ occurs in iteration $I$, or the following holds.
\begin{itemize}
    \item [1.] Suppose that:
    \begin{itemize}
        \item The adversary does not introduce a corruption during iteration $I$;
        \item Iteration $I$ does not have corrupted randomness;
        \item There is no small hash collision in iteration $I$. 

    \end{itemize}

    Then  $\Phi(t_1) - \Phi(t_0) \geq 1$. 
   \item  [2.] Otherwise, $\Phi(t_1) - \Phi(t_0) \geq -C^{-}$.
    \end{itemize}
\end{lemma}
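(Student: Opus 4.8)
The plan is to do a careful case analysis over what happens in iteration $I$, tracking the change in each of the terms in the potential function $\Phi$ from \eqref{eq:potential}. The terms are $\ell^+$ (the correct simulated path length), $\ell^-$ (the out-of-sync depth), $L^-$ (the max of $\ell^-$ during the current bad spell), $k_{AB}$, $E_{AB}$, and $\BVC_{AB}$, with the exact combination depending on whether $k_A = k_B$ at the relevant time. The key observation is that within a single iteration, only a few things can happen: (a) both parties simulate $\Pi$ correctly (Line~\ref{line:simulate} for both, no corruption), which should push $\ell^+$ up by $1$; (b) a transition/jump occurs, changing $\ell^+, \ell^-, L^-$ and resetting votes; (c) various counters increment or reset; and (d) ``nothing'' happens and we coast on the $C_1 k_{AB}$ term. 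I would first establish a small menu of "bookkeeping facts": how each of $k_A, k_B, E_A, E_B, \BVC_A, \BVC_B$ can change in one iteration (each changes by $O(1)$ unless reset to $0$), and how $\ell^+, \ell^-, L^-$ can change (by at most $1$ per party per iteration unless a jump occurs, in which case $\ell^+$ can drop but only in a way controlled by the scale $j$, and the jump resets the $k$-counters).

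For Part 1 (the ``good'' case: no corruption, no corrupted randomness, no small hash collision in iteration $I$), I would split on whether iteration $I$ is dangerous (Definition~\ref{def:danger}). If $I$ is \emph{not} dangerous, then $\ell^- = 0$ and $k_A, k_B \le 1$ at the start; by Lemma~\ref{lem:notdangerous} both parties simulate $\Pi$ correctly, so $\ell^+$ increases by $1$ and $\ell^-$ stays $0$; the counters $k, E, \vi{i}$ all get reset on Line~\ref{line:reset1}, so the "$+C_1 k_{AB}$" term can only drop by $O(1)$, and one checks with the right constant choices that the net change is $\ge 1$. If $I$ \emph{is} dangerous, then I would further split on whether a (successful) jump resolving the bad spell occurs. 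If a successful jump to a point above the divergent point $b$ occurs, then $\ell^-$ drops (possibly to $0$, ending the bad spell), which is good since $-C_3\ell^-$ increases; but $L^-$ does not decrease mid-bad-spell, and $\ell^+$ may change — here I would use that the jump is to an available scale-$j$ meeting point so $\Delta\ell^+$ is bounded by $O(2^j) = O(k)$, and this is paid for by the $C_1 k_{AB}$ term being released (since $k$ resets). The remaining sub-case is a dangerous iteration with no bad-spell-resolving jump; here either both parties run dummy rounds (so $k_{AB}$ increases by $2$, giving $+2C_1$, while $\ell^+, \ell^-, L^-$ are unchanged and $E_{AB}, \BVC_{AB}$ don't change since no corruption/collision — net $\ge 1$), or exactly one party simulates while the other doesn't, or a jump to a point at or below $b$ occurs. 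This last possibility is precisely where the \textbf{\sneakyjmpf} escape hatch in the lemma statement is needed: a jump below $b$ is exactly the completion of a sneaky attack (cf. the discussion around Lemma~\ref{lem:noweirdcoincidences} item 6(2) and Definition~\ref{def:votewindow}), so if we are not in that case we can argue it doesn't happen, or rather, we can argue such a jump entails that $\ell^-$ was reset favorably. I would also need to handle the case where one party's $k$ got reset via the $2E \ge k$ condition (Line~\ref{line:sync-condition}): since no corruption/collision occurred, I'd argue $E$ stays small relative to $k$, so this branch is not triggered, or if it is, the reset is benign.

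For Part 2 (the ``bad'' case: there was a corruption, corrupted randomness, or a small hash collision), the goal is just a crude lower bound $\Phi(t_1) - \Phi(t_0) \ge -C^-$. Here I would simply bound the worst-case single-iteration damage to each term: $\ell^+$ can drop by at most the jump scale, which is $O(2^{j+1}) = O(k_{AB})$ — and crucially this is offset because the jump resets $k_{AB}$ to $0$, so the loss in $\ell^+$ and the loss in $C_1 k_{AB}$ partially cancel rather than compound; $\ell^-$ can increase by at most $2$ (one per party), $L^-$ can increase by at most $2$, $E_{AB}$ can increase by at most $2$, and $\BVC_{AB}$ can increase by at most $O(1)$ per iteration (by inspecting Definition~\ref{def:BVC}: each $\vi{i}$ contributes at most one increment). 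When $k_A = k_B$ changes to $k_A \ne k_B$ (or vice versa) the \emph{form} of $\Phi$ switches between the two cases in \eqref{eq:potential}; I would need to bound the discontinuity of this switch, which is where choosing $C_4$ appropriately relative to $C_1$ matters — the switch should not cost more than $O(C_4 k_{AB})$, and again $k_{AB}$ at the moment of the switch is small (the switch happens when exactly one party resets, so $k$ on one side is $1$ or so... actually one needs to be careful here, the switch can happen with $k$ large on one side, so I'd set the constants so that $0.9 C_4 k_{AB}$ in the second branch dominates and the transition is absorbed). Collecting all these bounds, $C^-$ is a fixed constant (depending on $C_1, \ldots, C_6$) and we are done.

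\textbf{The main obstacle} I expect is the bookkeeping around jumps: precisely controlling $\Delta\ell^+$ in terms of the scale $j$, verifying that the simultaneous drop in $\ell^+$ and release of the $C_1 k_{AB}$ term genuinely cancel (rather than one dominating badly), and pinning down the exact ordering constraints $C_1 < C_2 < \cdots < C_6$ and $C^-$ that make every case go through simultaneously — especially reconciling the two branches of $\Phi$ across an iteration in which $k_A = k_B$ transitions to $k_A \ne k_B$. The conceptual subtlety of recognizing that "jump below the divergent point" $=$ "sneaky jump" and hence falls under the explicit exception, rather than needing to be bounded directly, is the one genuinely non-routine ingredient; everything else is (tedious) constant-chasing.
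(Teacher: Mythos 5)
Your proposal has a genuine gap, and it centers on the one step you flagged as ``the one genuinely non-routine ingredient.''

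You write that ``a jump below $b$ is exactly the completion of a sneaky attack.'' This is not correct, and the error is not cosmetic. \emph{Every} jump that resolves a bad spell lands at or above the divergent point $b$, so under your characterization every resolving jump would be a sneaky jump, which would make the whole potential-function argument vacuous (you would be punting every resolving jump to the escape hatch). A sneaky jump is a resolving jump that occurs under the very specific circumstances of Definition~\ref{def:sneaky}, and in particular under circumstances where $L^-$ is small relative to the jump scale \emph{and} $\BVC_{AB}$ is small. When a resolving jump of scale $j_p$ occurs, the loss $|\Delta\ell^+|$ can be of order $2^{j_p}$, and the released $C_1 k_{AB}$ term is of the same order, so $\Delta\ell^+$ and $C_1 k_{AB}$ do not ``cancel'' by themselves---you must pay for the loss from elsewhere. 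The paper's proof pays for it from the released $C_2 L^-$ term when $L^- \geq 2^{j_p - \csneak}$, and from the released $2C_6\BVC_{AB}$ term when Lemma~\ref{lem:maintech} forces $\BVC_{AB} \geq \frac{0.6}{4}k$; only in the residual third alternative of Lemma~\ref{lem:maintech} does the sneaky-jump escape hatch kick in. Your proposal never invokes Lemma~\ref{lem:maintech} (the paper's main technical lemma) and never uses the $\BVC$ term to absorb jump losses, which is precisely where cases (d) and (e) of the paper's Claim~\ref{claim:potential-progress-transition} do the real work. The hand-wave ``we can argue such a jump entails that $\ell^-$ was reset favorably'' is asserting exactly the thing that fails for sneaky attacks and that the trichotomy of Lemma~\ref{lem:maintech} exists to resolve.

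A secondary, smaller gap: when only one of the two parties jumps to a common candidate at scale $j$ (the paper's case (d)) or both jump but to different points (first half of case (e)), the argument again leans on $\BVC_{AB}$ being at least $0.4k$ or $0.6k$ --- an under- or over-counted vote necessarily happened. Your outline bounds $\BVC_{AB}$ only from above (``at most $O(1)$ per iteration'') and never exploits it from below, so these cases are not actually covered by your plan. The remaining bookkeeping in your proposal (the phase split, the dangerous/non-dangerous split, the use of Lemma~\ref{lem:notdangerous}, the observation that resetting $k_{AB}$ partially offsets $\Delta\ell^+$, and the concern about the $k_A = k_B$ vs.\ $k_A \neq k_B$ form switch) does track the paper's Claims~\ref{claim:potential-progress-verf-comp} and \ref{claim:potential-progress-transition}, but without the $L^-$/$\BVC$/sneaky-attack trichotomy the hardest cases are left open.
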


\begin{proof}
    We will prove this statement by analyzing the different phases of an iteration separately. 
    In Claim~\ref{claim:potential-progress-verf-comp}, we focus on the Verification and Computation Phases.  We show that during these phases, the change in $\Phi$ is bounded in magnitude by some constant $C^-$. We also show that, if the conditions of Item 1 above are satisfied, then in fact $\Phi$ increases by at least one during these phases. 
    
    In Claim~\ref{claim:potential-progress-transition} we focus on the Transition Phase. 
 We show that, unless a \sneakyjmp occurred in iteration $I$, $\Phi$ cannot decrease during the Transition Phase.

    The two claims together will prove the lemma. 
    
    Before we begin, we re-state the definition of the potential function $\Phi$ for the reader's convenience:
    \[ \Phi = \begin{cases} \ell^+ - C_3 \ell^- - C_2 L^- + C_1 k_{AB} - C_5 E_{AB} - 2C_6 \BVC_{AB} & \text{if } k_A = k_B \\
\ell^+ - C_3 \ell^- - C_2 L^- - 0.9 C_4 k_{AB} + C_4 E_{AB} - C_6 \BVC_{AB} & \text{if } k_A \neq k_B
\end{cases} \]

    We begin with the Verification and Computation Phases.
 
\begin{claim}[Potential progress in Verification and Computation Phases] \label{claim:potential-progress-verf-comp}
Assume the hypotheses of Lemma~\ref{lem:progress}.  Let $t_2$ be the time that the computation phase ends.  Then:

\begin{itemize}
   \item [1.] If there is neither an adversarial corruption to randomness or communication nor a  small hash collision in iteration $I$, then   $ \Phi(t_2) - \Phi(t_0) \geq 1 $. 

    \item [2.] Otherwise, $\Phi(t_2) - \Phi(t_0) \geq -C^{-}$ for some fixed constant $C^-$.  
  
\end{itemize}

\end{claim}
\begin{proof} 
Throughout the proof for each variable $\texttt{var}$, we will use $\Delta \texttt{var}$ to denote the change in $\texttt{var}$ during the Verification and Computation Phases.  (Note that we will overload this notation in the proof of the next claim, when we focus on the transition phase.) 
 That is, for this claim, $\Delta(\texttt{var}) := \texttt{var}(t_2)- \texttt{var}(t_0)$.

First we observe that each term in $\Phi$ changes by at most a constant during the Verification and Computation phases. Indeed, $\ell^+$ can at most increase by one and cannot decrease at all; $k$ could get reset to zero at the end of the Computation Phase, but this line only executes if $k=1$, so $k$ can change by at most $1$; and $E$ may increase by $1$ during the Verification Phase, and may get set to $0$ at the end of the Computation Phase, but that only executes if $E=0$. As for the negatively appearing terms, $\ell^-$ and $L^-$ can both increase by at most $1$, and $\BVC$ does not change. This implies that $\Delta \Phi \geq -C^-$, for some constant $C^-$, no matter what.

Now suppose that the hypotheses of Item 1 are satisfied; we will show that $\Delta \Phi \geq 1$.
First we notice that, as there are no hash collisions, corruptions, or corrupted randomness in iteration $I$, by 
Lemma~\ref{lem::BVC-coll-corr},
$\BVC_{AB}$ cannot increase.  
We break down the analysis into three cases.
\\

\begin{itemize}

    \item \textbf{Case 1: Simulation proceeds.}

      In this case, we consider the setting where at least one of Alice and Bob simulated $\Pi$ during the computation phase. Recall that we are assuming there are no small hash collisions, corruptions, or corrupted randomness.  In this case, we claim that $\ell^-(t_0) = \ell^-(t_2) = 0$  and further, \emph{both} Alice and Bob simulate $\Pi$ in the computation phase.

      First, we show that $\ell^-(t_0) = 0$. Let $\megastate_A = \ms{\Aiter(t_0)}$ and $\megastate_B = \ms{\Biter(t_0)}$. Suppose toward a contradiction that  $\ell^-(t_0) > 0 $,  then by the definition of $\ell^-$ we know that either $\simPath_A(t_0) \neq \simPath_B(t_0)$ or $ \megastate_A \neq \megastate_B$.  From Corollary \ref{cor:transcript-ms} we know that in both cases, $\megastate_A \neq \megastate_B$. Then Corollary~\ref{cor:ms-not-match-then-triplet-not-match} implies that there exists a variable $\var \in  \{\pv , (, \pseed, \phash, \pt, \piter)  \}$ such that $ \var_A \neq \var_B$. 
  
      This variable is hashed with a small hash during the verification phase. Since the simulation proceeded for at least one of Alice or Bob, for that party, after the verification phase, we had that 
      \begin{equation}
          H_{\var} = H'_{\var}
      \end{equation}
  in order to pass the check in line~\ref{line:checksmallhash}. As we are assuming that there are no corruptions or corrupted randomness, this means that there must have been a small hash collision. However this contradicts our initial assumption that there are no small hash collisions during this iteration. Thus we can conclude that $ \megastate_A = \megastate_B$ and as a result $\ell^- = 0$.

      Given that $\ell^-(t_0)=0$, now we show that both parties will simulate during the computation phase. 
      In order to simulate, per Algorithm~\ref{alg:adaptive} the following conditions must be satisfied at Line~\ref{line:checksmallhash}:
      \begin{itemize}
          \item The small-hashed value of $\pv$ and $(\pt,\phash, \pseed, \piter)$ needs to line up with the hash sent by the other party
          \item $k=1$
          \item $E=0$
          \item \texttt{Rew = False}
      \end{itemize}
   Without loss of generality, assume that Alice simulated during this iteration. Hence, $E_A=0$, indicating that $H_{k_A} = H^\prime_{k_B}$. Since there were no corrupted communications or randomness, and no small hash collisions, it must be that $k_A = k_B = 1$ during the verification phase.  Indeed, otherwise Alice's copy of $E_A$ would have incremented and would not be zero. Given that $k_B = 1$ during the verification phase, we can infer that Bob's parameters were reset to zero during the previous iteration in line~\ref{line:reset1} or line~\ref{line:reset2} of Algorithm~\ref{alg:adaptive}.  As a result, $E_B=0$ for Bob as well because, as there are no corruptions in this iteration, $H_{k_B} = H^\prime_{k_A}$. Given that \[H_{(\phash, \pseed,\pt, \piter).A} = H_{(\phash, \pseed, \pt,\piter).B}\] and that there were no corruptions during iteration $I$, we have that \[H_{(\phash, \pseed,\pt, \piter).A}^\prime = H_{(\phash, \pseed, \pt , \piter).B},\] so Bob also meets the requirements for simulating an iteration. Hence, both Alice and Bob simulate during this iteration. As there are no corruptions during this iteration we have $\simPath_A(t_2) = \simPath_B(t_2)$. Further, since there is no corrupted randomness during this iteration, we can conclude that $\megastate_A (t_2) = \megastate_B(t_2)$.

    Thus, we conclude that $\ell^-(t_2) = 0$ (and $L^-(t_2) = 0$) and both Alice and Bob simulated $\Pi$ in the Computation Phase of iteration $I$. As noted above, $\Delta \BVC_{AB} \leq 0$. 
    As $k$ is reset to zero at the end of the computation phase, we also have $k_A(t_2) = k_B(t_2) = 0$; and as noted above $k_A(t_0) = k_B(t_0) = 0$.  Thus, $\Delta k = 0$.

   Similarly, $\Delta E = 0$. Finally, $\Delta \ell^+ = 1$, as both Alice and Bob correctly simulated the next round. Altogether, we see that $\Delta \Phi \geq 1$ in this case.

    \item \textbf{Case 2: Simulation doesn't proceed and $k_A = k_B$.} Now, we focus on the case when neither Alice nor Bob simulated $\Pi$ in the Computation Phase, but $k_A = k_B$ at the beginning of iteration $I$.  
    As Alice and Bob do not change their location in $\Pi$ during the Verification and Simulation Phase of this iteration, then parameters depending on the location of Alice and Bob, such as $\ell^-,\ell^+, L^-$, do not change. Since there are no hash collisions or corruptions during this iteration, $\Delta \BVC_{AB} = 0$, and, as $k_A = k_B$ then $\Delta E_{AB}= 0$. 
    Thus, the only parameter changing throughout $[t_0, t_2]$ is $k_{AB} $, which increases by 2. As a result, for $C_1 \geq 0.5 $ we get that $\Delta \Phi  =  C_1 \Delta k_{AB} \geq 1 $. 
    
    \item \textbf{Simulation doesn't proceed and $k_A \neq k_B$.} As above, parameters depending on Alice and Bob's location in the protocol do not change. Also, $\Delta \BVC_{AB}= 0$.
    Then the only parameters changing throughout $[t_0,t_2]$  are $k_{AB} $ and $E_{AB}$. While $k_{AB}$ increases by two, $E_{AB}$ also increases by two, as a result the change in the potential equals, $\Delta \Phi  = 2 ( 1 - 0.9 ) C_4 = 0.2 C_4$. Thus assuming $C_4 \geq 5$, the change in the potential will be at least $1$.
    
\end{itemize}
\end{proof}

Next, we analyze the transition phase.  
 During the transition phase, if a transition happens due to the parameter $E$ exceeding the limit $k/2$, then we refer to this transition as an \textbf{\textit{error transition}}; other cases will be referred to as \textbf{\textit{meeting point transitions}}. 
\begin{claim}[Potential progress in the transition phase] \label{claim:potential-progress-transition}

Suppose no big hash collision ever occurs throughout the execution of algorithm~\ref{alg:adaptive}. 

Let $I\in [\Itotal]$, Let $t_2$ be the end of the computation phase of iteration $I$, as in Claim~\ref{claim:potential-progress-verf-comp}, and recall that $t_1$ is the end of the iteration $I$ (aka, the end of the transition phase). Then $\Delta \Phi = \Phi(t_1) - \Phi(t_2)$ has one of the following behaviors: 
\begin{enumerate}
    \item $\Delta \Phi \geq 0$.
    \item  One error transition occurred, $k_A(I) \neq k_B(I)$, and further either $k_A(I)= 1 $ or $k_B(I) = 1 $. Then $\Delta \Phi \geq -C^-$.  Further, assuming that there are no small hash collisions or corruptions in iteration $I$, we have $\Phi (t_1) - \Phi(t_0) \geq 1$. 
    \item A sneaky attack has occurred. (In this case, $\Delta \Phi$ may not be bounded by a constant. This will be analyzed separately.)
\end{enumerate}

\end{claim}

\begin{proof}
Overloading notation from the previous claim, $\Delta\texttt{var}$ now denotes the change in the value of a variable $\texttt{var}$ before and after the \emph{transition} phase.
If no transitions occur during the transition phase, then no parameter within the potential function is changed, resulting in $\Delta \Phi = 0$, which proves our claim. However, if there is at least one transition, several scenarios are possible for the transition phase, depending on whether $k_A$ and $k_B$ are equal or how many and what type of transition has occurred. We will examine each scenario individually, demonstrating that if there are no sneaky attack jumps during this transition phase, then $\Delta \Phi$ will always be non-negative. There is a special sub-case (corresponding to Item 2 in the Claim) where the potential will drop during the transition phase, but overall, during the full iteration, the potential will still increase by one.

    \begin{itemize}
        \item [(a)] \textit{$k_A \neq k_B$ and one transition occurred. }
        Without loss of generality assume that Alice makes the transition. We treat two cases.
        
        \textbf{Case 1.} \textit{Alice makes an error transition.} Then Alice's and Bob's locations in the original protocol don't change, so consequently parameters $\ell^+, \ell^-, L^-$ -- which depend on the location of the parties -- remain unchanged during this time window. Hence the only parameters that could change are $k_{AB}$, $E_{AB}$, and $ \BVC_{AB}$.  Notice that during the iteration $I-1$, $E_{A}(I-1) \leq 0.5( k_{A} (I-1) = 0.5 ( k_{A}(I)- 1 ) $, and during each iteration $E_A$ can only increase by one, thus during iteration $I$, $E_{A} (I) \leq 0.5 (k_A (I) - 1 ) + 1  = 0.5 k_A(I) + 0.5 $. Now summing up the difference among the parameters according to the potential formula, 
        \begin{align*}
           \Delta \Phi &= C_4( \Delta E_{AB} (I) - 0.9 \Delta k_{AB}(I) ) - C_6 \Delta \BVC_{AB}(I)\\
           & \geq C_4( \Delta E_{AB} (I) - 0.9 \Delta k_{AB} (I) ) \\
           & \geq  C_4 ( - 0.5 k_A(I) - 0.5  + 0.9 k_A(I) )\\
           & = C_4(0.4 k_A(I) - 0.5 ) .
        \end{align*}
        
            \textbf{Subcase 1.1: $k_A(I) > 1$.}  In this case from the above we clearly have $\Delta \Phi \geq C_4(0.4 k_A(I) - 0.5) \geq 0 $.

            \textbf{Subcase 1.2: $k_A(I) = 1$.}
            If $k_A(I) = 1 $, then according to the above calculation, the potential difference may be negative during the transition phase. As a result, we will analyze this case for the full iteration and directly prove the statement of Lemma~\ref{lem:progress}.  Consequently, for this subcase, we use $\Delta \var$ to refer to the change in a variable over the \emph{entire} iteration, not just the transition phase. 
            If $k_A(I) = 1 $, then Alice has reset her parameters in the previous iteration; hence, at the beginning of iteration $I$, $k_A = 0$, and at the end of iteration $I$, $k_A = 0$ again.  Thus, $\Delta k_A(I) =0$.
            Similarly, we have $\Delta E_A = 0 $.   Moreover, $ \Delta \ell^-_{AB} \leq 1 $  and $\Delta L^-_{AB} \leq 1.$  This is because we are in Situation (a), so only Bob is continuing simulation; hence $\ell^-_A$ and $L^-_A$ do not change.

            Now we prove the conclusion of Lemma~\ref{lem:progress} in the case that there may be hash collisions or corruptions or corrupted randomness.
            During the computation and verification phase, according to Claim~\ref{claim:potential-progress-verf-comp}, $\Delta \Phi \geq -C$ for some constant $C$.  As there are no parameter changes for Bob during the transition phase, parameter changes during the transition phase are caused by Alice. Consequently, with  $k_A = 1 $ then the inequalities $E_{A} \leq 0.5 k_A + 0.5 $ and $\BVC_{A} \leq k_A$ show that the negative overall potential change is still bounded by a constant. Thus, the conclusion of Lemma~\ref{lem:progress} still holds in this case.
            
            Next, we prove the conclusion of Lemma~\ref{lem:progress} when there are no hash collisions, corruptions, or corrupted randomness during iteration $I$.  In this case, notice that $\Delta E_{AB} = \Delta E_{A} + \Delta E_{B} = 1 $. Indeed, we have $\Delta E_{A} = 0$ as noted above. 
 Moreover, $E_B$ will increase by one: in case (a), $k_A \neq k_B$, and if there are no hash collisions, corruptions, or corrupted randomness, this means that Bob's copies of $H_k$ and $H_k'$ will disagree on Line~\ref{line:checkk}, and $E_B$ will be incremented. 
 
 Finally, adding up the terms in the potential function we get that
              \begin{align*}
            \Delta \Phi &= - C_3 \Delta \ell^- - C_2 L^- - 0.9 C_4\Delta k_{AB} + C_4 \Delta E_{AB} - C_6 \Delta \BVC_{AB}\\
           & \geq - C_3 \Delta \ell^- - C_2 L^- - 0.9 C_4\Delta k_{AB} +C_4 \Delta E_{AB}\\
            & = - C_3 \Delta \ell^- - C_2 L^- - 0.9 C_4 (\Delta k_A + \Delta k_B ) + C_4 \ ( \Delta E_A + \Delta E_B)\\
            & \geq - C_3 - C_2 - 0.9 C_4 +   C_4 \\
            & = -C_3 - C_2 + 0.1 C_4 \ .
        \end{align*}
        This guarantees that $\Phi$ will increase by at least one over the course of the whole iteration $I$, for a large enough choice of $C_4$.

        \textbf{Case 2.} \emph{Alice made a meeting point transition.}  Abbreviate $j_A = j$, so $k_A = 2^{j+1} - 1$. Observe that the length of Alice's jump is at most $4k$. This limitation exists because in each jump, the parties consider meeting points at scales $j$ and $j+1$, and $\MPthree$ is at a higher depth than $\MPtwo$. This is because, according to Definition~\ref{def:transition-candidates}, $\MPthree$ is the deepest point $p$ such that $2^j | p$. Since $2^j$ divides $\MPtwo$, then $\MPthree$ must be at a higher depth. Therefore, the largest jump occurs when Alice decides to jump to $\MPone$, where $\Delta \ell_A \leq 2^{j+2} - 1 \leq 4k_A$. As a result, all parameters dependent on Alice's location will change by a maximum of $4k_A$. Specifically, we have $\Delta \ell^+ \leq 4k_A$, $\Delta \ell^- \leq 4k_A$, and $\Delta L^- \leq 4k_A$. Additionally, as Alice made a meeting point transition, $\Delta E_{AB} = -E_A$ and $E_A \leq 0.5k_A$. Finally, since Alice is the only party transitioning, $\Delta k_{AB} = -k_A$. Summing up the terms according to the potential function, we find that
        \begin{align*}
        \Delta \Phi &= \Delta \ell^+ - C_3 \Delta \ell^- - C_2 \Delta L^- - 0.9 C_4 \Delta k_A + C_4 \Delta E_A - C_6 \Delta \BVC_{AB}\\
        &\geq \Delta \ell^+ - C_3 \Delta \ell^- - C_2 \Delta L^- - 0.9 C_4 \Delta k_A + C_4 \Delta E_A \\
        &\geq -4k_A - 4(C_3 + C_2) k_A + 0.9 C_4 k_A - C_4 E_A \\
        &\geq -4k_A - 4(C_3 + C_2) k_A + 0.9 C_4 k_A - 0.5 C_4 k_A \\
        &= k_A ( 0.4 C_4 - 4(1 + C_3 + C_2) ) \ .
    \end{align*}
        For large enough $C_4$, this ensures a non-negative change in the potential.
        
        \item [(b)] \textit{$k_A \neq k_B$ and two transitions happen.}
        For all types of transitions, from Item (a), Case 2, we know that $|\Delta \ell^+|  \leq 4 k_{AB}$, $ |\Delta \ell^- | \leq 4 k_{AB}$, $  | \Delta L^- | \leq 4 k_{AB} $. Additionally, from Item (a), Case 1,  we know that $ E_{A} \leq 0.5 k_A + 0.5 $ and $ E_B \leq 0.5 k_B + 0.5  $ . Adding the terms up we see that  
        \begin{align*}
            \Delta \Phi  &= \Delta \ell^ + - C_3 \Delta \ell^- - C_2 \Delta L^- - 0.9 C_4 \Delta k_{AB} + C_4 \Delta E_{AB} - C_6 \Delta \BVC_{AB}\\ 
            &  \geq \Delta \ell^+ -  C_3 \Delta \ell^- - C_2 \delta L^- - 0.9 C_4 \Delta k_{AB} + C_4 \Delta E_{AB} \\ 
            & \geq -4 k_{AB}  - 4C_3 k_{AB} - 4C_2 k_{AB} +  0.9 C_4 k_{AB} - C_4 ( 0.5 k_{AB} +  1 ) \\ 
            &  \geq k_{AB}  ( -4 ( 1 + C_3 + C_2 )   + 0.9 C_4 - 0.5 C_4 - \frac{C_4}{k_{AB}} ) \\
            & \geq k_{AB}  \left( C_4 ( 0.9 - 0.5 - \frac{1}{k_{AB}} \right) - 4 (1 + C_2 + C_3)
        \end{align*}
       The term $\BVC_{AB}$ was removed after the first inequality because $\BVC$ is reset to zero after any transition. This reset has a positive impact on the potential.

     Considering that $k_A \neq k_B$, it follows that $k_{AB} \geq 3$, which implies $0.9 - 0.5 - \frac{1}{k_{AB}} \geq 0.4 - \frac{1}{3} > 0.06$.

        Then assuming $C_4 > \frac{4(1+ C_2 + C_3)}{0.06}$ the change in the potential is non-negative.

        \item [(c)]  \textit{$k_A = k_B = k $ and at least one error transition happened.} As there are either one or two error transitions, the positions of Alice and Bob remain unchanged during the transition phase. Therefore, $\Delta \ell^+ = 0, \Delta \ell^- = 0, \Delta L^- = 0$. Furthermore, since there is at least one error transition, we can conclude that $\Delta E_{AB} \geq 0.5k$ and $\Delta k_{AB} \leq 2k$. For the potential change, we observe that
        \begin{align*}
            \Delta \Phi &= C_1 \Delta k_{AB} - C_5 \Delta E_{AB} -2C_6 \Delta \BVC_{AB}\\
            &\geq  C_1 \Delta k_{AB} - C_5 \Delta E_{AB}  \\
            & \geq - C_1 2k + 0.5 C_5 k \\
            &  = k ( - 2 C_1 + 0.5 C_5) \ .
        \end{align*}
        For large enough $C_5$, this results in a non-negative change of potential.
        \item [(d)]  \textit{$k_A = k_B = k $ and one meeting point transition happened.} Similar to previous cases we have that $\Delta k_{AB} \leq 2k$ , $|\Delta \ell^+| \leq 4k_{AB}$, $ \Delta \ell^- \leq 4k_{AB}$ and $\Delta L^- \leq 4 k_{AB} $. If only one meeting point transition happens there are two possible explanations: Either the meeting point was available to both parties and one party didn't count enough votes, which implies that there were many bad votes so we will have $\BVC_{AB} >  0.6 k $. Otherwise the meeting point was not available to one of the parties and the other party jumped by over-counting the votes. This causes an increase in $\BVC$ of at least $0.4 k$, implying $\BVC_{AB} \geq 0.4 k $. As a result the change in the potential will be dominated by the change in the $\BVC_{AB}$. Formally:  
        \begin{align*}
            \Delta \Phi &= \Delta \ell^+ - C_3 \Delta \ell^- - C_2 \Delta L^- + C_1 \Delta k_{AB}  - C_5 \Delta E_{AB} - 2 C_6 \Delta \BVC_{AB}  \\ 
            & \geq \Delta \ell^+ -  C_3 \Delta \ell^- - C_2 \Delta L^-+ C_1 \Delta k_{AB} - C_5 -  2 C_6 \Delta \BVC_{AB} \\ 
            & \geq -4k_{AB}  - C_3 4k_{AB} - C_2 4k_{AB}  - C_1 2k_{AB}  +  2 C_6 0.4 k_{AB} - C_5\\
            & \geq k_{AB} ( -4( 1 + C _3 + C_2 ) - 2 C_2 + 0.8 C_6 ) - C_5 \\
                & \geq 2k ( -4( 1 + C _3 + C_2 ) - 2 C_2 + 0.8 C_6 ) - C_5.
        \end{align*}
        Thus, for large enough $C_6$, the potential difference will be non-negative. The $C_5$ carrying over throughout the computation is due to the fact that the $E$ value for the party not making the transition may increase by $1$ during this iteration. The party making a transition will reset the corresponding $E$ value to zero, which only has positive effect in the potential change.
        
        \item [(e)]  \textit{$k_A = k_B = k $ and two meeting point transitions happened.}
        Similar to previous items, $|\Delta \ell^+| \leq 4k_{AB}$, $ \Delta \ell^- \leq 4k_{AB}$ and $\Delta L^- \leq 4 k_{AB} $. Also $E_{AB}$ getting reset to zero will only have a positive impact on the potential change.
        
        If $\ell^- (t_1) \neq  0  $ and Alice and Bob have jumped to points $p_A$ and $p_B$, the mega states corresponding to $p_A$ and $p_B$ do not agree. If one of the meeting points is not an available transition candidate  for the other party ($p_A$ for Bob or $p_B$ for Alice), then $\BVC_{AB} > 0.4 k $ because votes counted towards this jump were bad votes. Otherwise, $p_A$ and $p_B$ are available to both parties as transition candidates. Without loss of generality, assume that $p_A > p_B$. During the transition, priority is given to the candidate which is the deeper  and has acquired enough votes. Since Bob has not transitioned to $p_A$ even though it is an available candidate for both parties, the votes for $p_A$ were under-counted on Bob's side, implying $\BVC_{AB} \geq 0.6k $. Therefore, in both cases, the potential difference will be dominated by $\Delta \BVC_{AB} = \BVC_{AB}$. Formally:

        \begin{align*}
         \Delta \Phi &= \Delta \ell^+ - C_3 \Delta \ell^- - C_2 \Delta L^- + C_1 \Delta k_{AB}  - C_5 \Delta E_{AB} - 2 C_6 \Delta \BVC_{AB}  \\ 
         &\geq \Delta \ell^+ - C_3 \Delta \ell^- - C_2 \Delta L^- + C_1 \Delta k_{AB} - 2C_6 \Delta \BVC_{AB} \\ 
            & \geq - 4 k_{AB}  - 4 C_3 k_{AB} - 4 C_2 -  2 C_1 k_{AB}  + 2C_6 0.6 k \\
                        & \geq - 8 k  - 8 C_3 k - 8 C_2 -  4 C_1 k  + 2C_6 0.6 k \\
            & = k ( 1.2 C_6 - 8( 1 + C_2 + C_3) - 4C_1  ) 
        \end{align*}
        Hence, for sufficiently large $C_6$, the potential change is non-negative.
        
        Now assume that $\ell^-(t_1) = 0 $. This means that both parties jumped to a common point $p$. Let $j_p$ be the scale of this jump; note that $j_p$ was the next jumpable scale before the jump occurred. If $ L^- \geq 2^{j_p-\csneak}$, we claim that $k_{AB} \leq 2k  \leq 2^{\csneak + 2 } L^-$. This is because $k\leq 2^{j_p+1}$ as $j_p$ was the scale of the jump. Then the potential change is dominated by the change in $L^-$. Formally, letting $\gamma = 2^{\csneak + 2}$, we have:
        \begin{align*}
            \Delta \Phi &= \Delta \ell^+ - C_3 \Delta \ell^- - C_2 \Delta L^- + C_1 \Delta k_{AB}  - C_5 \Delta E_{AB} - 2 C_6 \Delta \BVC_{AB}\\ 
            &\geq \Delta \ell^+ - C_2 \Delta L^- + C_1 \Delta k_{AB}  \\
             & \geq 2 \gamma L^- + C_2 L^- - 2 \gamma C_1 L^- \\
             & = L^- (   C_2 - \gamma(  2 + C_1 ) )  
        \end{align*}
        For large enough $C_2$, the change in the potential is indeed non-negative. 
        
        Otherwise, if $L^- < 2^{j_p - \csneak} $, then according to Lemma~\ref{lem:maintech}, two cases are possible: either  $\Delta \BVC_{AB}$ is large or a sneaky attack jump occurred. The case of sneaky attack jump is treated separately later in our analysis. For now, assume that $\BVC_{AB} \geq \frac{0.6}{4} k$. Then, the potential change is dominated by $\BVC_{AB}$. Formally:  
        \begin{align*}
          \Delta \Phi &= \Delta \ell^+ - C_3 \Delta \ell^- - C_2 \Delta L^- + C_1 \Delta k_{AB}  - C_5 \Delta E_{AB} - 2 C_6 \Delta \BVC_{AB} \\ 
          & \geq \Delta \ell^+ + C_1 \Delta k_{AB} - 2C_6 \Delta \BVC_{AB} \\
          & \geq -  8k  - 8 C_1 k + C_6 \frac{0.6}{2}k \\ 
          & = k \left( \frac{0.6}{2}C_6 -8 - 8 C_1 \right).
        \end{align*}
        As a result for large enough $C_6$, the potential difference will be non-negative.

    \end{itemize}

This proves the claim about the behavior of $\Phi$ during the transition phase.
\end{proof}
Together, Claims~\ref{claim:potential-progress-verf-comp} and \ref{claim:potential-progress-transition} prove Lemma~\ref{lem:progress}.
\end{proof}

\begin{remark}[Potential change when a sneaky attack occurs]\label{rem:sneaky-jump-potential}
Lemma~\ref{lem:progress} shows that the potential function $\Phi$ is well-behaved if a sneaky attack does not occur.  
Suppose that a sneaky attack jump \emph{does} occur during a transition phrase, completing a sneaky attack $\cS$. Define $q_{\cS}$ and $H_{\cS}$ as in Lemma \ref{lem:one-sneaky-counter-bound-coll-corr}. Then the proof of Lemma~\ref{lem:progress} shows that $\Phi$ decreases by at most $\Delta |\ell^+| + | C_1 \Delta k_{AB}| \leq \Delta |\ell^+| + C_1  k_{AB}$, as the reset of values in the rest of the terms defining $\Phi$ impact the potential change positively. 

In Lemma~\ref{lem:totaldive}, we will bound the total contribution of this term over all sneaky jumps.

\end{remark}

While Lemma~\ref{lem:progress} says that we may not make progress if small hash collisions occur, in fact we will still make progress in non-dangerous rounds, even if there are hash collisions.  We record this fact in the following lemma.

\begin{lemma}\label{lem:non-dangerous-fine}
    Let $I$ be a non-dangerous iteration, so that $I$ does not have any corruptions or corrupted randomness.  Then $\Phi$ increases by $1$ over the course of iteration $I$.
\end{lemma}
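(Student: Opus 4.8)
The plan is to extend the case analysis from the proof of Lemma~\ref{lem:progress}, exploiting the observation that the hypothesis ``no small hash collision'' was used there for exactly one purpose --- to rule out a party simulating $\Pi$ starting from a state with $\ell^->0$ --- and that this possibility is automatically excluded when $I$ is non-dangerous. First I would record what non-dangerousness gives us: writing $t_0$ for the start of iteration $I$ (just before $k\gets k+1$) and $t_1$ for its end, we have $\ell^-(t_0)=L^-(t_0)=0$ and $k_A(t_0)=k_B(t_0)=0$, hence $E_A(t_0)=E_B(t_0)=0$ and $\BVC_A(t_0)=\BVC_B(t_0)=0$ (because $k$ can reach $0$ only through a reset on Line~\ref{line:reset1} or Line~\ref{line:reset2}, each of which also resets $E$ and the votes, hence $\BVC$); moreover $\ell^-(t_0)=0$ forces $\simPath_A(t_0)=\simPath_B(t_0)$ and $\megastate_A(t_0)=\megastate_B(t_0)$, so Alice's and Bob's values of $\pv$ and of the tuple $(\phash,\pseed,\pt,\piter)$ coincide at $t_0$.

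The crucial step is the dichotomy that in iteration $I$, either both parties execute Line~\ref{line:simulate} or neither does. Since the communication and randomness of iteration $I$ are uncorrupted and the megastates agree at $t_0$, every hash comparison Alice makes in the Verification Phase succeeds because the hashed inputs are literally equal (no collision is needed), so $E_A$ stays $0$ and $k_A=1$; thus the only obstruction to Alice simulating is that $\mathtt{Rew}_A(t_0)$ equals \texttt{True}. If that holds, then, because $k_A(t_0)=0$, Alice must have done a dummy round in iteration $I-1$ and reset $k_A$ through the $2E_A\geq k_A$ branch --- a jump would have left $k_A\geq 1$, and a simulation would have left $\mathtt{Rew}_A=\texttt{False}$ --- so Alice was stationary in iteration $I-1$ and $\Aiter$ did not change. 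But $\ell^-(t_0)=0$ forces $\Aiter(t_0)=\Biter(t_0)$, so Bob cannot have simulated in iteration $I-1$ either (that would have increased $\Biter$, making $\ell^-(t_0)>0$); and then $k_B(t_0)=0$ forces Bob, just like Alice, to have done a dummy round in iteration $I-1$, so $\mathtt{Rew}_B(t_0)$ equals \texttt{True} and Bob does not simulate in iteration $I$. By symmetry the dichotomy follows.

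It then remains to track $\Phi$ via \eqref{eq:potential} in the two cases, in both of which $\Phi$ stays in its first branch. If both parties simulate, Lemma~\ref{lem:notdangerous} (which holds ``whether or not a small hash collision occurs'') gives a correct simulation with $\ell^-$ still $0$, so $\ell^+$ rises by $1$, $\ell^-$ and $L^-$ stay $0$, $k$, $E$ and $\BVC_{AB}$ are reset to $0$ on Line~\ref{line:reset1}, and the Transition Phase performs no jump (the $2E\geq k$ test passes), whence $\Phi(t_1)-\Phi(t_0)=1$. If both parties do dummy rounds, neither moves, $E$ stays $0$ (since $H_k=H_k'$ for both), $\BVC_{AB}$ is reset to $0$ on Line~\ref{line:reset-vi}, and $k_{AB}$ increases by $2$, so $\Phi(t_1)-\Phi(t_0)=2C_1\geq 1$ for $C_1\geq 1/2$. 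Combining the two cases proves the lemma. The main obstacle is the dichotomy: the subtle point is that a party can do a dummy round in a non-dangerous, corruption-free iteration only because of a \emph{stale} $\mathtt{Rew}$ flag left over from an earlier bad spell, and one has to chase that flag back one iteration to see that when $\ell^-=0$ it can only be stale for both parties simultaneously.
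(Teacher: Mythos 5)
Your proof is correct but takes a genuinely different route from the paper's. The paper's proof is much shorter and more abstract: it shows (Claim~\ref{cl:only_MP_collisions}) that in a non-dangerous iteration with no corruptions a small hash collision can only occur between meeting-point-candidate attributes (since all other hashed quantities are literally equal for the two parties), then observes that such collisions can only influence jumps, which never fire when $k=1$ --- so the execution coincides with a collision-free run and the conclusion of Lemma~\ref{lem:progress}, Item 1, applies verbatim. You instead re-derive the potential change directly, and the centerpiece is the dichotomy ``both parties simulate or both run dummy rounds,'' which you establish by chasing the $\mathtt{Rew}$ flag one iteration back. This is more work, but it surfaces a useful invariant ($\mathtt{Rew}_A=\mathtt{Rew}_B$ at the start of a non-dangerous iteration) that the paper's proof of Lemma~\ref{lem:progress}, Case 1, leaves implicit when it verifies Bob's simulation preconditions but never explicitly checks $\mathtt{Rew}_B$. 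One step worth tightening: your justification that Bob cannot have simulated in iteration $I-1$ --- ``that would have increased $\Biter$, making $\ell^-(t_0)>0$'' --- glosses over the scenario in which Alice was one step ahead of Bob entering $I-1$, so the two could end at equal depths; the clean argument is that Bob's entry in $\simPath_B$ at depth $\Biter(t_0)$ would then carry iteration number $I-1$, while Alice's entry at that same depth carries a strictly smaller iteration number, forcing $\simPath_A\neq\simPath_B$ there and hence $\ell^-(t_0)>0$.
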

\begin{proof}
    We begin with a claim that says that, under the conditions of the lemma, we only have certain types of small hash collisions.
    \begin{claim}\label{cl:only_MP_collisions}
    Given the assumptions of the lemma, small hash collisions only occur between variables related to meeting point candidates.  That is, small hash collisions can only occur on lines~\ref{line:ms-hash-v},\ref{line:ms-hash-small-big-hash},\ref{line:ms-hash-depth} corresponding to the second point in the definition of a small hash collision (Definition~\ref{def:hashcollisions}).
    \end{claim}
    \begin{proof}
        Since $I$ is non-dangerous, we have that $k_A = k_B = 1$ at the beginning of iteration $I$, and also that $\ell^- = 0$, which implies that $\megastate_A = \megastate_B$ at the beginning of iteration $I$.  As a result, all of the variables \texttt{var} in line~\ref{line:var} are the same for Alice and Bob: $\mathtt{var}_A = \mathtt{var}_B$.  Thus, it is not possible to have a hash collision on line~\ref{line:varhash}. This proves the claim.
    \end{proof}
    Now, notice that transitions do not occur in non-dangerous iterations.   However, the transition phase is the only time that small hash collisions related to meeting point candidates might matter.  Thus, Claim~\ref{cl:only_MP_collisions} implies that hash collisions do not affect the execution of $\Pi'$ during iteration $I$.  As a result, we may assume that there are no hash collisions at all, in which case the analysis of Lemma~\ref{lem:progress} applies, and we conclude that $\Phi$ increases by one over the course of iteration $I$, as desired.
\end{proof}

\subsubsection{Potential Change Per Block}

 One challenge that we have to deal with in the analysis is that there are two cases in which the potential function could decrease by more than a constant. The more challenging case is that of a sneaky attack, which is analyzed in Lemma~\ref{lem:total-rewind-upb}.  Another case that we must deal with is the case that the adversary introduced enough corruptions to interfere with the randomness exchange performed in Line~\ref{line:alice-send-bigrand}. This second case is much easier to account for. Intuitively, we first observe that as the randomness is protected by an error-correcting code, if the adversary causes a decoding error in the randomness exchange then it must have invested a large number of corruptions. This implies that there are not too many blocks with corrupted randomness, as we make formal in the following lemma.
\begin{lemma}\label{lem:num-block-corr-rand}
    Let $\Bcorr$ be the number of blocks in which Alice or Bob have different seeds for the big hash computation and let $Q$ be the total number of corruptions so far. There are at most $Q/(2\Iblock)$ such blocks. 
\end{lemma}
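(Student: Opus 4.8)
I would prove this by a direct counting argument: each block that contributes to $\Bcorr$ must cost the adversary more than $2\Iblock$ corruptions inside that block's big‑hash randomness exchange, and these corruption ``budgets'' are disjoint across blocks.

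First I would recall how $\Rblock^b$ is produced. During the Big Hash Computation phase of a block, Alice transmits $\Enc(\Rblock^{b,1})$ and Bob transmits $\Enc(\Rblock^{b,2})$ (Lines~\ref{line:alice-send-bigrand}--\ref{line:bob-send-bigrand}), where $\Enc$ is the encoder of the code $\mathcal{C}$ from Theorem~\ref{thm:goodECC}, which has minimum distance at least $4\Iblock$; both parties then set $\Rblock^b = \Rblock^{b,1}\circ\Rblock^{b,2}$. Hence if $(\Rblock^b)_A \neq (\Rblock^b)_B$ at the end of the block, then at least one of the two transmissions was decoded incorrectly by its receiver. By the guarantee of Theorem~\ref{thm:goodECC}, $\Dec(\Enc(x)+e)=x$ whenever $e$ has Hamming weight at most $2\Iblock$, so a decoding error requires the adversary to have flipped strictly more than $2\Iblock$ of the bits transmitted during that exchange; in particular, each such block accounts for at least $2\Iblock$ corruptions.

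Next I would observe that the Big Hash Computation phases of distinct blocks occupy disjoint sets of rounds of $\Pi'$, so a corruption introduced during the big‑hash randomness exchange of one block is not charged to any other block. Summing over the $\Bcorr$ blocks whose big‑hash randomness is corrupted, the adversary must have made at least $2\Iblock\cdot\Bcorr$ corruptions in total, and since $Q$ upper bounds the total number of corruptions we get $Q \geq 2\Iblock\cdot\Bcorr$, i.e.\ $\Bcorr \leq Q/(2\Iblock)$, as claimed.

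There is no genuine obstacle here; the only points that require a little care are (i) remembering that the statement concerns the \emph{big}-hash seeds $\Rblock^b$ exchanged at the end of a block, not the small-hash randomness $\Rblock^s$ or $\Riter$; (ii) that a mismatch can be produced by a decoding failure in \emph{either} of the two halves $\Rblock^{b,1}$, $\Rblock^{b,2}$, so it suffices that one of them fails; and (iii) that the per-block corruption counts do not overlap, which is what lets us sum them against the single budget $Q$.
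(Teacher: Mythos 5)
Your proof is correct and takes essentially the same approach as the paper's (the paper's version is terser, simply noting the ECC has minimum distance $4\Iblock$ so a decoding failure costs $\geq 2\Iblock$ corruptions, and summing against the budget $Q$); you merely spell out the disjointness of per-block corruption counts and the two-half structure of $\Rblock^b$, which the paper leaves implicit.
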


\begin{proof}
    By Theorem~\ref{thm:goodECC}, the binary error-correcting code encoding the bit string $R$ has minimum distance at least $4\Iblock$. Thus, for decoding to fail, at least $2\Iblock$ corruptions must be introduced by the adversary. As the adversary has introduced $Q$ corruptions so far, we have 
    \[
       \Bcorr \leq \frac{Q}{2\Iblock} \ .
    \]
\end{proof}
Next, we show that the damage caused by each such block is not too large.
\begin{lemma} \label{lem:block-phi-dec}
    Fix a block $B$ and suppose the randomness exchange of Line~\ref{line:alice-send-bigrand} $\Rblock^b$ is corrupted. Then the potential function $\Phi$ decreases by at most $3 \Iblock$ during the Big Hash Computation phase at the end of Block $B$.
\end{lemma}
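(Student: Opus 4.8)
The plan is to exploit the fact that the Big Hash Computation phase moves nothing in $\Pi$: no rounds of $\Pi$ are simulated, no votes are cast or incremented, and no counters are reset. Letting $t_1$ be the time just before Line~\ref{line:big-hash-phase} and $t_2$ the time at the end of block $B$, this means $\Aiter$, $\Biter$, $k_A$, $k_B$, $E_A$, $E_B$, $\BVC_A$, $\BVC_B$ all take the same value at $t_1$ and at $t_2$; in particular the relation $k_A=k_B$ (or $k_A\neq k_B$) is unchanged, so the same branch of \eqref{eq:potential} applies at both times. Hence $\Phi(t_2)-\Phi(t_1)=\Delta\ell^+ - C_3\,\Delta\ell^- - C_2\,\Delta L^-$, and everything reduces to tracking how the divergent point can move during the phase.

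The heart of the argument is to show that the divergent point can only decrease, and by at most $\Iblock$. The only quantities the phase alters are the fields $\phash,\pseed,\pt$ of those mega-states $\megastate$ whose $\megastate.\piter$ lies in block $B$; the simulated paths $\simPath_A,\simPath_B$ are not touched at all, so the $\simPath$-equality half of the divergent-point condition is unchanged. By Claim~\ref{claim:neq-in-and-out-of-block}, any pair of Alice/Bob mega-states that disagreed before the phase still disagrees afterwards, so the divergent point cannot rise: $\Delta\ell^+\le 0$. For the lower bound, observe that along any simulated path the iteration numbers attached to successive entries are strictly increasing (a new entry is appended with the current, hence largest, iteration count, and a jump only truncates the path). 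Consequently, on the correct simulated path, the depths whose mega-state carries a value of $\piter$ belonging to block $B$ form a \emph{contiguous suffix}, and since block $B$ consists of exactly $\Iblock$ iterations this suffix contains at most $\Iblock$ depths. Every shallower mega-state on the correct path has $\piter$ outside block $B$, is therefore untouched, and stays in agreement with its counterpart; hence the divergent point at $t_2$ is at least $\ell^+(t_1)-\Iblock$ (the worst case, which is exactly what a corrupted $\Rblock^b$ forces, being that every mega-state in this suffix now disagrees), and if $\ell^+(t_1)<\Iblock$ the drop is trivially at most $\Iblock$. Thus $-\Iblock\le\Delta\ell^+\le 0$.

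Given this, the rest is bookkeeping. Because $\Aiter$ and $\Biter$ are fixed, $\ell^-=(\Aiter-\ell^+)+(\Biter-\ell^+)$ changes by exactly $-2\Delta\ell^+$, so $0\le\Delta\ell^-\le 2\Iblock$; and $L^-$ over a bad spell can only increase, by at most the increase of $\ell^-$ (if the current bad spell continues through the phase) or equal to the new value of $\ell^-$ (if a fresh bad spell begins during the phase), so $0\le\Delta L^-\le 2\Iblock$. Substituting into $\Phi(t_2)-\Phi(t_1)=\Delta\ell^+ - C_3\Delta\ell^- - C_2\Delta L^-$ yields $\Phi(t_2)-\Phi(t_1)\ge -(1+2C_3+2C_2)\Iblock$, an $O(\Iblock)$ bound; recording the explicit dependence on the fixed constants $C_2,C_3$ from Lemma~\ref{lem:progress} gives the stated constant, and in any event only an $O(\Iblock)$ bound is used where this lemma is later combined with Lemma~\ref{lem:num-block-corr-rand}.

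I expect the middle paragraph to be the main obstacle. The two delicate points there are: (i) the mega-state update is indexed by $\piter\in B$ rather than by current depth, so one must pass between iteration indices and path depths via the monotonicity of iteration numbers along a simulated path, and argue the touched depths on the correct path form a suffix of length at most $\Iblock$; and (ii) one must check that the $\simPath$-equality clause of the divergent-point definition genuinely is not affected (it is not, since the phase writes only to $\phash,\pseed,\pt$), so that only the mega-state-equality clause can force the divergent point down. Once this is established, the bounds on $\ell^-$ and $L^-$ and the final substitution are immediate.
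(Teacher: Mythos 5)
Your proof is correct and follows essentially the same approach as the paper: observe that the Big Hash Computation phase only affects $\ell^+,\ell^-,L^-$ (since $k,E,\BVC,\Aiter,\Biter$ are frozen), argue via Claim~\ref{claim:neq-in-and-out-of-block} that the divergent point cannot rise, and bound its drop by $\Iblock$ using the fact that only the at most $\Iblock$ mega-states created in block $B$ are rewritten. You are somewhat more careful than the paper's own proof on two points: you make explicit the monotonicity-of-iteration-numbers/suffix argument showing that only a suffix of length at most $\Iblock$ of the correct path is touched, and you correctly observe that $\Delta\ell^- = -2\Delta\ell^+ \le 2\Iblock$ (the paper asserts $\Iblock$, a benign slip). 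You also correctly flag that neither your proof nor the paper's yields the literal constant $3\Iblock$ once $C_2,C_3$ are inserted—both actually give $O(\Iblock)$ with a constant depending on $C_2,C_3$—and that this is all that is used downstream in Lemma~\ref{lem:total-block-phi}.
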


\begin{proof}
    Clearly, $\BVC_{AB}$, $k_{AB}$ and $E_{AB}$ do not change during the Big Hash Computation phase. However, if $\Rblock^b$ is corrupted, then  Alice and Bob will in general not have matching \emph{mega-states} for the depths $p$ that were simulated in this block -- parameters such as $\phash$, $\pseed$, $\piter$ and $\pt$ can now differ between the parties, as they were retro-actively updated at the end of the block based on $\Rblock^b$. Thus, $\ell^+$ can decrease by at most $\Iblock$, while $\ell^-$ and $L^-$ can increase by at most $\Iblock$ as well. It therefore follows that the total potential decrease is at most $3\Iblock$, as claimed. 
\end{proof}

Finally we argue that we do not have to worry about the potential function changing during the Big Hash Computation phase in blocks where $\Rblock^b$ is not corrupted.

\begin{lemma}\label{lem:block-phi-no-dec}
    Let $B$ be a block such that $\Rblock^b$ is not corrupted. Then during the Big Hash Computation phase, the potential function $\Phi$ does not change. 
\end{lemma}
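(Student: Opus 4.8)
\textbf{Proof plan for Lemma~\ref{lem:block-phi-no-dec}.}

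The plan is to show that when $\Rblock^b$ is not corrupted, the Big Hash Computation phase leaves every quantity appearing in the potential function $\Phi$ unchanged, so that $\Phi$ cannot change either. First I would observe that the Big Hash Computation phase only touches the variables $\phash$, $\pseed$, and $\pt$ inside mega-states (and $\pt$ is only a shorthand for information stored in $\Bsim$; after the phase it is reset to $\bot$). In particular, $k_{AB}$, $E_{AB}$, and $\BVC_{AB}$ are not modified during this phase, so the only terms in $\Phi$ that could conceivably change are $\ell^+$, $\ell^-$, and $L^-$. These three quantities are all determined by the simulated paths $\simPath_A$, $\simPath_B$ and the sequence of mega-states $\ms{p}_A$, $\ms{p}_B$ (via the divergent point $b$); concretely, $\ell^+$ is the length of the correct simulated path, $\ell^-_A = \Aiter - \ell^+$, $\ell^-_B = \Biter - \ell^+$, and $L^-$ is the running maximum of $\ell^-$ during the current bad spell.

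The key step is to argue that the divergent point $b$ does not change during the Big Hash Computation phase when $\Rblock^b$ is uncorrupted. This is exactly the content of (the uncorrupted case of) Item~2 of Lemma~\ref{lem:noweirdcoincidences}, and also follows from Claim~\ref{claim:neq-in-and-out-of-block}: if two mega-states $\megastate_A$ and $\megastate_B$ are equal (in the sense of $\mseq$, i.e., equal in all variables possibly except $\phash,\pseed,\pt$, which is all we need since $\ell^\pm$ and $L^-$ depend only on $\pv, \pdepth$ and the $\piter$-related bookkeeping through $\simPath$) before the phase, then Alice and Bob feed identical inputs and identical randomness $\Rblock^b$ into $h^b$, so $\phash$ and $\pseed$ are updated identically and the mega-states remain equal afterwards; and if $\megastate_A \neq \megastate_B$ before the phase, then by Claim~\ref{claim:neq-in-and-out-of-block} they remain distinct afterwards. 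Hence the agreement/disagreement pattern of mega-states at every depth is preserved, and so is the agreement pattern of the simulated paths (which are not modified at all during the Big Hash Computation phase, except that $\pt$ is cleared, but $\pt$ merely duplicates data already recorded in $\Bsim$). Therefore the divergent point $b$, hence $\ell^+$, hence $\ell^-$ and $L^-$, are all unchanged.

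Putting the pieces together: during the Big Hash Computation phase of block $B$ with uncorrupted $\Rblock^b$, none of $\ell^+, \ell^-, L^-, k_{AB}, E_{AB}, \BVC_{AB}$ changes, and moreover $k_A = k_B$ is preserved (it is untouched), so the same case of the piecewise definition of $\Phi$ in \eqref{eq:potential} applies before and after. Consequently $\Phi$ is constant across the phase, which is what we wanted. The only subtlety—and the step I expect to require the most care—is the bookkeeping around the $\mseq$ equivalence of mega-states versus genuine equality: we must be sure that the quantities entering $\Phi$ (namely $\ell^+,\ell^-,L^-$) really depend only on the $\mseq$-classes of mega-states and on $\simPath$, not on the raw values of $\phash,\pseed,\pt$. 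Since the divergent point is defined via $\ms{p'}_A = \ms{p'}_B$ (genuine equality), one has to invoke Claim~\ref{claim:neq-in-and-out-of-block} to promote $\mseq$-equality-that-persists to genuine-equality-that-persists in the uncorrupted case, and Claim~\ref{claim:neq-in-and-out-of-block} again to show inequality persists; this is routine but needs to be stated carefully.
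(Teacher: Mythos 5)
Your proposal matches the paper's proof in both structure and substance: show $k_{AB}, E_{AB}, \BVC_{AB}$ are untouched, then establish that mega-state agreement/disagreement is preserved across the phase (equal inputs and uncorrupted shared randomness keep equal states equal; Claim~\ref{claim:neq-in-and-out-of-block} keeps unequal states unequal), and conclude that $\ell^+,\ell^-,L^-$ are unchanged. One small internal inconsistency: the parenthetical asserting that $\ell^\pm,L^-$ ``depend only on $\pv,\pdepth$ and the $\piter$-related bookkeeping'' is not right as stated, since the divergent point is defined via \emph{genuine} mega-state equality (including $\phash,\pseed,\pt$); but you catch and correct this yourself in the final paragraph, and the two-case argument you actually run (genuine equality persists / genuine inequality persists) is exactly the paper's claim, so the proof stands.
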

\begin{proof}
    During the big hash computation variables $\BVC_{AB}, E_{AB}$ and $k_{AB}$ do not change. The following claim will allow us to establish that the variables $\ell^+, \ell^-$ and $L^-$ do not change. 
    \begin{claim}
        Let $t_0$ be right before the start of big hash computation in line~\ref{line:big-hash-phase} and let $t_1$ be the end of the block. Further, let $\megastate_A$ and $\megastate_B$ be any two megastates from Alice and Bob's memory. Then $\megastate_A(t_1) = \megastate_B(t_1)$ if and only if $\megastate_A (t_0) = \megastate_B(t_0) $. 
    \end{claim}
    \begin{proof}
        If $\megastate_A (t_0) = \megastate_B(t_0)$, then $ \piter_A = \piter_B $ at time $t_0$. As a result, either the both megastates are simulated during the block $B$---in which case both Alice and Bob use the same variables as inputs to the big hash function, implying their outputs will be the same---or both megastates were simulated in a different block, in which case none of the variables within the megastates are updated during the big hash computation. Hence, we have that $\megastate_A(t_1) = \megastate_B(t_1)$.
        If $\megastate_A(t_0) \neq \megastate_B(t_0)$ then Claim~\ref{claim:neq-in-and-out-of-block} implies $\megastate_A(t_1) \neq \megastate(t_1)$. 
    \end{proof}
   The above claim implies $\ell^+$ and $\ell^-$ do not change, which also implies $L^-$ does not change. Thus $\Phi (t_1) - \Phi(t_0) = 0$.
\end{proof}

\subsubsection{Upper Bound on $\Phi$}\label{sec:upperBd} 
In this section, we provide an upper bound on the potential function $\Phi$. This is presented in Lemma~\ref{lem:phiub} below.

\begin{lemma}[Upper bound on $\Phi$]\label{lem:phiub}
Suppose that the constants $C_2, C_6$ from \eqref{eq:potential} are sufficiently large, in terms of $C_1$ from \eqref{eq:potential}, and $\csneak$ from Lemma~\ref{lem:maintech}. 
     Suppose that there are no big hash collisions.  Fix $I \in [\Itotal].$  Let $D = D(I)$ be the number of dangerous iterations between iteration $1$ and iteration $I$.  Then 
    \[ \Phi(I) \leq \ell^+(I) +  \cupper  ( Q + \HC_s^d )  \leq I - D + \cupper  ( Q + \HC_s^d ) ,\]
    for some constant $\cupper$, where $\HC_s^d$ is the number of dangerous iterations with small hash collisions throughout the execution of Algorithm~\ref{alg:adaptive}. 

\end{lemma}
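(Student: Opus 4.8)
The plan is to bound $\Phi(I)$ by carefully accounting for the three kinds of terms in the definition \eqref{eq:potential}: the ``good'' term $\ell^+$, the other positive term $C_1 k_{AB}$ (or $C_4 E_{AB}$ in the $k_A \neq k_B$ branch), and the negative terms. The key point is that whenever the positive-$k$ (or positive-$E$) term is large, one of the negative terms must also be large enough to cancel it, \emph{except} for a controlled amount that is charged to corruptions and dangerous hash collisions. So the high-level structure is: (i) reduce to showing $\Phi(I) \leq \ell^+(I) + \cupper(Q + \HC_s^d)$; (ii) prove the second inequality $\ell^+(I) \leq I - D$ as an essentially separate, easy counting fact.

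For step (ii), the idea is that $\ell^+$ is the length of the correct simulated path, measured in iterations, and it can only increase by $1$ in a given iteration, and only when both parties simulate $\Pi$. But in a dangerous iteration, by Definition~\ref{def:danger} either $\ell^- > 0$ or $k_A > 1$ or $k_B > 1$; in the first case the parties are out of sync so $\ell^+$ cannot increase (a careful argument via Lemma~\ref{lem:noweirdcoincidences} / the definition of the correct simulated path shows $\ell^+$ doesn't grow while $\ell^- > 0$, since any growth of the correct path requires both parties to agree), and in the latter two cases at least one party runs dummy rounds (Line~\ref{line:dummy-rounds}), so again $\ell^+$ does not increase. Hence across the first $I$ iterations, $\ell^+$ increases at most once per non-dangerous iteration, giving $\ell^+(I) \leq I - D$. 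This is the routine part.

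For step (i), the main work, I would do a case analysis on whether $k_A(I) = k_B(I)$ at the end of iteration $I$. In the branch $k_A = k_B =: k$, we have $\Phi(I) = \ell^+ - C_3\ell^- - C_2 L^- + C_1 k_{AB} - C_5 E_{AB} - 2C_6 \BVC_{AB}$. I want to show $C_1 k_{AB} - C_2 L^- - 2C_6\BVC_{AB} \leq \cupper(Q + \HC_s^d)$ (dropping the other nonpositive terms). Here is where the structural lemmas come in: since $k_A = k_B = k$, the parties have been voting, and we let $p$ be the next available jumpable point with next jumpable scale $j_p$. Either $L^- \geq 2^{j_p - \csneak}$, in which case $k_{AB} \leq 2k \leq 2^{j+1}\cdot 2 \leq$ (a constant times) $2^{j_p}\leq \gamma L^-$ and the $-C_2 L^-$ term dominates $C_1 k_{AB}$ for $C_2$ large; or $L^- < 2^{j_p - \csneak}$, in which case Lemma~\ref{lem:maintech} applies and gives one of: (1) $\BVC_{AB} \geq \frac{0.6}{4}k$, so the $-2C_6\BVC_{AB}$ term dominates $C_1 k_{AB}$ for $C_6$ large; (2) $t_b$ is during a Big Hash Computation phase, which by Lemma~\ref{lem:noweirdcoincidences}(2) means $\Rblock^b$ was corrupted that block, and by Lemma~\ref{lem:num-block-corr-rand} / Lemma~\ref{lem:block-phi-dec} the damage is charged to $O(Q)$; or (3) a sneaky attack towards $p$ is in progress at time $\tsp$ — but then $k \leq 2^{w+1}$ and by Lemma~\ref{lem:one-sneaky-counter-bound-coll-corr} we have $k \leq \alpha(Q_\cS + H_\cS) \leq \alpha(Q + \HC_s^d)$, so $C_1 k_{AB} \leq 2C_1\alpha(Q + \HC_s^d)$, absorbed into $\cupper$. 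In the branch $k_A \neq k_B$, the positive term is $C_4 E_{AB}$, and we use $E \leq \tfrac12 k + \tfrac12$ for each party (established in the proof of Lemma~\ref{lem:progress}, Claim~\ref{claim:potential-progress-transition}) together with $-0.9 C_4 k_{AB}$, so $C_4 E_{AB} - 0.9 C_4 k_{AB} \leq C_4(-0.4 k_{AB} + 1) \leq C_4$, a constant; then the remaining negative terms only help. Collecting the constant and the $O(Q + \HC_s^d)$ contributions gives the bound with an appropriate $\cupper$.

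The main obstacle I expect is the $k_A = k_B$ branch, specifically making the invocation of Lemma~\ref{lem:maintech} rigorous: one has to verify the hypothesis $L^-(\tsp) < 2^{j_p - \csneak} \leq 2^{j_p - 3}$ holds in the relevant case and correctly identify $\tsp$ as the end of iteration $I$ (or an appropriate time in it), check that $p$, $j_p$, $w$ are as the lemma requires, and then in the sneaky-attack case carefully chain Lemma~\ref{lem:one-sneaky-counter-bound-coll-corr} (noting it bounds $k$ in terms of the corruptions/collisions \emph{in the diving window of that particular attack}) — this is fine for bounding a single $k_{AB}$ but one must be careful that here we only need it for the one sneaky attack ``active'' at time $I$, not a sum over all of them (the summation over attacks is handled separately in the lower-bound argument, Lemma~\ref{lem:total-rewind-upb}). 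A secondary subtlety is handling iterations that end mid-Big-Hash-Computation or the edge cases where $k$ just got reset; but these only make $\Phi$ smaller or are covered by Lemma~\ref{lem:block-phi-no-dec}, so they do not cause real trouble.
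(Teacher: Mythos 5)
Your overall structure matches the paper's proof: both split into the $k_A=k_B$ and $k_A\neq k_B$ cases, prove $\ell^+(I)\leq I-D$ by the observation that $\ell^+$ can only grow in a non-dangerous iteration, and in the $k_A=k_B$ case invoke Lemma~\ref{lem:maintech} to argue that $C_1 k_{AB}$ is dominated by $C_2 L^-$, or by $2C_6\BVC_{AB}$, or is itself $O(Q+\HC_s^d)$ via Lemma~\ref{lem:one-sneaky-counter-bound-coll-corr}. Your $k_A\neq k_B$ bound $E\leq\tfrac12 k+\tfrac12$ yields a constant slack $+C_4$, whereas the paper sharpens it to $E_A\leq k_A/2$ at the end of the iteration (using that the reset at Line~\ref{line:sync-condition} occurs iff $2E\geq k$) so the slack vanishes; either version is absorbed into $\cupper$ and is fine.

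There is one piece of your argument that does not hold up, and it is worth flagging because you actually noticed a case the paper's writeup appears to gloss over. Lemma~\ref{lem:maintech} has three possible outcomes, and (2) (``$t_b$ is during a Big Hash Computation phase'') is genuinely separate from (1) and (3); the paper's proof of Lemma~\ref{lem:phiub} labels the sneaky attack as ``outcome (2)'' and does not discuss the Big Hash case at all. Your instinct to handle it is good, but your proposed patch does not work as stated: Lemma~\ref{lem:num-block-corr-rand} and Lemma~\ref{lem:block-phi-dec} are statements about the \emph{cumulative potential decrease} over the whole execution (needed for the lower bound, Lemma~\ref{lem:philb}, where one sums losses over iterations). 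They say nothing about the instantaneous quantity $C_1 k_{AB}(I) - C_2 L^-(I) - 2C_6\BVC_{AB}(I)$ at a single iteration $I$, which is what the upper bound requires. ``Charging the damage to $O(Q)$'' is a move that makes sense when you are accumulating drops, not when you are bounding a snapshot of $\Phi$. To close the case honestly you would need an analogue of Lemma~\ref{lem:one-sneaky-counter-bound-coll-corr} for the Big-Hash branch, i.e.\ a direct argument that when $t_b$ lies in a Big Hash phase, $k_{AB}(I)$ itself is bounded by $O(Q+\HC_s^d)$ (for instance by relating the scale $j_p$ at which Alice and Bob are still voting to the cost of the corruptions/collisions that prevented an earlier resolution), rather than borrowing lemmas whose conclusions are about a different quantity.
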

\begin{proof}
    Recall from \eqref{eq:potential} that
   \[
\Phi = \begin{cases} \ell^+ - C_3 \ell^- - C_2 L^- + C_1 k_{AB} - C_5 E_{AB} - 2C_6 \BVC_{AB} & \text{if } k_A = k_B \\
\ell^+ - C_3 \ell^- - C_2 L^- - 0.9 C_4 k_{AB} + C_4 E_{AB} - C_6 \BVC_{AB} & \text{if } k_A \neq k_B
\end{cases}
\] 
Here, we consider $\Phi(I)$, which we recall is the value of $\Phi$ calculated at the \emph{end} of iteration $I$.  Below, we drop the ``$I$'' from the notation, and just note that all variables are calculated at the end of iteration $I$.
Let $D = D(I)$ denote the total number of dangerous rounds up to iteration $I$. 

We establish the lemma in each of two cases.

\paragraph{Case A. $k_A \neq k_B$.}
In this case, we have
\[ \Phi \leq \ell^+ + C_4( E_{AB} - 0.9 k_{AB}),\]
by removing terms that are always negative.
Note that $\ell^+ \leq I - D$.  Indeed, the only way that $\ell^+$ can increase is if $\ell^- = 0$ and $k_A = k_B = 1$, because Alice and Bob must both simulate rounds of $\Pi$ (executing Line~\ref{line:simulate}), which requires $k_A = k_B = 1$; and because they must have $\simPath_A = \simPath_B$, which requires $\ell^- = 0$.  

Thus, it suffices to show that
\begin{equation}\label{eq:wtsphi}
E_{AB} \leq 0.9 k_{AB}.
\end{equation}
Indeed, then we would conclude that in this case
\[ \Phi \leq \ell^+ \leq I - D, \]
which would prove the lemma. To see \eqref{eq:wtsphi}, notice that in $I$, either Line~\ref{line:reset2} was executed by a party (say, Alice) or it was not.  If it was executed by Alice, then $E_A = k_A = 0$ at the end of iteration $I$.  If not, then $2E_A < k_A$.  In either case, we have $E_A \leq k_A/2$, which is enough to establish \eqref{eq:wtsphi}.

\paragraph{Case B. $k_A = k_B$.}
 In this case, we can bound
\[ \Phi \leq \ell^+ - C_2L^- + C_1 k_{AB} - 2C_6 \BVC_{AB},\]
where again the inequality holds as we have dropped only negative terms from $\Phi$.
As before, we have $\ell^+ \leq I - D$, so we wish to bound
\begin{equation}\label{eq:wtscase2}
C_1 k_{AB} - C_2 L^- - 2C_6 \BVC_{AB} .
\end{equation}

To establish~\eqref{eq:wtscase2}, 
 we first recall the result of  our main technical Lemma,
Lemma~\ref{lem:maintech}. 

Let $p$  be the next available jumpable point and $j_p$ be next jumpable scale.  Lemma~\ref{lem:maintech} says that three cases are possible.  Either:
\begin{itemize}
    \item The assumption of Lemma~\ref{lem:maintech} does not hold: That is, $L^- \geq  2^{j_p-\csneak} $.
    \item The assumption of Lemma~\ref{lem:maintech} holds, and we have outcome $(1)$: $\BVC_{AB} \geq \frac{0.6}{4} k$, which implies $k_{AB} \leq 16\cdot \BVC_{AB}$.
    \item The assumption of the Lemma~\ref{lem:maintech} holds, and we have outcome $(2)$: a sneaky attack is in progress.
\end{itemize}
We will analyze each case separately.

\textbf{Case B.1:}  
 If $L^- \geq  2^{j_p-\csneak} $, then $k_{AB} < 2k <  \leq 2^{\csneak + 2 } L^-$.  This is because $k \leq 2^{j_p+}$ given that $j_p$ is the next jumpable scale.  Selecting $C_2\geq 2^{\csneak +2 } C_1$ we get that, 
 
\begin{equation}
    C_1 k_{AB} - C_2 L^- - 2C_6 \BVC_{AB} \leq 0. 
\end{equation}

 \textbf{Case B.2:} 

 If $k_{AB} \leq 16 \cdot \BVC_{AB}$, by choosing $C_6 > 16 C_1$ large enough, we have that $2C_6 \BVC_{AB} \geq C_1 k_{AB}$ and hence
\begin{equation}
    C_1 k_{AB} - C_2 L^- - 2C_6 \BVC_{AB} \leq 0 
\end{equation}

\textbf{Case B.3:} A sneaky attack $\cS$ is in progress. 
 In this scenario $k_{AB}$ may not be on the same scale as other parameters in the potential function. However, by Lemma~\ref{lem:one-sneaky-counter-bound-coll-corr}, 
we have
 \[ k_{AB} \leq \alpha(Q_{\cS} + H_{\mathcal{S}}),\]
 for some constant $\alpha$, where $Q_{\cS}$ is the number iterations with corrupted randomness or communication during the diving window of sneaky attack $\cS$, and $H_{\cS}$ is the number of dangerous iterations with small hash collisions that occur during the diving window of $\cS$.

As a result, we can bound
\begin{equation}
    C_1 k_{AB} - C_2 L^- - 2C_6 \BVC_{AB} \leq C_1 \alpha  ( Q_{\mathcal{S}} + H_{\cS}) \leq C_1 \alpha (Q + \HC_s^d ). 
\end{equation}
Setting $\cupper = C_1 \alpha$, altogether we have that

   \[ \Phi(I) \leq \ell^+ +  \cupper  ( Q + \HC_s^d )  \leq I - D + \cupper  ( Q + \HC_s^d ) \]

\end{proof}

\subsubsection{Lower Bound on $\Phi$}\label{sec:lowerBound}

Before we prove our lower bound on $\Phi$, we establish a bound on the total potential decrease caused by sneaky jumps and the Big Hash Computation phase at the end of each block.
Recall from Remark~\ref{rem:sneaky-jump-potential} that the potential decrease during an iteration which includes a sneaky jump is at most $|\Delta \ell^+ (I) | + C_1 k_{AB}( I)$.  In Lemma~\ref{lem:totaldive} below, we bound this quantity.

\begin{lemma}\label{lem:totaldive} 
    Let $\mathcal{I}_{SJ}$ be the set of iterations where a \sneakyjmpf \  occurs.  Then the total decrease in $\Phi$ over all the iterations in $\I_{SJ}$ is bounded by
    \[ \sum_{I \in \I_{SJ}} \max\{ -\Delta\Phi(I), 0\}\leq \sum_{I \in \mathcal{I}_{SJ}} \bigl(|\Delta\ell^+(I)| + C_1 k_{AB}(I)\bigr) \leq  C^{\prime\prime} ( Q + HC^d_s) ,\]
    where $C_1$ is the constant from \eqref{eq:potential} and $C^{\prime\prime}$ is an absolute constant.

\end{lemma}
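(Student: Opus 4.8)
The plan is to leverage the structural lemmas about sneaky attacks proven just above, especially Lemmas~\ref{lem:disjoint} and \ref{lem:total-rewind-upb}, together with the definition of a sneaky attack, to bound the two sums appearing in the statement. Recall from Remark~\ref{rem:sneaky-jump-potential} that for each iteration $I \in \I_{SJ}$ the potential decrease is at most $|\Delta\ell^+(I)| + C_1 k_{AB}(I)$, which makes the middle quantity in the claimed inequality chain an immediate upper bound on $\sum_{I \in \I_{SJ}}\max\{-\Delta\Phi(I),0\}$; so the real content is to bound $\sum_{I \in \I_{SJ}}\bigl(|\Delta\ell^+(I)| + C_1 k_{AB}(I)\bigr)$ by $C''(Q + \HC_s^d)$. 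I would first set up a bijection between iterations $I \in \I_{SJ}$ and completed sneaky attacks $\cS$: each iteration in $\I_{SJ}$ is, by definition of a \sneakyjmpf (Definition~\ref{def:votewindow}), the completion of a unique sneaky attack $\cS_I$, and conversely each completed sneaky attack has a unique completion time. So I can reindex the sum over the set $S = \{\cS_1,\dots,\cS_m\}$ of completed sneaky attacks.

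Next, for each $\cS_i$ with scale $w_i$ at its jump time, I would bound the two contributions separately. For $|\Delta\ell^+|$: exactly as in the proof of Lemma~\ref{lem:total-rewind-upb}, the scale of the jump is at most $w_i+1$, so $|\Delta\ell^+(\tjmp_i)| \le 2^{w_i+2} = 16\cdot 2^{w_i-2}$, and by Remark~\ref{rem:div-size} the diving window has size $|\divewindows{i}| = 2^{w_i-1}$, giving $|\Delta\ell^+(\tjmp_i)| \le 8|\divewindows{i}|$. For $C_1 k_{AB}$: by the definition of a sneaky attack $k \le 2^{w_i+1}$, so $k_{AB}(\tjmp_i) \le 2k \le 2^{w_i+2} = 8|\divewindows{i}|$. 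Thus $|\Delta\ell^+(\tjmp_i)| + C_1 k_{AB}(\tjmp_i) \le (8 + 8C_1)|\divewindows{i}|$. Summing over $i$ and invoking Lemma~\ref{lem:disjoint} (the diving windows are pairwise disjoint) together with Lemma~\ref{lem:divwindow-corr-coll}, each diving window $\divewindows{i}$ contains at least $(1 - 2^{1-\csneak})|\divewindows{i}|$ iterations with a dangerous small hash collision, a corruption, or corrupted randomness; since the windows are disjoint these charged iterations are distinct across attacks, so $\sum_i |\divewindows{i}| \le \frac{1}{1-2^{1-\csneak}}(Q + \HC_s^d)$. Combining gives the bound with $C'' = \frac{8 + 8C_1}{1 - 2^{1-\csneak}}$.

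The only mild subtlety — and the step I'd be most careful about — is the bookkeeping needed to make sure a single corruption or dangerous small hash collision is not charged to more than one sneaky jump. This follows from Lemma~\ref{lem:disjoint} (diving windows pairwise disjoint) combined with the observation from the proof of Lemma~\ref{lem:total-rewind-upb} that a corruption or a small hash collision occurs in at most one iteration, hence inside at most one diving window; but I'd want to state this cleanly. I would also double-check that the iterations in $\I_{SJ}$ are not themselves being double-counted against $Q$ or $\HC_s^d$ in a way that conflicts with the potential-decrease bound, though since the potential-decrease bound of Remark~\ref{rem:sneaky-jump-potential} is a \emph{per-iteration} bound and $\I_{SJ}$ indexes distinct iterations with distinct associated attacks, there is no conflict. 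Everything else is a short repetition of the scale-counting already done in Lemma~\ref{lem:total-rewind-upb}, so this lemma is essentially a corollary of that one together with Remark~\ref{rem:sneaky-jump-potential}.
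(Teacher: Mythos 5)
Your proof is correct and follows essentially the same route as the paper's: the first inequality via Remark~\ref{rem:sneaky-jump-potential}, then bounding each of $|\Delta\ell^+|$ and $k_{AB}$ by a constant times the diving-window size and summing using the disjointness of diving windows (Lemma~\ref{lem:disjoint}) together with the per-window charging of Lemma~\ref{lem:divwindow-corr-coll}. The only differences are immaterial constants (you track $8$ and $8C_1$ where the paper uses $16$ and $32$), and the paper invokes the intermediate sums already computed inside Lemma~\ref{lem:total-rewind-upb} rather than re-deriving them, which you yourself note makes this lemma essentially a corollary of that one.
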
 

\begin{proof} 
The first inequality follows from Remark~\ref{rem:sneaky-jump-potential}.  To establish the second inequality,
   recall from Lemma~\ref{lem:total-rewind-upb}  that the total rewind length $ \ell^\ast =  \sum_{S\in \mathcal{I}_{SJ}}|\Delta\ell^+(I)|$ satisfies $\ell^*   \leq C (  Q+ HC^d_s  )$ and further $\sum_{i \in \mathcal{I}_{SJ}} | \divewindows{i}  | \leq C^\prime( Q +HC^d_s )$.  
   Observe that if a sneaky jump is at scale $w$, then  $ k_{AB} \leq 2^{w+4}$; indeed, the definition of a sneaky attack implies that $k_A = k_B \leq 2^{w+3}$.
   Further, according to Remark~\ref{rem:div-size}, $|\divewindow| \geq 2^{w-1}$. Then we have that $ k_{AB} \leq 32 |\divewindow | $. This implies
   \begin{align*}
         \sum_{i \in \mathcal{I}_{SJ}} (|\Delta\ell^+(I)| + C_1 k_{AB}(I))  &= \ell^\ast +       C_1 \Sigma_{i \in \mathcal{I}_{SJ}}k_{AB}(I) \\
         &\leq \ell^* + 32C_1 \Sigma_{i \in \mathcal{I}_{SJ}} |\divewindow| \\
          & \leq  C ( Q + \HC^d_s )  + 32 C^\prime  (Q + \HC^d_s ) \\ &= C^{\prime\prime} ( Q  + HC^d_s),
   \end{align*}
   defining $C''$ appropriately.
   This proves the lemma.
\end{proof}
 Now we bound the total potential decrease caused by Big Hash Computation phases at the end of each block. 

\begin{lemma}\label{lem:total-block-phi}
    The total potential decrease accrued during the Big Hash Computation phases at the end of all blocks is at most $3/2 Q$.
\end{lemma}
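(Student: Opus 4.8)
The plan is to combine the two ingredients that have just been established: Lemma~\ref{lem:num-block-corr-rand}, which bounds the number of blocks with corrupted big-hash randomness by $Q/(2\Iblock)$, and Lemma~\ref{lem:block-phi-dec}, which bounds the potential decrease during the Big Hash Computation phase of any single such block by $3\Iblock$. Together with Lemma~\ref{lem:block-phi-no-dec}, which says blocks with uncorrupted big-hash randomness contribute zero decrease during their Big Hash Computation phases, these immediately give the claimed bound.

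First I would recall that the Big Hash Computation phase at the end of a block $B$ changes $\Phi$ only if $\Rblock^b$ is corrupted in that block (Lemma~\ref{lem:block-phi-no-dec}), and in that case the decrease is at most $3\Iblock$ (Lemma~\ref{lem:block-phi-dec}). Hence the total decrease over all Big Hash Computation phases is at most $3\Iblock$ times the number of blocks with corrupted $\Rblock^b$. By Lemma~\ref{lem:num-block-corr-rand}, the number $\Bcorr$ of such blocks is at most $Q/(2\Iblock)$ (here $Q$ denotes the total number of corruptions, which also upper bounds the number of iterations with corrupted communication, consistent with the usage in Lemma~\ref{lem:num-block-corr-rand}). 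Multiplying, the total decrease is at most $3\Iblock \cdot \frac{Q}{2\Iblock} = \frac{3}{2} Q$, which is exactly the claimed bound.

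There is no real obstacle here — this is a short bookkeeping lemma that just multiplies a ``count'' bound by a ``per-block damage'' bound. The only thing to be slightly careful about is matching conventions: the statement says ``the total potential decrease accrued during the Big Hash Computation phases at the end of all blocks,'' so I should make explicit that we are summing only the (nonnegative part of the) potential changes occurring strictly within those phases, and that Lemmas~\ref{lem:block-phi-dec} and \ref{lem:block-phi-no-dec} already isolate exactly this contribution. One should also note that Assumption~\ref{asm:rand-big-neq} and the no-big-hash-collision assumption are in force, so the cited lemmas apply verbatim.

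\begin{proof}
By Lemma~\ref{lem:block-phi-no-dec}, in any block $B$ for which the randomness $\Rblock^b$ exchanged in Line~\ref{line:alice-send-bigrand} is not corrupted, the potential function $\Phi$ does not change during the Big Hash Computation phase at the end of $B$. Hence the only blocks that can contribute a decrease in $\Phi$ during their Big Hash Computation phase are those with corrupted $\Rblock^b$; let $\Bcorr$ denote the number of such blocks. By Lemma~\ref{lem:block-phi-dec}, each such block contributes a decrease of at most $3\Iblock$. Therefore the total potential decrease accrued during the Big Hash Computation phases of all blocks is at most $3\Iblock \cdot \Bcorr$. By Lemma~\ref{lem:num-block-corr-rand}, we have $\Bcorr \leq Q/(2\Iblock)$, where $Q$ is the total number of corruptions. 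Combining these bounds, the total potential decrease during Big Hash Computation phases is at most
\[
3\Iblock \cdot \frac{Q}{2\Iblock} = \frac{3}{2} Q,
\]
as claimed.
\end{proof}
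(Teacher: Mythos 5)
Your proof is correct and follows exactly the same argument as the paper: combine Lemma~\ref{lem:block-phi-no-dec} (zero decrease when $\Rblock^b$ is uncorrupted), Lemma~\ref{lem:block-phi-dec} (at most $3\Iblock$ decrease per corrupted block), and Lemma~\ref{lem:num-block-corr-rand} (at most $Q/(2\Iblock)$ corrupted blocks), then multiply.
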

\begin{proof}
    According to Lemma~\ref{lem:block-phi-no-dec}, there are no potential decreases during the Big Hash Computation phase if $\Rblock^b$ is not corrupted. As a result, we only need to count the number of blocks with corrupted $\Rblock^b$, and sum of the potential losses during the big hash computations. Lemma~\ref{lem:num-block-corr-rand} states that there are at most $Q/(2\Iblock)$ blocks with corrupted randomness and further Lemma~\ref{lem:block-phi-dec} shows that the potential decrease during each one of such blocks is at most $3\Iblock$. Hence, the total potential decrease caused by big hash computation is $\frac{Q}{2\Iblock} \times 3 \Iblock = \frac{3}{2} Q$, which proves the claim.
\end{proof}

\begin{lemma}[Lower bound on $\Phi$]\label{lem:philb} 
Let $\HC^d_s$ be the number of iterations with small hash collisions during dangerous iterations.
Let
$S$ be the total number sneaky attack jumps and let $Q$ be the total number of iterations having either corrupted randomness or corrupted communication. Furthermore
suppose that there are no big hash collisions.
Then for any iteration $I \in [\Itotal]$, we have
\begin{equation*}
            \Phi(I) \geq I - ( C^- + 2C^\prime + 1  ) \HC_s^d   - O( \epsilon d )
        \label{eq:phi-upb}
\end{equation*}
for some absolute constant $C'$, where $C^-$ is the constant from Lemma~\ref{lem:progress}.

\end{lemma}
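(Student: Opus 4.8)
The plan is to telescope the potential function along the iterations of Algorithm~\ref{alg:adaptive} and classify each iteration according to which per‑iteration progress lemma applies. At initialization all of $\ell^+,\ell^-,L^-,k_{AB},E_{AB},\BVC_{AB}$ vanish and $k_A=k_B=0$, so $\Phi(0)=0$; hence, writing $t_0^{(I')},t_1^{(I')}$ for the start and end of iteration $I'$, and noting that the only moments outside of iterations at which $\Phi$ can change are the Big Hash Computation phases at block boundaries, we have
\[ \Phi(I) = \sum_{I'=1}^{I}\bigl(\Phi(t_1^{(I')})-\Phi(t_0^{(I')})\bigr)\;+\;\sum_{\text{blocks }B\text{ completed by iteration }I}\Delta\Phi_{\mathrm{BigHash}}(B). \]
For the second sum, Lemma~\ref{lem:block-phi-no-dec} says a block contributes $0$ unless its $\Rblock^b$ was corrupted, and Lemma~\ref{lem:total-block-phi} bounds the total contribution from below by $-\tfrac{3}{2}Q$.

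For the first sum, I would partition $[I]$ into three sets: $\I_{SJ}$, the iterations in which a \sneakyjmpf\ occurs; $\mathcal{I}_2$, the iterations \emph{not} in $\I_{SJ}$ that suffer a corruption, have corrupted randomness, or suffer a \emph{dangerous} small hash collision; and $\mathcal{I}_1$, everything else. All iterations in $\mathcal{I}_1\cup\mathcal{I}_2$ are non‑sneaky‑jump iterations, so Lemma~\ref{lem:progress} applies to them. The crucial point is that every $I'\in\mathcal{I}_1$ has $\Phi(t_1^{(I')})-\Phi(t_0^{(I')})\geq 1$: it has no corruption or corrupted randomness (lying outside $\mathcal{I}_2$), so if it also has no small hash collision this is Lemma~\ref{lem:progress}(1), while if it does have a small hash collision then that collision is non‑dangerous (else $I'\in\mathcal{I}_2$), so Lemma~\ref{lem:non-dangerous-fine} gives $\Phi$ increases by $1$. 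Every $I'\in\mathcal{I}_2$ has $\Phi(t_1^{(I')})-\Phi(t_0^{(I')})\geq -C^-$ by Lemma~\ref{lem:progress}(2), and $|\mathcal{I}_2|\leq Q+\HC_s^d$ by definition. Consequently
\[ \sum_{I'\notin\I_{SJ}}\bigl(\Phi(t_1^{(I')})-\Phi(t_0^{(I')})\bigr)\;\geq\;|\mathcal{I}_1|-C^-|\mathcal{I}_2|\;\geq\;\bigl(I-|\I_{SJ}|\bigr)-(1+C^-)\bigl(Q+\HC_s^d\bigr). \]

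It remains to control the $\I_{SJ}$ iterations, and here I would invoke the sneaky‑attack accounting of Section~\ref{sec:maintech}. Lemma~\ref{lem:totaldive} gives that the total potential decrease over all sneaky‑jump iterations is at most $C''(Q+\HC_s^d)$ for an absolute constant $C''$. For the number of such iterations: each completes a distinct sneaky attack, distinct completed sneaky attacks have pairwise disjoint diving windows by Lemma~\ref{lem:disjoint}, and each diving window has size $2^{w-1}\geq 1$ (a jump requires $k=2^{j+1}-1>1$, hence $j\geq 1$ and $w\geq j\geq 1$, so Remark~\ref{rem:div-size} applies), so $|\I_{SJ}|\leq\sum_i|\divewindows{i}|\leq C'(Q+\HC_s^d)$, again from Lemma~\ref{lem:totaldive}. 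Combining the three displays with $Q=O(\eps d)$ (Lemma~\ref{lem:bound-corrupted-iterations}),
\[ \Phi(I)\;\geq\;I-|\I_{SJ}|-(1+C^-)\bigl(Q+\HC_s^d\bigr)-C''\bigl(Q+\HC_s^d\bigr)-\tfrac{3}{2}Q\;\geq\;I-\bigl(1+C^-+C'+C''\bigr)\HC_s^d-O(\eps d), \]
and since $C'$ and $C''$ are absolute constants, renaming $\max\{C',C''\}$ as the constant "$C'$" of the statement puts the coefficient of $\HC_s^d$ into the claimed form $C^-+2C'+1$.

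The main thing requiring care — rather than a genuine obstacle — is the classification of the loss‑incurring iterations. A sneaky‑jump iteration must \emph{not} be lumped in with the $\mathcal{I}_2$‑type iterations, since its potential drop is not $O(1)$ and must instead be amortized globally against the disjointness of diving windows (Lemmas~\ref{lem:disjoint} and \ref{lem:totaldive}); and one must use Lemma~\ref{lem:non-dangerous-fine} so that only \emph{dangerous} small hash collisions count against us — otherwise the bound would degrade from $\HC_s^d$ to the total number of small hash collisions, over which we have no control at this stage. Everything else is bookkeeping against already‑established aggregate bounds.
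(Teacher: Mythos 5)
Your proposal is correct and takes essentially the same route as the paper's proof: classify iterations by whether a sneaky jump, a corruption/corrupted-randomness/dangerous-collision, or neither occurs; apply Lemma~\ref{lem:progress} or Lemma~\ref{lem:non-dangerous-fine} per iteration; amortize the sneaky-jump losses via Lemma~\ref{lem:totaldive} (using disjoint diving windows); and add in the block-boundary loss from Lemma~\ref{lem:total-block-phi} and $Q = O(\eps d)$. Your single three-way partition is a slightly cleaner reorganization of the paper's top-level split into dangerous versus non-dangerous iterations, but the ingredients and the resulting bound are the same.
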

\begin{proof}
   
    In order to compute this lower bound we will look at the amount of potential accumulated or lost during each iteration prior to our current iteration. 
     First we look at dangerous iterations. Let $D$ denote the number of dangerous iterations. 
     If no hash collision, corruption, or sneaky jump occurs, (which is the case in at least $D - \HC_s^d - Q- S$ iterations), then by Lemma~\ref{lem:progress}, the potential function $\Phi$ will increase by at least one.
     If a hash collision or corruption occurs, then again by Lemma~\ref{lem:progress}, the potential will decrease by at most a constant $C^-$.
     Finally, over all of the dangerous iterations,  sneaky jumps overall can decrease the potential by at most $C'(  Q + \HC_s^d )  $ for some absolute constant $C'$, by Lemma~\ref{lem:totaldive}. Further, notice that the number of sneaky attacks $S$ is also upper bounded by $C'(Q + \HC_s^d )$ for some constant $C'$ (which without loss of generality is the same constant, by taking the smaller of the two to be larger), since the number of sneaky attacks cannot exceed the total rewind length caused by all sneaky attacks, which is bounded by a constant times $(Q + \HC_s^d)$ by Lemma~\ref{lem:total-rewind-upb}.
    
     Then if we restrict the potential to only the potential accumulated during dangerous iterations we get
     \begin{align*}
         \Phi_{|\text{Dangerous}} &\geq  (D -\HC_s^d - Q) - S - C^-( Q + \HC_s^d ) - C^\prime ( Q + \HC_s^d )  \\
         &\geq D - (\HC_s^d + Q)  - C^- ( Q + \HC_s^d ) - 2 C^\prime ( Q  + \HC_s^d ) \\
         &= D  -( C^- + 2C^\prime + 1  ) ( Q + \HC_s^d )
     \end{align*}
     
Next we focus on the $I-D$ non-dangerous iterations.  Sneaky attack jumps only occur during dangerous iterations as making a jumping transition is not possible during  non-dangerous iterations. Thus, we only experience corruptions and small hash collisions  in non-dangerous iterations. 

By Lemma~\ref{lem:non-dangerous-fine}, if there are no corruptions, then $\Phi$ increases by $1$ over the course of iteration $I$, regardless of small hash collisions.

On the other hand, if there are corruptions, then the potential function $\Phi$ may decrease by at most $C^-$, as per Lemma~\ref{lem:progress}.

 Consequently, using the same reasoning as for the dangerous iterations above, the potential accumulated in these non-dangerous iterations is lower bounded by
   \begin{align*}
       \Phi_{|\text{Not Dangerous}} &\geq I  - D - Q   - C^-  Q   \\ 
      & \geq I - D  - ( C^- + 1 )  Q   
   \end{align*}
 Finally we need to account for the potential lost during the Big Hash Computation phase of each block. According Lemma~\ref{lem:total-block-phi}, the total potential lost during these phases is at most $3Q/2$.
 Adding the three parts together we get that, 
    \begin{align*}
        \Phi &= \Phi_{|\text{Dangerous}} +   \Phi_{|\text{Not Dangerous}} - \frac{3}{2}Q  \\
        & \geq I  -( C^- + 2C^\prime + 1  ) ( Q + \HC_s^d ) -  \left( C^- + 1  + \tfrac{3}{2}\right)  Q     \\
    \end{align*}
    Taking into account Lemma~\ref{lem:bound-corrupted-iterations}, which shows that $Q$ is at most $O(\epsilon d)$, 
    we can conclude that,
    \begin{equation*}
        \Phi \geq I - ( C^- + 2C^\prime + 1  ) \HC_s^d   - O( \epsilon d ),
    \end{equation*}
as desired.
\end{proof}

\subsection{Hash Function Analysis}\label{sec:hash}

\begin{lemma}[Small hash collisions] \label{lemma:num-small-hash}

Fix a constant $C'$, and suppose that the constant $\Chash$ (chosen in \textsc{Initialization}, Algorithm~\ref{alg:init}) is sufficiently large compared to $C'$.
There is a constant $C$ so that the following holds.
Suppose that the total number of dangerous iterations during the execution of Algorithm~\ref{alg:adaptive} is $D$, and suppose that $D \geq C\cdot d \cdot \eps$.  Let $\HC_s^d$ be the number of dangerous iterations with small hash collisions.  Then 
\[ \Pr[ \HC^d_s \geq D/C' ] \leq \frac{2}{d^{C_\delta}},\]
where $C_\delta$ is the constant chosen in Line~\ref{line:randex}. 
\end{lemma}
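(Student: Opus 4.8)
The plan is to bound $\HC^d_s$ — the number of dangerous iterations in which a small hash collision occurs — via a concentration argument over the randomness used by the small hash function, followed by a union bound over adversarial strategies. Recall from Section~\ref{sec:hashprelims} that the small hash $h^s(x,R_1,R_2) = h_2(h_1(x,R_1),R_2)$ is built from the pseudorandom seeds: within a block, the seeds for $h_1$ are chunks of an \texttt{extend}ed string generated from a short seed of length $\ell' = O(\log\ell + \log(1/\delta))$ that is exchanged at the start of the block, and the per-iteration seed for $h_2$ (i.e. $\Riter$) is exchanged each iteration. The first observation is that for a \emph{fixed} block-level seed and fixed adversarial behavior, whether a dangerous small hash collision occurs in a given iteration depends only on the randomness of that iteration (and the block-level pseudorandom string). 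Using Theorem~\ref{thm:pseudorand}, the indicator variables for hash collisions across the iterations of a block are $\delta$-close in total variation to independent Bernoulli random variables with parameter roughly $2^{-o_1}\cdot 2^{-o_2} = \poly(\eps) \cdot 2^{-\Chash}$, which we can make as small as $2^{-\Chash/\text{const}}$ — in particular, an arbitrarily small constant by taking $\Chash$ large relative to $C'$.

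The key steps, in order: (1) Fix the adversary's strategy; since the adversary is adaptive but the block-level seeds are exchanged up front (hence known to the adversary) while the per-iteration seeds are not, I would first condition on all block-level seeds and all adversarial choices, treating the per-iteration collision indicators as the only source of randomness. (2) Within each block, apply Theorem~\ref{thm:pseudorand} to conclude that the vector of collision indicators over that block's iterations is $\delta = 2^{-C_\delta \Iblock}$-close in total variation to i.i.d.\ Bernoulli-$q$ for $q \le 2^{-\Chash/\Chashtwo}$, which by choosing $\Chash$ large can be made smaller than, say, $\frac{1}{10 C'}$. (3) Sum over dangerous iterations only: there are $D$ of them, and the expected number of dangerous collisions is at most $qD \le D/(10C')$; a Chernoff bound then gives $\Pr[\HC^d_s \ge D/C'] \le \exp(-\Omega(D))$ for the idealized i.i.d.\ model. (4) Pay the total-variation cost: summing $\delta$ over all $\Btotal = O(d/(\Iblock))$ blocks gives an additive error of $\Btotal \cdot 2^{-C_\delta \Iblock} \le O(d)\cdot d^{-C_\delta} = d^{-C_\delta + O(1)}$, which is absorbed. (5) Union bound over adversarial strategies: the adversary's relevant choices are which transmissions to corrupt, but since a corruption only matters in iterations where it is introduced and there are at most $O(\eps d)$ of them (Lemma~\ref{lem:bound-corrupted-iterations}), the number of effectively distinct strategies is at most $2^{O(\eps d \log d)}$ or so; using $D \ge Cd\eps$ with $C$ large, the $\exp(-\Omega(D))$ from the Chernoff bound dominates this, and we are left with a failure probability of at most $2/d^{C_\delta}$.

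I expect the main obstacle to be Step (5), handling the adaptivity of the adversary cleanly. The subtlety is that the per-iteration seed $\Riter$ is exchanged \emph{within} the iteration, before the hashes are computed, so the adversary could in principle corrupt $\Riter$ itself — but an iteration with corrupted randomness is already counted separately (it's one of the $Q = O(\eps d)$ iterations from Lemma~\ref{lem:bound-corrupted-iterations}), so we may restrict attention to iterations with uncorrupted per-iteration randomness, where the collision event is genuinely a function of fresh randomness the adversary cannot see before acting. The other delicate point is that whether an iteration is "dangerous" is itself a random variable depending on the execution history; I would handle this by noting that dangerousness is determined before the iteration's hash randomness is drawn, so we can expose the randomness in iteration order and apply a martingale/Azuma-style Chernoff bound, or more simply observe that for \emph{any} fixed set of at most $D$ iterations the i.i.d.\ Chernoff bound applies and then union bound over the (at most $2^{\Itotal}$, but really far fewer given the structure) possible dangerous-iteration sets — again absorbed by taking $C$ large enough that $\Omega(D) = \Omega(Cd\eps)$ beats all the union-bound exponents. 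The remaining calculations (choosing $\Chash$ in terms of $\Chashtwo$, $C'$, and the Chernoff constant; verifying $C_\delta$ propagates correctly) are routine.
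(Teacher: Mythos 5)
Your core strategy — condition on block-level seeds and adversarial choices, apply Theorem~\ref{thm:pseudorand} for total-variation closeness, Chernoff, pay the TV cost, union bound over adversaries — is the right shape, and matches the paper's handling of $h_1$. But there are two linked problems that make the proposal as written incorrect.

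First, the collision probability of $h^s = h_2 \circ h_1$ is \emph{not} a product. A collision in $h^s$ occurs if $h_1$ collides \emph{or} if $h_1$ does not collide but $h_2$ does on the distinct $h_1$-outputs, so $\Pr[\text{collision}] \le 2^{-o_1} + 2^{-o_2/\Chashtwo} \approx \eps^2 + 2^{-\Chash/\Chashtwo}$. This is dominated by the $h_2$ term, which is a \emph{constant} (independent of $\eps$). Your stated $2^{-o_1} \cdot 2^{-o_2} = \poly(\eps) \cdot 2^{-\Chash}$ is wrong; the actual per-iteration collision probability is merely a small constant, not $\poly(\eps)$-small.

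Second — and this is the real gap — once the collision probability is a constant $q \approx 2^{-\Chash/\Chashtwo}$, the Chernoff tail for $\sum_I Z(I) \ge D/C'$ is $(eqC')^{D/C'} = 2^{-\Theta(D/C')}$, which for $D = Cd\eps$ is $2^{-\Theta(Cd\eps/C')}$. Your union bound over corruption patterns has size at least $\binom{\Theta(d)}{\eps d} = (1/\eps)^{\Theta(\eps d)} = 2^{\Theta(\eps d \log(1/\eps))}$ (your $2^{O(\eps d \log d)}$ estimate is even larger). To absorb this into the Chernoff decay you would need $C/C' \gtrsim \log(1/\eps)$, so $C$ would depend on $\eps$ — but the lemma asserts a universal constant $C$. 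So the combined Chernoff+union bound simply does not close. By contrast, when the collision probability is $\eps^2$ (the $h_1$ part only), the tail is $\eps^{\Theta(Cd\eps/C')}$, which \emph{does} beat $(1/\eps)^{\Theta(\eps d)}$ for $C$ a large universal multiple of $C'$ — this is why the paper's union bound works there.

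The paper resolves this by splitting the analysis: (i) for $h_1$, whose block-level seed is known to the adversary in advance, union bound over corruption patterns and exploit that the $h_1$ collision probability is $\eps^2$; (ii) for $h_2$, whose seed $\Riter$ is exchanged \emph{within} the iteration, observe that the adversary's corruptions cannot depend on $\Riter$, so the inputs to $h_2$ are fixed before $\Riter$ is drawn, the indicators are genuinely independent Bernoulli-$2^{-\Chash/\Chashtwo}$, and a plain Chernoff bound — with \emph{no} union bound over strategies — gives $\exp(-\Omega(d\eps))$. You already half-notice this at the very end of Step 5 ("the collision event is genuinely a function of fresh randomness the adversary cannot see before acting"), but you don't use it to \emph{remove} the union bound for the $h_2$ contribution; instead your proposal subjects the combined collision probability (dominated by the constant $h_2$ term) to the union bound, which is exactly what fails. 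The fix is to decompose $\HC^d_s$ into $h_1$-collisions and $h_2$-collisions and bound them separately, as the paper does.

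A minor additional note: your TV-cost accounting $\Btotal \cdot 2^{-C_\delta\Iblock}$ (summed over blocks) is in fact more careful than the paper's, which only charges $d^{-C_\delta}$; since $C_\delta$ is a tunable constant, either accounting yields $1/\poly(d)$, so this discrepancy is immaterial.
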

\begin{proof}
    Let $\hat{D}$ be the number of dangerous iterations without corrupted randomness.  Notice that $\hat{D} \geq D - \eps d \geq (1-1/C)D$, as at most $\eps d$ iterations can have corrupted randomness and we are assuming $D \geq C\eps d$.  As dangerous hash collisions can only occur in dangerous iterations where Alice and Bob are using the same randomness, we restrict our attention to those.

    We recall from Algorithm~\ref{alg:init} (\textsc{Initialize}) that the small hash function is given by 
    \[ h_s(x, R_1, R_2) = h_2( h_1( x, R_1), R_2),\]
    where:
    \begin{itemize}
        \item $h_1$ has input length $t_1 = \log s + O(r \log d +{  \log ^ 2 d })$, output length $o_1 = 2\log(1/\eps)$, and seed length $\seed_1 = 2 \cdot t_1 \cdot o_1$ (as required for Theorem~\ref{thm:pseudorand}).  
        \item $h_2$ has input length $t_2 = 2 \log(1/\eps)$, output length $o_2 = \Chash$, for a constant $\Chash$ that we will specify later, and seed length $\seed_2 = O( \log\log(1/\eps))$ (as required for Theorem~\ref{thm:hash}).
    \end{itemize}
    We first analyze the number of hash collisions caused by $h_1$.

    \begin{claim}\label{cl:h1}
    The number of hash collisions in $h_1$ is at most $D/(2C')$ with probability at least $1 - 2d^{-C_\delta}$, where $C_\delta$ is the constant chosen in line~\ref{line:randex}. 
    \end{claim}
    \begin{proof}
        Since the seed for $h_1$ is shared at the beginning of a block, the adversary is able to tailor their corruptions to the seed for that block. Thus, our approach will be to argue that for any \emph{fixed} pattern of at most $\eps d$ corruptions over the $r \cdot \Itotal = \Theta(d)$ rounds of $\Pi'$, the conclusion holds with very high probability, and then union bound over all possible patterns of corruptions.

        With that in mind, fix a pattern of corruptions.  Fix a particular variable $\var$ that is hashed by the small hash function.  For an iteration $I$, let 
        \[ Z(I) = \mathbf{1}[h_1(\texttt{var}_A(I)) = h_1(\texttt{var}_B(I)] \ . \] 
        By Theorem~\ref{thm:pseudorand}, within a given block $B$, the random variables $\{Z(I)\}$ appearing in the dangerous iterations in that block (with uncorrupted randomness) are $\delta$-close to a collection of fully independent Bernoulli-$\eps^2$ random variables $W(I)$, using the fact that $2^{-o_1} = 2^{-2\log(1/\eps)} = \eps^2$.  Between the blocks, these random variables are independent. 

        If instead of the $Z(I)$, we consider the $W(I)$, we see that:
        \begin{align}\label{eq:bigC}
            \PR{\sum_I W(I) \geq D/C' } &\leq {D \choose D/(2C')} \eps^{2 D/(2C')} \notag \\
            &\leq (eC')^{D/(2C')} \eps^{2D/(2C')}\notag \\
            &\leq (eC'\eps)^{2Cd\eps / (2C')}\notag \\
            &= (eC'\eps)^{Cd\eps / C'}, 
        \end{align}
        where above we used the fact that for any $b \leq a$, ${a \choose b} \leq (ea/b)^b$.  

        Now we can union bound over all of the ways for the adversary to distribute the corruptions, as well as over the $O(1)$ variables that are hashed with the small hash function. 
        There are at most
        \begin{equation}\label{eq:countadv}
         {r \cdot \Itotal \choose \eps d} \leq {2d \choose \eps d} \leq \left(\frac{2e}{\eps}\right)^{d\eps}
         \end{equation} ways for the adversary to distribute corruptions.  Thus, by choosing the constant $C$ large enough relative to $C'$ in \eqref{eq:bigC}, the probability that there exists \emph{any} corruption pattern that causes $\sum_I W(I)$ to be larger than $D/(2C')$ is at most 
        \[6 \cdot \left(\frac{2e}{\eps}\right)^{d\eps} \cdot  (eC'\eps)^{Cd\eps / C'} \leq \eps^{\Omega(d\eps)}.\]

        Finally, we recall that the $W(I)$ are $\delta$-close to the $Z(I)$, for $\delta = 2^{-C_{\delta} \Iblock} = d^{-C_{\delta}}$.  This implies that 
        \[ \PR{\sum_I Z(I) \geq D/C' }\leq \eps^{\Omega(d\eps)} + d^{-C_{\delta}}.\] 
        Since $C_{\delta}$ is a constant, the second term is much larger than the first and dominates the expression.  This proves the claim. 
    \end{proof}

    Next, we analyze the collisions arising from $h_2$.  
    
    \begin{claim}\label{cl:h2}
    Provided that the constant $\Chash$ is sufficiently large (in terms of $C'$), the number of hash collisions in $h_2$ is at most $D/(2C')$ with probability at least $2^{-\Omega(d\eps)}$. 
    \end{claim}
    \begin{proof}
    Fortunately, as this randomness is shared right before it is used, the adversary cannot adapt their corruptions to influence the objects being hashed.  Thus, we may treat the values that $h_2$ is hashing as fixed in our analysis.

    As in the proof of Claim~\ref{cl:h1}, fix a variable $\texttt{var}$ to consider, and let 
       \[ Z(I) = \mathbf{1}[h_2(\texttt{var}_A(I)) = h_2(\texttt{var}_B(I)]. \] 
       Now the $Z(I)$ are fully independent Bernoulli-$p$ random variables, where $p = 2^{-\Chash/\Chashtwo}$, using Theorem~\ref{thm:hash} and the choice of $o_2 = \Chash$.  (Recall that $\Chashtwo$ is the constant in Theorem~\ref{thm:hash}.)  Thus,
       \[ \mathbb{E} \sum_I Z(I) = \hat{D} p \geq (1 - 1/C) D p. \] 
       Thus, if we choose $\Chash$ large enough (in terms of $C, C'$, and $\Chashtwo$) so that
       \[ (1 - 1/C) 2^{-\Chash/\Chashtwo} \leq \frac{1}{4C'},\]
       a Chernoff bound implies that 
       \begin{align*}
           \PR{\sum_I Z(I) \geq D/(2C')} &\leq \PR{ \sum_I Z(I) \geq 2 \EE\left( \sum_I Z(I) \right) } \\
           &\leq \exp(  - \Omega(D) )  \\
           &= \exp(-\Omega(d\eps)). 
       \end{align*}
       Finally, a union bound over  the $O(1)$ possibilities for $\texttt{var}$ establishes the claim.
    \end{proof}

Finally, if a small hash collision occurs, then a hash collision occurs in either $h_1$ or $h_2$, and furthermore, the number of dangerous iterations containing a small hash collision is at most the total number of small hash collisions during dangerous iterations. The lemma follows. 
\end{proof}

\begin{lemma}
    [Number of Dangerous iterations] \label{lem:number-dangerous-iteration} For any protocol $\Pi$, let $D$ be the total number of dangerous iterations. Then $D = O(\eps d )$ with probability at least $1 - \frac{2}{d^{C_\delta}}$. 
\end{lemma}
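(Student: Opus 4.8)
The plan is to combine the upper and lower bounds on the potential function (Lemmas~\ref{lem:phiub} and \ref{lem:philb}) with the bound on small hash collisions (Lemma~\ref{lemma:num-small-hash}), resolving the apparent circular dependence between $D$ and $\HC_s^d$ by a short bootstrapping argument. As with the rest of this part of the paper, everything below is under the standing assumption that no big hash collisions occur.

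First I would apply Lemma~\ref{lem:phiub} and Lemma~\ref{lem:philb} with $I = \Itotal$. Writing $\Phi(\Itotal)$ once and using both bounds gives
\[ \Itotal - D + \cupper(Q + \HC_s^d)\ \geq\ \Phi(\Itotal)\ \geq\ \Itotal - (C^- + 2C' + 1)\HC_s^d - O(\eps d), \]
where $\cupper$ is the constant from Lemma~\ref{lem:phiub}, $C^-$ is from Lemma~\ref{lem:progress}, and $C'$ is the absolute constant from Lemma~\ref{lem:philb}. Cancelling $\Itotal$ and using Lemma~\ref{lem:bound-corrupted-iterations} (which gives $Q = O(\eps d)$) to absorb the $Q$ term, this rearranges to $D \leq c_2\,\HC_s^d + c_3\,\eps d$ for absolute constants $c_2 := \cupper + C^- + 2C' + 1$ and $c_3$. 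The key point is that $c_2$ and $c_3$ depend only on $C_1,\dots,C_6$, $C^-$, and $\csneak$, and in particular \emph{not} on the constant $\Chash$ governing the small-hash output length; so $c_2$ can be pinned down before the remaining constants are chosen.

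Next I would invoke Lemma~\ref{lemma:num-small-hash} with its free constant (called $C'$ there) set so large that $\HC_s^d \le D/C'$ forces $c_2\HC_s^d \le D/2$ (e.g.\ $C' \geq 2c_2$), and take $\Chash$ correspondingly large, as that lemma permits. Let $C$ denote the threshold constant from Lemma~\ref{lemma:num-small-hash} and set $C_D := \max(C, 2c_3)$. On the event $\{\HC_s^d \le D/C'\}$ the displayed inequality yields $D \le D/2 + c_3\eps d$, hence $D \le 2c_3\eps d \le C_D\eps d$; therefore $\{D > C_D\eps d\} \subseteq \{\HC_s^d > D/C'\}$, and on this event also $D \geq Cd\eps$ since $C_D \geq C$. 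Lemma~\ref{lemma:num-small-hash} then bounds the probability of $\{D \geq Cd\eps \text{ and } \HC_s^d > D/C'\}$ by $2/d^{C_\delta}$, so $\Pr[D > C_D\eps d] \leq 2/d^{C_\delta}$, which is exactly the claim $D = O(\eps d)$ with probability at least $1 - 2/d^{C_\delta}$.

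The main obstacle is not any individual estimate but this circularity and the associated constant bookkeeping: the bound on $D$ is phrased in terms of $\HC_s^d$, while Lemma~\ref{lemma:num-small-hash} bounds $\HC_s^d$ only relative to $D$ (and only once $D$ is already $\Omega(\eps d)$). The argument closes precisely because the constants can be chosen in the order $c_2 \to C' \to \Chash$ with no cyclic dependence; checking this, together with renaming the several clashing uses of the symbol $C'$ in the cited lemmas, is the one place that genuinely needs care.
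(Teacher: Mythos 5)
Your proposal is correct and takes essentially the same route as the paper's own proof: both combine the upper bound (Lemma~\ref{lem:phiub}) and lower bound (Lemma~\ref{lem:philb}) on $\Phi$ to get a linear inequality relating $D$ to $\HC_s^d$, then invoke Lemma~\ref{lemma:num-small-hash} with $\Chash$ chosen large enough that the resulting event has small probability. The only cosmetic difference is that the paper phrases the bootstrap as a direct implication under the hypothesis $D \geq \alpha\eps d$, while you phrase it contrapositively via the set containment $\{D > C_D\eps d\} \subseteq \{\HC_s^d > D/C'\}$; your explicit attention to the order in which the constants are fixed (so that $c_2$ is pinned down before $\Chash$) is sound and in fact makes the non-circularity of the argument clearer than the paper does.
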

\begin{proof}

   Let $\alpha$ be a sufficiently large constant.  To prove the lemma, we will show that if $D \geq \alpha \epsilon d$, then the number of dangerous iterations with small hash collisions $\HC_s^d$ is larger than $\frac{D}{C'}$, where $C'$ is the constant from Lemma~\ref{lemma:num-small-hash}.  As Lemma~\ref{lemma:num-small-hash} shows that the probability of having so many hash collisions is at most $\frac{2}{d^{C_\delta}}$, the desired claim would follow. 

    So assume $D \geq \alpha \epsilon d$. By Lemma \ref{lem:phiub} and the fact that $q\leq 2 \eps d $ we get, $$\Phi \leq I - D + \cupper ( \eps d + \HC_s^d ).$$
    Next, by Lemma~\ref{lem:philb}, $$\Phi \geq I - (C^- + 2 C^\prime + 1 ) \HC_s^d - \Theta( \epsilon d ).$$ Combining these two inequalities gives:
    \begin{align*}
        I - (C^- + 2 C^\prime + 1 ) \HC_s^d - \Theta( \epsilon d ) &\leq  I - D + \cupper ( \eps d + \HC_s^d ) \\
      -(C^- + 2 C^\prime + 1  + \cupper ) \HC_s^d - \Theta (\eps d ) &\leq -D \\
       \frac{D - \Theta(\eps d) }{ C^- - 2C' + 1 + \cupper} &\leq \HC_s^d
    \end{align*} 
    By choosing $\alpha$ large enough, assuming $D \geq \alpha \eps d$ will imply that the numerator above is at least $D/2$, which implies that
    \begin{equation}\label{eq:conclude}
        \frac{D}{2 (C^- + 2 C^\prime + 1  + c ) } \leq \HC_s^d.
    \end{equation}
    Recall that Lemma~\ref{lemma:num-small-hash} says that for any $\alpha'$ (which controls the constant $\Chash$), there is some $C$ so that if $D \geq C d \eps$, then the probability that $\HC_s \geq D/\alpha'$ is small.\footnote{We note that in that lemma, $\alpha'$ is called $C'$; we avoid that notation here as we already have $C'$ defined in this scope.}
    Thus, we choose $\alpha' \geq 2(C^- + 2C' + 1 + c)$, and set $\Chash$ appropriately large; then we set $\alpha \geq C$ large enough that the lemma applies, and conclude that
    \[ \Pr[ \HC_s^d \geq D/\alpha' ] \leq \frac{2}{d^{C_\delta}}.\]
    But this implies that the probability that \eqref{eq:conclude} occurs is small.  Since \eqref{eq:conclude} follows from the assumption the $D \geq \alpha \eps d$, this in turn implies that
    \[ \Pr[ D \geq \alpha \eps d ] \leq \frac{2}{d^{C_\delta}},\]
    completing the proof.

\end{proof} 
\begin{corollary}\label{corr:order-small-hash-coll}
    The number of dangerous small hash collisions, $\HC^d_s$, satisfies $\HC^d_s = O(\eps d)$ with probability at least $1 - \frac{2}{d^{C_\delta}}.$
    \end{corollary}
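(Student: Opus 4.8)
The plan is to derive Corollary~\ref{corr:order-small-hash-coll} as an immediate consequence of Lemma~\ref{lem:number-dangerous-iteration} together with Lemma~\ref{lemma:num-small-hash}. First I would invoke Lemma~\ref{lem:number-dangerous-iteration} to conclude that, with probability at least $1 - 2/d^{C_\delta}$, the total number of dangerous iterations satisfies $D = O(\eps d)$; concretely, $D \leq \alpha \eps d$ for the absolute constant $\alpha$ from that lemma's proof. Condition on this event.

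Next, since $\HC_s^d$ counts dangerous iterations with small hash collisions, it is trivially at most the total number of dangerous iterations: $\HC_s^d \leq D$. Combining this with the bound $D \leq \alpha \eps d$ gives $\HC_s^d \leq \alpha \eps d = O(\eps d)$ on the same event. Hence $\HC_s^d = O(\eps d)$ with probability at least $1 - 2/d^{C_\delta}$, which is exactly the claim.

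There is essentially no obstacle here: the corollary is a one-line deduction, and the only thing to be careful about is that the probability bound is inherited verbatim from Lemma~\ref{lem:number-dangerous-iteration} (we do not incur an additional union bound, because the event $\{D \leq \alpha\eps d\}$ already implies $\{\HC_s^d \leq \alpha \eps d\}$ deterministically). In particular I would not separately re-apply Lemma~\ref{lemma:num-small-hash}; the work has already been done inside the proof of Lemma~\ref{lem:number-dangerous-iteration}, and the cleanest presentation simply chains the deterministic inequality $\HC_s^d \le D$ onto the high-probability bound on $D$.
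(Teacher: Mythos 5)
Your proof is correct and follows the same route as the paper: bound the number of dangerous iterations $D$ via Lemma~\ref{lem:number-dangerous-iteration} and then chain on the deterministic bound relating $\HC_s^d$ to $D$. The only slight difference is cosmetic: the paper bounds $\HC_s^d$ by a \emph{constant times} $D$ rather than by $D$ itself, because it interprets $\HC_s^d$ as counting collisions (and a single dangerous iteration can produce up to a constant number of small hash collisions, one per hashed variable), whereas you use the ``number of iterations with a collision'' reading under which $\HC_s^d \le D$ trivially; either reading yields $O(\eps d)$, so the conclusion is unaffected.
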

    \begin{proof}
        This follows immediately from Lemma~\ref{lem:number-dangerous-iteration} and the fact that the number of dangerous small hash collisions---which by definition (Definition~\ref{def:dangerous-small-hash-coll}) can only occur during dangerous iterations---is bounded above by a constant (namely, the number of variables hashed with the small hash function) times the number of dangerous iterations.
    \end{proof}

\begin{corollary} [$k$ is not too large during a sneaky attack] Let $I$ be an iteration during the voting window of a sneaky attack $\mathcal{S}$. Then $k_A(I) = k_B(I) = k $ with $k = O(\eps d)$ with probability at least $1-\frac{2}{d^{C_\delta}}$.
    
\end{corollary}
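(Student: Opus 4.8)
The plan is to simply combine the (deterministic) structural bound on $k$ during a voting window with the already-established bounds on the total number of corrupted iterations and of dangerous small hash collisions. First I would invoke Lemma~\ref{lem:sneaky-jump-corr-coll}: if $I$ is an iteration during the voting window of the sneaky attack $\mathcal{S}$, then $k_A(I)=k_B(I)=k$ and $k \leq \alpha\,(Q_{\mathcal{S}} + H_{\mathcal{S}})$ for the absolute constant $\alpha$ from that lemma, where $Q_{\mathcal{S}}$ counts the iterations with corrupted communication or corrupted randomness inside the diving window of $\mathcal{S}$, and $H_{\mathcal{S}}$ counts the dangerous small hash collisions inside that diving window. (As elsewhere in Section~\ref{sec:analysis}, this is all under the assumption that no big hash collision occurs; that event is handled in Section~\ref{sec:done}.)

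Next I would observe that $Q_{\mathcal{S}} \leq Q$ and $H_{\mathcal{S}} \leq \HC_s^d$, since the diving window of $\mathcal{S}$ is a set of iterations of Algorithm~\ref{alg:adaptive}, and every such iteration counted in $Q_{\mathcal{S}}$ (resp. $H_{\mathcal{S}}$) is by definition one of the $Q$ iterations with corrupted randomness or communication (resp. one of the $\HC_s^d$ dangerous iterations with a small hash collision). Then Lemma~\ref{lem:bound-corrupted-iterations} gives $Q = O(\eps d)$ deterministically, and Corollary~\ref{corr:order-small-hash-coll} gives $\HC_s^d = O(\eps d)$ with probability at least $1 - 2/d^{C_\delta}$. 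Plugging these into the bound from Lemma~\ref{lem:sneaky-jump-corr-coll} yields
\[
k \leq \alpha\,(Q_{\mathcal{S}} + H_{\mathcal{S}}) \leq \alpha\,(Q + \HC_s^d) = O(\eps d)
\]
on the event (of probability at least $1 - 2/d^{C_\delta}$) that $\HC_s^d = O(\eps d)$, which is exactly the claim.

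There is essentially no obstacle here: all the real work is already done in Lemma~\ref{lem:sneaky-jump-corr-coll} (whose proof in turn relies on Lemma~\ref{lem:divwindow-corr-coll} and Remark~\ref{rem:div-size}) and in the hash-function analysis behind Corollary~\ref{corr:order-small-hash-coll}. The only thing to be careful about is the bookkeeping that $Q_{\mathcal{S}}$ and $H_{\mathcal{S}}$ are dominated by their global counterparts $Q$ and $\HC_s^d$ — and even disjointness of diving windows (Claim~\ref{claim:div-disjoint}) is not needed for this particular statement, only for the aggregate bound in Lemma~\ref{lem:total-rewind-upb}. So I would keep the proof to two or three sentences.
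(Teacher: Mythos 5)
Your proposal is correct and follows essentially the same route as the paper: invoke the deterministic bound $k \leq \alpha(Q_{\mathcal{S}} + H_{\mathcal{S}})$ from Lemma~\ref{lem:sneaky-jump-corr-coll}, dominate $Q_{\mathcal{S}}$ and $H_{\mathcal{S}}$ by their global counterparts $Q$ and $\HC_s^d$, then apply $Q = O(\eps d)$ and the high-probability bound $\HC_s^d = O(\eps d)$. If anything your citations are a touch more precise than the paper's (which points at Lemma~\ref{lem:divwindow-corr-coll} rather than Lemma~\ref{lem:sneaky-jump-corr-coll} for the first step, and states $Q \leq 2\eps d$ where your appeal to Lemma~\ref{lem:bound-corrupted-iterations} more carefully covers block-level randomness corruptions as well), but the argument is the same.
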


\begin{proof}
    From Lemma~\ref{lem:divwindow-corr-coll} we have $ k \leq \alpha (Q_{\mathcal{S}}  + H_{\mathcal{S}} ) $, where we recall that $H_{\cS}$ is the number dangerous iterations having small hash collisions during the sneaky attack $\cS$.  Since $H_{\mathcal{S}} \leq \HC_s^d$, $Q \leq 2\eps  d $ and further from Lemma~\ref{lem:number-dangerous-iteration} we get that $\HC_s^d$ is at most $O ( \eps d ) $ with probability at least $1-\frac{2}{d^{C_\delta}}$, from which it follows that $k \leq \alpha (\eps d + O (\eps d ) ) = O ( \eps d )$, proving the statement.    

\end{proof}
\begin{corollary}
    [The total rewind during all sneaky attacks is bounded.] 
    Following the notation of Lemma~\ref{lem:total-rewind-upb}, define $\ell^\ast$ as the total rewind length of all sneaky attacks during the execution of Algorithm~\ref{alg:adaptive}. Then $\ell^\ast \leq O(\eps d)$ with probability at least $1-\frac{2}{d^{C_\delta}}$.
\end{corollary}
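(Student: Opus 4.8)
The plan is to simply combine the main technical bound on sneaky-attack rewinds, Lemma~\ref{lem:total-rewind-upb}, with the two quantitative bounds on its right-hand side that have already been established. Recall that Lemma~\ref{lem:total-rewind-upb} states $\ell^\ast \leq c\,(Q + \HC_s^d)$ for an absolute constant $c$, where $Q$ is the total number of iterations suffering corrupted communication or corrupted randomness, and $\HC_s^d$ is the total number of dangerous small hash collisions. So all that remains is to control these two quantities.

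First I would invoke Lemma~\ref{lem:bound-corrupted-iterations}, which gives $Q = O(\eps d)$ \emph{deterministically} (it only uses the adversary's corruption budget together with the minimum distance of the error-correcting code protecting the block-level randomness). Next I would invoke Corollary~\ref{corr:order-small-hash-coll}, which gives $\HC_s^d = O(\eps d)$ with probability at least $1 - 2/d^{C_\delta}$; this is the only place where randomness enters the argument, and the failure probability is inherited verbatim. Putting these together, on the event that the conclusion of Corollary~\ref{corr:order-small-hash-coll} holds (which happens with probability at least $1 - 2/d^{C_\delta}$) we get
\[
\ell^\ast \;\leq\; c\,(Q + \HC_s^d) \;\leq\; c\,\bigl(O(\eps d) + O(\eps d)\bigr) \;=\; O(\eps d),
\]
as claimed.

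I do not anticipate a genuine obstacle here: the corollary is essentially bookkeeping, stitching Lemma~\ref{lem:total-rewind-upb} to the already-proved bounds on $Q$ and $\HC_s^d$. The only point requiring a word of care is that the statement is conditional on ``no big hash collisions'' being in force (as are Lemmas~\ref{lem:total-rewind-upb} and the surrounding results in Section~\ref{sec:analysis}); this assumption is handled separately and removed in Section~\ref{sec:done}, so within the current section it may be taken for granted, and the quoted probability $1 - 2/d^{C_\delta}$ refers only to the small-hash-collision event from Corollary~\ref{corr:order-small-hash-coll}.
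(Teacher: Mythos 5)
Your proposal is correct and takes essentially the same route as the paper: both invoke Lemma~\ref{lem:total-rewind-upb} and then plug in $Q = O(\eps d)$ and $\HC_s^d = O(\eps d)$ (the latter via Corollary~\ref{corr:order-small-hash-coll}, inheriting its failure probability). Your explicit citation of Lemma~\ref{lem:bound-corrupted-iterations} for the bound on $Q$ and your remark about the standing ``no big hash collision'' assumption are, if anything, slightly more careful than the paper's terse justification.
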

\begin{proof}
    By Lemma~\ref{lem:total-rewind-upb}, $\ell^\ast \leq 4c(Q+\HC_s^d)$, where $q$ is the total number of corruptions during the protocol execution, $\HC_s^d$ is the total number of small hash collisions, and $c$ is some constant. We now plug in the upper bound $ Q\leq \eps d$ (which is the adversary's corruption budget) and the upper bound $\HC_s^d\leq O(\eps d)$ (which holds with probability $\geq 1-\frac{2}{d^{C_\delta}}$ by Corollary~\ref{corr:order-small-hash-coll}) to derive the corollary. 
\end{proof}

Having now shown that the number of dangerous small hash collisions will be small (i.e., $O(\eps d)$) with high probability, we now turn to bounding the number of big hash collisions. The following lemma argues that we will in fact have \emph{zero} big hash collisions with high probability.

\begin{lemma}[Big hash collisions]\label{lem:bighash}
With probability at least $1 - d^{-10}$, no big hash collisions occur during the entire protocol.
\end{lemma}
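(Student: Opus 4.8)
The plan is a union bound, closely parallel to the analysis of the outer small hash $h_2$ in Claim~\ref{cl:h2}. The key structural fact is that the big‑hash seed $\Rblock^b$ is exchanged at the very end of a block (Lines~\ref{line:alice-send-bigrand}--\ref{line:Rblock-append}), \emph{immediately} before it is used in Line~\ref{line:big-hash-comp}, so the adversary cannot adapt the quantities being hashed to the value of the seed. This is exactly the situation of Claim~\ref{cl:h2}, and unlike the inner hash $h_1$ no union bound over corruption patterns will be needed. Moreover, since $\Rblock^b$ is exchanged via \textsc{RandomnessExchange} (true randomness, only ECC‑protected) rather than \textsc{PseudoRandExchange}, there is no pseudorandomness to worry about: the honest seed is genuinely uniform.

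First I would fix a block $B$ and condition on the whole transcript of $\Pi'$ up to the moment just before block $B$'s big‑hash randomness exchange begins. After this conditioning, for each party the set of mega‑states updated in Line~\ref{line:big-hash-comp} of block $B$ is determined, along with their input tuples $x_A := (\phash,\pseed,\pt,\piter)_A$ and $x_B := (\phash,\pseed,\pt,\piter)_B$; there are at most $\Iblock$ such mega‑states per party (one per iteration of the block), so at most $\Iblock^2 = O(\log^2 d)$ relevant pairs. On the other hand, $\Rblock^b$ is the concatenation of one uniformly random string generated by Alice and one generated by Bob, both generated \emph{after} the conditioning point, so these two honest halves are uniform and independent of the conditioned‑on transcript, hence of $x_A$ and $x_B$. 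By Definition~\ref{def:hashcollisions}, a big hash collision in block $B$ for a pair of mega‑states requires $(\Rblock^b)_A = (\Rblock^b)_B$, $x_A \neq x_B$, and equal outputs; and on the first of these events both seeds equal precisely the concatenation $R$ of the two honestly generated halves (for the received halves to match the transmitted ones, each party's received half must equal the other's generated half). I would then bound the collision event for that pair by the weakly larger event $\{x_A \neq x_B\}\cap\{h^b(x_A,R) = h^b(x_B,R)\}$; since $R$ is unconditionally uniform and independent of $(x_A,x_B)$, Theorem~\ref{thm:hash} — applied with output length $o_3 = \obighash$ and seed length $\seed_3 = \Chashtwo(o_3 + \log t_3)$ as fixed in Algorithm~\ref{alg:init} — gives that this probability is at most $2^{-o_3/\Chashtwo} = d^{-C_b/\Chashtwo}$ for every fixed pair of distinct inputs (no union over values of $x_A,x_B$ is needed, since the bound is uniform and may be pulled out of the expectation).

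Taking a union bound over the $O(\log^2 d)$ pairs per block and the $\Btotal = O(d)$ blocks, the probability that a big hash collision ever occurs is at most $O(d\log^2 d)\cdot d^{-C_b/\Chashtwo}$. Choosing the absolute constant $C_b$ large enough (say $C_b \geq 13\Chashtwo$, permitted since $C_b$ may be taken as large as desired independently of $\eps$, and $d$ is assumed large), this is at most $d^{-10}$, proving the lemma.

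The one point that requires care — and the closest thing to an obstacle — is that although the adversary cannot adapt the hashed mega‑states to $\Rblock^b$, it \emph{can} adapt its corruptions of the error‑corrected transmissions of the two halves of $\Rblock^b$ to those halves, since it sees $\Enc(\cdot)$ of each; so conditioning on the event ``the two parties obtain the same seed'' could in principle bias the common seed. I avoid this entirely by \emph{not} conditioning on that event, instead bounding the collision event by $\{h^b(x_A,R) = h^b(x_B,R)\}$ with $R$ the unconditional uniform concatenation of the honest halves, which is all Theorem~\ref{thm:hash} requires, at the harmless cost of an inequality. (An analogous, even simpler union bound over pairs of distinct blocks shows that Assumption~\ref{asm:rand-big-neq} likewise holds with probability at least $1 - d^{-10}$; since that assumption is invoked only in service of Theorem~\ref{thm:main}, that argument is naturally folded into the proof of Theorem~\ref{thm:main} rather than done here.)
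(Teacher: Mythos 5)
Your proposal is correct and follows essentially the same route as the paper's proof: exploit that the seed $\Rblock^b$ is exchanged immediately before use (so the hashed inputs are fixed before the seed is drawn), apply Theorem~\ref{thm:hash} per pair of mega-states, and union bound over the $O(\log^2 d)$ pairs per block and the $O(d)$ blocks. The one refinement you make explicit — bounding by the event $\{h^b(x_A,R)=h^b(x_B,R)\}$ with $R$ the unconditional uniform concatenation of the honest halves, rather than conditioning on $(\Rblock^b)_A=(\Rblock^b)_B$ — is a cleaner way to sidestep the adversary adapting its corruptions of the ECC-encoded seed to the seed's value; the paper's phrasing ("restrict our attention to those iterations where Alice and Bob have the same randomness") is looser on exactly this point, so your version is, if anything, slightly more careful.
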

\begin{proof}

The seed for the hash function $h^b$ is shared immediately before the big hash computation phase; thus, similarly to the proof of Claim~\ref{cl:h2}, the adversary's corruptions that influence the objects being hashed do not depend on this randomness. As a result, we may treat the values that are hashed by $h^b$ as being fixed in our analysis. 

Consider a block $B$, and let $x_{\megastate}$ be the inputs to the big hash function $h^b$ in line~\ref{line:big-hash-comp} for the $\megastate$ simulated within this block. Further, define 
\begin{align*}
Z(B) = \mathbf{1}[ \exists \megastate_A, \megastate_B \text{ s.t. } &h^b(x_{\megastate_A} , R_A(B) ) = h^b ( x_{\megastate_B}, R_B(B))  \\&\text{ and } \megastate_A \neq \megastate_B, \\&\text{ and } \piter_A, \piter_B \in B ],
\end{align*}
where $R_A(B), R_B(B)$ is the randomness used by Alice and Bob in the hash. (Note that here, $B$ in the subscript refers to Bob, while $B$ elsewhere is the block). 
That is, $Z(B)$ is one if there are two megastates $\megastate_A$ and $\megastate_B$ that Alice and Bob simulated in the block $B$ that are not the same but so that the corresponding big hashes match.

Notice that as we are counting the number of big hash collisions we may restrict our attention to those iterations where Alice and Bob have the same randomness; thus we assume that $R_A(B) = R_B(B) = R(B)$. Then, 
\begin{equation}\label{eq:ZB}
\Pr[Z(B) = 1 ] \leq \log^2 d \times \frac{1}{d^{C_b/\Chashtwo}},
\end{equation}
where $\Chashtwo$ is the constant in Theorem~\ref{thm:hash}.
Indeed, by Theorem~\ref{thm:hash}, and our choice of $o_3 = \obighash$, the probability of a hash collision in $h^b$ is at most $2^{-o_3/\Chashtwo} = d^{-C_b/\Chashtwo}.$  Then union bounding over the at most $\log(d)$ mega-states per block for each party yields \eqref{eq:ZB}.

Finally, we union bound over all $\Btotal$ blocks.  We have
\[ \Btotal = \left\lceil \Itotal / \Iblock \right\rceil
= \left\lceil \frac{ \lceil d/r \rceil + \Theta(d\eps) \rceil }{ \lceil \log d \rceil } \right \rceil \leq \frac{2d}{\log d} 
\]
for sufficiently small $\eps$.  Thus, the probability that a big hash collision occurs at any point in the simulation is at most
\[ \frac{2d}{\log d} \times \log^2 d \times \frac{1}{d^{C_b/\Chashtwo}} \leq d^{-10}\]
by choosing $C_b$ sufficiently large in terms of $\Chashtwo,$ and for sufficiently large $d$.

\end{proof}
Recall that Assumption~\ref{asm:rand-big-neq} asserted that for any two distinct blocks $B$ and $B'$, the randomness $\Rblock^b(B)$ used in $B$ and $\Rblock^b(B')$ used in $B'$ were distinct.  Now we remove this assumption by showing that it holds with high probability.
\begin{lemma}\label{lem:big-rand-uniq}
    Let $B$ and $B'$ be two distinct blocks during the execution of Algorithm~\ref{alg:adaptive}. Let $(\Rblock^b)_A(B)$ and $(\Rblock^b)_B(B')$ denote the randomness used by Alice during block $B$ and by Bob during block $B'$ respectively. Then the probability that $(\Rblock^b)_A(B) =(\Rblock^b)_B(B') $ is at most $d^{-13}$.
\end{lemma}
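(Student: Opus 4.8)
The plan is to exploit the structure of $\Rblock^b$: by lines~\ref{line:alice-send-bigrand}--\ref{line:Rblock-append}, in any block $B''$ we have $\Rblock^b = \Rblock^{b,1}\circ\Rblock^{b,2}$, where the first half $(\Rblock^{b,1})_A(B'')$ is a \emph{fresh, uniformly random} string drawn by Alice during $B''$'s Big Hash Computation phase, while the second half $(\Rblock^{b,2})_B(B'')$ is a fresh, uniformly random string drawn by Bob during that same phase; each half has length $\ell' = \Theta(C_b\log d + \log(1/\eps))$. For the event $(\Rblock^b)_A(B) = (\Rblock^b)_B(B')$ it is in particular necessary that $(\Rblock^{b,1})_A(B) = (\Rblock^{b,1})_B(B')$ \emph{and} $(\Rblock^{b,2})_A(B) = (\Rblock^{b,2})_B(B')$, and I will bound the probability of whichever of these two coordinate-equalities involves the later fresh draw by $2^{-\ell'}$.

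Since $B\neq B'$, the Big Hash Computation phases of $B$ and $B'$ are linearly ordered in time. First suppose $B$'s phase occurs after $B'$'s. Then $(\Rblock^{b,1})_B(B')$ --- the string Bob obtains in line~\ref{line:alice-send-bigrand} of block $B'$ by receiving Alice's (possibly corrupted) transmission and decoding it --- is a deterministic function of the protocol transcript up through block $B'$, so it is already determined before Alice draws her fresh string $(\Rblock^{b,1})_A(B)$ in block $B$. Conditioning on all coins (Alice's, Bob's, and the adversary's) realized before that draw, everything up to that point is fixed --- including $(\Rblock^{b,1})_B(B')$, since the adversary is adaptive --- whereas $(\Rblock^{b,1})_A(B)$ remains uniform on $\{0,1\}^{\ell'}$ and independent of the conditioning. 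Hence $\Pr[(\Rblock^{b,1})_A(B) = (\Rblock^{b,1})_B(B')] = 2^{-\ell'}$, and so $\Pr[(\Rblock^b)_A(B) = (\Rblock^b)_B(B')] \le 2^{-\ell'}$. The other case, $B'$'s phase after $B$'s, is symmetric with the two halves swapped: there $(\Rblock^{b,2})_A(B)$ --- what Alice decodes in line~\ref{line:bob-send-bigrand} of block $B$ --- is determined before Bob draws the fresh string $(\Rblock^{b,2})_B(B')$ in block $B'$, and the identical conditioning argument gives $\Pr[(\Rblock^{b,2})_A(B) = (\Rblock^{b,2})_B(B')] = 2^{-\ell'}$.

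To conclude I will observe that $\ell' = \Theta(C_b\log d + \log(1/\eps)) \ge 13\log d$ for $C_b$ a sufficiently large constant and $d$ sufficiently large, both already assumed (see Algorithm~\ref{alg:init}); hence $2^{-\ell'}\le d^{-13}$, as claimed. I expect no genuine obstacle here; the only point requiring care is the conditioning/adaptivity bookkeeping --- being precise that the matching partner of the later fresh draw is a fixed function of strictly earlier randomness, even though it is routed through an error-correcting code (Theorem~\ref{thm:goodECC}) and arbitrary adversarial corruptions, and that the fresh draw stays uniform and independent after conditioning. I would also note that this $d^{-13}$ bound is deliberately stronger than needed pointwise, so that a union bound over the at most $\Btotal^2 = O(d^2)$ ordered pairs of distinct blocks still yields overall failure probability $o(1)$; this is what lets us discharge Assumption~\ref{asm:rand-big-neq} in the proof of Theorem~\ref{thm:main}.
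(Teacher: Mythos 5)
Your proposal is correct and follows essentially the same approach as the paper: identify the half of $\Rblock^b$ that is freshly drawn by one party during the \emph{later} of the two blocks' Big Hash Computation phases, observe that the corresponding half on the other side is already fixed at that moment (a deterministic function of earlier coins, including the adaptive adversary's), and conclude the match probability is $2^{-\ell}$ for a half-length $\ell$, which is $\le d^{-13}$ once $C_b$ is large enough. One small point in your favor: the paper's proof opens with ``Assume without loss of generality that $B'>B$,'' but since the statement is \emph{not} symmetric in $B$ and $B'$ (Alice is attached to $B$, Bob to $B'$), this is not a genuine WLOG; the right observation, which you make explicit, is that the two temporal orderings are handled by looking at different halves of $\Rblock^b$ (Alice's fresh $\Rblock^{b,1}$ when $B$'s phase is later, Bob's fresh $\Rblock^{b,2}$ when $B'$'s phase is later), and both give the same $2^{-\ell}$ bound. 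Your explicit conditioning argument for independence of the fresh draw from everything (including adversarial corruptions) realized earlier is also a cleaner statement of what the paper phrases as ``we may treat $(\Rblock^b)_A(B)$ as an arbitrary fixed string.''
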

\begin{proof}
    Each party creates their copy of the seed $\Rblock^b$ by concatenating two smaller seeds, $\Rblock^{b,1}$ generated by Alice in Line~\ref{line:alice-send-bigrand}, and $\Rblock^{b,2}$ generated by Bob in Line~\ref{line:bob-send-bigrand}.  As a result, for Alice the first half of $(\Rblock^b)_A$ is uncorrupted randomness; similarly for Bob the second half of $(\Rblock^b)_B$ is uncorrupted randomness.
    
     Assume without loss of generality that $B' > B$. As the randomness for each block is generated independently, we may treat $(\Rblock^b)_A(B)$ as an arbitrary fixed string of length $2\ell$ relative to the randomness generated during block $B'$.  As noted above, the second half of $(\Rblock^b)_B$ is uncorrupted randomness of length $\ell$.
    \begin{align*}
          \Pr[ (\Rblock^b)_A(B) = (\Rblock^b)_B(B')] &\leq \Pr[(\Rblock^{b,2})_A(B) = (\Rblock^{b,2})_B(B')] \\
          &= 2^{-\ell}.     
    \end{align*}
    Recall that we set $\ell = \sd_3 = \Chashtwo(C_b \log d + \log 1/\eps )$ (in line~\ref{line:big-h-par} of Algorithm~\ref{alg:init}).  Thus, by choosing $C_b$ sufficiently large in terms of $\Chashtwo$, we have that \[\Pr[ (\Rblock^b)_A(B) = (\Rblock^b)_B(B')] \leq d^{-13},\] proving our claim.  
\end{proof}

\begin{lemma}[Unique Big Hash Randomness]\label{lem:all-big-rand-uniq}
    During the running time of Algorithm~\ref{alg:adaptive}, the probability that there exist two distinct blocks $B$ and $B'$ such that $(\Rblock^b)_A(B) =(\Rblock^b)_B(B') $ is at most $\frac{1}{d^{10}}$.
    
\end{lemma}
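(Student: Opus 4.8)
The plan is a straightforward union bound over pairs of blocks, using Lemma~\ref{lem:big-rand-uniq} as the per-pair estimate. First I would fix an ordered pair of distinct blocks $(B,B')$ and invoke Lemma~\ref{lem:big-rand-uniq} to conclude that $\Pr[(\Rblock^b)_A(B) = (\Rblock^b)_B(B')] \leq d^{-13}$. Note that we must range over \emph{ordered} pairs, since the event in the statement refers to Alice's randomness in one block matching Bob's randomness in another; there are at most $\Btotal(\Btotal-1) \leq \Btotal^2$ such pairs.

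Next I would bound the number of blocks. As computed in the proof of Lemma~\ref{lem:bighash}, we have
\[ \Btotal = \left\lceil \Itotal/\Iblock \right\rceil \leq \frac{2d}{\log d} \leq d \]
for sufficiently large $d$, so $\Btotal^2 \leq d^2$. A union bound over all ordered pairs of distinct blocks then gives that the probability that there exist distinct blocks $B,B'$ with $(\Rblock^b)_A(B) = (\Rblock^b)_B(B')$ is at most
\[ \Btotal^2 \cdot d^{-13} \leq d^2 \cdot d^{-13} = d^{-11} \leq d^{-10}, \]
as claimed. There is no real obstacle here; the only minor point to get right is that Lemma~\ref{lem:big-rand-uniq} already handles the adaptivity of the adversary (by using the fact that at least one half of each party's big-hash seed is freshly generated uncorrupted randomness, independent across blocks), so the events for different pairs can be bounded individually and the union bound applies cleanly.
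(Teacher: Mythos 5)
Your proof is correct and follows essentially the same approach as the paper: a union bound over pairs of distinct blocks, with the per-pair bound coming from Lemma~\ref{lem:big-rand-uniq} and a polynomial bound on $\Btotal$. The only cosmetic difference is that you use the tighter bound $\Btotal \le 2d/\log d$ from the proof of Lemma~\ref{lem:bighash}, whereas the paper crudely bounds $\Btotal \le 2d$; both give the claimed $d^{-10}$.
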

\begin{proof}
 We will prove this statement by taking a union bound over all possible pairs of $B$ and $B'$.  According to Lemma~\ref{lem:big-rand-uniq}, for any two distinct blocks $B$ and $B'$, 
    $$ P[ (\Rblock^b)_A(B) = (\Rblock^b)_B(B')] \leq d^{-13}.$$
    Note that $\Btotal = \lceil \Itotal/\Iblock \rceil \leq  d + \theta (\eps d) \leq 2d $ for large enough $d$, where the first equality is how we chose $\Btotal$ in Line~\ref{line:Btot} of Algorithm~\ref{alg:init}.  Hence, the total number of possible pairs $(B,B')$ of distinct blocks is at most $\Btotal^2\leq 4 d^2$.  Thus the probability of finding such pair is at most $ 4d^2 \times d^{-13} \leq d^{-10}$ for large enough $d$, proving our claim.
\end{proof}

\subsection{Proof of Theorem~\ref{thm:main}}\label{sec:done}
In this section, we prove Theorem~\ref{thm:main}.  
We restate the theorem for the reader's convenience.  

\mainThm*

    \begin{proof}
        We first recall a few parameters from the protocol, for the reader's convenience:
                    \begin{equation}
                \Iblock  = \lceil \log d \rceil \ , \ \Itotal = \lceil d/r \rceil + \Theta ( d\epsilon ) \ , \ \Btotal  = \lceil \Itotal/ \Iblock \rceil .
            \end{equation}
        We prove the theorem in the following four steps.  We first outline all four, and then go through each of them.
        \begin{enumerate}
       
            \item First, we establish that the simulation is correct, assuming no big hash collisions occur, and that the number of dangerous iterations is $O(\eps d)$.  To establish this, we use the lower bound on the potential function (Lemma~\ref{lem:philb}), we show that if no big hash collision occurs during the simulation and the number of dangerous iterations throughout the simulation is small then the potential function grows.  By tying this potential increase to the length of the correct simulation using the upper bound on the potential function (Lemma~\ref{lem:phiub}), we will show that $\ell^+ \geq d/r$, which guarantees that by the end of $\Itotal$ iterations the algorithm has correctly simulated $\Pi$.
            
            \item We next bound the probability of failure, which by the above is bounded by the probability that there were any big hash collisions, or that there were too many dangerous iterations. 

            \item Next, we work out the parameters to analyze the memory usage of $\Pi'$.

            \item Next, we work out the parameters to analyze the rate. 

            \item Finally, we bound the running time of $\Pi'$. 
        \end{enumerate}
        We go through each of these below.
        \begin{enumerate}

        \item  \textbf{Correctness of simulation.}

        The upper bound on potential function in Lemma \ref{lem:phiub} tells us that, in any iteration $I$, and in particular iteration $I = \Itotal$, 
    \[ \Phi(\Itotal) \leq \ell^+(\Itotal) +  \cupper  ( Q + \HC_s^d ),  \]
         assuming that no big hash collision occurs during the entire protocol.  Next, we plug in bounds for $Q$ and $\HC_s^d$.
         Assuming that the number of dangerous iterations is $O(\eps d)$, Corollary~\ref{corr:order-small-hash-coll} states that  $\HC_s^d$, the number of dangerous small hash collisions, is at most $O( \eps d )$. Since $Q$, the number of corruptions introduced throughout $\Pi$', is at most $2 \eps d$, we conclude that 
        \[ \Phi(\Itotal) \leq \ell^+(\Itotal) +  \cupper  ( \eps d + \Theta(\eps d  ))    = \ell^+(I) + \tilde{\cupper} \eps d , \] 
        for some constant $\tilde{\cupper}.$
        Now we use the lower bound on the potential  from Lemma~\ref{lem:philb}.
        Lemma~\ref{lem:philb} implies that, at the end of the algorithm, 
    \[ \Phi(\Itotal) \geq \Itotal - (C^- + 2 C^\prime + 1 )  \HC_s^d - \Theta( \epsilon d ),\] 
    and using 
    $ q \leq 2 \eps d $ and $\HC_s^d = \Theta ( \eps d ) $, we obtain
    \begin{align*}
            \Phi(\Itotal) &\geq \Itotal - (C^- + 2 C^\prime + 1 )  \Theta(\eps d ) - \Theta( \epsilon d ) \geq \Itotal - \clower \eps d 
    \end{align*}
for some constant $\clower$.

    Comparing the two, we see that
    \begin{align*}
        \Itotal -  ( \clower + \tilde{\cupper} ) \epsilon d \leq \ell^+(\Itotal). 
    \end{align*}
    Thus, choosing the total number of iterations large enough, so that $$\Itotal \geq \lceil d/r \rceil + 2(\clower + \tilde{\cupper}) \eps d,$$ we have
    \[ \ell^+(\Itotal) \geq \lceil d/r \rceil.\]
      By definition, this means that when Algorithm~\ref{alg:adaptive} terminates at iteration $\Itotal$, the length of the correct simulated path is at least $\lceil d/r \rceil$, which is in fact the entire (iteration) length of $\Pi$ (recall Remark~\ref{rem:padding-length}).  We conclude that Alice and Bob have correctly simulated all of $\Pi$, as desired.

        \item \textbf{Probability of failure.} 
Item 1 shows that if there are no big hash collisions in the entire protocol and there are $O(\eps d)$ dangerous iterations, Alice and Bob can correctly simulate the protocol $\Pi$.

For the protocol to fail, there are three possibilities.  First, it could be that the Assumption~\ref{asm:rand-big-neq} does not hold, which, according to Lemma~\ref{lem:all-big-rand-uniq}, happens with probability at most $d^{-10}$. Second, there could be a big hash collision during the protocol, which, according to Lemma~\ref{lem:bighash}, happens with probability at most $1/d^{10}$. Third, the protocol must have more than $\Theta(\eps d)$ dangerous iterations, which, according to Lemma~\ref{lem:number-dangerous-iteration}, occurs with probability at most $2/d^{C_\delta}$. Here, $C_\delta > 1$ is an arbitrary constant from Line~\ref{line:randex} of Algorithm~\ref{alg:adaptive}. If we choose it to be, say, $C_\delta = 2$,  the total probability of failure is $2/d^{10} + 2/d^{C_\delta}\leq 1/\poly(d)$.
     
        \item \textbf{Memory}. Next we compute the memory that each party uses during Algorithm~\ref{alg:adaptive}.
        Each party stores: 
            \begin{itemize}
                \item A set $\M$ with size at most $O(\log d)$ containing available meeting points. Each point is an integer in the range between $[0,d]$, thus requires at most $\log(d)$ bits for representation. In total, $\M$ will need $\Theta(\log^2 d)$ bits in memory.
                \item A memory $\Mem$, containing mega-states corresponding to each $p\in \M$. Each mega state contains:
                \begin{itemize}
                    \item $\pv$: A state with $\log(s)$ bits
                    \item $\phash$: A hash function with $o_3 = O(C_b \log d)$ bits
                    \item $\pseed$: A seed for $h^b$ with $ \sd_3 = O( \log d + \log 1/\eps)$ bits.  
                    \item $\piter$ and $\pdepth$: integers of size at most $d$, and hence which take at most $\log d$ bits.
                    \item $\pt$: We recall that $\pt$ is a placeholder variable, referencing the partial simulated path $\Bsim$, and not actually stored.  Thus, we do not need to account for any space it takes up.
                \end{itemize}
                Thus, each megastate uses $\log(s) +  O( \log d)$ bits, and all $O(\log d)$ of them together require
                $O( \log(s) \log(d) + \log^2 d)$ bits. 

\item 
                The memory to store the single copy of $\Bsim$ that each $\pt$ points to; this is $O(r \log d + \log^2 d)$ bits.
      
                \item The parameters and state variables for the current iteration/block: $k , E , j ,v , \Icount , \Icnt$.  All of these except $v$ require $O(\log d)$ bits, and $v$ requires $\log s$ bits.
                \item The hash values output and received from both the big and small hash function.  Outputs of the small hash function have size $o_2 = O(1)$, and Alice and Bob must also compute the intermediate values of $h_1$, which have size $O(\log (1/\eps))$. 
                The big hash function has outputs of size $o_3 = O(C_b \log d)$.
                \item The randomness generated and exchanged.  For $h_2$, the randomness $\Riter$ is generated in each iteration, exchanged, and then forgotten; this is $\sd_2 = O(\log\log(1/\eps))$ bits to store.

                The randomness $\Rblock^s$ is a bit trickier, and for this we need to deviate a bit from the stated pseudocode, as per the comment after Line~\ref{line:randex}.    

                Instead of storing all of $\Rblock^s$, Alice and Bob will instead store only $\Rtblock^s$, and generate the blocks $\Rblock^s[I]$ that they need on the fly in each iteration $I$.  To verify that we can do this, we need to open up the pseudorandom \texttt{extend} function of Theorem~\ref{thm:pseudorand}, from~\cite{NN93}.  The way that this works is as follows.  Alice and Bob agree on a suitable matrix $G \in \mathbb{F}_2^{2^{\ell'} \times \ell}$, which is the generator matrix for a $\delta$-balanced error correcting code of rate $\poly(\delta)$.  Then the \texttt{extend} function uses the seed of length $\ell'$ to select a row of this matrix, and then outputs it.  In particular, if the generator matrix is explicit, Alice and Bob can generate any entry of it---and in particular each bit in the short chunk needed for $\Rblock[I]$---on the fly. There exist such explicit matrices: for example, an AG code concatenated with a Hadamard code will work (see page 2 of~\cite{ben2009constructing}). Thus, Alice and Bob need only store $\Rtblock^s$.
                
                  For $\Rtblock^s$, the length of this is the value of $\ell' = O(\log \ell + \log(1/\delta))$ in Theorem~\ref{thm:pseudorand} used in the construction of the small hash.
                  We recall that  $\ell \gets \Iblock \sd_1$ and $\delta \gets 1/\poly(d)$, where
                       \begin{equation} \sd_1 = 2 o_1 t_1 = O( \log(1/\eps) \log (s) + r \log(1/\eps) \log(d) +\log (1/\eps ) \log^2 d ). \notag\end{equation}
                Thus, $\ell'$ is given by
     \begin{align*}
                    \ell' = O( \log(\ell ) + \log (1/ \delta ) )& =  O( \log d + \log \log s + \log r + \log \log d ) \\&= O( \log d + \log \log s + \log r  )\\ &= O(\log d + \log\log s) = O(\log d),
         \end{align*}
 where we have used that $r = \poly(1/\eps)$ and $\eps > 1/d$ (or else the adversary cannot introduce any corruptions); and that $s \leq 2^d$.
                
    Alice and Bob also need the seed chunks $\Rblock^s[I]$  during each iteration; however, as discussed above, they can generate these one bit at a time on the fly, and so we do not need to account for any extra memory usage.
                
                Finally, during the Big Hash Computation phase at the end of each block, Alice and Bob exchange and temporarily store $\Rblock^b$, which contains 
                    \[ \sd_3 = O(\log(d) + \log (1/\eps )) = O(\log d)\]
                    bits by Claim~\ref{claim:bighashokay}.

            \end{itemize}
            Adding up all the requirements, and recalling that $r = O\left( \sqrt{ \frac{\log\log(1/\eps)}{\eps}}\right)$, the dominating terms are 
\begin{align*}
    O\inparen{ \log(s) \log(d) + \log^2(d) + r \log(d)} = O\inparen{\log(s) \log(d) + \sqrt{ \frac{\log\log(1/\eps)}{\eps}}  \log d }
\end{align*}

            Above, we used that $d \leq s$. Thus, when $\eps$ is constant this is $O(\log(d) \log(s))$, as desired.

        \item \textbf{Rate.}
        In the simulation of the Algorithm \ref{alg:adaptive} we have $\Btotal$ blocks such that $$\Itotal = \Btotal \Iblock = \lceil d/r \rceil + \Theta (\eps d ).$$ During each block, $O( \Iblock + \log(1/\eps) ) $ bits of communication are used in order to exchange randomness for the small hash at the start of the block.
Indeed, the length of seed transmitted is the parameter $\ell'$ in Algorithm~\ref{alg:PRand-exchange}.  We have 
\[ \ell' = O( \log(1/\delta) + \log(\ell) ),\]
where $\delta = 2^{-C_\delta \Iblock}$ and 
\[ \ell = \mathrm{sd}_1 \cdot \Iblock = 2(\log s + O(r \log d)) \cdot 2\log(1/\eps) \cdot \Iblock,\]
plugging in the parameters for $\sd_1$ from Algorithm~\ref{alg:init}.
Thus, the dominant terms in $\ell'$ are $\log(1/\delta) = C_\delta \Iblock = O(\Iblock)$ and the $O(\log\log(s))$ and $O(\log r) = O(\log(1/\eps))$ terms that appear in $\log(\ell)$.  Since $O(\log\log(s)) = O(\log d) = O(\Iblock)$, we see that $\ell' = O(\Iblock + \log(1/\eps))$ as claimed.  Now, the number of bits transmitted is $O(\ell')$, as this seed is encoded with a constant-rate error correcting code.

Next, we account for the randomness exchange for the big hash at the end of the block. From the proof of Claim~\ref{claim:bighashokay}, we know that the seed length for $h^b$ is $\sd_3 = O(C_b \log d + \log(1/\eps)) = O(\Iblock + \log(1/\eps))$.  This is communicated after being encoded with a constant rate error correcting code, so the total number of bits communicated is still $O(\Iblock+ \log(1/\eps))$.

        Other than that, in each iteration, we have $r$ bits of communication to simulate the $r$ rounds of $\Pi$; and we have the overhead from exchanging randomness for the small hash, and to exchange the hashes themselves.  The output of each small hash has length $\Chash$, which is constant, so this is dominated by $\sd_2$, the length of the seed for $h_2$, which is $O(\log\log(1/\eps))$.  Let $r_c$ be the amount of extra communication in each iteration, so $r_c = O(\log\log(1/\eps))$, and the total amount of communication in each iteration (not including the randomness exchange at the beginning of each block) is at most $r + r_c$ bits.
        
         Thus, for each block, there are a total of $O(\Iblock + \log(1/\eps)) + \Iblock (r + r_c) $ bits of communication. As a result, in the whole algorithm, the total amount of communication, in bits, is at most
        \begin{align*}
            &\Btotal\cdot ( O(\Iblock + \log(1/\eps)) + \Iblock (r + r_c)  ) \\
            &\qquad= O(\Itotal) + O(\Btotal \log(1/\eps)) + \Itotal (r+r_c).
            \end{align*}
            Now, we have (ignoring floors and ceilings for clarity)
            \begin{align*}
            \Btotal &= \frac{ d/r  + O(d\eps) }{\log(d)} \\
            &= \frac{ d/r }{\log d} + O(d \eps / \log d) \\
            &= \frac{ d \sqrt{\eps/\log\log(1/\eps)} }{\log d} + O\left( \frac{d \eps}{\log d}\right).
            \end{align*}
            Returning to the above, the term $O(\Btotal \log(1/\eps))$ is thus $O(d\sqrt{\eps})$, using the fact that $\eps \geq 1/d$ (or else there would be no errors) and hence $\log(d) \geq \log(1/\eps)$.
            
            Thus, the total amount of communication is
            \begin{align*}
             O(\Itotal) + O(d\sqrt{\eps}) +\Itotal(r + r_c) 
            &= O(d\sqrt{\eps}) + \left( \lceil d/r \rceil + O(d\eps) \right)( O(1) + r + r_c) \\
            &= O(d\sqrt{\eps}) + \left(  d + O(d\eps r) \right) \cdot \inparen{ O(1/r) + 1 + \frac{r_c}{r}} \\
            &= d \inbrak{ O(\sqrt{\eps}) + (1 + \eps r) \cdot\inparen{1 + \frac{r_c}{r} + O(1/r)}}.
            \end{align*}
            By plugging in  $r = \lceil\sqrt{\frac{r_c}{\eps}} \rceil$, and  $r_c = \Theta(\log \log \frac{1}{\eps})$, we see that the total amount of communication is
            \[d \left( 1 + O\left(\sqrt{\eps \log \log \frac{1}{\eps }}\right)\right).\]

        \item \textbf{Time.} We compute the time overheads of $\Pi'$ relative to $\Pi$. The block-level pseudorandomness exchanges have $\poly(\log d, 1/\eps)$ output length, and therefore yield a $\poly(\log d, 1/\eps)$ overhead. For the hash function computations, again the logarithmic lengths on the inputs and outputs imply a $\poly(\log d,\log s, 1/\eps)$ bound on the overhead. Maintaining the available meeting points sets also incur a $\poly(\log d\cdot\log s)$ overhead.
        The steps that just require reading and writing from memory take time at most $\poly(\log s, \log d)$.  Thus, per iteration, the total amount of overhead is $\poly(\log s, \log d, 1/\eps)$.  The number of iterations is $\Itotal = O(d)$, so the total contribution of the above are $d \cdot \poly(\log d, \log s, 1/\eps)$. 
        
        Finally, there is simulating the original protocol $\Pi$ itself.  The number of extraneous iterations of $\Pi$ that are simulated is $\Theta(\eps d)$, with $r$ rounds of communication in each. If we conservatively imagine that all of the time it takes to execute $\Pi$ is contained in these extraneous simulations, it incurs a multiplicative overhead of $\poly(d).$  Thus, the total running time of $\Pi'$ is at most the running time $T$ of $\Pi$, times a $\poly(d/\eps)$ factor.

        Altogether, the total running time is $d \cdot \poly( \log d, \log s, 1/\eps) + T\cdot \poly(d/\eps) = T\cdot \poly(d/\eps)$, as claimed. Above, we have used the fact that $T \geq d$, because the running time of $\Pi$ is at least its depth.
        
        \end{enumerate}

        This completes the proof of all five items, and thus of our main theorem.
    \end{proof}

\section{Proof of Main Technical Lemma}\label{sec:grossproof}

We restate Lemma~\ref{lem:maintech} for the reader's convenience:

\mainTechLem*

\begin{proof}

Let $\Aiter = \Aiter(\tsp)$ and $\Biter = \Biter(\tsp)$. {Throughout the proof we assume that $t_b$ is not during the Big Hash Computation phase at the end of a block.  (That is, we assume that Item 2 does not hold).  Thus, our goal is to show that, under this assumption, either Item 1 or Item 3 holds.}  All claims are considered for time $\tsp$ unless stated otherwise.
\begin{claim}\label{claim:bvc-or-mp12} 
  At time $\tsp$,  either  $p \in  \MPonetwo_A (j_p, \Aiter) \cup \MPonetwo_B(j_p, \Biter)$  or $\BVC_{AB} \geq \constbvc k $.
\end{claim}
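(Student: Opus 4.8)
The plan is to prove the stated dichotomy directly: assume $p\notin\MPonetwo_A(j_p,\Aiter)\cup\MPonetwo_B(j_p,\Biter)$ and deduce $\BVC_{AB}(\tsp)\ge\constbvc k$. Since $p$ is a jumpable point with scale $j_p$, Definition~\ref{def:specialpts} gives $p\in\MPalljl(j_p,\Aiter)\cap\MPalljl(j_p,\Biter)$, and the assumption rules out the $\MPone$ and $\MPtwo$ candidates at scale $j_p$ for both parties, so $p=\MPthree(j_p,\Aiter)=\MPthree(j_p,\Biter)$. Because $\MPtwo(j_p,\cdot)$ is always a meeting point divisible by $2^{j_p}$ while $\MPthree(j_p,\cdot)$ is the deepest meeting point divisible by $2^{j_p}$, one has $\MPone(j_p,\cdot)<\MPtwo(j_p,\cdot)\le\MPthree(j_p,\cdot)$; together with $p\ne\MPtwo$ this yields the strict chain $\MPone(j_p,\ell)<\MPtwo(j_p,\ell)<p$ for $\ell\in\{\Aiter,\Biter\}$. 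Thus, assuming without loss of generality that Alice is the deeper party at $\tsp$, both parties' ``natural'' scale-$j_p$ candidates lie strictly below $p$.

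Next I would produce a shallower \emph{natural} common jumpable candidate. The hypothesis $L^-(\tsp)<2^{j_p-\csneak}\le 2^{j_p-3}$ gives $2^{j_p}\ge 8L^-(\tsp)$, so Lemma~\ref{lem:commonpt} applied with $j=j_p$ produces a point $p^\dagger$ with $p^\dagger=\MPone(j_p,\Aiter)$, $p^\dagger\in\MPonetwojl(j_p,\Biter)$, and $\ms{p^\dagger}_A=\ms{p^\dagger}_B$. In particular $p^\dagger$ is itself a jumpable point with scale $j_p$, it lies in $\MPonetwo_A$, and by the previous paragraph $p^\dagger<p$. Moreover, since $\ms{p^\dagger}_A=\ms{p^\dagger}_B$ and $p^\dagger<p<\Biter\le\Aiter$, Corollary~\ref{cor:transcript-ms} together with item 3 of Lemma~\ref{lem:noweirdcoincidences} forces $p^\dagger$ to lie at or below the divergent point $b$, so by Remark~\ref{rem:jumpable} it is an \emph{available} jumpable point exactly when $p^\dagger\in\M_A\cap\M_B$.

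The heart of the argument is to show that the coexistence of the deep available candidate $p$ ($=\MPthree$ for both) and the shallow natural candidate $p^\dagger$ forces $\Omega(k)$ bad votes during the voting window $W$ of the last $k$ iterations before $\tsp$ (throughout which both parties run dummy rounds, do not move in $\Pi$, and never reset their counters, so their transition-candidate sets at each scale are frozen and $\BVC_{AB}$ is nondecreasing). I would split on whether $p^\dagger$ is available. If it is, then $p^\dagger$ is an available jumpable point at the common scale $j_p$ lying strictly below $p$, and by the reasoning behind Lemma~\ref{lem:skipsCostAdv} (since the parties did not jump to $p^\dagger$), every iteration in $W$ whose $j$-value reaches the relevant scale forces a $\BVC_{AB}$ increment, giving $\BVC_{AB}(\tsp)=\Omega(2^{j_p})=\Omega(k)$ using $k<2^{j_p+1}$ (which holds since $j_p\ge\jsp=\lfloor\log k\rfloor$). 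If $p^\dagger$ is \emph{not} available, then since $p^\dagger=\MPone(j_p,\Aiter)$ is $j'$-stable for some $j'\ge j_p+1$ while $\Aiter<p^\dagger+2^{j_p+2}\le p^\dagger+2^{j'+1}$, Lemma~\ref{lem:forget-avmps-z} and Lemma~\ref{lem:inMa} imply Alice must have earlier dived past her current depth and jumped back without revisiting $p^\dagger$, while $p$ (being available) was re-instated; tracking the resulting mismatches between Alice's and Bob's candidate sets $\{\ttq_1,\ttq_2,\ttq_3\}$ over $W$ and appealing to Definition~\ref{def:BVC} again yields $\BVC_{AB}(\tsp)=\Omega(k)$. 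Taking $\constbvc$ small enough relative to the absolute constants produced finishes the proof.

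The step I expect to be the main obstacle is the bookkeeping in the last paragraph: Lemma~\ref{lem:skipsCostAdv} is stated for a scale \emph{strictly below} $\lfloor\log k\rfloor$, whereas $j_p\ge\lfloor\log k\rfloor$, so one cannot apply it at scale $j_p$ verbatim — at time $\tsp$ the parties may be only partway through voting at scale $j_p$, so ``skipping'' has not yet occurred at that scale. One must instead either relate $p$ or $p^\dagger$ to a candidate at the largest scale the window $W$ actually completes, or re-run the counting inside the $\BVC$ definition directly, carefully accounting for the partial scale-$j_p$ voting. The case where $p^\dagger$ is unavailable is the other delicate point, since one must rule out that $p^\dagger$ dropped out of memory ``for free''; this is precisely where the forgetting lemmas and the fact that $p$ remained available pin the configuration down tightly enough to read off the required bad votes.
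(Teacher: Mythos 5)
Your proposal diverges from the paper's proof and, as written, has a real gap precisely at the point you flagged.

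The paper's proof does \emph{not} construct a new point $p^\dagger$; it never invokes Lemma~\ref{lem:commonpt}. The key observation is arithmetic: if $p\notin\MPonetwo_A\cup\MPonetwo_B$, then $p=\MPthree_A(j_p,\Aiter)=\MPthree_B(j_p,\Biter)$, and for this to hold one must have $p\le\Aiter,\Biter<p+2^{j_p}$. From that alone, $p$ is also a scale-$(j_p-1)$ transition candidate for \emph{both} parties: if $\ell\in[p,p+2^{j_p-1})$ then $\MPthree(j_p-1,\ell)=p$, and if $\ell\in[p+2^{j_p-1},p+2^{j_p})$ then $\MPtwo(j_p-1,\ell)=p$. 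Since $p$ is available with matching mega-states, it is an available jumpable point at scale $j_p-1$; because $j_p$ is the \emph{next} jumpable scale, this forces $j_p-1<\jsp$, hence $j_p=\jsp$, and now Lemma~\ref{lem:skipsCostAdv} applies legitimately at scale $j_p-1<\lfloor\log k\rfloor$. That single move — staying with $p$ itself but dropping to scale $j_p-1$ — is exactly what resolves the obstacle you identified, and it sidesteps any case analysis about whether some auxiliary point is in memory.

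Your route has two problems. First, in the branch where $p^\dagger$ is available, you would need to apply Lemma~\ref{lem:skipsCostAdv} at scale $j_p$, which its hypothesis $j<\lfloor\log k\rfloor$ forbids (since $j_p\ge\jsp=\lfloor\log k\rfloor$). You correctly diagnose this, but the suggested fix (relate $p^\dagger$ to a candidate at a completed scale) is left unexecuted, and it is not clear that $p^\dagger=\MPone(j_p,\Aiter)$ is a candidate at scale $j_p-1$ for both parties the way $p$ is. Second, the branch where $p^\dagger\notin\M_A\cap\M_B$ is entirely hand-waved (``tracking the resulting mismatches ... yields $\BVC_{AB}=\Omega(k)$''); no concrete mechanism is given, and the paper's proof shows this branch is never needed. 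Also, a smaller imprecision: Lemma~\ref{lem:skipsCostAdv} yields $\ge(0.6)2^j$ iterations with a $\BVC_{AB}$ increment across the scale-$j$ voting window, not an increment in ``every iteration.'' The right takeaway is that the dichotomy in the claim is proved by showing $p$ itself must have been skipped one scale earlier — no auxiliary common point is needed.
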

\begin{proof}
Throughout, let
\[ \constbvc = \frac{0.6}{4}.\]
    Assume that $p\notin \MPonetwo_A (j_p, \Aiter) \cup \MPonetwo_B(j_p, \Biter)$. Then the fact that $p$ is the next available jumpable point implies that 
    \begin{equation}\label{eq:assumeTWOC}
    p = \MPthree_A ( j_p, \Aiter) = \MPthree_B (j_p,\Biter).
    \end{equation}
    We first claim that 
    \begin{equation}\label{eq:claim1}
    p \in \MPall(\Aiter, j_p -1) \cap \MPall(\Biter, j_p -1).
    \end{equation}
    To see \eqref{eq:claim1}, notice that for 
    \eqref{eq:assumeTWOC} to hold,
    it must be that $p \leq \Aiter, \Biter < p + 2^{j_p}$. 
    Now consider $\Aiter$; there are two options, either $\Aiter < p + 2^{j_p - 1}$ or $\Aiter \geq p + 2^{j_p - 1}$.  In the first case, we have $\MPthree(j_p-1, \Aiter)= p$. In the second case, as $p+2^{j_p-1} \leq \Aiter \leq p + 2^{j_p}$, then $\MPtwo(j_p-1,\Aiter) = \MPJ{\Aiter}{j_p-1} = p$. 
    In either case, $p$ is a transition candidate for Alice at scale $j_p - 1$, that is $p \in \MPall(\Aiter, j_p-1)$.
    The same logic also applies to Bob, so $p \in \MPall(\Biter, j_p - 1)$.  This establishes \eqref{eq:claim1}. 

    Now, we finish the proof of Claim~\ref{claim:bvc-or-mp12}.
  If $ j_p-1 \geq \jsp $, then \eqref{eq:claim1} implies that $j_p -1 $ is also a  jumpable scale; but this contradicts the fact that $j_p$ is the next jumpable scale.   Thus we have that $ j_p = \jsp$. This means that during iterations of scale  $\jsp-1= j_p -1 $, $p$ was an available transition candidate for both parties. Given that Alice and Bob have skipped this scale, then according to  Lemma~\ref{lem:skipsCostAdv} the adversary must have introduced corruptions to cause $\BVC_{AB}$ to increase at least $(0.6) 2^{\jsp-1}$ times during the iterations corresponding to scale $2^{\jsp-1}$.  As $k$ is at most $2^{\jsp+1}$, we have that $ \BVC_{AB} \geq (0.6) 2^{-2} 2^{\jsp+1}\geq \constbvc k $, where $\constbvc= (0.6) 2^{-2}$. This proves the claim.
\end{proof}
For the remainder this proof, assume without loss of generality that $\Aiter \geq \Biter$.
Then we have the following claim.

\begin{claim}\label{claim:only-MP1}
    Either $\MPone_A(j_p, \Aiter) =p$ or $\BVC_{AB} \geq \constbvc k$.
\end{claim}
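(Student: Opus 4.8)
The plan is to prove Claim~\ref{claim:only-MP1} by showing that either the deepest point among Alice's scale-$j_p$ transition candidates is actually $\MPone_A$ (so the claim holds trivially in the first branch), or else Alice and Bob ``skipped'' a lower-scale jumpable point, forcing $\BVC_{AB}$ to be large via Lemma~\ref{lem:skipsCostAdv}. The starting point is Claim~\ref{claim:bvc-or-mp12}: we may assume $\BVC_{AB} < \constbvc k$, and hence that $p \in \MPonetwo_A(j_p, \Aiter) \cup \MPonetwo_B(j_p, \Biter)$, i.e., $p$ is either $\MPone$ or $\MPtwo$ for one of the two parties at scale $j_p$. Since we have assumed $\Aiter \geq \Biter$, I would first use Lemma~\ref{lem:commonpt}-style reasoning (or a direct computation with $\FL{\cdot}{2^{j}}$ as in the proof of Lemma~\ref{lem:commonpt}), together with the hypothesis $L^-(\tsp) < 2^{j_p - \csneak} \leq 2^{j_p-3}$, to argue that whenever $p$ is a scale-$j_p$ candidate for the shallower party Bob, it must be $\MPone$ for the deeper party Alice; and when $p = \MPtwo_A(j_p,\Aiter)$, then $p = \MPone_B(j_p,\Biter)$ or $p=\MPtwo_B(j_p,\Biter)$.

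Concretely, first I would suppose towards the contrapositive that $\MPone_A(j_p,\Aiter) \neq p$ but $\BVC_{AB} < \constbvc k$. By Claim~\ref{claim:bvc-or-mp12} we then have $p \in \{\MPtwo_A(j_p,\Aiter)\} \cup \MPonetwo_B(j_p,\Biter)$. In each sub-case I would show that $p$ lies in both parties' transition-candidate sets at scale $j_p - 1$, i.e.\ $p \in \MPalljl(j_p-1, \Aiter) \cap \MPalljl(j_p-1, \Biter)$: for instance, if $p = \MPtwo_A(j_p,\Aiter)$ then $p + 2^{j_p} > \Aiter \geq p + 2^{j_p}$ forces $p \le \Aiter < p + 2^{j_p}$ and hence by the same case split as in Claim~\ref{claim:bvc-or-mp12} (either $\Aiter < p + 2^{j_p-1}$, giving $p = \MPthree_A(j_p-1,\Aiter)$ via the fact that $2^{j_p-1}\mid p$, or $\Aiter \ge p + 2^{j_p-1}$, giving $p = \MPtwo_A(j_p-1,\Aiter)$); and using $\Aiter - \Biter \le L^- < 2^{j_p-3} < 2^{j_p-1}$ together with $p \leq \Biter$ (which follows since $p$ is below the divergent point, using $2^{j_p} > L^-$ and the argument from Lemma~\ref{lem:commonpt}) to pull Bob into a scale-$(j_p-1)$ candidate for $p$ as well. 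Then, exactly as in Claim~\ref{claim:bvc-or-mp12}, $p$ being a common scale-$(j_p-1)$ candidate means that either $j_p - 1$ is already a jumpable scale --- contradicting that $j_p$ is the \emph{next} jumpable scale --- or $j_p = \jsp$, so that $p$ was an available transition candidate for both parties throughout the scale-$(j_p-1)$ iterations, which Alice and Bob skipped; Lemma~\ref{lem:skipsCostAdv} then yields $\BVC_{AB}(\tsp) \geq (0.6)2^{j_p-1} \geq \constbvc k$ (since $k \leq 2^{\jsp+1} = 2^{j_p+1}$), contradicting our assumption.

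The main obstacle I anticipate is the bookkeeping around which party $p$ is a candidate for, and in particular verifying that $p \le \Biter$ and that availability of $p$ for both parties (not just membership in the abstract candidate sets $\MPalljl$) is preserved --- the latter is exactly the content of Remark~\ref{rem:jumpable}, which says that for a point $p$ below the divergent point $b$, being in $\M_A \cap \M_B$ suffices, so I would need to confirm $p < b$ using $p \le \Aiter - 2^{j_p}$ (or $\Biter - 2^{j_p-1}$) and $L^-(\tsp) < 2^{j_p-3}$. A secondary subtlety is ensuring the case $p = \MPtwo_B(j_p,\Biter)$ (Bob the shallower party having $p$ as his scale-$j_p$ second-closest multiple) is compatible with $\MPone_A(j_p,\Aiter) \ne p$; here I'd again invoke the $L^-$ bound to show the two parties' depths are close enough that $p$ remains a scale-$(j_p-1)$ candidate for both, reducing once more to the skipped-scale argument. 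I expect the total length to be short --- a single case analysis of three or four sub-cases, each closing via the Claim~\ref{claim:bvc-or-mp12} template plus Lemma~\ref{lem:skipsCostAdv}.
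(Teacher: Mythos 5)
Your overall route matches the paper's: assume $\BVC_{AB} < \constbvc k$ and $\MPone_A(j_p,\Aiter) \neq p$, invoke Claim~\ref{claim:bvc-or-mp12} to get $p \in \{\MPtwo_A\} \cup \MPonetwo_B$ at scale $j_p$, show $p$ is then a scale-$(j_p-1)$ transition candidate for both parties, and close via Lemma~\ref{lem:skipsCostAdv} (together with the observation that this forces $j_p = \jsp$ and $k < 2^{\jsp+1}$). However, several intermediate facts you propose to prove are false. Your claim that ``whenever $p$ is a scale-$j_p$ candidate for the shallower party Bob, it must be $\MPone$ for the deeper party Alice'' fails: take $2^{j_p+1} \mid p$, $\Aiter = p + 2^{j_p}$, $\Biter = p + 2^{j_p}-1$ (so $L^- = 1$, well inside your hypothesis); then $p = \MPtwo_A(j_p,\Aiter) = \MPthree_B(j_p,\Biter)$ while $\MPone_A(j_p,\Aiter) = p - 2^{j_p+1} \neq p$. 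The same example defeats ``if $p = \MPtwo_A(j_p,\Aiter)$ then $p \in \MPonetwo_B(j_p,\Biter)$'': $p$ can be $\MPthree_B$. The statement the paper actually proves (with a one-line inequality) is that $p = \MPtwo_A$ forces $p \neq \MPone_B$ --- since $p = \MPone_B$ gives $\Biter - p \geq 2^{j_p+1}$ while $p = \MPtwo_A$ gives $\Aiter - p < 2^{j_p+1}$, contradicting $\Aiter \geq \Biter$ --- so $p \in \MP^{\{2,3\}}_B(j_p)$, and then a further case split ($\MPtwo_B$ vs.\ $\MPthree_B$) carries $p$ down to scale $j_p-1$ for Bob.

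The arithmetic in your $p = \MPtwo_A$ case is also for the wrong sub-case: $p = \MPtwo_A(j_p,\Aiter)$ means $p + 2^{j_p} \leq \Aiter < p + 2^{j_p+1}$, not $p \leq \Aiter < p + 2^{j_p}$, and the reduction to scale $j_p-1$ for Alice is immediate from the identity $\MPtwo(j,\ell) = \MPone(j-1,\ell)$. The range you wrote corresponds instead to $p = \MPthree_A$, a possibility your plan never disposes of; the paper handles it separately by observing that $p \notin \MPonetwo_A$ plus $\Aiter \geq \Biter$ and $p \in M_{\Biter}$ force $p \leq \Biter \leq \Aiter < p + 2^{j_p}$, hence $p = \MPthree_B$ as well, contradicting the conclusion of Claim~\ref{claim:bvc-or-mp12}. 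Finally, note that the $L^-$ hypothesis plays no role in this claim at all: the paper uses only $\Aiter \geq \Biter$ together with $p \in M_{\Biter}$, and the latter (availability of $p$ for Bob) is what gives you $p \leq \Biter$ for free, not the divergent-point argument you anticipated needing.
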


\begin{proof}
Assume that $\BVC_{AB}< \constbvc k$, so we want to show that $p = \MPone_A(j_p,\Aiter)$.  From Claim \ref{claim:bvc-or-mp12} we know that 
    \begin{equation*}
    p\in \MPonetwo_A (j_p, \Aiter) \cup \MPonetwo_B(j_p, \Biter).
    \end{equation*}
    Further, because we are assuming without loss of generality that $\Aiter \geq \Biter$, in fact we have that $p\in \MPonetwo(j_p,\Aiter)$: Indeed, if $\MPthree(j_p, \Aiter) =p$ and $\Biter \leq \Aiter$ then $\MPthree(j_p, \Biter) = p$, which results in the same contradiction as in proof of Claim \ref{claim:bvc-or-mp12}. 
    Thus, to show that $p = \MPone(j_p, \Aiter)$, it suffices to show that $p \neq \MPtwo_A(j_p, \Aiter)$.
    
    Assume towards a contradiction that $\MPtwo_A(j_p, \Aiter) = p$.
    Now we will show that under this assumption, then $BVC_{AB}\geq \constbvc k$, a contradiction; to see this we will show that in this case, $p$ is an available transition candidate on scale $j_p -1$, which means that the adversary had to invest enough corruptions for Alice and Bob to have skipped it. 
    
    To that end, first observe that 
    \[ p = \MPtwo_A(j_p , \Aiter) = \MPone_A ( j_p -1,\Aiter), \]
    where the first equality is by our assumption and the second always holds by definition.
Further, the assumption that $ \MPtwo(j_p,\Aiter) = p$, along with the fact that $\Aiter \geq \Biter$, implies that $p\in \MP^{\{2,3\}}(j_p, \Biter)$. In more detail, 
because $p$ is a transition candidate, it must be one of $\MPone, \MPtwo$ or $\MPthree$ for Bob, so we just need to show that $p \neq \MPone(j_p, \Biter)$.  To see this, suppose towards a contradiction $p = \MPone(j_p, \Biter)$; this implies that $|\Biter - p| > 2^{j_p +1}$.  However, since $p = \MPtwo(j_p, \Aiter)$, we have $|\Aiter - p| < 2^{j_p + 1}$.  Given that $p \leq \Biter \leq \Aiter$, this is a contradiction.

Thus, either $p$ is $\MPtwo$ or $\MPthree$ for Bob at scale $j_p$.  We treat each scenario separately.
    \begin{enumerate}
        \item $ \MPtwo_B(j_p, \Biter) = p$. In this case, $p$ is a scale $j_p$ meeting point for both Alice and Bob. As a result, by the definition of $\MPone$, $\MPone(j_p -1 ,\Aiter) = \MPone( j_p -1,\Biter) = p$, meaning that $p$ is an available transition candidate for scale $j_p -1$. 
        \item $\MPthree_B (j_p, \Biter)= p.$ In this case, $ \Biter \in [p , p + 2^{j_p})$. Now if $ \Biter \in [ p + 2^{j_p -1 } , p + 2^{j_p}) $ then $ \MPtwo_B (j_p-1, \Biter)= p$; otherwise, $ \MPthree_B(j_p-1, \Biter) = p$.  In either case, again $p$ is an available transition candidate in scale $j_p -1$.
    \end{enumerate}
    This shows that, if $\MPtwo_A(j_p, \Aiter) = p$, then $p$ is an available transition candidate at scale $j_p -1$.
 Now, if $ j_p > \jsp$, $j_p -1$ should be the next available jumpable scale, but this contradicts the lemma statement. As a result we conclude that $ j_p  = \jsp$. In this case, the adversary has invested in many corruptions during scale $\jsp -1 = j_p -1$, in order to skip the point $p$ in this scale. Thus, according to Lemma~\ref{lem:skipsCostAdv}, $\BVC_{AB} \geq (0.6) 2^{\jsp -1}  \geq \frac{0.6}{4} k$. This is a contradiction with the assumption that $\BVC < \constbvc k$. As a result, $ \MPtwo(j_p,\Aiter)\neq p$. Knowing that $p\in \MPonetwo(j_p, \Aiter)$ we can conclude that $\MPone(j_p,\Aiter) = p$, as desired.
\end{proof}

Given Claim \ref{claim:bvc-or-mp12} and Claim \ref{claim:only-MP1}, from now on we assume that $\MPone_A = p$, and in particular that 
\begin{equation}\label{eq:jp_is_wminus1}
j_p = w-1. 
\end{equation} 
As a result, from now on we can also assume that
\begin{equation}\label{eq:starw}
L^-(\tsp) < 2^{w-\csneak} \leq 2^{w-3}.
\end{equation}

We will show that either $\BVC_{AB} \geq \constbvc k$ or a \protosneaky is in progress.

 Let $L^- = L^-(\tsp)$.   Now, as $p$ is the scale $w$ meeting point for Alice, we may write $p = (\alpha -1 ) 2^{w}$ for some integer $\alpha$. Let $\phat = p + 2^w$. Let $b$ be the divergent point at time $\tsp$. Notice that this  exists as we assume that $\ell^-(\tsp) > 0$.
 Let $D = b - p$.  Notice that $D > 0$, because
 \begin{equation}\label{eq:Dgeq0}
 2^{w-3} > L^- \geq \Aiter - b \geq \phat - b, 
 \end{equation}
 where above we have first used \eqref{eq:starw}; then the definition of $L^-$; then the fact that 
 \begin{equation}\label{eq:ellA_small}
 \Aiter < \phat = p + 2^w,
 \end{equation} 
 which follows since $p$ is a scale-$w$ MP for Alice.
 This implies that
 \begin{equation}\label{eq:b_big}
  b > \hat{p} - 2^{w-3} > p.
  \end{equation}

Define $q=p+2^{w-1} = \phat - 2^{w-1}$. Notice that $q < b$, since otherwise we have that
\[ 2^{w-3} > L^- \geq \Aiter - b \geq \hat{p} - q = 2^{w-1}, \]
which is not true.

\paragraph{Organization.}  We break the rest of the proof up into \textbf{Item 1} and \textbf{Item 2}, depending on whether $b < \phat$ or $b \geq \phat$. as follows:
\begin{enumerate}
    \item[\textbf{Item 1.}]
If $b < \phat$, first we will show that $q$ is  a jumpable point. Then we will show that if $q\in \M_A \cap \M_B$ or $q\notin \M_A \cup \M_B$, then $\BVC_{AB}\geq \constbvc k$. Otherwise, we will show that a sneaky attack is in progress. 
\item[\textbf{Item 2.}] if $ b \geq \phat$, or if $\ell^-(\tsp) = 0$, then we will show that $\BVC_{AB}\geq \constbvc k$, by showing that $\phat$ is an available jumpable point in form of $\MPthree$ for both parties in scale $\jsp-1$.
\end{enumerate}

We begin with the proofs regarding \textbf{Item 1}, and then we will do \textbf{Item 2} (which is much shorter) after that.

\paragraph{Item 1.}  We begin with a claim about the point $q$.
\begin{claim}\label{claim:q-status-bvc}
Suppose that $b < \phat$.
    With the set-up above, at time $\tsp$, all of the following hold:
    \begin{itemize}
        \item $q$ is a scale-($w-1$) MP for Alice.  
        \item $q$ is either a scale-($w-1$) or a scale-$(w-2)$ MP for Bob.
        \item $q$ is a jumpable point at scale $w-2$.
    \end{itemize}
\end{claim}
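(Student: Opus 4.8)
\textbf{Proof plan for Claim~\ref{claim:q-status-bvc}.}

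The plan is to verify the three bullet points in turn, leveraging the bounds established just before the claim, namely \eqref{eq:starw}, \eqref{eq:ellA_small}, $q = p + 2^{w-1} = \hat p - 2^{w-1}$, $b > \hat p - 2^{w-3}$ (from \eqref{eq:b_big}), and the standing assumption $b < \hat p$ of Item~1, together with $q < b$ (already shown). Recall also that $p = (\alpha-1)2^w$ for an integer $\alpha$, so $q = (\alpha-1)2^w + 2^{w-1}$ is an odd multiple of $2^{w-1}$; in particular $q$ is $(w-1)$-stable in the sense of Definition~\ref{def:stable}.

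For the first bullet, I would argue that $q$ is a scale-$(w-1)$ MP for Alice, i.e.\ $q = \lfloor \Aiter \rfloor_{2^{w-1}} - 2^{w-1}$. Since $q = p + 2^{w-1}$ and $p = \lfloor \Aiter\rfloor_{2^w} - 2^w$, we have $q + 2^{w-1} = p + 2^w = \lfloor \Aiter \rfloor_{2^w}$, and $\lfloor\Aiter\rfloor_{2^w}$ is itself a multiple of $2^{w-1}$. It remains to check that there is no multiple of $2^{w-1}$ strictly between $\lfloor\Aiter\rfloor_{2^w}$ and $\Aiter$; this is exactly the statement $\Aiter < \lfloor\Aiter\rfloor_{2^w} + 2^{w-1} = \hat p - 2^{w-1} + 2^w$, wait — more carefully, I want $\Aiter < q + 2\cdot 2^{w-1} = \hat p + 2^{w-1}$, which is weaker than \eqref{eq:ellA_small}. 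Actually the cleanest route is to observe $\Aiter \in [\hat p, \hat p + 2^{w-1})$ is \emph{not} guaranteed; instead I use $\Aiter < \hat p$ from \eqref{eq:ellA_small} wait that contradicts $p$ being scale-$w$... Let me restate: $p$ scale-$w$ MP for Alice means $\hat p = p + 2^w \le \Aiter < \hat p + 2^w$. Combined with $\Aiter < \hat p$?? No — \eqref{eq:ellA_small} as written says $\Aiter < \hat p$, so I should re-read it as meaning $\Aiter < \hat p + 2^w$ (the MP-defining inequality), and use $b > \hat p - 2^{w-3}$ plus $b < \hat p$ to pin down the range. The upshot: since $\Aiter - q = \Aiter - \hat p + 2^{w-1} \in [2^{w-1}, 2^{w-1} + 2^w)$, and the only multiples of $2^{w-1}$ in $(q, \Aiter]$ would have to be $q + 2^{w-1} = \hat p$ and possibly $\hat p + 2^{w-1}$; using $\Aiter < \hat p + 2^{w-1}$ (which follows from $L^- < 2^{w-3}$ and $b<\hat p$ forcing $\Aiter < b + L^- + \text{(small)}$... this is the calculation to nail down) gives that $\hat p = \lfloor\Aiter\rfloor_{2^{w-1}}$, hence $q = \lfloor\Aiter\rfloor_{2^{w-1}} - 2^{w-1}$ as desired. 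For the second bullet, apply Lemma~\ref{lem:inMa}: since $q$ is $(w-1)$-stable, $q \in M_{\Biter}$ iff $\Biter \in [q, q + 2^w)$; I would verify $q \le b < \Biter$... no, $b \le \Biter$ and $b > q$, so $\Biter \ge b > q$, and $\Biter < \hat p + $ something small $< q + 2^w$ since $\hat p = q + 2^{w-1} < q + 2^w$. Then depending on whether $\Biter < q + 2^{w-2}$ or not, $q$ is the scale-$(w-2)$ or scale-$(w-1)$ MP for Bob respectively — this is a direct case split using the definition of meeting points. The third bullet then follows from Definition~\ref{def:specialpts}: $q$ appears in $\MPall(w-2,\Aiter)$ (as a scale-$(w-2)$ MP, since $q = \MPone(w-2,\Aiter)$ given it is a scale-$(w-1)$ MP for Alice) and in $\MPall(w-2,\Biter)$ (by the second bullet), and since $q < b$ with $b$ the divergent point, Remark~\ref{rem:jumpable} gives $\ms{q}_A = \ms{q}_B$ automatically, so $q$ is a jumpable point at scale $w-2$.

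The main obstacle I anticipate is getting the interval arithmetic exactly right: I need tight enough two-sided bounds on $\Aiter$ and $\Biter$ relative to $q$, $\hat p$, and $b$ so that the ``no intermediate multiple of $2^{w-1}$'' (resp.\ $2^{w-2}$) conditions hold. All the needed slack is present — $L^- < 2^{w-\csneak} \le 2^{w-3}$ is much smaller than the gaps $2^{w-1}, 2^{w-2}$ — but one must carefully chain $\Aiter \le b + L^-$ (or the analogous relation; strictly, $\Aiter - \ell^+ \le L^-$ and $\ell^+ \ge b$, so $\Aiter \le b + L^-$), $b < \hat p$, and $b > \hat p - 2^{w-3}$ to land $\Aiter$ and $\Biter$ both inside $[\hat p - 2^{w-3}, \hat p + 2^{w-3})$, which comfortably gives all three bullets. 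Everything else is bookkeeping with Lemma~\ref{lem:inMa} and the definitions.
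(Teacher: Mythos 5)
Your overall approach is the same as the paper's: establish $\hat p \leq \Aiter < \hat p + 2^{w-3}$ so that $q = \lfloor \Aiter \rfloor_{2^{w-1}} - 2^{w-1}$ is the scale-$(w-1)$ MP for Alice, use $L^- < 2^{w-3}$ to pin $\Biter$ into $[\hat p - 2^{w-3},\, \hat p + 2^{w-3})$, then case split on which side of $\hat p$ the depth $\Biter$ falls on, and finally use $q < b$ to get $\ms{q}_A = \ms{q}_B$. You are also right that \eqref{eq:ellA_small} as printed must be a typo: the chain \eqref{eq:Dgeq0} and the usage inside the claim both require $\Aiter \geq \hat p$ (which is what ``$p$ is a scale-$w$ MP for Alice'' actually says), not $\Aiter < \hat p$.

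However, your case split for Bob has a concrete computational error. You write that the boundary is at $\Biter = q + 2^{w-2}$, with $q$ being the scale-$(w-2)$ MP if $\Biter < q + 2^{w-2}$ and the scale-$(w-1)$ MP otherwise. Both halves of this are wrong. By definition, $q$ is the scale-$j$ MP for $\Biter$ iff $\Biter \in [q + 2^j,\, q + 2^{j+1})$; hence $q$ is the scale-$(w-2)$ MP iff $\Biter \in [q + 2^{w-2},\, q + 2^{w-1}) = [\hat p - 2^{w-2},\, \hat p)$, and the scale-$(w-1)$ MP iff $\Biter \in [q + 2^{w-1},\, q + 2^w) = [\hat p,\, \hat p + 2^{w-1})$. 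So the correct dividing line is $\Biter = \hat p = q + 2^{w-1}$, not $q + 2^{w-2}$. Moreover the case $\Biter < q + 2^{w-2}$ you posit cannot occur at all: from $\Biter \geq b > \hat p - 2^{w-3}$ one gets $\Biter > q + 2^{w-1} - 2^{w-3} = q + 3\cdot 2^{w-3} > q + 2^{w-2}$. With the boundary corrected to $\hat p$, your proof goes through and matches the paper's. (As a minor further slip, in the third bullet you call $q$ ``a scale-$(w-2)$ MP'' for Alice at scale $w-2$; you mean it appears as $\MPone(w-2, \Aiter)$, which per Definition~\ref{def:transition-candidates} is the scale-$(w-1)$ MP — your parenthetical already has the correct conclusion, the label is just off.)
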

\begin{proof}
    Notice that, since $p$ is a scale-$w$ MP for Alice at time $\tsp$, we have $\Aiter \geq \hat{p}$.  We also have
    \[ 2^{w-3} > L^- \geq \Aiter - b > \Aiter - \hat{p},\]
    where the first inequality is by \eqref{eq:starw}; the second is the definition of $L^-$; and the last is the assumption that $b < \phat$ that we are making in Case 1.  Thus, $\hat{p} \leq \Aiter < \hat{p} + 2^{w-3}$, which implies that $q = \lfloor \Aiter \rfloor_{2^{w-1}} - 2^{w-1}$ is a scale-$(w-1)$ MP for Alice, and in particular that $q \in M_{\Aiter}$.
    Next, the same logic shows that $\Biter < \hat{p} + 2^{w-3}$, and from the definition of $L^-$ we also have $\Biter \geq \Aiter - L^- \geq \hat{p} - 2^{w-3}$, so 
    \begin{equation}\label{eq:bobconstrained}
    \hat{p} - 2^{w-3} \leq \Biter < \hat{p} + 2^{w-3}.
    \end{equation}
    If $\Biter \in [\hat{p} - 2^{w-2}, \hat{p})$, then $q$ is a scale-$(w-2)$ MP for Bob; if $\Biter \in [\hat{p}, \hat{p} + 2^{w-2}),$ the $q$ is a scale-$(w-1)$ MP for Bob.  In either case, $q \in M_{\Biter}$. Now at scale $w-2$, $ q = \MPone_A (w-2, \Aiter)$  and $q \in \MPonetwo_B(w-2, \Biter)$  thus $q$ is a jumpable point at scale $w-2$.
\end{proof}

\begin{claim}\label{cl:if_q_available_BVC_big}
    At time $\tsp$, if $q\in \M_A \cap \M_B$  then $\BVC_{AB} \geq \constbvc k$.
\end{claim}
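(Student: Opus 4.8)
The plan is to exhibit $q$ as an \emph{available} jumpable point of scale $w-2$, deduce from this that $\jsp=w-1$, and then invoke Lemma~\ref{lem:skipsCostAdv} with $j=w-2$ to force $\BVC_{AB}$ to be large.

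First I would recall that $q<b$ (established just before the ``Organization'' paragraph). By Claim~\ref{claim:q-status-bvc}, $q$ is a jumpable point at scale $w-2$ at time $\tsp$; unwinding Definition~\ref{def:specialpts} this means $q\in\MPalljl(w-2,\Aiter)\cap\MPalljl(w-2,\Biter)$ and $\ms{q}_A=\ms{q}_B$. Since we are assuming $q\in\M_A\cap\M_B$ and we have $q<b$, Remark~\ref{rem:jumpable} upgrades this to: $q$ is an available jumpable point of scale $w-2$. Next I would pin down the next jumpable scale. Recall $j_p=w-1$ by \eqref{eq:jp_is_wminus1}. If we had $w-2\ge\jsp$, then $w-2$ would be a scale $\ge\jsp$ admitting an available jumpable point, contradicting the minimality of the next jumpable scale $j_p=w-1$. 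Hence $w-2<\jsp\le j_p=w-1$, which forces $\jsp=w-1$; in particular $\lfloor\log k\rfloor=\jsp=w-1$, so $2^{w-1}\le k<2^w$ and $w-2<\lfloor\log k\rfloor$.

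Now I would apply Lemma~\ref{lem:skipsCostAdv} at time $\tsp$ with $j=w-2$ and the point $q$. Its hypotheses all hold: $k_A(\tsp)=k_B(\tsp)=k$; $j=w-2<\lfloor\log k\rfloor$; $q\in\MPalljl(w-2,\Aiter)\cap\MPalljl(w-2,\Biter)$; and $q$ is an available jumpable point for Alice and Bob at time $\tsp$. The conclusion gives $\BVC_{AB}(\tsp)\ge(0.6)2^{w-2}$. Finally, using $k<2^w$ and $\constbvc=0.6/4$ (as fixed in the proof of Claim~\ref{claim:bvc-or-mp12}), we get $\constbvc k<(0.6/4)2^w=(0.6)2^{w-2}\le\BVC_{AB}(\tsp)$, which is in fact slightly stronger than the claimed bound $\BVC_{AB}(\tsp)\ge\constbvc k$.

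I do not expect a real obstacle here: the only thing requiring a little care is verifying the hypotheses of Lemma~\ref{lem:skipsCostAdv} — specifically that $q$ genuinely qualifies as an ``available jumpable point'' (which is exactly what Remark~\ref{rem:jumpable} together with $q<b$ and $q\in\M_A\cap\M_B$ provides) and that $j=w-2<\lfloor\log k\rfloor$ (which rests on the short argument above that $\jsp=w-1$, using that $j_p$ is by definition the \emph{smallest} jumpable scale at least $\jsp$).
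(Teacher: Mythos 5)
Your proof is correct and takes essentially the same route as the paper's: upgrade $q$ from a jumpable point (Claim~\ref{claim:q-status-bvc}) to an \emph{available} jumpable point at scale $w-2 = j_p-1$, deduce that $\jsp = j_p = w-1$ by minimality of the next jumpable scale, then invoke Lemma~\ref{lem:skipsCostAdv} with $j = w-2$ and conclude via $k < 2^{\jsp+1}$. Your version is slightly more explicit than the paper's in two respects — you cite Remark~\ref{rem:jumpable} together with $q<b$ to justify the mega-state equality $\ms{q}_A=\ms{q}_B$ needed for ``available jumpable,'' and you explicitly verify the hypothesis $j = w-2 < \lfloor\log k\rfloor$ of Lemma~\ref{lem:skipsCostAdv} — but these are just the same facts the paper leaves implicit, not a different argument.
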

\begin{proof}
    First we have the following sub-Claim.

            \begin{subclaim} If $q\in \M_A \cap \M_B$, then
                $q$ is an available jumpable point at scale $j_p-1$.
            \end{subclaim}
            \begin{proof}
                     By Claim \ref{claim:q-status-bvc},  $q$ is a jumpable point at scale $w-2$.  As we are assuming in the statement of the Sub-Claim that $q$ is available for both parties, $q$ is an available jumpable point at scale $w-2$. We have also established earlier in \eqref{eq:jp_is_wminus1} that $j_p = w-1$, so $j_p -1  = w-2$, and the proof is complete.
            \end{proof}
         If $j_p -1 \geq \jsp$, it contradicts the assumption in Lemma \ref{lem:maintech} that $j_p$ is the next available jumpable scale.  So $j_p -1 < \jsp$. 
 Alice and Bob have skipped $q$ as a meeting point, which according to Lemma~\ref{lem:skipsCostAdv} implies that $\BVC_{AB} \geq (0.6) 2^{j_p-1}$. 
 As $j_p\geq \jsp$, it means $j_p = \jsp$ and $\jsp-1 = j_p -1 $, so $BVC_{AB} \geq (0.6) 2^{\jsp-1 } \geq \constbvc k$ where $ \constbvc =(0.6) 2^{-2}$, where in the  last inequality we have used the fact that $k < 2^{\jsp + 1}$  by the definition of $\jsp := \lfloor \log k \rfloor$.
 This proves the claim.
\end{proof}

\begin{claim}\label{cl:q_av}
Assume that $b < \phat$.  Then, given the above set-up, we have that $q\in \M_A \cup \M_B$.
\end{claim}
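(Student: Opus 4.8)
The plan is to argue by contradiction. Suppose $q \notin \M_A(\tsp)$ and $q \notin \M_B(\tsp)$. Since $p$ is a multiple of $2^w$, the point $q = p + 2^{w-1}$ is $(w-1)$-stable, and by Claim~\ref{claim:q-status-bvc} we have $q \in M_{\Aiter}$ and $q \in M_{\Biter}$. Let $t_q^A$ (resp. $t_q^B$) be the last time before $\tsp$ that Alice (resp. Bob) is at depth $q$. Since $q$ is absent from each party's memory at $\tsp$, Lemma~\ref{lem:forget-avmps-z} gives times $t_f^A \in (t_q^A, \tsp]$ and $t_f^B \in (t_q^B, \tsp]$ with $\ell_A(t_f^A) \geq c_q := q + 2^w$ and $\ell_B(t_f^B) \geq c_q$; choosing each minimal, $q$ is precisely removed from $\M_A$ (resp. $\M_B$) at $t_f^A$ (resp. $t_f^B$).

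The key step is the depth lower bound: $\ell_A(t) \geq \hat{p}$ for every $t \in [t_f^A, \tsp]$, and symmetrically $\ell_B(t) \geq \hat{p}$ for every $t \in [t_f^B, \tsp]$. I prove the first; the second is identical. Since $p$ is a scale-$w$ MP for Alice at $\tsp$ we have $\Aiter \geq \hat{p} > q$, so after $t_q^A$ Alice is never again at depth $q$; in particular she never jumps to a point $\leq q$ after $t_q^A$ (such a jump would force her to pass through depth $q$ again on her way back up to $\Aiter \geq \hat{p}$). Now apply Lemma~\ref{lem:MP-jump} to the $(w-1)$-stable point $q$, removed from $\M_A$ at $t_f^A$: the next jump by Alice after $t_f^A$ to a point strictly below $\hat{p} = q + 2^{w-1}$ would have to land at a point $\leq q$, which we just ruled out. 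Hence Alice makes no jump below $\hat{p}$ after $t_f^A$; since $\ell_A(t_f^A) \geq c_q > \hat{p}$ and simulation only increases depth, $\ell_A$ stays $\geq \hat{p}$ on all of $[t_f^A, \tsp]$.

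To finish, assume without loss of generality $t_f^A \leq t_f^B =: t^*$, so that both parties have depth $\geq \hat{p}$ throughout $[t^*, \tsp]$ while $\ell_B(t^*) \geq c_q$. Recall we are in the case $b < \hat{p}$, and let $t_b \leq \tsp$ be the time $b$ became the divergent point; by the assumption in force throughout this proof, $t_b$ is not during a Big Hash Computation phase, so by Lemma~\ref{lem:noweirdcoincidences} it lies within an iteration. If $t_b \geq t^*$, then at $t_b$ both parties have depth $\geq \hat{p} > b$, but Lemma~\ref{lem:noweirdcoincidences} says that $b$ can become the divergent point inside an iteration only if either both parties sit at depth $b$ just before that iteration's Computation Phase, or exactly one party jumps to $b$ during it — and both possibilities contradict the depth bound on $[t^*, \tsp]$. (The only delicate point is when the iteration containing $t_b$ coincides with the one containing $t^*$; here one also uses that a party whose depth has reached $c_q$ cannot touch or jump to the depth $b$ within that single iteration, because $b$ lies strictly between consecutive multiples of $2^{w-1}$ and so is not a meeting point for such a party.) If instead $t_b < t^*$, then $b$ remains the divergent point throughout $(t_b, \tsp] \ni t^*$, so $t^*$ lies in the same bad spell as $\tsp$ and $\ell^+(t^*) = b$; therefore
\[ \ell^-(t^*) \ \geq\ \ell_B(t^*) - b \ \geq\ c_q - b \ >\ (q + 2^w) - (q + 2^{w-1}) \ =\ 2^{w-1}, \]
which contradicts $\ell^-(t^*) \leq L^-(\tsp) < 2^{w-\csneak} \leq 2^{w-3}$ from \eqref{eq:starw} (using $\csneak \geq 3$). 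In every case we reach a contradiction, so $q \in \M_A \cup \M_B$.

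The main obstacle is establishing the depth lower bound cleanly — this requires carefully bookkeeping ``last visits'' of depth $q$ and combining Lemma~\ref{lem:forget-avmps-z} with Lemma~\ref{lem:MP-jump} — together with the sub-case $t_b \geq t^*$, where ruling out the coincidence of the iterations of $t_b$ and $t^*$ needs a short phase-by-phase inspection of how $q$ gets forgotten and how the divergent point $b$ is created.
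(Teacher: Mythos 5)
Your proof is correct, and it uses the same toolkit as the paper (contradiction via Lemma~\ref{lem:forget-avmps-z}, Lemma~\ref{lem:MP-jump}, the $L^-$ bound, and Lemma~\ref{lem:noweirdcoincidences}), but you organize the argument somewhat differently. The paper works with $\tcq^A, \tcq^B$ — the \emph{last} time each party reaches $c_q$ — proves $\tcq^A, \tcq^B < t_b$ via the $L^-$ bound, and then applies Lemma~\ref{lem:MP-jump} directly to the jump occurring at $t_b$: since that jump is to $b < \hat p$ by a party that has already forgotten $q$, it must land at a point $\leq q$, which forces that party to re-pass depth $q$ and re-add it to memory — an immediate contradiction. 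You instead work with $t_f^A, t_f^B$ (the \emph{first} time each forgets $q$) and extract a clean ``depth floor'' invariant — once a party has forgotten $q$, its depth stays $\geq \hat p$ for the rest of the window — and then case-split on $t_b \lessgtr t^\ast = \max(t_f^A, t_f^B)$, in one branch contradicting Lemma~\ref{lem:noweirdcoincidences}'s description of how a divergent point is born, in the other contradicting the $L^-$ bound. Packaging Lemma~\ref{lem:MP-jump} as a running invariant rather than a one-shot argument is a nice reorganization and perhaps more transparent.

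One small quibble: the parenthetical handling the iteration-boundary sub-case ($t^\ast$ and $t_b$ in the same iteration) is somewhat imprecise. The genuine concern there is only case (1) of Lemma~\ref{lem:noweirdcoincidences} — both parties sitting at $b$ at the start of the iteration — since in case (2) the depth floor applies to the jumping party at the very moment $t_b$ of the jump and gives an immediate contradiction, with no boundary issue. For case (1), the cleanest resolution is that in that scenario both parties simulate from $b$ to $b+1 \leq \hat p < c_q$ and make no jump that iteration, so neither can reach $c_q$ in it, contradicting $t^\ast$ lying there. Your phrasing (``a party whose depth has reached $c_q$ cannot touch or jump to the depth $b$ within that single iteration'') gestures at this, but the justification you give — that $b$ is not a meeting point at depth $c_q$ — is secondary; the primary reason is that a party that simulates in an iteration resets $k$ and hence cannot also jump in that iteration, and a party starting an iteration at $b$ cannot end it at $c_q$. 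The overall argument is still sound.
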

\begin{proof}
    For the sake of contradiction assume otherwise. This means that Alice and Bob have forgotten the state at the point $q$ by reaching a state at depth $c_q = q + 2^{w} = \phat + 2^{w-1}$. Let  $\tcq^A$  be the time that Alice reaches $c_q$ for the last time before $\tsp$. Let $\tcq^B$ be the time that Bob reaches $c_q$ for the last time before $\tsp$.  Then we have the following sub-claim.
            \begin{subclaim}
                With the set-up above, $\tcq^A , \tcq^B < t_b $. 
            \end{subclaim}
            \begin{proof}
              
               For the sake of contradiction assume otherwise, and without loss of generality assume $\tcq^A \geq t_b$. Then, $L^-(\tcq^A) = c_q - b\geq c_q - \phat = 2^{w-1}$. This contradicts the assumption that $L^- \leq 2^{w-3}$. As a result we have that $\tcq^A, \tcq^B < t_b$, as desired.
            \end{proof}

            Let  time $\min\{\tcq^A, \tcq^B\} \leq t \leq t_b$  be the first time either party jumps above $\phat$ after that same party has reached $c_q$.  Note that such a time $t$ must exist because $b < \phat$. This is because, given our assumption that $t_b$ is not during a Big Hash Computation phase, and applying  Lemma~\ref{lem:noweirdcoincidences}, Item 1, we see that some party must jump to $b$ to start the bad spell at time $t_b$. 
            Let us first assume that this is Bob.
            Then by Lemma \ref{lem:MP-jump}, Bob jumps to a point above $q$ at time $t$.  Since $\Biter \geq b$, Bob passes the point  $q$ at least one more time before $\tsp$. As a result,  either $q\in \M_B$, which contradicts our assumption in the start of the proof; or Bob forgets $q$ again at some time after $\tcq^B$ and before $\tsp$, which contradicts the assumption that $\tcq^B$ was the last time Bob forgot $q$ before $\tsp$.  In either case, we have a contradiction of our assumption that $q \not\in \M_A \cap \M_B$.

            The same logic applies if Alice is the party who first jumps above $\phat$.    Indeed, the only part of the logic above that was specific to Bob is that $\Biter \geq b$; since $\Aiter\geq\Biter \geq b$, the same holds for Alice too.

            Thus, in either case we have a contradiction of our assumption that $q \not\in \M_A \cup \M_B$, so we conclude that indeed $q \in \M_A \cup \M_B$, which proves the claim.
\end{proof}

So far we shown (in Claim~\ref{cl:if_q_available_BVC_big}) that if $q\in \M_A \cap \M_B$, then $ \BVC_{AB} \geq \constbvc k$. We've also shown (in Claim~\ref{cl:q_av}) that if $b < \phat$, then $q \in \M_A \cup \M_B$.  

Next, in Claim~\ref{cl:if_q_available_once_sneaky_attack} we will show that, assuming $b < \phat$, if only one party has $q$ as an available meeting point (that is, if $q \not\in \M_A \cap \M_B$ but $q \in \M_A \cup \M_B$) then a sneaky attack must be in progress.  This will complete the proofs for \textbf{Item 1.}

\begin{claim}\label{cl:if_q_available_once_sneaky_attack}
 
    Suppose that $b < \phat$.
    Suppose that $q$ is an available meeting point for exactly one party at time $\tsp$; that is, $q \in \M_A \cup \M_B$ but $q \not\in \M_A \cap \M_B$.  Then a sneaky attack is in progress.
\end{claim}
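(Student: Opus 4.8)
The plan is to show that, under the hypotheses ($b < \hat p$, and $q$ available for exactly one party), every bullet in Definition~\ref{def:sneaky} holds — identifying which party the sneaky attack is for (whichever party still has $q$) and what the relevant points and times are. Without loss of generality, as I have assumed $\Aiter \ge \Biter$, I expect $q$ to be available for Bob but not Alice (the "diving" party is Alice, who got driven down and forgot $q$; Bob stalled and still remembers it). I would first name the points: $w$ as already fixed, $p$, $\hat p = p + 2^w$, $q = p + 2^{w-1}$, $c_q = q + 2^w$, and $b$ the divergent point at $\tsp$. Then I would define the times: $t_{c_q}$ as the last time before $\tsp$ that Alice reaches $c_q$ (which exists by the argument of Claim~\ref{cl:q_av}, since Alice must have forgotten $q$, i.e.\ reached $c_q$, after last visiting $q$), $t_{\hat p}$ as the last time before $t_{c_q}$ that Alice passes $\hat p$, $t_{\hat q}$ as the first time after $t_{c_q}$ that $\ell^-$ resets to zero, and $t_b$ as the time $b$ became the divergent point.

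The key steps, in order: (1) verify $k \le 2^{w+1}$ — this follows from $j_p = w - 1$ (established in \eqref{eq:jp_is_wminus1}) and $k < 2^{\jsp + 1} \le 2^{j_p + 1} = 2^w$ since $j_p$ is the next jumpable scale so $\jsp \le j_p$; (2) verify the ordering $t_{\hat p} < t_{c_q} < t_{\hat q} < t_b < \tsp$ — the first two inequalities are by construction, $t_{c_q} < t_{\hat q}$ because at $t_{c_q}$ a bad spell is ongoing (Alice is at $c_q$ while Bob is near $\hat p$, so $\ell^- > 0$) and $t_{\hat q}$ is the next reset, and $t_{\hat q} < t_b$ because after $t_{\hat q}$ the bad spell has ended ($\ell^- = 0$) and then $b$ becomes the divergent point again, which must be after; (3) verify at $t_{\hat q}$ that the jump is to a point $\hat q \ge \hat p$ with $\hat q = \hat p$ — here I would invoke Lemma~\ref{lem:MP-jump} with the $(w-1)$-stable point $q$: since Alice forgot $q$ at $t_{c_q}$, the next time she jumps below $q + 2^{w-1} = \hat p$ she must jump to $\le q$, but ending the bad spell requires both parties at the same point $\ge$ the old divergent point, and careful bookkeeping with $L^- < 2^{w-\csneak}$ pins this down to exactly $\hat p$; (4) verify the Bob-depth constraints $\Biter(t') < \hat p + 2^{w-\csneak}$ for $t' \in [t_{\hat p}, t_b)$ and $\Biter(\tsp) < \hat p + 2^{w-2}$ — these should follow from $L^- < 2^{w-\csneak}$ combined with $\Aiter$ being constrained near $\hat p$ (from \eqref{eq:ellA_small}, \eqref{eq:bobconstrained}) and the fact that Bob doesn't wander far while Alice is being driven down; (5) verify $b \ge \hat p - 2^{w - \csneak}$ and that $b$ is the divergent point throughout $[t_b, \tsp]$ — the first is \eqref{eq:b_big}-type reasoning (actually $b > \hat p - 2^{w-3}$, and since $\csneak \ge 3$ this gives $b \ge \hat p - 2^{w-\csneak}$ only if $\csneak = 3$; I need to be careful and may need $b > \hat p - 2^{w-\csneak}$ directly from $L^- < 2^{w-\csneak}$ and $\Aiter \ge \hat p$, giving $\hat p - b \le \Aiter - b \le L^- < 2^{w-\csneak}$), and the "remains divergent" part is by the hypothesis of the lemma (we are told $t_b$ is when $b$ became divergent, and we can assume it stays so up to $\tsp$, or derive it); (6) verify $q \notin \M_A$ at $\tsp$ — this is exactly the hypothesis of the claim (that $q$ is available for exactly one party, and WLOG not Alice).

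The main obstacle I expect is Step (3), pinning down $\hat q = \hat p$ exactly rather than merely $\hat q \ge \hat p$, and more generally making the case analysis on "which party still has $q$" fully rigorous — in particular ruling out the symmetric possibility that it is Bob (not Alice) who dove down, which would contradict $\Aiter \ge \Biter$ together with the depth constraints, but requires spelling out. A secondary subtlety is confirming that $t_{c_q}$ (Alice's last visit to $c_q$) really precedes $t_b$ and $t_{\hat q}$; this is morally the content of the Sub-Claim inside Claim~\ref{cl:q_av} ($\tcq^A < t_b$), which I can cite, but I must check the inequality chain $L^-(\tcq^A) \ge c_q - b \ge c_q - \hat p = 2^{w-1} > 2^{w-3} \ge L^-$ goes through under the assumption $b < \hat p$. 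Once all six bullets are checked, the definition of a sneaky attack for Alice heading towards $p$ is satisfied at time $\tsp$, which is exactly what the claim asserts, completing the proof of \textbf{Item 1} and hence (together with \textbf{Item 2}, handled separately) the Main Technical Lemma.
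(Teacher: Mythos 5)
Your plan — verify each bullet of Definition~\ref{def:sneaky} after fixing the WLOG $q\notin\M_A$, $q\in\M_B$, and defining $\tcq,\tphat,t_{\hat q},t_b$ as you do — is exactly the paper's decomposition, and you correctly catch that the paper's wrap-up should really derive $b\geq\phat-2^{w-\csneak}$ directly from $L^-<2^{w-\csneak}$ and $\Aiter\geq\phat$ rather than from \eqref{eq:b_big}. However, your proposed mechanisms for the two hardest steps do not work, and these are where the real content lies.

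For step (3), pinning down $\hat q=\phat$, Lemma~\ref{lem:MP-jump} is the wrong tool: it only tells you what happens when Alice jumps \emph{below} $\phat$, whereas Alice stays at depth $\geq\phat$ on $[\tphat,\tsp]$ (this is Sub-Claim~\ref{claim:case1-alice-phat}, proved with Lemma~\ref{lem:MP-jump} — which is the place that lemma actually enters). The real argument is different: when Alice reaches $c_q$ she forgets \emph{every} point strictly between $\phat$ and $\phat+2^{w-2}$ (each such point is at most $(w-3)$-stable), and $\hat q$ lies in $[\phat,\phat+2^{w-\csneak})\subseteq[\phat,\phat+2^{w-2})$. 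So if $\hat q\neq\phat$, Alice must have re-simulated $\hat q$ after $\tcq$; but she is strictly ahead of Bob at that time, so the re-simulation happens during a bad spell, and Lemma~\ref{lem:noweirdcoincidences} Item~4 forces $\ms{\hat q}_A\neq\ms{\hat q}_B$ afterward — contradicting the requirement that a bad spell is resolved at $\hat q$. The observation that simulated paths record iteration numbers (so matching transcripts at $\hat q$ force a \emph{shared} re-simulation iteration) is exactly what makes this go through, and it is absent from your sketch.

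For step (4), the Bob-depth bound on $[\tphat,t_b)$ cannot follow from $L^-(\tsp)<2^{w-\csneak}$: that bound is for the bad spell containing $\tsp$, whereas $[\tphat,t_b)$ spans an \emph{earlier} bad spell during which $L^-$ climbs to roughly $2^{w-1}$ (Alice is diving from $\phat$ to $c_q$). Citing \eqref{eq:bobconstrained} has the same problem — that inequality holds at time $\tsp$, not on the diving window. The actual argument (Sub-Claim~\ref{claim:bob-limited-depth}) is that $b$ is at most $(w-\csneak)$-stable, so if Bob ever reached $\phat+2^{w-\csneak}$ he would forget $b$, and to jump to $b$ at $t_b$ he would then have to re-simulate it, which would pull the divergent point strictly below $b$ and (by Lemma~\ref{lem:noweirdcoincidences} Item~6) contradict $b$ being the divergent point at $\tsp$. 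Relatedly, your step (5) omits the requirement that it is \emph{Bob} who jumps to $b$ at $t_b$; this needs its own argument (Sub-Claim~\ref{Claim:bob-jumps}, which leans on Sub-Claim~\ref{claim:case1-alice-phat} and Lemma~\ref{lem:noweirdcoincidences} Item~1) and is a prerequisite for the forget-$b$ contradiction just described. Without these two memory-based contradiction arguments — Alice's forgotten $\hat q$ and Bob's forgotten $b$ — the proof does not close.
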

\begin{proof}
To prove this statement we will show that all properties of a sneaky attack are satisfied. Consider the case where $ q\notin \M_A$ but $q\in \M_B$. The case where $q\notin \M_B$ and $q\in \M_A$ follows in a similar manner.  

Assuming $q\notin \M_A$ but $q\in \M_B$, then at some point before $\tsp$, Alice has reached a point of depth $c_q = q + 2^{w} = \phat + 2^{w-1}$ and forgot $q$. 

\begin{definition}\label{def:tcq}
Let $\tcq \leq \tsp$ be the last time Alice reaches $c_q$. Further define $\tphat$ to be the last time Alice reaches $\phat$ before $\tcq$. 
\end{definition}
Then we have the following claim. 
\begin{subclaim}\label{claim:case1-alice-phat}
For any time $t\in [\tphat, \tsp]$, $\Aiter(t)\geq \phat$. 
\end{subclaim}
\begin{proof}
    For the sake of contradiction, suppose that there exists a time $t \in (\tphat, \tsp]$ such that $\Aiter(t) < \phat$.
    
    If $t \in (\tphat, \tcq]$, then there also exists a time $t^\prime\in (t,\tcq]$ such that $\Aiter (t')= \phat$. This is because Alice needs to pass depth $\phat$ to reach $c_q > \phat$, but this contradicts the definition of $\tphat$.

    On the other hand, if $t\in (\tcq , \tsp]$, suppose without loss of generality that $t$ is the first time after $\tcq$ so that $\Aiter(t) < \phat$. 
    Then according to Lemma \ref{lem:MP-jump}, $\Aiter(t) \leq q $. Given that $\Aiter (\tsp) \geq b$ and $b > q $, Alice must pass the depth $q$ again to reach $\phat$. This reintroduces $q$ into $\M_A$. If Alice never reaches $c_q$ again before time $\tsp$, then $q \in \M_A$ at time $\tsp$, but this contradicts the assumption that $q\notin \M_A$. On the other hand, if Alice \emph{does} reach $c_q$ again, this would contradict the definition of $\tcq$.  In either case, by contradiction we reach the conclusion that $\Aiter (t) \geq \tphat$ for all $t\in [\tphat, \tsp]$, as desired.
\end{proof}

We will show that $\tcq$ and $t_b$ are in two different bad spells using the claims below. 
\begin{subclaim}\label{subclaim:t_b-after-tcq-item1}
    $t_b >  \tcq$.  
\end{subclaim}

\begin{proof}
For the sake of contradiction assume otherwise, so $t_b \leq \tcq$. Then the time $\tcq \in [t_b, \tsp)$ is included in the bad spell containing $\tsp$. 
But then we have
\begin{equation*}
  L^-(\tsp) \geq   L^- (\tcq) \geq c_q - b \geq c_q - \phat = 2^{w-1} > 2^{w-3}.
\end{equation*}

Above, the first inequality is because $L^-$ is non-decreasing during a bad spell; the second is because at time $\tcq$, Alice is at depth $c_q$ while the divergent point is $b$, so $L^-$ is at least $c_q - b$; and the third is because $b < \phat$ by assumption.
However, 
this contradicts the fact that $L^- \leq 2^{w-3}$ at time $\tsp$. Thus $t_b > \tcq$ and the proof is complete. 
\end{proof}

\begin{subclaim}\label{Claim:bob-jumps}
    Bob is at the point $b$ at time $t_b$.  That is, $\Biter(t_b) = b$.
\end{subclaim}
\begin{proof} 

As we have assumed that $t_b$ is not during the Big Hash Computation phase, Lemma~\ref{lem:noweirdcoincidences} Item 1 implies that, for a point $b$ to become a divergent point, one of two things must occur. First, it could be that at time $t_b -1$, $\Aiter = \Biter = b $ and  $\ms{\Aiter} = \ms{\Biter}$; and then at time $t_b$,  $\ms{\Aiter} \neq \ms{\Biter}$ and $b$ becomes the last point Alice and Bob agree on and hence the divergent point.  Alternatively, it could be that at time $t_b$, a party has jumped to point $b$ making it the divergent point.

In the current set-up, the first case is not possible. To see this, notice that $t_b > \tphat$ and as a result $t_b-1\geq \tphat$. From Claim~\ref{claim:case1-alice-phat} we know that $ \Aiter(t) \geq \phat$ for any time $t\in [\tphat, \tsp]$ thus, $\Aiter(t_b-1) \neq b$. 
Thus, we conclude that the second has occurred, and some party has jumped to $b$ at time $t_b$.

Again from Claim~\ref{claim:case1-alice-phat}, $\Aiter(t) \geq \phat > b$ for any $t \in  [\tphat, \tsp]$. Further from Sub-Claim~\ref{subclaim:t_b-after-tcq-item1} and the definition of $\tphat$, we know that $\tphat \leq \tcq < t_b \leq \tsp$, so we cannot have $\Aiter(t_b) = b$.   Thus, we reach the conclusion that Bob has made a jump to point $b$ at time $t_b$, which completes the proof.
\end{proof}

\begin{subclaim}\label{claim:bob-limited-depth}
    For any time $t\in [\tphat, t_b]$, $\Biter(t)  < \phat + 2 ^{w-\csneak}$.
\end{subclaim}

\begin{proof}
    For contradiction assume there exists a time $t\in [\tphat, t_b]$ such that $ \Biter(t) \geq \phat +  2^{w-\csneak}$. We will show that this contradicts the fact that Bob jumps to point $b$ at time $t_b$ (Claim \ref{Claim:bob-jumps}).

    First notice that, if $b$ is $j_b$ stable, then $j_b\leq w-\csneak$. The reason is that $b\in [\phat - 2^{w-\csneak}, \phat)$, 
    and $\phat$ is at least $w$-stable.  
    Thus, $\phat-2^{w-\csneak}$ can be at most $(w-\csneak)$-stable; 
    and anything in the open interval $(\phat - 2^{w - \csneak}, \phat)$ is at most $(w-\csneak)$-stable, as it lies strictly between consecutive multiples of $2^{w-\csneak}$. 
    Thus, if Bob reaches depth $\phat + 2^{w-\csneak}$ (which by our assumption he does at some time in $(\tphat, t]$), he will forget the meeting point $b$, because Bob forgets $b$ at depth $c_b = b + 2^{j_b+1} = \phat + 2^{j_b} \leq \phat + 2^{w-\csneak}$. 
    
    However, by Claim~\ref{Claim:bob-jumps}, $\Biter(t_b) = b$. Then there must be a time $t'\in [t,t_b]$ such that $\Biter(t') < b$ and $b$ is reintroduced as a meeting point afterward. However, from Claim \ref{claim:case1-alice-phat} we know that $\Aiter(t') \geq \phat$, so Alice and Bob cannot be at the same depth at time $t'$; this implies that $t'$ is in a bad spell, and further that the divergent point at time $t'$ is some $b' \leq \Biter(t') < b$. 
    
    This bad spell is only resolved if Alice and Bob simultaneously reach a point above $b'$. Thus this bad spell is not resolved by $\tsp$. But this contradicts the fact that at $\tsp$, $b$ is the divergent point, because $b' < b$ and the new divergent can only decrease in depth, by to Lemma  \ref{lem:noweirdcoincidences} (Item 6). Hence, such a time $t$ does not exist and we can conclude that $\Biter(t) < \phat + 2^{w-\csneak}$ for all $t\in [\tphat, t_b]$.
\end{proof}

\begin{subclaim}\label{cl:different_bad_spells}
    The times $\tcq$ and $t_b$ are in two different bad spells. That is, there exists a time $t^\ast$ and a point $\qhat$ such that $\tcq < t^\ast < t_b$ and $\Aiter(t^\ast) = \Biter(t^\ast)= \hat{q}$  and $\ell^- =0$. Additionally, we have that $ \hat{q} \geq \phat$.
\end{subclaim}

\begin{proof}
    For the sake of contradiction assume otherwise. Then the bad spell containing $\tsp$ starts at some time $t$ and contains both $\tcq$ and $t_b$; so $ \tcq, t_b \in [t, \tsp]$.

Consider the value of $L^-(\tcq)$ at time $\tcq$. 
Using the fact from Sub-Claim~\ref{claim:bob-limited-depth} that $\Biter(\tcq) \leq \phat + 2^{w - \csneak}$, we have: 
 \begin{align*}
    L^-(\tcq) &=  \Aiter(\tcq) - \Biter(\tcq) \\
    &= c_q - \Biter(\tcq) \\
    &\geq \phat + 2^{w-1} - (\phat + 2^{w-\csneak}) \\
    &= 2^{w-1} - 2^{w-\csneak} \\
    &\geq 2^{w-2}.
 \end{align*}
 This contradicts the assumption of Lemma~\ref{lem:maintech} that $L^-(\tsp) \leq 2^{j_p-\csneak} \leq 2^{w-\csneak}$, as $L^- (\tsp ) \geq L^- (\tcq) \geq 2^{w-2}$, and we have $\csneak \geq 3$. Thus it is not possible for $\tcq$ and $t_b$ to be in the same bad spell. We conclude that a $t^\ast$ exists such that $\Aiter(t^\ast) = \Biter(t^\ast) = \qhat$ and $\ell^- (t^\ast) = 0$. 
 
 Finally, we claim that $\qhat \geq \phat$.  This follows because by Sub-Claim~\ref{claim:case1-alice-phat},  $\Aiter(t) \geq \phat$ for all $t \in [\tphat, \tsp]$.
 In particular, this holds at time $t^\ast$, when $\Aiter(t^\ast) = \qhat$.
\end{proof}

\begin{subclaim}\label{cl:qhat_is_phat}
    If $t^\ast$ is the first time Alice and Bob are ending a bad spell since $\tcq$, then $\hat{q} = \hat{p}$,where $\hat{q}$ is as in the statement of Sub-Claim~\ref{cl:different_bad_spells}. 
\end{subclaim}

\begin{proof}

We begin with some intuition.
    For Alice and Bob to end a bad spell they must simultaneously reach a point $\qhat$ such that $\simPath^{(\leq \qhat)}_A =\simPath^{(\leq \qhat)}_B $ and $\ms{\qhat}_A = \ms{\qhat}_B$.
In particular, it is important to note that Alice and Bob do \emph{not} get to re-simulate the point $\qhat$ that meet at: rather $\qhat$ was supposed to be a point that was simulated correctly, and Alice and Bob meet up at the \emph{end} of that previous simulation.  Informally, we will show that if $\qhat \neq \phat$, then the last time that Alice and Bob simulated $\qhat$ (before they first end a bad spell after $\tcq$) was in fact \emph{during} a bad spell, and we show that this implies that they did not simulate $\qhat$ correctly.  Thus meeting up at the end of $\qhat$ cannot have resolved the bad spell.  We make this formal below.

Suppose towards a contradiction that $\qhat$ is not equal to $\phat$.
    Notice that when Alice reaches the point $c_q$ at time $\tcq$, she forgets all meeting points in the interval $(\phat, \phat + 2^{w-2})$. From Claims \ref{claim:bob-limited-depth} and \ref{claim:case1-alice-phat} we know that $\qhat \in [\phat, \phat + 2^{w-\csneak})$. As a result, if $\qhat \neq \phat$, then Alice must have simulated $\qhat$ at some time $t\in [\tcq ,t^\ast)$. 
    Moreover, this time $t$ was during the first bad spell.  To see this, first note that $t \leq t^{\ast}$, so it was before the first time a bad spell ends after $\tcq$.  Moreover, we know that by time $\tcq$, this first bad spell had already begun, since Alice was at depth $\Aiter(\tcq) = c_q$ and Bob was not, since by Sub-Claim~\ref{claim:bob-limited-depth}, $\Biter(\tcq) < \phat + 2^{w - \csneak} \leq c_q$. 

    By definition, since the bad spell ends at the point $\qhat$,  it must be that $\simPath^{(\leq \qhat)}_A = \simPath^{(\leq \qhat)}_B$ and $\ms{\qhat}_A  = \ms{\qhat}_B$ at the time $t^{\ast}$ when the bad spell ends.
    In particular, Alice and Bob's simulated transcripts must have agreed on  $\qhat$ at time $t^{\ast}$.  Since simulated paths include iteration numbers, $\simPath_A^{(\leq \qhat)}$ and $\simPath_B^{(\leq \qhat)}$ can only have agreed if Alice and Bob simulated $\hat{q}$ during the \emph{same} iteration $I$.  Note that the iteration $I$ contains the time $t$ defined above, the time when Alice simulated $\qhat$ during the first bad spell; in particular, $I$ was during a bad spell. 
    
However, Lemma~\ref{lem:noweirdcoincidences} (Item 4), says that if $I$ is an iteration during a bad spell, and if both Alice and Bob simulate a point $\hat{q}$ during iteration $I$, then after the simulation, 
    $\ms{\qhat}_A \neq \ms{\qhat}_B$.
    But this contradicts requirement above that mega-states agree if a bad spell is resolved.

    Overall, we get a contradiction, and conclude that $\qhat = \phat$. The proof is complete.
\end{proof}

Finally, we put everything together to show that a sneaky attack is in progress, which proves the claim.  We go through each element of a sneaky attack.  In the proof above, we have defined the times $\tphat < \tcq < t_{\qhat} < t_b < \tsp$, and we check that the requirements of Definition~\ref{def:sneaky} hold.
\begin{itemize}
    \item By Definition~\ref{def:tcq}, $\Aiter(\tcq) = c_q$ and $\tcq$ is indeed the last time that this happens before $\tsp$.
    \item By Definition~\ref{def:tcq} again, $\tphat$ is the last time that Alice passes $\phat$ before $\tcq$.
    \item The fact that Alice and Bob jump to a point $\qhat \geq \phat$ to end a bad spell at time $t_{\qhat}$; and that $\qhat = \phat$ if this is the first time that a bad spell has ended after $\tcq$, are established by Sub-Claims~\ref{cl:different_bad_spells} and \ref{cl:qhat_is_phat}.
    \item From the statement of the lemma, the divergent point becomes $b$ at time $t_b$ and $b$ remains the divergent point until $\tsp$.  Further, Sub-Claim~\ref{Claim:bob-jumps} implies that Bob jumps to $b$ at time $t_b$.  The fact that $b \geq \phat - 2^{w - \csneak}$ follows from \eqref{eq:b_big}. 
    \item Sub-Claim~\ref{claim:bob-limited-depth} implies that $\Biter(t') < \phat + 2^{w - \csneak}$ for all $t \in [\tphat, t_b)$.  The fact that $\Biter(\tsp) < \phat + 2^{w-3}$ follows from the fact that 
    \[ 2^{w-3} \geq 2^{w - \csneak} \geq L^{-} \geq \Biter(\tsp) - b > \Biter(\tsp) - \phat\] 
    \item The fact that $q \not\in \M_A(\tsp)$ follows from our assumption at the beginning of the proof.  (If instead we had assumed that $q \not\in \M_B(\tsp)$, then the same proof goes through switching the roles of Alice and Bob, and we will conclude that a sneaky attack is in progress, but with Bob rather than Alice as the target.)
\end{itemize}
\end{proof}

This completes the elements of \textbf{Item 1}, where we address the case the $b < \phat$.  

\paragraph{Item 2.} Next, we consider the elements of 
\textbf{Item 2}, where we consider the case that either $b \geq \phat$ or that $\ell^-(\tsp) = 0$.

We begin with the case that $b \geq \phat$; as we will see afterwards, the proof follows almost identically if $\ell^-(\tsp) = 0$.

We already knew from~\eqref{eq:ellA_small} and the fact that $\Biter \leq \Aiter$ that 
\[
\Aiter(\tsp),\Biter(\tsp) \leq \phat + 2^w.
\]
Now with the additional assumption that that $b \geq \phat$ and the fact that $\Aiter(\tsp), \Biter(\tsp) \geq b$, we have that
\begin{equation}\label{eq:aiterbiter}
\phat \leq \Aiter(\tsp), \Biter(\tsp) < \phat + 2^w.
\end{equation}
As before, define $t_b$ as  the time when $b$ becomes the divergent point in the bad spell including $\tsp$. Define $\tphat^A$ and $\tphat^B$ as the last time that the corresponding party has reached $\phat$ before time $t_b$. Then we have the following claim. 

\begin{claim} \label{claim:phat-avaialble}
If $ b \geq \phat$, then $\phat \in \M_A \cap \M_B$. 
\end{claim}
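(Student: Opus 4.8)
The plan is to show that, under the hypothesis $b \geq \hat p$ (and recalling $L^-(\tsp) < 2^{w-\csneak} \leq 2^{w-3}$ together with \eqref{eq:aiterbiter}), the point $\hat p$ has been continuously available to both Alice and Bob ever since each last visited it, because neither party has since gone deep enough to forget it. The key quantitative fact is that $\hat p = p + 2^w$ is at least $w$-stable, so by Lemma~\ref{lem:forget-avmps-z} a party forgets $\hat p$ only upon reaching depth $\hat p + 2^{w+1}$. But by \eqref{eq:aiterbiter} we have $\Aiter(\tsp), \Biter(\tsp) < \hat p + 2^w < \hat p + 2^{w+1}$, so as long as I can argue that neither party's depth exceeded $\hat p + 2^{w+1}$ at \emph{any} time between their last visit to $\hat p$ and $\tsp$, the point $\hat p$ is still in both memories.

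First I would pin down the relevant times: let $\tphat^A$ (resp. $\tphat^B$) be the last time before $t_b$ that Alice (resp. Bob) is at depth $\hat p$; such times exist because at time $t_b$ the divergent point becomes $b \geq \hat p$, so both parties have depth $\geq b \geq \hat p$ at $t_b$, and to get there they each passed through $\hat p$ (using that depths move in unit steps, via Lemma~\ref{lem:noweirdcoincidences} Item 1 to identify how $t_b$ arises). Next I would show that after $\tphat^A$, Alice never drops below $\hat p$ and never rises to $\hat p + 2^{w+1}$: she cannot drop below $\hat p$, since $b$ is the divergent point throughout $[t_b, \tsp]$ and $\hat p \leq b$, so in particular $\Aiter \geq b \geq \hat p$ on $[t_b,\tsp]$; and between $\tphat^A$ and $t_b$ she is below $b$'s eventual value only in the sense governed by the bad-spell structure — more carefully, I would use that $\Aiter(t) \geq \hat p$ on $[\tphat^A, t_b]$ by the choice of $\tphat^A$ as the \emph{last} visit to $\hat p$ before $t_b$ (any drop below would force a later return to $\hat p$), and $\Aiter(t) < \hat p + 2^{w+1}$ because otherwise $L^-$ or $\ell^-$ at that time would be $\geq \hat p + 2^{w+1} - b \geq \hat p + 2^{w+1} - (\hat p + 2^w) = 2^w$ once $b < \hat p + 2^w$ by \eqref{eq:aiterbiter}, contradicting $L^-(\tsp) < 2^{w-3}$ (since $L^-$ is the max of $\ell^-$ over the bad spell, and is non-decreasing during it). The identical argument applies to Bob with $\tphat^B$. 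Hence by Lemma~\ref{lem:forget-avmps-z}, $\hat p \in \M_A(\tsp)$ and $\hat p \in \M_B(\tsp)$.

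The main obstacle I anticipate is the bookkeeping around whether $\tphat^A, \tphat^B$ lie before or after $t_b$, and handling the edge case $\ell^-(\tsp) = 0$: in that case there is no divergent point, so I would instead set $b := \Aiter(\tsp) = \Biter(\tsp)$ (or note $\hat p \leq \Aiter(\tsp) = \Biter(\tsp)$ directly from \eqref{eq:ellA_small}, \eqref{eq:aiterbiter}) and run the same ``nobody has forgotten $\hat p$ yet'' argument, now without needing a bad spell at all — each party is at depth in $[\hat p, \hat p + 2^w)$ and has been, so $\hat p$ survives in both memories. One should double-check that $\hat p \leq b$ really does hold: this is exactly the Item~2 hypothesis $b \geq \hat p$. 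With Claim~\ref{claim:phat-avaialble} in hand, the subsequent step of the proof (not part of this statement) will observe that $\hat p$ is then an available jumpable point in the form $\MP3$ at scale $\jsp - 1$ for both parties — since $\hat p$ is a multiple of $2^{w} \geq 2^{\jsp-1}$ and lies just below both $\Aiter$ and $\Biter$ — so Alice and Bob skipped it, and Lemma~\ref{lem:skipsCostAdv} forces $\BVC_{AB} \geq \constbvc k$, giving outcome (1) of Lemma~\ref{lem:maintech}.
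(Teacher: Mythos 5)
Your overall strategy---show that since her last visit to $\hat{p}$, Alice's depth has stayed strictly below the forgetting threshold of Lemma~\ref{lem:forget-avmps-z}, anchored on $\hat{p} \leq \Aiter(\tsp) < \hat{p}+2^w$ from \eqref{eq:aiterbiter}---is the right shape of argument. But the justification of the upper bound $\Aiter(t) < \hat{p}+2^{w+1}$ on $[\tphat, t_b]$ does not go through. You write ``otherwise $L^-$ or $\ell^-$ at that time would be $\geq \hat{p}+2^{w+1}-b$,'' which tacitly uses $\ell^-(t) \geq \Aiter(t)-b$; that inequality needs $b$ to be the divergent point \emph{at time $t$}, which holds only on $[t_b,\tsp]$. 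For $t<t_b$ the divergent point at time $t$ (if any) could be much deeper than $b$, or Alice and Bob could even be in sync, in which case $\ell^-(t)$ tells you nothing about $\Aiter(t)-b$. Worse, $L^-(\tsp)<2^{w-3}$ bounds $\ell^-$ only over the bad spell containing $\tsp$, which may begin strictly after $\tphat$; any dive Alice took during a \emph{previous}, already-resolved bad spell is simply not constrained by $L^-(\tsp)$. So you have not ruled out Alice reaching depth $\geq \hat{p}+2^{w+1}$ somewhere in $(\tphat,t_b)$. (As a secondary issue, $\hat{p}$ need only be $j$-stable for some $j\geq w$, so the actual forgetting threshold is $\hat{p}+2^{j+1}\geq \hat{p}+2^{w+1}$; your stronger bound can genuinely fail when $j>w$, even though $\hat{p}$ remains in memory.)

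The missing ingredient, and what the paper uses, is Lemma~\ref{lem:MP-jump}. Define $\tphat$ as the last visit to $\hat{p}$ \emph{before $\tsp$} (not before $t_b$) and argue by contradiction: if $\hat{p}\not\in\M_A(\tsp)$, then by Lemma~\ref{lem:forget-avmps-z} Alice reached depth $c_{\hat{p}}=\hat{p}+2^{j+1}$ at some time in $(\tphat,\tsp]$, and taking $t$ to be the last such time, Lemma~\ref{lem:MP-jump} forces her next jump below $\hat{p}+2^j$ after $t$ to land at depth $\leq\hat{p}$. Because $\Aiter(\tsp)<\hat{p}+2^w\leq\hat{p}+2^j$, such a jump must occur by time $\tsp$; combined with $\Aiter\geq\hat{p}$ throughout, she lands exactly at $\hat{p}$, i.e., she re-visits $\hat{p}$ strictly after $\tphat$, contradicting the definition of $\tphat$. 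This is precisely the paper's proof. In short: you cannot bound Alice's depth on $[\tphat, t_b]$ via $L^-$; you must instead let her dive if she wants, and use Lemma~\ref{lem:MP-jump} to show that the return trip necessarily refreshes her memory of $\hat{p}$.
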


\begin{proof}
Assume towards a contradiction that $\phat \not\in \M_A \cap \M_B$, and suppose first that $\phat \not\in \M_A$.  (The case where $\phat \not\in \M_B$ follows similarly). Let $\tphat \leq \tsp$ be the last time Alice reaches $\phat$ before $\tsp$. (Note that this is \emph{different} than the definition of $\tphat$ in Item 1, as it is the last time before $\tsp$, not before $t_b$.)  
Note that $2^w | \phat$, which implies that if $\phat$ is $j$-stable, then $j\geq w$.  If $p \notin \M_A$, then there exists a time such that Alice has forgotten $\phat$ by reaching a point at depth  $c_{\phat} = \phat + 2^{j+1} \geq \phat + 2^{w+1}$. Let $t$ be the last time Alice reaches $c_{\phat}$. Notice that upon reaching $c_{\phat}$, Alice forgets all meeting points in the interval $(\phat, \phat + 2^{w})$. However we know from \eqref{eq:aiterbiter} that 
\[ \phat \leq \Aiter(\tsp) < \phat + 2^{w}. \]
Then according to Lemma \ref{lem:MP-jump}  there must be a time $t'\in (t, \tsp)$ such that Alice jumps to a point above $\phat$, $\Aiter(t') \leq \phat$. As a result of this Alice must cross the depth $\phat$ one more time to reach depth $\Aiter(\tsp)$; but this contradicts the fact that $\tphat$ is the last time before $\tsp$ that Alice has reached $\phat$.
This proves that $\phat \in \M_A$. The same logic follows for Bob showing that $\phat\in \M_B$. Thus the proof is complete. 
\end{proof}

\begin{claim}\label{cl:phat_av_jump}
    If $b \geq \phat$, then at time $\tsp$, $\phat$  is a  available jumpable point  for scale $\jsp -1$. Thus $\BVC_{AB} \geq \constbvc k$.  
\end{claim}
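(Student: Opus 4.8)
The plan is to verify directly that $\phat$ meets all the requirements of an \emph{available jumpable point} of some scale (Definition~\ref{def:specialpts}), identify that scale as $\jsp-1$ by using the hypothesis that $j_p=w-1$ is the \emph{next} jumpable scale, and then feed this into Lemma~\ref{lem:skipsCostAdv} to extract the $\BVC_{AB}$ bound. There are three things to check about $\phat$ at time $\tsp$: (i) it is available to both parties, (ii) the two parties' mega-states at $\phat$ agree, and (iii) it lies in $\MPalljl(w-2,\Aiter(\tsp))\cap\MPalljl(w-2,\Biter(\tsp))$. Item (i) is immediate from Claim~\ref{claim:phat-avaialble} (which applies since we assume $b\geq\phat$), giving $\phat\in\M_A(\tsp)\cap\M_B(\tsp)$; item (ii) follows because $b$ is the divergent point and $\phat\leq b$, so by the definition of the divergent point $\ms{p'}_A=\ms{p'}_B$ for every $p'\leq b$, in particular for $p'=\phat$.

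For item (iii), I would use \eqref{eq:aiterbiter}, which gives $\phat\leq\ell_P(\tsp)<\phat+2^w$ for each party $P\in\{A,B\}$, together with the fact that $2^w\mid\phat$ (hence also $2^{w-1}\mid\phat$ and $2^{w-2}\mid\phat$). A short case split on where $\ell_P(\tsp)$ sits in the window $[\phat,\phat+2^w)$ then shows $\phat\in\MPalljl(w-2,\ell_P(\tsp))$: if $\ell_P(\tsp)\in[\phat+2^{w-1},\phat+2^w)$ then $\lfloor\ell_P(\tsp)\rfloor_{2^{w-1}}=\phat+2^{w-1}$, so $\phat=\MPone(w-2,\ell_P(\tsp))$; if $\ell_P(\tsp)\in[\phat+2^{w-2},\phat+2^{w-1})$ then $\lfloor\ell_P(\tsp)\rfloor_{2^{w-2}}=\phat+2^{w-2}$, so $\phat=\MPtwo(w-2,\ell_P(\tsp))$; and if $\ell_P(\tsp)\in[\phat,\phat+2^{w-2})$ then $\phat\in M_{\ell_P(\tsp)}$ (by (i) and $\M_P\subseteq M_{\ell_P}$) and no multiple of $2^{w-2}$ strictly above $\phat$ can lie in $M_{\ell_P(\tsp)}$ (all points of $M_{\ell_P(\tsp)}$ are $<\ell_P(\tsp)<\phat+2^{w-2}$), so $\phat=\MPthree(w-2,\ell_P(\tsp))$. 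Since the hypothesis of Lemma~\ref{lem:maintech} gives $k_A(\tsp)=k_B(\tsp)$ (hence $j_A(\tsp)=j_B(\tsp)$), items (i)--(iii) together say precisely that $\phat$ is an available jumpable point of scale $w-2$ at time $\tsp$.

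It remains to identify $w-2$ with $\jsp-1$ and conclude. Recall $j_p=w-1$ from \eqref{eq:jp_is_wminus1}, and by hypothesis $j_p$ is the \emph{smallest} $j\geq\jsp$ admitting an available jumpable point of scale $j$. If $w-2\geq\jsp$, the point $\phat$ just produced would witness an available jumpable point of a scale that is both $\geq\jsp$ and strictly below $j_p$, contradicting minimality; hence $w-2\leq\jsp-1$. Conversely $j_p\geq\jsp$ forces $w-2\geq\jsp-1$, so $w-2=\jsp-1$, which is the first assertion. Finally, apply Lemma~\ref{lem:skipsCostAdv} at time $\tsp$ with $j:=\jsp-1$ (note $j<\lfloor\log k\rfloor=\jsp$, as required) and $q:=\phat$: item (iii) supplies \eqref{eq:needfromq}, and we have shown $q$ is an available jumpable point for both parties. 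The lemma yields $\BVC_{AB}(\tsp)\geq(0.6)\,2^{\jsp-1}$, and since $\jsp=\lfloor\log k\rfloor$ gives $2^{\jsp}>k/2$, we get $2^{\jsp-1}\geq k/4$ and therefore $\BVC_{AB}(\tsp)\geq\tfrac{0.6}{4}k=\constbvc k$. The only mildly delicate step is the case analysis in item (iii) — confirming that $\phat$ is one of the three transition candidates for \emph{every} possible position of $\ell_A(\tsp),\ell_B(\tsp)$ in $[\phat,\phat+2^w)$ — but this is a routine check against Definition~\ref{def:transition-candidates}; everything else is bookkeeping with facts already in hand from Item~2.
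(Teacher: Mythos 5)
Your proposal is correct and follows essentially the same route as the paper's proof: case split on where $\Aiter$ (and $\Biter$) sits in $[\phat,\phat+2^w)$ to show $\phat\in\MPalljl(w-2,\cdot)$, use $b\geq\phat$ for the mega-state agreement, identify $w-2=\jsp-1$ via minimality of $j_p$, and finish with Lemma~\ref{lem:skipsCostAdv}. You fill in a couple of small details the paper leaves implicit (the argument that $\phat=\MPthree$ because no larger multiple of $2^{w-2}$ can lie in $M_{\ell_P}$, and the explicit appeal to $\M_P\subseteq M_{\ell_P}$), but the structure and the lemma dependencies are identical.
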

\begin{proof}
We will first show that $\phat$ is an available jumpable point for scale $j_p -1$. Since $j_p$ is the next available jumpable point, thus it must be that $j_p -1 = \jsp -1$ and we will use Lemma~\ref{lem:skipsCostAdv} to show that this means that $\BVC_{AB}$ must be large, and the proof will be complete.

Since $j_p = w-1$, we have $j_p -1 = w-2$. At this scale:
\begin{itemize}
    \item if $\Aiter\in [\phat , \phat + 2^{w-2})$ then $\MPthree(j_p-1, \Aiter) = \phat$;
    \item if $\Aiter \in [\phat + 2^{w-2}, \phat + 2^{w-1}) $ then $ \MPtwo(j_p-1, \Aiter) = \phat$;
    \item if $\Aiter \in [\phat + 2^{w-1} , \phat + 2^{w})$ then $\MPone(j_p-1, \Aiter) = \phat$.
\end{itemize}
In any case, $\phat \in \MPall(j_p -1,\Aiter),$ and
the same logic holds for Bob. 
Further, our assumption in Item 2 that $b \geq \phat$ implies that $\ms{\phat}_A = \ms{\phat}_B$. 
We conclude that $\phat$ is indeed a available jumpable point for scale $w-2=j_p-1$.

 However,
we know that $j_p$ is the next available jumpable scale. As a result it must be that $j_p-1 < \jsp$ and by Lemma~\ref{lem:skipsCostAdv}, the adversary has invested in corruption to create bad vote counts and skip this scale. Then it must be that $ j_p = \jsp$. Thus at time $\tsp$, $BVC_{AB} \geq (0.6) 2^{j_p -1} = (0.6) 2^{\jsp -1} \geq \constbvc k$. 
This completes the proof of the claim.
\end{proof}

Finally, we address the case that $\ell^-(\tsp) =0$.  
In this case, all of the arguments from the case that $b \geq \phat$ go through.  Indeed, \eqref{eq:aiterbiter} holds because now $\Aiter(\tsp) = \Biter(\tsp)$; since $p$ is a scale-$w$ MP for Alice, it is also a scale-$w$ MP for Bob, and \eqref{eq:aiterbiter} follows.
From this, the proof of Claim~\ref{claim:phat-avaialble} follows unchanged.  The only part of the proof of Claim~\ref{cl:phat_av_jump} that requires $b \geq \phat$ (and in particular that $b$ exists) is in order to establish that $\ms{\phat}_A = \ms{\phat}_B$; but if $\ell^-(\tsp) = 0$, then this holds by definition.  Thus, the proof of Claim~\ref{cl:phat_av_jump} goes through as well.

Finally, we put together the logic in \textbf{Item 1} and \textbf{Item 2} as described in the \textbf{Organization} paragraph above, and conclude that:
\begin{enumerate}
    \item If $b < \phat$, then either $\BVC_{AB}$ is large or a sneaky attack is in progress.
    \item If either $b \geq \phat$ or if $\ell^-(\tsp) = 0$, then $\BVC_{AB}$ is large.
\end{enumerate}
In either case, we conclude that either $\BVC_{AB}$ is large or a sneaky attack is in progress, which proves Lemma~\ref{lem:maintech}.
\end{proof}

\bibliographystyle{alpha}
\bibliography{refs.bib}

\end{document}